\newif\ifextended 
\newif\ifccblock 
\newif\ifauthor 
\newif\iforcid 
\newif\ifcolor 
\newif\ifthesis 
\newif\iftprint 
\newcommand{\gfivewidth}{169mm}
\newcommand{\gfiveheight}{239mm}
\newlength{\evenmargin}
   \def\@citecolor{blue}%
   \def\@urlcolor{blue}%
   \def\@linkcolor{blue}%
\def\orcidID#1{\smash{\href{http://orcid.org/#1}{\protect\raisebox{-1.25pt}{\protect\includegraphics{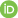}}}}}
\def\orcidID#1{}
\pgfplotsset{compat=1.15}
\newtheorem{condition}{Condition}
\newacronym{smc}{SMC}{sequential Monte Carlo}
\newacronym{mcmc}{MCMC}{Markov chain Monte Carlo}
\newacronym{ppl}{PPL}{probabilistic programming language}
\newacronym{bpf}{BPF}{bootstrap particle filter}
\newcommand{\llangle}{\langle\!\langle}
\newcommand{\rrangle}{\rangle\!\rangle}
\newcommand{\ttt}[1]{{\normalfont\texttt{#1}}}
\newcommand{\x}[1]{{\normalfont\textit{#1}}}
\newcommand{\true}{\mathit{true}}
\newcommand{\false}{\mathit{false}}
\newcommand{\termgeo}{{\term_{\mathit{geo}}}}
\newcommand{\termobs}{{\term_{\mathit{obs}}}}
\newcommand{\termair}{{\term_{\mathit{air}}}}
\newcommand{\termseq}{{\term_{\mathit{seq}}}}
\newcommand{\termloop}{{\term_{\mathit{loop}}}}
\newcommand{\termunit}{{\term_{\mathit{unit}}}}
\newcommand{\id}{{\normalfont\text{id}}}
\newcommand{\term}{\mathbf{t}}
\newcommand{\val}{\mathbf{v}}
\newcommand{\stopt}{\mathbf{t}_{\x{stop}}}
\newcommand{\redex}{\mathbf{r}}
\newcommand{\econt}{\mathbf{E}}
\newcommand\T{\mathbb{T}}
\newcommand\R{\mathbb{R}}
\newcommand\X{\mathbb{X}}
\newcommand\Tr{\mathbb{S}}
\newcommand\Tcal{\mathcal{T}}
\newcommand\Bor{\mathcal{B}}
\newcommand\Xcal{\mathcal{X}}
\newcommand\Trcal{\mathcal{S}}
\newcommand{\dr}{{d_\R}}
\newcommand{\drpos}{{d_{\R_+}}}
\newcommand{\drn}[1]{{d_{\R^{#1}}}}
\newcommand{\ds}{{d_\Tr}}
\newcommand{\dt}{{d_\T}}
\newcommand{\de}{{d_\mathbb{E}}}
\newcommand{\trace}[1]{(#1)_\Tr}
\newcommand{\repeatlemma}[2]{%
  \begingroup
  \def\thelemma{\ref{#1}}
  \begin{lemma}
    #2
  \end{lemma}
  \addtocounter{lemma}{-1}
  \endgroup
}
\newcommand{\theoremmain}{%
  If there is $N \in \mathbb{N}$ such that $f_{\term,n} = f_\term$ whenever $n > N$, then
  $
    \llangle \term \rrangle_n = \llangle \term
    \rrangle
  $
  for all $n > N$.
}
\newcommand{\theoremdct}{%
  Assume that $\lim_{n \to \infty} f_{\term,n} = f_\term$ holds pointwise $\mu_\Tr$-ae.
  Furthermore, assume that there exists a measurable function $g : (\Tr,\Trcal) \to (\R_+,\Bor_+)$ such that $f_{\term,n} \leq g$ $\mu_\Tr$-ae for all $n$, and $\int_\Tr g(s) d\mu_\Tr(s) < \infty$.
  Then
  $
    \lim_{n\to\infty} \llangle \term \rrangle_n = \llangle \term \rrangle
  $
  pointwise.
}
\newcommand{\lemmatracemeas}{%
  $(\Tr,\Trcal)$ is a measurable space.%
}
\newcommand{\lemmatracemeasspace}{%
  $(\Tr,\Trcal,\mu_\Tr)$ is a measure space.
  Furthermore, $\mu_\Tr$ is $\sigma$-finite.%
}
\newcommand{\lemmatermmeas}{%
  $(\T, \Tcal)$ is a measurable space.%
}
\newcommand{\lemmarfmeas}{%
  $r_\term : (\Tr, \Trcal) \to (\T,\Tcal)$
  and
  $f_\term : (\Tr, \Trcal) \to (\R_+,\Bor_+)$ 
  are measurable.%
}
\newcommand{\lemmarfnmeas}{%
  $r_{\term,n} : (\Tr, \Trcal) \to (\T,\Tcal)$
  and
  $f_{\term,n} : (\Tr, \Trcal) \to (\R_+,\Bor_+)$
  are measurable.%
}
\newcommand{\lemmaprependmeas}{%
  The functions $\x{prepend}_s : (\Tr,\Trcal) \to (\Tr,\Trcal)$ are
  measurable.%
}
\newcommand{\lemmafinitekernel}{%
  The functions $k_{\term,n} : \Tr \times \Trcal \to \R_+$ are sub-probability kernels.%
}
\newcommand{\lemmadecompose}{%
  Let $n > 0$. If $f_{\term,n}(s) > 0$, then $f_{\term,n}(s) = f_{\term,n-1}(\underline{s})f_{r_{\term,n-1}(\underline{s}),1}(\overline{s})$ for exactly one decomposition $\underline{s} \ast \overline{s} = s$.
  If $f_{\term,n}(s) = 0$, then $f_{\term,n-1}(\underline{s})f_{r_{\term,n-1}(\underline{s}),1}(\overline{s}) = 0$ for all decompositions $\underline{s} \ast \overline{s} = s$.
  As a consequence, if $f_{\term,n}(s) > 0$, then $p_{r_{\term,n-1}(\underline{s}),1}(\overline{s}) = 1$.%
}
\newcommand{\lemmapropdensity}{%
  For $n \in \mathbb{N}$,
  $
    \langle \term \rangle_n(S) =
    \int_S
    f_{\term,n-1}(\underline{s})
    p_{r_{\term,n-1}(\underline{s}),1}(\overline{s})
    d\mu_\Tr(s),
  $
  where $\underline{s} \ast \overline{s} = s$ is the unique decomposition from Lemma~\ref{lemma:decompose}.%
}
\newcommand{\lemmapropfin}{%
  $\langle \term \rangle_0$ is a sub-probability measure.
  Also, if $\llangle \term \rrangle_{n-1}$ is finite, then $\langle \term \rangle_n$ is finite.%
}
\newcommand{\lemmaweights}{%
  $\displaystyle
    w_n(s)
    = \frac{f_{\llangle \term \rrangle_n}(s)}{f_{\langle \term \rangle_n}(s)}
    = \begin{cases}
      f_{r_{\term,n-1}(\underline{s}),1}(\overline{s}) & \text{if } n > 0 \\
      f_{\term,0}(s) & \text{if } n = 0 \\
    \end{cases}
  $
  when $f_{\langle \term \rangle_n}(s) > 0$.
  Here, $\underline{s} \ast \overline{s} = s$ is the unique decomposition from Lemma~\ref{lemma:decompose}.%
}
\else\usepackage[hidelinks]{hyperref}\fi
\begin{document}


\ifauthor
\begin{textblock*}{1.2\textwidth}(1in + \oddsidemargin - 0.1\textwidth,10pt)
  \noindent
  \scriptsize
  This is an author-prepared version\ifextended, extended with appendices and minor additions to the main text\fi.
  © The Author(s) 2021.
  This version of the contribution\ifextended, except the extensions,\fi\ has been accepted for publication at ESOP 2021, after peer review, but is not the Version of Record (the version published by Springer) and does not reflect post-acceptance improvements, or any corrections.
  The Version of Record is available online at: \url{https://doi.org/10.1007/978-3-030-72019-3_15}.
\end{textblock*}
\fi

\title{%
  Correctness of Sequential Monte Carlo Inference for Probabilistic Programming
  Languages%
  \thanks{%
    This project is financially supported by the Swedish Foundation for
    Strategic Research (ASSEMBLE RIT15-0012) and the Swedish Research Council
    (grant 2013-4853).%
  }
}
\titlerunning{Correctness of Sequential Monte Carlo for Probabilistic Programming}
%
\author{%
  Daniel Lundén(\Letter)\inst{1}\orcidID{0000-0003-3127-5640} \and
  Johannes Borgström\inst{2}\orcidID{0000-0001-5990-5742} \and
  David Broman\inst{1}\orcidID{0000-0001-8457-4105}
}
\authorrunning{D. Lundén et al.}
%
\institute{Digital Futures and EECS,\\KTH Royal Institute of Technology, Stockholm, Sweden\\
  \email{\{dlunde,dbro\}@kth.se}
  \and
  Uppsala University, Uppsala, Sweden\\
  \email{johannes.borgstrom@it.uu.se}
}

\maketitle              

\begin{abstract}
  Probabilistic programming is an approach to reasoning under uncertainty by encoding inference problems as programs.
In order to solve these inference problems, \glspl{ppl} employ different inference algorithms, such as \gls{smc}, \gls{mcmc}, or variational methods.
Existing research on such algorithms mainly concerns their implementation and efficiency, rather than the correctness of the algorithms themselves when applied in the context of expressive \glspl{ppl}.
To remedy this, we give a correctness proof for \gls{smc} methods in the context of an expressive \gls{ppl} calculus, representative of popular \glspl{ppl} such as WebPPL, Anglican, and Birch.
Previous work have studied correctness of \gls{mcmc} using an operational semantics, and correctness of \gls{smc} and \gls{mcmc} in a denotational setting without term recursion.
However, for \gls{smc} inference---one of the most commonly used algorithms in \glspl{ppl} as of today---no formal correctness proof exists in an operational setting.
In particular, an open question is if the resample locations in a probabilistic program affects the correctness of \gls{smc}.
We solve this fundamental problem, and make four novel contributions:
(i) we extend an untyped \gls{ppl} lambda calculus and operational semantics
to include explicit resample terms, expressing synchronization points in \gls{smc} inference;
(ii) we prove, for the first time, that subject to mild restrictions, any placement of the explicit resample terms is valid for a generic form of \gls{smc} inference;
(iii) as a result of (ii), our calculus benefits from classic results from the \gls{smc} literature: a law of large numbers and an unbiased estimate of the model evidence; and
(iv) we formalize the bootstrap particle filter for the calculus and discuss how our results can be further extended to other \gls{smc} algorithms.

\keywords{%
  Probabilistic Programming \and
  Sequential Monte Carlo \and
  Operational Semantics \and
  Functional Programming \and
  Measure Theory
}
\end{abstract}
%
%
%

\glsresetall%

\section{Introduction}\label{sec:intro}
Probabilistic programming is a programming paradigm for probabilistic models, encompassing a wide range of programming languages, libraries, and platforms~\cite{carpenter2017stan,goodman2008church,goodman2014design,murray2018automated,scibior2018functional,design2016tolpin,tran2016edward}.
Such probabilistic models are typically created to express \emph{inference problems}, which are ubiquitous and highly significant in, for instance, machine learning \cite{bishop2006pattern}, artificial intelligence \cite{russell2009artificial}, phylogenetics \cite{ronquist2003mrbayes,ronquist2020probabilistic}, and topic modeling \cite{blei2003latent}.

In order to solve such inference problems, an \emph{inference algorithm} is required.
Common general-purpose algorithm choices for inference problems include \emph{\gls{smc}} methods~\cite{doucet2001sequential}, \emph{\Gls{mcmc}} methods~\cite{gilks1995markov}, and \emph{variational} methods~\cite{wainwright2008graphical}.
In traditional settings, correctness results for such algorithms often come in the form of laws of large numbers, central limit theorems, or optimality arguments.
However, for general-purpose \glspl{ppl}, the emphasis has predominantly been on algorithm implementations and their efficiency~\cite{goodman2014design,murray2018automated,design2016tolpin}, rather than the correctness of the algorithms themselves.
In particular, explicit connections between traditional theoretical \gls{smc} results and \gls{ppl} semantics have been limited.
In this paper, we bridge this gap by formally connecting fundamental \gls{smc} results to the context of an expressive \gls{ppl} calculus.

Essentially, \gls{smc} works by simulating many executions of a probabilistic program concurrently, occasionally \emph{resampling} the different executions.
In this resampling step, \gls{smc} discards less likely executions, and replicates more likely executions, while remembering the average likelihood at each resampling step in order to estimate the overall likelihood.
In expressive \glspl{ppl}, there is freedom in choosing where in a program this resampling occurs.
For example, most \gls{smc} implementations, such as WebPPL~\cite{goodman2014design}, Anglican~\cite{wood2014new}, and Birch~\cite{murray2018automated}, always resample when all executions have reached a call to the \emph{weighting} construct in the language.
At possible resampling locations, Anglican takes a conservative approach by dynamically checking during runtime if all executions have either stopped at a weighting construct, or all have finished. If none of these two cases apply, report a runtime error.
In contrast, WebPPL does not perform any checks and simply includes the executions that have finished in the resampling step.  There are also heuristic approaches~\cite{lunden2018automatic} that automatically \emph{align} resampling locations in programs, ensuring that all executions finish after encountering the same number of them.
The motivations for using the above approaches are all based on experimental validation.
As such, an open research problem is whether there are any inherent restrictions when selecting resampling locations, or if the correctness of \gls{smc} is independent of this selection.
This is not only important theoretically to guarantee the correctness of inference results, but also for inference performance, both since inference performance is affected by the locations of resampling locations~\cite{lunden2018automatic} and since dynamic checks result in direct runtime overhead.
We address this research problem in this paper.


In the following, we give an overview of the paper and our contributions.
In Section~\ref{sec:motivation}, we begin by giving a motivating example from phylogenetics, illustrating the usefulness of our results.
Next, in Section~\ref{sec:calculus}, we define the syntax and operational semantics of an expressive functional \gls{ppl} calculus based on the operational formalization in Borgström et al.~\cite{borgstrom2016lambda}, representative of common \glspl{ppl}.
The operational semantics assign to each pair of term $\term$ and initial random \emph{trace} (sequences of random samples) a non-negative weight.
This weight is accumulated during evaluation through a \ttt{weight} construct, which, in current calculi and implementations of \gls{smc}, is (implicitly) always followed by a resampling.
To decouple resampling from weighting, we present our first contribution.
\begin{enumerate}[align=left]
  \item[(i)]
    We extend the calculus from Borgström et al.~\cite{borgstrom2016lambda} to include explicit \texttt{resample} terms, expressing explicit synchronization points for performing resampling in \gls{smc}.
    With this extension, we also define a semantics which limits the number of evaluated resample terms, laying the foundation for the remaining contributions.
\end{enumerate}

In Section~\ref{sec:targetmeas}, we define the probabilistic semantics of the calculus.
The weight from the operational semantics is used to define unnormalized distributions $\llangle \term \rrangle$ over traces and $\llbracket \term \rrbracket$ over result terms.
The measure $\llbracket \term \rrbracket$ is called the \emph{target measure}, and finding a representation of this is the main objective of inference algorithms. 

We give a formal definition of \gls{smc} inference based on Chopin~\cite{chopin2004central} in Section~\ref{sec:formalsmc}.
This includes both a generic \gls{smc} algorithm, and two standard correctness results from the \gls{smc} literature: a law of large numbers~\cite{chopin2004central}, and the unbiasedness of the likelihood estimate~\cite{naesseth2019elements}.

In Section~\ref{sec:smcppl}, we proceed to present the main contributions.
\begin{enumerate}[align=left]
  \item[(ii)]
    From the \gls{smc} formulation by Chopin~\cite{chopin2004central}, we formalize a sequence of distributions $\llangle\term\rrangle_n$, indexed by $n$, such that $\llangle\term\rrangle_n$ allows for evaluating at most $n$ \ttt{resample}s.
    This sequence is determined by the placement of \ttt{resample}s in $\term$.
    Our first result is Theorem~\ref{theorem:main}, showing that $\llangle \term \rrangle_n$ eventually equals $\llangle \term \rrangle$ if the number of calls to \texttt{resample} is upper bounded.
    Because of the explicit \ttt{resample} construct, this also implies that, for \emph{all} \ttt{resample} placements such that the number of calls to \ttt{resample} is upper bounded, $\llangle\term\rrangle_n$ eventually equals $\llangle\term\rrangle$.
    We further relax the finite upper bound restriction and investigate under which conditions $\lim_{n \to \infty} \llangle \term \rrangle_n = \llangle \term \rrangle$ pointwise.
    In particular, we relate this equality to the dominated convergence theorem in Theorem~\ref{theorem:dct}, which states that the limit converges as long as there exists a function dominating the weights encountered during evaluation.
    This gives an alternative set of conditions under which $\llangle \term \rrangle_n$ converges to $\llangle\term\rrangle$ (now asymptotically, in the number of resamplings $n$).
\end{enumerate}
The contribution is fundamental, in that it provides us with a sequence of approximating distributions $\llangle\term\rrangle_n$ of $\llangle\term\rrangle$ that can be targeted by the \gls{smc} algorithm of Section~\ref{sec:formalsmc}.
As a consequence, we can extend the standard correctness results of that section to our calculus.
This is our next contribution.
\begin{enumerate}[align=left]
  \item[(iii)]
    Given a suitable sequence of transition kernels (ways of moving between the $\llangle \term \rrangle_n$), we can \emph{correctly} approximate $\llangle \term \rrangle_n$ with the \gls{smc} algorithm from Section~\ref{sec:formalsmc}.
    The approximation is correct in the sense of Section~\ref{sec:formalsmc}: the law of large numbers and the unbiasedness of the likelihood estimate holds.
    As a consequence of (ii), \gls{smc} also correctly approximates $\llangle \term \rrangle$, and in turn the target measure $\llbracket \term \rrbracket$.
    Crucially, this also means estimating the model evidence (likelihood), which allows for compositionality~\cite{gordon2013model} and comparisons between different models~\cite{ronquist2020probabilistic}.
    This contribution is summarized in Theorem~\ref{theorem:consistency}.
\end{enumerate}
Related to the above contributions, \'{S}cibior et al.~\cite{scibior2017denotational} formalizes \gls{smc} and \gls{mcmc} inference as transformations over monadic inference representations using a denotational approach (in contrast to our operational approach).
They prove that their \gls{smc} transformations preserve the measure of the initial representation of the program (i.e., the target measure).
Furthermore, their formalization is based on a simply-typed lambda calculus with primitive recursion, while our formalization is based on an untyped lambda calculus which naturally supports full term recursion.
Our approach is also rather more elementary, only requiring basic measure theory compared to the relatively heavy mathematics (category theory and synthetic measure theory) used by them.
Regarding generalizability, their approach is both general and compositional in the different inference transformations, while we abstract over parts of the SMC algorithm.
This allows us, in particular, to relate directly to standard \gls{smc} correctness results.

Section~\ref{sec:smckernel} concerns the instantiation of the transition kernels from (iii), and also discusses other \gls{smc} algorithms.
Our last contribution is the following.
\begin{enumerate}[align=left]
  \item[(iv)]
    We define a sequence of sub-probability kernels $k_{\term,n}$ induced by a given program $\term$, corresponding to the fundamental \gls{smc} algorithm known as the \emph{\gls{bpf}} for our calculus.
    This is the most common version of \gls{smc}, and we present a concrete \gls{smc} algorithm corresponding to these kernels.
    We also discuss other \gls{smc} algorithms and their relation to our formalization: the resample-move~\cite{gilks2001following}, alive~\cite{kudlicka2019probabilistic}, and auxiliary~\cite{pitt1999filtering} particle filters.
\end{enumerate}
Importantly, by combining the above contributions, we justify that the
implementation strategies of the \glspl{bpf} in WebPPL, Anglican, and Birch are
indeed correct. In fact, our results show that the strategy in Anglican, in
which every evaluation path must resample the same number of times, is too
conservative.

\ifextended
Detailed proofs for many lemmas found in the paper are available in the appendix.
These lemmas are explicitly marked with $^\dagger$.
\else
An extended version of this paper is also available \cite{lunden2020correctness}.
This extended version includes rigorous definitions and detailed proofs for many lemmas found in the paper, as well as further examples and comments.
The lemmas proved in the extended version are explicitly marked with $^\dagger$.
\fi

\section{A Motivating Example from Phylogenetics}\label{sec:motivation}
\begin{figure}[tb]
  \lstset{%
    basicstyle=\ttfamily\scriptsize,
    numbers=left, showlines=true,
    frame=leftline,
    numberstyle=\tiny,
    numbersep=3pt, 
    framexleftmargin=-2pt,
    xleftmargin=2em,
  }
  \centering
  \noindent\begin{minipage}{.50\textwidth}
    \begin{lstlisting}[name=crbd]
let tree = {
 left:{left:{age:0},right:{age:0},age:4},
 right:{left:{age:0},right:{age:0},age:6},
 age:10
} in

let lambda = 0.2 in let mu = 0.1 in

let crbdGoesExtinct startTime =
 let curTime = startTime
   - (sample (exponential (lambda + mu)))
 in
 if curTime < 0 then false
 else
  let speciation = sample
    (bernoulli (lambda / (lambda + mu))) in
  if !speciation then true
  else crbdGoesExtinct curTime
    && crbdGoesExtinct curTime in

    \end{lstlisting}
  \end{minipage}\hfill
  \begin{minipage}{.50\textwidth}
    \begin{lstlisting}[name=crbd]
let simBranch startTime stopTime =
 let curTime = startTime -
   sample (exponential lambda) in
 if curTime < stopTime then ()
 else if not (crbdGoesExtinct curTime)
 then weight (log 0) // #1
 else (weight (log 2); // #2
       simBranch curTime stopTime) in

let simTree tree parent =
 let w = -mu * (parent.age - tree.age) in
 weight w; // #3
 simBranch parent.age tree.age;
 match tree with
 | {left,right,age} ->
   simTree left tree; simTree right tree
 | {age} -> () in

simTree tree.left tree;
simTree tree.right tree
    \end{lstlisting}
  \end{minipage}
  \caption{%
    A simplified version of a phylogenetic birth-death model
    from~\cite{ronquist2020probabilistic}.
    See the text for a description.
  }
  \label{fig:crbd}
\end{figure}

In this section, we give a motivating example from phylogenetics.
The example is written in a functional \gls{ppl}%
\footnote{%
  The implementation is an interpreter written in OCaml. It largely follows the same approach as Anglican and WebPPL, and uses continuation-passing style in order to pause and resume executions as part of inference.
  It is available at \url{https://github.com/miking-lang/miking-dppl/tree/pplcore}.
  The example in Fig.~\ref{fig:crbd} can be found under \ttt{examples/crbd/crbd-esop.ppl}
} developed as part of this paper, in order
to verify and experiment with the presented concepts and results.
In particular, this \gls{ppl} supports \gls{smc} inference (Algorithm~\ref{alg:smcppl}) with decoupled \ttt{resample}s and \ttt{weight}s\footnote{The implementation uses log weights as arguments to \ttt{weight} for numerical reasons.}, as well as sampling from random distributions with a \ttt{sample} construct.

Consider the program in Fig.~\ref{fig:crbd},
encoding a simplified version of a phylogenetic birth-death model (see  Ronquist et al.~\cite{ronquist2020probabilistic} for the full version).
The problem is to find the model evidence for a particular birth rate (\texttt{lambda = 0.2}) and death rate (\texttt{mu = 0.1}), given an observed phylogenetic \texttt{tree}.
The tree represents known lineages of evolution, where the leaves are
extant (surviving to the present) species.
Most importantly, for illustrating the usefulness of the results in this paper, the recursive function \ttt{simBranch}, with its two \ttt{weight} applications \ttt{\#1} and \ttt{\#2}, is called a random number of times for each branch in the observed \ttt{tree}.
Thus, different \gls{smc} executions encounter differing numbers of calls to \ttt{weight}.
When resampling is performed after every call to \ttt{weight} (\ttt{\#1}, \ttt{\#2}, and \ttt{\#3}), it is, because of the differing numbers of resamples, not obvious that inference is correct (e.g., the equivalent program in Anglican gives a runtime error).
Our results show that such a resampling strategy is indeed correct.

This strategy is far from optimal, however. For instance, only resampling at \ttt{\#3}, which is encountered the same number of times in each execution, performs much better \cite{lunden2018automatic,ronquist2020probabilistic}.
Our results show that this is correct as well, and that it gives the same asymptotic results as the naive strategy in the previous paragraph.

Another strategy is to resample only at \ttt{\#1} and \ttt{\#3}, again causing executions to encounter differing numbers of resamples.
Because \ttt{\#1} weights with (log) 0, this approach gives the same accuracy as resampling only at \ttt{\#3}, but avoids useless computation since a zero-weight execution can never obtain non-zero weight.
Equivalently to resampling at \ttt{\#1}, zero-weight executions can also be identified and stopped automatically at runtime.
This gives a direct performance gain, and both are correct by our results.
We compared the three strategies above for \gls{smc} inference with 50\,000 particles\footnote{%
  We repeated each experiment 20 times on a machine running Ubuntu 20.04 with an Intel i5-2500K CPU (4 cores) and 8GB memory. The standard deviation was under 0.1 seconds in all three cases.
}:
resampling at \ttt{\#1},\ttt{\#2}, and \ttt{\#3} resulted in a runtime of 15.0 seconds, at \ttt{\#3} in a runtime of 12.6 seconds, and at \ttt{\#1} and \ttt{\#3} in a runtime of 11.2 seconds.
Furthermore, resampling at \ttt{\#1},\ttt{\#2}, and \ttt{\#3} resulted in significantly worse accuracy compared to the other two strategies \cite{lunden2018automatic,ronquist2020probabilistic}.

Summarizing the above, the results in this paper ensure correctness when exploring different resampling placement strategies.
As just demonstrated, this is useful, because resampling strategies can have a large impact on SMC accuracy and performance.

\section{A Calculus for Probabilistic Programming Languages}\label{sec:calculus}
In this section, we define the calculus used throughout the paper.
\ifextended
In Section~\ref{sec:syntax}, we begin by defining the syntax, and demonstrate how simple probability distributions can be encoded using it.
In Section~\ref{sec:semantics}, we define the semantics and demonstrate it on the previously encoded probability distributions.
\else
In Section~\ref{sec:syntax}, we begin by defining the syntax, and demonstrate how a simple probability distribution can be encoded using it.
In Section~\ref{sec:semantics}, we define the semantics and demonstrate it on the previously encoded probability distribution.
\fi
This semantics is used in Section~\ref{sec:targetmeas} to define the \emph{target measure} for any given program.
In Section~\ref{sec:ressemantics}, we extend the semantics of Section~\ref{sec:semantics} to limit the number of allowed resamples in an evaluation. This extended semantics forms the foundation for formalizing \gls{smc} in Sections~\ref{sec:smcppl} and~\ref{sec:smckernel}.

\subsection{Syntax}\label{sec:syntax}

The main difference between the calculus presented in this section and the standard untyped lambda calculus is the addition of real numbers, functions operating on real numbers, a sampling construct for drawing random values from real-valued probability distributions, and a construct for weighting executions.
The rationale for making these additions is that, in addition to discrete probability distributions, continuous distributions are ubiquitous in most real-world models, and the weighting construct is essential for encoding inference problems.
In order to define the calculus, we let
$X$ be a countable set of variable names;
$D \in \mathbb{D}$ range over a countable set $\mathbb{D}$ of identifiers for families of probability distributions over $\R$, where the family for each identifier $D$ has a fixed number of real parameters $|D|$;
and $g \in \mathbb{G}$ range over a countable set $\mathbb{G}$ of identifiers for real-valued functions with respective arities $|g|$.
More precisely, for each $g$, there is a measurable function $\sigma_g : \R^{|g|} \to \R$.
For simplicity, we often use $g$ to denote both the identifier and its measurable function.
We can now give an inductive definition of the abstract syntax, consisting of values $\val$ and terms $\term$.
\begin{definition}\label{def:syntax}
  \begin{equation}
      \val \Coloneqq
      c \enspace | \enspace
      \lambda x. \term
      \quad
    \begin{aligned}
      \term \Coloneqq &\enspace
      \val
      \enspace | \enspace
      x
      \enspace | \enspace
      \term \enspace \term
      \enspace | \enspace
      \ttt{if} \ \term \ \ttt{then} \ \term \ \ttt{else} \ \term
      \enspace | \enspace
      g(\term_1, \ldots, \term_{|g|} )
      \\& \enspace | \enspace
      \ttt{sample}_D(\term_1, \ldots, \term_{|D|} )
      \enspace | \enspace
      \ttt{weight}(\term)
      \enspace | \enspace
      \ttt{resample}
    \end{aligned}
  \end{equation}
  Here, $c \in \R$, $x \in X$, $D \in \mathbb{D}$, $g \in \mathbb{G}$.
  We denote the set of all terms
  by $\T$ and the set of all values by $\mathbb{V}$.
\end{definition}
The formal semantics is given in Section~\ref{sec:semantics}.
Here, we instead give an informal description of the various language constructs.

Some examples of distribution identifiers are $\mathcal{N} \in \mathbb{D}$, the identifier for the family of normal distributions, and $\mathcal{U} \in \mathbb{D}$, the identifier for the family of continuous uniform distributions.
The semantics of the term $\ttt{sample}_\mathcal{N}(0,1)$ is, informally, ``draw a random sample from the normal distribution with mean 0 and variance 1''.
The \ttt{weight} construct is illustrated later in this section, and we discuss the $\ttt{resample}$ construct in detail in Sections~\ref{sec:ressemantics} and \ref{sec:smcppl}.

We use common syntactic sugar throughout the paper.
Most importantly, we use $\false$ and $\true$ as aliases for 0 and 1, respectively, and $()$ (unit) as another alias for 0.
Furthermore, we often write $g \in \mathbb{G}$ as infix operators.
For instance, $1 + 2$ is a valid term, where $+ \in \mathbb{G}$.
Now, let $\R_+$ denote the non-negative reals.
We define $f_D : \R^{|D|+1} \to \R_+$ as the function $f_D \in \mathbb{G}$ such that $f_D(c_1,\ldots,c_{|D|},\cdot)$ is the probability \emph{density} (continuous distribution) or \emph{mass} function (discrete distribution) for the probability distribution corresponding to $D \in \mathbb{D}$ and $(c_1,\ldots,c_{|D|})$.
For example, $f_\mathcal{N}(0,1,x) = \frac{1}{\sqrt{2\pi}}\cdot e^{-\frac{1}{2}\cdot x^2}$ is the standard probability density of the normal distribution with mean 0 and variance 1.
Lastly, we will also use \ttt{let} bindings, \ttt{let rec} bindings, sequencing using \ttt{;}, and lists (all of which can be encoded in the calculus).
Sequencing is required for the side-effects produced by \ttt{weight} (see Definition~\ref{def:stochsem}) and \ttt{resample} (see Sections~\ref{sec:ressemantics} and~\ref{sec:smcppl}).

\ifextended
The explicit \ttt{if} expressions in the language deserve special mention---as is well known, they can also be encoded in the lambda calculus.
The reason for explicitly including them in the calculus is to connect the lambda calculus to the continuous parts of the language.
That is, we need a way of making control flow depend on the result of calculations on real numbers (e.g., \lstinline!if $c_1 < c_2$ then $\term_1$ else $\term_2$!, where $c_1$ and $c_2$ are real numbers).
An alternative to adding \ttt{if}-expressions is to let comparison functions in $\mathbb{G}$ return Church Booleans, but this requires extending the codomain of primitive functions.
\fi

\pgfplotsset{%
  probplot/.style={%
    width=50mm,
    height=28mm,
    tick style={draw=none},
    tick label style={font=\scriptsize},
    label style={font=\scriptsize},
    x label style={at={(axis description cs:0.5,-0.3)},anchor=north},
  }
}

\ifextended
\begin{figure}[tb]
  \centering
  \begin{subfigure}[c]{0.45\textwidth}
    \centering
    \begin{tabular}{c}
      \begin{lstlisting}
let rec $\x{geometric}$ $\_$ =
  if sample$_{\x{Bern}}(0.6)$ then
    $1$ + $\x{geometric}$ $()$
  else $1$
in $\x{geometric}$ $()$
      \end{lstlisting}
    \end{tabular}
    \caption{}
    \label{fig:geometric:code}
  \end{subfigure}
  \begin{subfigure}[c]{0.45\textwidth}
    \centering
    \begin{tikzpicture}[trim axis left, trim axis right]
      \begin{axis}[%
        probplot,
        y label style={at={(axis description cs:-0.12,0.5)},anchor=south},
        ymin=0,
        xmax=11,
        ylabel=Probability,
        xlabel=Outcome,
        xtick={1,2,3,4,5,6,7,8,9},
        extra x ticks={10},
        extra x tick style={tick label style={xshift=1pt,yshift=-2.5pt}},
        extra x tick labels={$\cdots$},
        bar width=7pt,
        ]
        \addplot [
          fill=gray,
          ybar,
          ] coordinates {%
            (1,0.41) (2,0.6*0.4) (3,0.6^2*0.4)
            (4,0.6^3*0.4) (5,0.6^4*0.4) (6,0.6^5*0.4)
            (7,0.6^6*0.4) (8,0.6^7*0.4) (9,0.6^8*0.4)
            (10,0.6^9*0.4)
          };
      \end{axis}
    \end{tikzpicture}
    \caption{}
    \label{fig:geometric:graph}
  \end{subfigure}
  \caption{%
    The geometric distribution as a program $\termgeo$ in (a), and visualized in (b).
  }
  \label{fig:geometric}
\end{figure}

We now consider a set of examples.
In Section~\ref{sec:semantics} and Section~\ref{sec:targetmeasdef} these examples will be further considered to illustrate the semantics, and target measure, respectively.
Here, we first give the syntax, and informally discuss and visualize the probability distributions (i.e., the target measures, as we will see in Section~\ref{sec:targetmeasdef}) for the examples.

First, consider the program in Fig.~\ref{fig:geometric:code}.
This program encodes a slight variation on the standard geometric distribution: flip a coin with bias 0.6 (i.e., the flip will result in heads, or $\true$, $60\%$ of the time) until a flip results in tails ($\false$).
The probability distribution is over the number of flips before encountering tails (including the final tails flip), and is illustrated in Fig.~\ref{fig:geometric:graph}.
\fi

\begin{figure}[tb]
  \centering
  \begin{subfigure}[c]{0.19\textwidth}
    \centering
    \begin{tabular}{c}
      \begin{lstlisting}
sample$_{\x{Beta}}(2,2)$
      \end{lstlisting}
    \end{tabular}
    \caption{}
    \label{fig:beta:code}
  \end{subfigure}
  \begin{subfigure}[c]{0.48\textwidth}
    \centering
    \begin{tabular}{c}
      \begin{lstlisting}
let $p$ = sample$_{\x{Beta}}(2,2)$ in
let $\x{observe}$ $o$ =
  weight$(f_\x{Bern}(p,o))$ in
$\x{iter}$ $\x{observe}$ [$\true, \false, \true$]; $p$
      \end{lstlisting}
    \end{tabular}
    \caption{}
    \label{fig:betaobs:code}
  \end{subfigure}
  \begin{subfigure}[c]{0.30\textwidth}
    \centering
    \begin{tikzpicture}[trim axis left,trim axis right]
      \begin{axis}[%
        probplot,
        y label style={at={(axis description cs:-0.07,0.5)},anchor=south},
        ymin=0, ymax=2.5,
        xmin=0, xmax=1,
        ylabel=Density,
        xlabel=Outcome,
        xtick={0,0.5,1},
        ]
        \newcommand{\xalpha}{4}
        \newcommand{\xbeta}{3}
        \addplot [
          dashed,
          domain=0:1
          ] {%
            x^(\xalpha-1) * (1-x)^(\xbeta-1) /
            ((\xalpha-1)!*(\xbeta-1)!/(\xalpha+\xbeta-1)!)
          };
        \renewcommand{\xalpha}{2}
        \renewcommand{\xbeta}{2}
        \addplot [
          domain=0:1
          ] {%
            x^(\xalpha-1) * (1-x)^(\xbeta-1) /
            ((\xalpha-1)!*(\xbeta-1)!/(\xalpha+\xbeta-1)!)
          };
      \end{axis}
    \end{tikzpicture}
    \caption{}
    \label{fig:betaobs:graph}
  \end{subfigure}
  \caption{%
    The $\x{Beta}(2,2)$ distribution as a program in (a), and visualized with a solid line in (c).
    Also, the program $\termobs$ in (b), visualized with a dashed line in (c).
    The $\x{iter}$ function in (b) simply maps the given function over the given list and returns $()$.
    That is, it calls $\x{observe}$ $\true$, $\x{observe}$ $\false$, and $\x{observe}$ $\true$ purely for the side-effect of weighting.
  }
\end{figure}

\ifextended
The geometric distribution is a discrete distribution, meaning that the set of possible outcomes is countable.
We can also encode continuous distributions in the language.
\else
We now consider an example.
In Sections~\ref{sec:semantics} and~\ref{sec:targetmeasdef} this example will be further considered to illustrate the semantics, and target measure, respectively.
Here, we first give the syntax, and informally visualize the probability distributions (i.e., the target measures, as we will see in Section~\ref{sec:targetmeasdef}) for the example.
\fi
Consider first the program in Fig.~\ref{fig:beta:code}, directly encoding the $\x{Beta}(2,2)$ distribution, illustrated in Fig.~\ref{fig:betaobs:graph}.
This distribution naturally represents the uncertainty in the bias of a coin---in this case, the coin is most likely unbiased (bias 0.5), and biases closer to 0 and 1 are less likely.
In Fig.~\ref{fig:betaobs:code}, we extend Fig.~\ref{fig:beta:code} by observing the sequence $[\true,\false,\true]$ when flipping the coin.
These observations are encoded using the \ttt{weight} construct, which simply accumulates a product (as a side-effect) of all real-valued arguments given to it throughout the execution.
First, recall the standard mass function ($\sigma_{f_\x{Bern}}(p,\true) = p;\; \sigma_{f_\x{Bern}}(p,\false) = (1-p);\; \sigma_{f_\x{Bern}}(p,x) = 0$ otherwise) for the Bernoulli distribution corresponding to $f_\x{Bern} \in \mathbb{G}$.
The observations $[\true,\false,\true]$ are encoded using the $\x{observe}$ function, which uses the \ttt{weight} construct internally to assign weights to the current value $p$ according to the Bernoulli mass function.
As an example, assume we have drawn $p = 0.4$.
The weight for this execution is
$
\sigma_{f_\x{Bern}}(0.4,\true) \cdot
\sigma_{f_\x{Bern}}(0.4,\false) \cdot
\sigma_{f_\x{Bern}}(0.4,\true) = 0.4^2\cdot0.6
$.
Now consider $p = 0.6$ instead.
For this value of $p$ the weight is instead $0.6^2\cdot0.4$.
This explains the shift in Fig.~\ref{fig:betaobs:graph}---a bias closer to 1 is more likely, since we have observed two $\true$ flips, but only one $\false$.

\subsection{Semantics}\label{sec:semantics}
In this section, we define the semantics of our calculus.
The definition is split into two parts: a \emph{deterministic semantics} and a \emph{stochastic semantics}. We use evaluation contexts to assist in defining our semantics.
The evaluation contexts $\econt$ induce a call-by-value semantics, and are defined as follows.
\begin{definition}
  \begin{equation}
    \begin{aligned}
      \econt \Coloneqq& \enspace
      [\cdot]
      \enspace | \enspace
      \econt \enspace \term
      \enspace | \enspace
      (\lambda x. \term) \enspace \econt
      \enspace | \enspace
      \ttt{if} \ \econt \ \ttt{then} \ \term \ \ttt{else} \ \term
      \\& \enspace | \enspace
      g(c_1, \ldots, c_m, \econt, \term_{m+2}, \ldots, \term_{|g|} )
      \\& \enspace | \enspace
      \ttt{sample}_D(c_1, \ldots, c_m, \econt, \term_{m+2}, \ldots, \term_{|D|})
      \enspace | \enspace
      \ttt{weight}(\econt)
    \end{aligned}
  \end{equation}
  We denote the set of all evaluation contexts by $\mathbb{E}$.
\end{definition}

With the evaluation contexts in place, we proceed to define the \emph{deterministic semantics} through a small-step relation $\rightarrow_\textsc{Det}$.
\begin{definition}
  \begin{equation}
    \begin{gathered}
      \frac{}
      {\econt[(\lambda x. \term) \enspace \val]
      \rightarrow_\textsc{Det} \econt[[x \mapsto \val]\term]}
      (\textsc{App}) \quad
      \frac{c=\sigma_g(c_1, \ldots, c_{|g|})}
      {\econt[%
        g(c_1, \ldots, c_{|g|})]
      \rightarrow_\textsc{Det} \econt[c]}
      (\textsc{Prim}) \\[2mm]
      \frac{}
      {\econt[\ttt{if} \ \mathit{true} \ \ttt{then} \ \term_1 \ \ttt{else} \ \term_2]
      \rightarrow_\textsc{Det} \econt[\term_1]}
      (\textsc{IfTrue}) \\
      \frac{}
      {\econt[%
        \ttt{if} \ \mathit{false} \ \ttt{then} \ \term_1 \ \ttt{else} \ \term_2]
      \rightarrow_\textsc{Det} \econt[\term_2]}
      (\textsc{IfFalse}) \\[2mm]
    \end{gathered}
  \end{equation}
\end{definition}
The rules are straightforward, and will not be discussed in further detail here.
We use the standard notation for transitive and reflexive closures (e.g. $\rightarrow_\textsc{Det}^*$), and transitive closures (e.g. $\rightarrow_\textsc{Det}^+$) of relations throughout the paper.

Following the tradition of Kozen~\cite{kozen1981semantics} and Park et al.~\cite{park2008probabilistic}, sampling in our stochastic semantics works by consuming randomness from a tape of real numbers.
We use inverse transform sampling, and therefore the tape consists of numbers from the interval $[0,1]$.
In order to use inverse transform sampling, we require that for each $D \in \mathbb{D}$, there exists a measurable function $F^{-1}_D : \R^{|D|} \times [0,1] \to \R$, such that $F^{-1}_D(c_1,\ldots,c_{|D|},\cdot)$ is the \emph{inverse cumulative distribution function} for the probability distribution corresponding to $D$ and $(c_1,\ldots,c_{|D|})$.
We call the tape of real numbers a \emph{trace}, and make the following definition.
\begin{definition}
  Let $\mathbb{N}_0 = \mathbb{N} \cup \{ 0 \}$.
  The set of all traces is
  $
  \Tr = \bigcup_{n \in \mathbb{N}_0} [0,1]^n.
  $
\end{definition}
We use the notation $\trace{c_1,c_2,\ldots,c_n}$ to indicate the trace consisting of the $n$ numbers $c_1,c_2,\ldots,c_n$.
Given a trace $s$, we denote by $|s|$ the length of the trace.
We also denote the concatenation of two traces $s$ and $s'$ with $s \ast s'$.
Lastly, we let $c :: s$ denote the extension of the trace $s$ with the real number $c$ as head.

With the traces and $F_D^{-1}$ defined, we can proceed to the stochastic\footnote{Note that the semantics models stochastic behavior, but is itself a deterministic relation.} semantics $\rightarrow$ over $\T \times \R_+ \times \Tr$.
\begin{definition}\label{def:stochsem}
  \begin{equation}
    \stopt \Coloneqq
    \val
    \enspace | \enspace
    \econt[\ttt{sample}_D(c_1, \ldots, c_{|D|} )]
    \enspace | \enspace
    \econt[\ttt{weight}(c)]
    \enspace | \enspace
    \econt[\ttt{resample}]
  \end{equation}
  \vspace{-8mm}
  \begin{equation}
    \begin{gathered}
      \frac{\term \rightarrow_\textsc{Det}^{+}\stopt}
      {\term,w,s \rightarrow \stopt,w,s}
      (\textsc{Det}) \
      \frac{c \ge 0}
      {\econt[\ttt{weight}(c)],w,s \rightarrow \econt[()],w\cdot c,s}
      (\textsc{Weight}) \\[2mm]
      \frac{c = F^{-1}_{D}(c_1, \ldots, c_{|D|},p)}
      {\econt[\ttt{sample}_D(c_1, \ldots, c_{|D|} )],w,p::s \rightarrow \econt[c],w,s}
      (\textsc{Sample})  \\[2mm]
      \frac{}
      {\econt[\ttt{resample}],w,s \rightarrow \econt[()],w,s}
      (\textsc{Resample}) \\[2mm]
    \end{gathered}
  \end{equation}
\end{definition}
The rule $(\textsc{Det})$ encapsulates the $\rightarrow_\textsc{Det}$ relation, and states that terms can move deterministically only to terms of the form $\stopt$.
Note that terms of the form $\stopt$ are found at the left-hand side in the other rules.
The $(\textsc{Sample})$ rule describes how random values are drawn from the inverse cumulative distribution functions and the trace when terms of the form $\ttt{sample}_D(c_1, \ldots, c_{|D|})$ are encountered.
Similarly, the $\textsc{Weight}$ rule determines how the weight is updated when $\ttt{weight}(c)$ terms are encountered.
Finally, the \ttt{resample} construct always evaluates to unit, and is therefore meaningless from the perspective of this semantics.
We elaborate on the role of the \ttt{resample} construct in Section~\ref{sec:ressemantics}.

With the semantics in place, we define two important functions over $\mathbb{S}$ for a given term.
In the below definition, assume that a fixed term $\term$ is given.
\begin{definition}
  \begin{equation}
    r_\term(s) =
    \begin{cases}
      \val & \text{if } \term,1,s \rightarrow^* \val,w,\trace{} \\
      () & \text{otherwise}
    \end{cases}
    \quad
    f_\term(s) =
    \begin{cases}
      w & \text{if } \term,1,s \rightarrow^* \val,w,\trace{} \\
      0 & \text{otherwise}
    \end{cases}
  \end{equation}
\end{definition}
Intuitively, $r_\term$ is the function returning the \emph{result value} after having repeatedly applied $\rightarrow$ on the initial trace $s$.
Analogously, $f_\term$ gives the \emph{density} or \emph{weight} of a particular $s$.
Note that, if $(\term,1,s)$ gets stuck or diverges, the result value is $()$, and the weight is 0.
In other words, we disregard such traces entirely, since we are in practice only interested in probability distributions over values.
Furthermore, note that if the final $s \neq \trace{}$, the value and weight are again $()$ and 0, respectively.
The motivation for this is discussed in Section~\ref{sec:targetmeasdef}.

\ifextended
To illustrate $r_\term$ and $f_\term$, first consider the geometric program $\termgeo$ in Fig.~\ref{fig:geometric:code}, and a trace $s = \trace{0.5,0.3,0.7}$.
Let
$
  \econt = \ttt{if} \ [\cdot] \ \ttt{then} \ 1 + \x{geometric} \ () \ \ttt{else} \ 1.
$
It is easy to check that $\termgeo \rightarrow_\textsc{Det}^+ \econt[\ttt{sample}_{\x{Bern}}(0.6)]$.
Now, note that, since $\x{Bern}(0.6)$ is the probability distribution for flipping a coin with bias $0.6$,
\begin{equation}
  F_\x{Bern}^{-1}(0.6,x) = 1 \text{ if } x \leq 0.6 \qquad
  F_\x{Bern}^{-1}(0.6,x) = 0 \text{ if } x > 0.6.
\end{equation}
As such, we have
\begin{equation}
  \begin{aligned}
    &
    \termgeo,1,\trace{0.5,0.3,0.7}
    \rightarrow \econt[\ttt{sample}_{\x{Bern}}(0.6)],1,\trace{0.5,0.3,0.7}
    \\&
    \rightarrow \econt[F^{-1}_\x{Bern}(0.6,0.5)],1,\trace{0.3,0.7}
    = \econt[1],1,\trace{0.3,0.7}
    \\&
    \rightarrow^+ 1 + \econt[F^{-1}_\x{Bern}(0.6,0.3)],1,\trace{0.7}
    \rightarrow^+ 1 + 1 + \econt[F^{-1}_\x{Bern}(0.6,0.7)],1,\trace{}
    \\&
    = 1 + 1 + \econt[0],1,\trace{}
    \rightarrow 3,1,\trace{}.
  \end{aligned}
\end{equation}
It follows that $r_\termgeo(s) = 3$, and that $f_\termgeo(s) = 1$.
Now, instead consider the trace $s_2 = \trace{0.5,0.7,0.3}$. We have
\begin{equation}
  \begin{aligned}
    &\termgeo,1,\trace{0.5,0.7,0.3}
    \rightarrow^+ \econt[1],1,\trace{0.7,0.3}
    \\
    & \quad \rightarrow^+ 1 + \econt[0],1,\trace{0.3}
    = 1 + 1,1,\trace{0.3}
    \rightarrow 2,1,\trace{0.3}.
  \end{aligned}
\end{equation}
The term is now stuck, and because we have not used up the entire trace, we have
$r_\termgeo(s_2) = ()$,
$f_\termgeo(s_2) = 0$.
The opposite of the above can also occur---given the trace $s_3 = \trace{0.5,0.3}$, it
holds that
$r_\termgeo(s_3) = ()$ and
$f_\termgeo(s_3) = 0$, since the provided trace is
not long enough.
In general, we have that $r_\termgeo(s) = n$ and $f_\termgeo(s) = 1$ whenever $s \in [0,0.6]^{n-1} \times (0.6,1]$.
Otherwise, $r_\termgeo(s) = ()$ and $f_\termgeo(s) = 0$.
We will apply this conclusion when reconsidering this example in Section~\ref{sec:targetmeasdef}.
\fi

\ifextended
To illustrate the \ttt{weight} construct, consider the program $\termobs$ in Fig.~\ref{fig:betaobs:code}, and the singleton trace $\trace{0.8}$.
\else
To illustrate $r_\term$, $f_\term$, and the \ttt{weight} construct, consider the program $\termobs$ in Fig.~\ref{fig:betaobs:code}, and the singleton trace $\trace{0.8}$.
\fi
This program will, in total, evaluate one call to \ttt{sample}, and three calls to \ttt{weight}.
Now, let $h(c) = F^{-1}_\x{Beta}(2,2,c)$ and recall the function $\sigma_{f_\x{Bern}}$ from Section~\ref{sec:syntax}.
Using the notation $\phi(c,x) = \sigma_{f_\x{Bern}}(h(c),x)$, we have, for some evaluation contexts $\econt_1,\econt_2,\econt_3,\econt_4$,
\begin{equation}
  \begin{aligned}
    &
    \termobs,1,\trace{0.8}
    = \econt_1[\ttt{sample}_\x{Beta}(2,2)],1,\trace{0.8}
    \rightarrow \econt_1[h(0.8)],1,\trace{}
    \\&
    \rightarrow \econt_2[\ttt{weight}(\phi(0.8,\true))],1,\trace{}
    \rightarrow \econt_2[()],\phi(0.8,\true),\trace{}
    \\&
    = \econt_2[()],h(0.8),\trace{}
    \rightarrow^+ \econt_3[()], \phi(0.8,\false) \cdot h(0.8),\trace{}
    \\&
    \rightarrow^+ \econt_4[()],
    \phi(0.8,\true) \cdot (1-h(0.8)) \cdot h(0.8) ,\trace{}
    \\&
    \rightarrow^+ h(0.8), h(0.8) \cdot (1-h(0.8)) \cdot h(0.8),\trace{}.
  \end{aligned}
\end{equation}
That is, $r_\termobs(\trace{0.8}) = h(0.8)$ and $f_\termobs(\trace{0.8}) = h(0.8)^2(1-h(0.8))$.
For arbitrary $c$, we see that $r_\termobs(\trace{c}) = h(c)$ and $f_\termobs(\trace{c}) = h(c)^2(1-h(c))$.
For any other trace $s$ with $|s| \neq 1$, $r_\termobs(s) = ()$ and $f_\termobs(s) = 0$.
We will apply this result when reconsidering this example in Section~\ref{sec:targetmeasdef}.

\subsection{Resampling Semantics}\label{sec:ressemantics}
In order to connect \gls{smc} in \glspl{ppl} to the classical formalization of \gls{smc} presented in Section~\ref{sec:formalsmc}---and thus enabling the theoretical treatments in Sections~\ref{sec:smcppl} and~\ref{sec:smckernel}---we need a relation in which terms ``stop'' after a certain number $n$ of encountered \ttt{resample} terms.
In this section, we define such a relation, denoted by $\hookrightarrow$.
Its definition is given below.
\begin{definition}
  \begin{equation}
    \begin{gathered}
      \frac{\term \neq \econt[\ttt{resample}] \quad
      \term,w,s \rightarrow \term',w',s'}
      {\term,w,s,n \hookrightarrow \term',w',s',n}
      \textsc{(Stoch-Fin)}
      \\[2mm]
      \frac{n > 0 \quad \econt[\ttt{resample}],w,s \rightarrow \econt[()],w,s}
      {\econt[\ttt{resample}],w,s,n \hookrightarrow \econt[()],w,s,n-1}
      \textsc{(Resample-Fin)}
    \end{gathered}
  \end{equation}
\end{definition}
This relation is $\rightarrow$ extended with a natural number $n$, indicating how many further \ttt{resample} terms can be evaluated.
We implement this limitation by replacing the rule $(\textsc{Resample})$ of $\rightarrow$ with $(\textsc{Resample-Fin})$ of $\hookrightarrow$ above which decrements $n$ each time it is applied, causing terms to get stuck at the $n+1$th resample encountered.

Now, assume that a fixed term $\term$ is given.
We define $r_{\term,n}$ and $f_{\term,n}$ similar to $r_\term$ and $f_\term$.
\begin{definition}\label{lemma:resulttermn}
  $
    r_{\term,n}(s) =
    \begin{cases}
      \val & \text{if }
      \term,1,s,n \hookrightarrow^*
      \val,w,\trace{},n' \\
      \econt[\ttt{resample}] & \text{if }
      \term,1,s,n \hookrightarrow^*
      \econt[\ttt{resample}],w,\trace{},0 \\
      () & \text{otherwise}
    \end{cases}
  $
\end{definition}
\begin{definition}\label{def:termdensityn}
  $
    f_{\term,n}(s) =
    \begin{cases}
      w & \text{if }
      \term,1,s,n \hookrightarrow^*
      \val,w,\trace{},n' \\
      w & \text{if }
      \term,1,s,n \hookrightarrow^*
      \econt[\ttt{resample}],w,\trace{},0 \\
      0 & \text{otherwise}
    \end{cases}
  $
\end{definition}
As for $r_\term$ and $f_\term$, these functions return the result value and weight, respectively, after having repeatedly applied $\hookrightarrow$ on the initial trace $s$.
There is one difference compared to $\rightarrow$: besides values, we now also allow stopping with non-zero weight at terms of the form $\econt[\ttt{resample}]$.

To illustrate $\hookrightarrow$, $r_{\term,n}(s)$, and $f_{\term,n}(s)$, consider the term $\termseq$ defined by
\begin{equation}\label{eq:seq}
  \begin{tabular}{c}
    \begin{lstlisting}
let $\x{observe}$ $x$ $o$ = weight$(f_\mathcal{N}(x,4,o))$; resample in
let $\x{sim}$ $x_{n-1}$ $o_n$ =
  let $x_{n}$ = sample$_\mathcal{N}(x_{n-1} + 2, 1)$ in $\x{observe}$ $x_{n}$ $o_n$; $x_{n}$ in
let $x_0$ = sample$_{\mathcal{U}}(0,100)$ in
let $f$ = $\x{foldl}$ $\x{sim}$ in $f$ $x_0$ [$c_1,c_2,\ldots,c_{t-1},c_t$]$\text{.}$
    \end{lstlisting}
  \end{tabular}
\end{equation}
This term encodes a model in which an object moves along a real-valued axis in discrete time steps, but where the actual positions ($x_1$, $x_2$, \ldots) can only be observed through a noisy sensor ($c_1$, $c_2$, \ldots). The inference problem consists of finding the probability distribution for the very last position, $x_t$, given all collected observations ($c_1$, $c_2$, \ldots, $c_t$).
Most importantly, note the position of \ttt{resample} in \eqref{eq:seq}---it is evaluated just after evaluating \ttt{weight} in every folding step.
Because of this, for $n < t$ and all traces $s$ such that $f_{\termseq,n}(s) > 0$, we have $r_{\termseq,n}(s) = \ttt{$\econt_{\x{seq}}^n[$resample; $x_{n}]$}$, where
 \lstinline!$\econt_{\x{seq}}^n$ = $f$ $[\cdot]$ [$c_{n+1},c_{n+2},\ldots,c_{t-1},c_{t}$]!
and where $x_n$ is the value sampled in $\x{sim}$ at the $n$th folding step.
That is, we can now ``stop'' evaluation at \ttt{resample}s.
We will revisit this example in Section~\ref{sec:smcppl}.

\section{The Target Measure of a Program}\label{sec:targetmeas}
In this section, we define the \emph{target measure} induced by any given program in our calculus.
We assume basic familiarity with measure theory, Lebesgue integration, and Borel spaces.
McDonald and Weiss~\cite{mcdonald2012course} provide a pedagogical introduction to the subject.
\ifextended
We also summarize the definitions and lemmas used in this article in Appendix~\ref{sec:prelim1}.
\fi
In order to define the target measure of a program as a Lebesgue integral (Section~\ref{sec:targetmeasdef}), we require a \emph{measure space} on traces (Section~\ref{sec:tracespace}), and a \emph{measurable space} on terms (Section~\ref{sec:termspace}).
\ifextended
For illustration, we derive the target measures for two of the example programs from Section~\ref{sec:calculus} in Section~\ref{sec:targetmeasdef}.
\else
For illustration, we derive the target measure for the example program from Section~\ref{sec:calculus} in Section~\ref{sec:targetmeasdef}.
\fi
The concepts presented in this section are quite standard, and experienced readers might want to quickly skim it, or even skip it entirely.

\subsection{A Measure Space over Traces}
\label{sec:tracespace}
We use a standard measure space over traces of samples~\cite{Mak2020DensitiesOA}.
First, we define a measurable space over traces.
We denote the Borel $\sigma$-algebra on $\mathbb{R}^n$ with $\Bor^n$, and the Borel $\sigma$-algebra on $[0,1]$ with $\Bor_{[0,1]}^n$.
\begin{definition}
The $\sigma$-algebra $\Trcal$ on $\Tr$ is the $\sigma$-algebra
consisting of sets of the form
$
  S = \bigcup_{n\in\mathbb{N}_0} B_n
$
with $B_n \in \Bor_{[0,1]}^n$.
Naturally, $[0,1]^0$ is the singleton set containing the empty trace.
In other words,
$
  ([0,1]^0, \Bor_{[0,1]}^0)
  =
  (\{\trace{}\},\{\{\trace{}\},\emptyset\} ),
$
where $\trace{}$ denotes the empty trace.
\end{definition}

\begin{lemma}\label{lemma:tracemeas}
  \lemmatracemeas$^\dagger$
\end{lemma}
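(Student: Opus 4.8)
The plan is to verify directly that $\Trcal$ satisfies the three defining axioms of a $\sigma$-algebra on $\Tr$: it contains $\Tr$, it is closed under complementation (relative to $\Tr$), and it is closed under countable unions. The whole argument exploits one structural feature of the construction: since every trace has a well-defined length, the cubes $[0,1]^n$ for $n \in \mathbb{N}_0$ are pairwise disjoint and their union is exactly $\Tr$. Thus $\Trcal$ is nothing but the coproduct (disjoint-union) $\sigma$-algebra of the family $\{([0,1]^n, \Bor_{[0,1]}^n)\}_{n \in \mathbb{N}_0}$, and the axioms for $\Trcal$ will reduce componentwise to the corresponding axioms for each $\Bor_{[0,1]}^n$, which are $\sigma$-algebras by definition.

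The first and most useful step I would establish is a \emph{componentwise characterization}: a set $S \subseteq \Tr$ lies in $\Trcal$ if and only if $S \cap [0,1]^n \in \Bor_{[0,1]}^n$ for every $n \in \mathbb{N}_0$. The forward direction is immediate, since if $S = \bigcup_m B_m$ with each $B_m \in \Bor_{[0,1]}^m$, then disjointness of the cubes forces $S \cap [0,1]^n = B_n$. The reverse direction is equally immediate, since $S = \bigcup_n (S \cap [0,1]^n)$ always holds. In particular the representation $S = \bigcup_n B_n$ is unique with $B_n = S \cap [0,1]^n$, which is what makes the later manipulations clean.

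Given this characterization, I would then discharge the three axioms. For the whole space, note $\Tr \cap [0,1]^n = [0,1]^n \in \Bor_{[0,1]}^n$, so $\Tr \in \Trcal$; in particular the empty trace component $[0,1]^0 = \{\trace{}\}$ causes no trouble since its $\sigma$-algebra $\Bor_{[0,1]}^0 = \{\{\trace{}\},\emptyset\}$ contains the whole component. For complements, if $S = \bigcup_n B_n \in \Trcal$ then the complement relative to $\Tr$ intersects each cube in the relative complement there, i.e. $(\Tr \setminus S) \cap [0,1]^n = [0,1]^n \setminus B_n \in \Bor_{[0,1]}^n$, using that each $\Bor_{[0,1]}^n$ is closed under complements within $[0,1]^n$; hence $\Tr \setminus S \in \Trcal$. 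For countable unions, given $S^{(k)} = \bigcup_n B_n^{(k)}$ for $k \in \mathbb{N}$, the disjointness lets me swap the unions, $\bigcup_k S^{(k)} = \bigcup_n \bigl(\bigcup_k B_n^{(k)}\bigr)$, and $\bigcup_k B_n^{(k)} \in \Bor_{[0,1]}^n$ by countable-union closure of each component $\sigma$-algebra, so the result lies in $\Trcal$.

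There is essentially no hard part here; the lemma is a routine instance of the fact that a disjoint union of measurable spaces carries a natural $\sigma$-algebra. The only point requiring care is that complementation is taken relative to $\Tr$ rather than within a single cube, and the componentwise characterization is precisely the device that confirms this restricts correctly to each $[0,1]^n$. I would therefore expect the writeup to consist mainly of the bookkeeping that pushes each set operation through the disjoint decomposition.
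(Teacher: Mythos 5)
Your proof is correct and follows essentially the same route as the paper's: a direct verification of the $\sigma$-algebra axioms, with complements and countable unions pushed componentwise through the decomposition of $\Tr$ into the disjoint cubes $[0,1]^n$. Your componentwise characterization $S \in \Trcal \iff S \cap [0,1]^n \in \Bor_{[0,1]}^n$ merely makes explicit the disjointness and uniqueness of the representation $S = \bigcup_n B_n$ that the paper's proof uses implicitly when it writes $S^c = \bigcup_n B_n^c$.
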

The most common measure on $\Bor^n$ is the $n$-dimensional Lebesgue measure, denoted $\lambda_n$.
For $n = 0$, we let $\lambda_0 = \delta_{\trace{}}$, where $\delta$ denotes the standard Dirac measure.
By combining the Lebesgue measures for each $n$, we construct a measure $\mu_\Tr$ over $(\Tr,\Trcal)$.
\begin{definition}
  $
  \mu_\Tr(S) = \mu_\Tr\left(\bigcup_{n\in\mathbb{N}_0} B_n\right) =
  \sum_{n\in\mathbb{N}_0} \lambda_n(B_n)
  $
\end{definition}
\begin{lemma}\label{lemma:tracemeasspace}
  \lemmatracemeasspace$^\dagger$
\end{lemma}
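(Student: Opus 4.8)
The plan is to verify the three measure axioms for $\mu_\Tr$ on the measurable space $(\Tr,\Trcal)$ (already provided by Lemma~\ref{lemma:tracemeas}), and then to exhibit an explicit countable cover witnessing $\sigma$-finiteness. The structural fact I would lean on throughout is that the cylinder spaces $[0,1]^n$ \emph{partition} $\Tr$: every trace has a well-defined length, so $[0,1]^m \cap [0,1]^n = \emptyset$ for $m \neq n$ and $\bigcup_{n} [0,1]^n = \Tr$. This is essentially the statement that $\mu_\Tr$ is the coproduct (disjoint-union) measure assembled from the factors $([0,1]^n,\Bor_{[0,1]}^n,\lambda_n)$. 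A consequence I would record first, since it is a precondition for $\mu_\Tr$ being a function at all, is that the decomposition $S = \bigcup_n B_n$ with $B_n \in \Bor_{[0,1]}^n$ of any $S \in \Trcal$ is \emph{unique}: necessarily $B_n = S \cap [0,1]^n$. Hence $\mu_\Tr(S) = \sum_n \lambda_n(B_n)$ is unambiguous.

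Next I would dispatch the easy axioms. Nonnegativity is immediate, as $\mu_\Tr$ is a sum of (nonnegative) measures, and $\mu_\Tr(\emptyset) = \sum_n \lambda_n(\emptyset) = 0$ since the empty set has all blocks $B_n = \emptyset$. This reduces the measure-space claim to countable additivity.

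For countable additivity I would take pairwise disjoint $\{S_k\}_k$ in $\Trcal$ and set $B_{k,n} = S_k \cap [0,1]^n$, so $S_k = \bigcup_n B_{k,n}$. For each fixed $n$ the family $\{B_{k,n}\}_k$ is pairwise disjoint (being subsets of the disjoint $S_k$), and $(\bigcup_k S_k)\cap[0,1]^n = \bigcup_k B_{k,n}$. Using the definition of $\mu_\Tr$ and countable additivity of each $\lambda_n$,
\[
  \mu_\Tr\Big(\bigcup_k S_k\Big)
  = \sum_n \lambda_n\Big(\bigcup_k B_{k,n}\Big)
  = \sum_n \sum_k \lambda_n(B_{k,n}).
\]
Since every term is nonnegative, I can interchange the order of the two summations by Tonelli's theorem for series (equivalently, unconditional summability of nonnegative double series), obtaining $\sum_k \sum_n \lambda_n(B_{k,n}) = \sum_k \mu_\Tr(S_k)$. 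This interchange is the one step that deserves care and is really the only genuine obstacle; everything else is bookkeeping on the disjoint decomposition, which is why the lemma is marked as standard.

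Finally, for $\sigma$-finiteness I would use the cover $E_n = [0,1]^n$ for $n \in \mathbb{N}_0$. Each $E_n$ lies in $\Trcal$ (it is the set whose block at level $n$ is $[0,1]^n$ and whose other blocks are empty), the $E_n$ exhaust $\Tr$ by the partition property above, and $\mu_\Tr(E_n) = \lambda_n([0,1]^n) = 1 < \infty$, with the $n = 0$ case reading $\delta_{\trace{}}(\{\trace{}\}) = 1$. A countable cover by sets of finite measure is precisely $\sigma$-finiteness, which completes the argument.
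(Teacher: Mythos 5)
Your proof is correct and follows essentially the same route as the paper's: verify nonnegativity, $\mu_\Tr(\emptyset)=0$, and countable additivity via the blockwise decomposition $B_{k,n}=S_k\cap[0,1]^n$ with an interchange of the two sums, then exhibit the cover $[0,1]^n$, $n\in\mathbb{N}_0$, each of $\mu_\Tr$-measure $1$, for $\sigma$-finiteness. If anything, you are slightly more careful than the paper, which silently performs the sum interchange that you justify by Tonelli for nonnegative double series, and which does not record your (correct) observation that the decomposition $S=\bigcup_n B_n$ is unique, so that $\mu_\Tr$ is well-defined.
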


A comment on notation: we denote
universal sets by blackboard bold capital letters (e.g., $\Tr$),
$\sigma$-algebras by calligraphic capital letters (e.g., $\Trcal$),
members of $\sigma$-algebras by capital letters (e.g., $S$), and
individual elements by lower case letters (e.g., $s$).

\subsection{A Measurable Space over Terms}\label{sec:termspace}
In order to show that $r_\term$ is measurable, we need a measurable space over terms.
We let $(\T,\Tcal)$ denote the measurable space that we seek to construct, and follow the approach in Staton et al.~\cite{staton2016semantics} and V\'{a}k\'{a}r et al.~\cite{vakar2019domain}.
Because our calculus includes the reals, we would like to at least have $\Bor \subset \Tcal$.
Furthermore, we would also like to extend the Borel measurable sets $\mathcal{B}^n$ to terms with $n$ reals as subterms.
For instance, we want sets of the form
$\{%
  (\lambda x.\ (\lambda y.\ x + y) \enspace c_2)
  \enspace c_1 \mid (c_1,c_2) \in B_2
\}$
to be measurable, where $B_2 \in \Bor^2$.
This leads us to consider terms in a language in which constants (i.e., reals) are replaced with placeholders $[\cdot]$.
\begin{definition}
  Let
  $
    \val_p \Coloneqq
    [\cdot] \enspace | \enspace
    \lambda x. \term
  $
  replace the values $\val$ from Definition~\ref{def:syntax}.
  The set of all terms in the resulting new calculus is denoted with $\T_p$.
\end{definition}
Most importantly, it is easy to verify that $\T_p$ is countable.
Next, we make the following definitions.
\begin{definition}
  For $n \in \mathbb{N}_0$, we denote by $\T_p^n \subset \T_p$ the set of all terms with exactly $n$ placeholders.
\end{definition}
\begin{definition}
  We let $\term_p^n$ range over the elements of $\T_p^n$.
  The $\term_p^n$ can be regarded as functions $\term_p^n : \R^n \to \term_p^n(\R_n)$ which replaces the $n$ placeholders with the $n$ reals given as arguments.
\end{definition}
\begin{definition}
  $\Tcal_{\term_p^n} = \{\term_p^n(B_n) \mid B_n \in \Bor^n\}$.
\end{definition}
From the above definitions, we construct the required $\sigma$-algebra $\Tcal$.
\begin{definition}
  The $\sigma$-algebra $\Tcal$ on $\T$ is the $\sigma$-algebra consisting of sets of the form
  $
    T =
    \bigcup_{n \in N_0}
    \bigcup_{\term^n_p \in \T^n_p}
    \term_p^n(B_n).
  $
\end{definition}
\begin{lemma}\label{lemma:termmeas}
  \lemmatermmeas$^\dagger$
\end{lemma}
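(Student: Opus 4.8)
The plan is to verify that $\Tcal$ satisfies the three defining axioms of a $\sigma$-algebra on $\T$: that it contains $\T$, is closed under complements, and is closed under countable unions. Everything rests on one structural fact that I would establish first, namely that each term $\term \in \T$ admits a \emph{unique} decomposition $\term = \term_p^n(c_1,\ldots,c_n)$ into a skeleton $\term_p^n \in \T_p$ and a tuple of reals $(c_1,\ldots,c_n) \in \R^n$. This holds because the abstract syntax tree of $\term$ fixes the positions of its real constants; replacing each constant by $[\cdot]$ yields the unique skeleton, and reading off the constants in order yields the unique tuple. Consequently each $\term_p^n : \R^n \to \T$ is injective, the images of distinct skeletons are disjoint, and $\T = \bigsqcup_{\term_p^n \in \T_p} \term_p^n(\R^n)$ is a partition of $\T$ into countably many cells, the countability coming from the countability of $\T_p$.

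With this decomposition in hand, $\Tcal$ is exactly the coproduct (disjoint-union) $\sigma$-algebra of the countable family of spaces $(\term_p^n(\R^n), \Tcal_{\term_p^n})$, where each $\Tcal_{\term_p^n}$ is the pushforward of $\Bor^n$ along the bijection $\term_p^n$ and is therefore a $\sigma$-algebra on its cell. A generic element of $\Tcal$ is $T = \bigcup_{\term_p^n} \term_p^n(B_{\term_p^n})$ with one Borel set $B_{\term_p^n} \in \Bor^n$ per skeleton. For the first axiom, taking $B_{\term_p^n} = \R^n$ for every skeleton gives $T = \T$, so $\T \in \Tcal$, and taking every $B_{\term_p^n} = \emptyset$ gives $\emptyset \in \Tcal$.

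For closure under complements I would exploit injectivity and disjointness of the cells to compute, for $T = \bigcup_{\term_p^n} \term_p^n(B_{\term_p^n})$,
\[
  \T \setminus T = \bigcup_{\term_p^n \in \T_p} \term_p^n\bigl(\R^n \setminus B_{\term_p^n}\bigr),
\]
and since each $\Bor^n$ is closed under complements, $\R^n \setminus B_{\term_p^n} \in \Bor^n$, so $\T \setminus T \in \Tcal$. For closure under countable unions, given $T_k = \bigcup_{\term_p^n} \term_p^n(B^{(k)}_{\term_p^n})$, I would regroup the two unions by skeleton and use that $\term_p^n$ sends unions to unions (again by bijectivity):
\[
  \bigcup_{k} T_k = \bigcup_{\term_p^n \in \T_p} \term_p^n\Bigl(\bigcup_{k} B^{(k)}_{\term_p^n}\Bigr),
\]
where each inner set is a countable union of Borel sets and hence lies in $\Bor^n$; therefore $\bigcup_k T_k \in \Tcal$.

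The hard part will be the unique-decomposition claim, since once the cells are known to be disjoint and each $\term_p^n$ is a bijection onto its image, the remaining steps are mechanical. I would therefore spend the care there, and in particular treat the degenerate case $n = 0$ explicitly: $\R^0$ is a one-point space, $\Bor^0 = \{\emptyset, \R^0\}$, and $\term_p^0(\R^0)$ is the singleton cell consisting of the constant-free term $\term_p^0$ itself, so that these cells are covered uniformly by the same argument.
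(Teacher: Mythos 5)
Your proposal is correct and takes essentially the same route as the paper: the paper first records that the maps $\term_p^n$ are bijections (its Lemma~\ref{lemma:placeholderbijection}, which is exactly your unique-decomposition claim) and that each $\Tcal_{\term_p^n}$ is a $\sigma$-algebra, and then verifies nonemptiness, closure under complements, and closure under countable unions by commuting these operations past the skeleton maps, just as you do. Your write-up is in fact somewhat more explicit than the paper's terse verification about the disjoint partition of $\T$ into countably many cells and about the degenerate case $n=0$, but these are refinements of detail, not a different argument.
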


\subsection{The Target Measure}\label{sec:targetmeasdef}
We are now in a position to define the target measure.
We will first give the formal definitions, and then illustrate the definitions with \ifextended examples\else an example\fi.
The definitions rely on the following result.
\begin{lemma}\label{lemma:rfmeas}
  \lemmarfmeas$^\dagger$
\end{lemma}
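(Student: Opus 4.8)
The plan is to exploit the determinism of $\rightarrow$ (for a fixed $\term$, the configuration $\term,1,s$ admits at most one maximal reduction sequence, as noted in the semantics) and to reduce measurability to a \emph{countable measurable partition} of $\Tr$ indexed by reduction paths. First I would use the disjoint-union structure $\Tr = \bigcup_{n \in \mathbb{N}_0}[0,1]^n$: each $[0,1]^n$ lies in $\Trcal$, and the trace of $\Trcal$ on $[0,1]^n$ is exactly $\Bor_{[0,1]}^n$, so for measurable $A$ we have $h^{-1}(A) = \bigcup_n (h|_{[0,1]^n})^{-1}(A)$, a countable union of members of $\Trcal$. Hence a map on $\Tr$ is measurable iff each restriction $h|_{[0,1]^n}$ is. On the set of traces $s \in [0,1]^n$ for which $\term,1,s$ does \emph{not} reduce to some $\val,w,\trace{}$ (it gets stuck, diverges, or leaves a non-empty trace), both $r_\term$ and $f_\term$ are the constants $()$ and $0$; so it remains to handle the terminating traces, and I will obtain the diverging/stuck traces for free as a measurable complement.

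The heart of the argument is a notion of \emph{reduction path}: a finite sequence of rule instances starting from $\term,1,\cdot$ and ending at a value with empty remaining trace, consuming exactly some number $k$ of samples. For each such path $\pi$ I would let $U_\pi \subseteq [0,1]^k$ be the set of traces that follow $\pi$, and establish by induction on the length of $\pi$ the invariant that $U_\pi$ is measurable and that on $U_\pi$ one has $f_\term(s) = \prod_j \theta_j(s)$ and $r_\term(s) = \term_p^m(\vartheta_1(s),\dots,\vartheta_m(s))$ for a fixed skeleton $\term_p^m \in \T_p^m$ and measurable maps $\theta_j,\vartheta_i : U_\pi \to \R$. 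Each step preserves this: a $(\textsc{Sample})$ step replaces a real by $F^{-1}_D(\dots,p_j)$, a measurable function of the consumed prefix; a $(\textsc{Prim})$ step composes with the measurable $\sigma_g$; a $(\textsc{Weight})$ step multiplies the accumulated weight by a measurable factor; an $(\textsc{IfTrue})$ or $(\textsc{IfFalse})$ step imposes that a particular real, itself a measurable function of $s$, is $\neq 0$ respectively $=0$, carving out a measurable subset; and $(\textsc{App})$ steps only substitute and leave the reals untouched.

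To assemble the result I would argue there are only countably many finite paths, since the reachable configurations (terms of $\T_p$ with reals tracked symbolically) form a countable, finitely branching tree, and by determinism each terminating trace lies in exactly one $U_\pi$; thus the terminating subset of $[0,1]^n$ is the disjoint union $\bigcup_{\pi}U_\pi$, which is measurable. On each $U_\pi$ the weight $f_\term$ is a finite product of measurable functions, hence measurable into $(\R_+,\Bor_+)$. For $r_\term = \term_p^m \circ (\vartheta_1,\dots,\vartheta_m)$ on $U_\pi$, measurability into $(\T,\Tcal)$ needs only the preimages of generators $\term_q^{n'}(B)$; since distinct skeletons have disjoint images and $\term_p^m$ is injective on $\R^m$ (properties used in building $(\T,\Tcal)$), this preimage equals $(\vartheta_1,\dots,\vartheta_m)^{-1}(B)$ when $\term_q^{n'}=\term_p^m$ and $\emptyset$ otherwise, measurable in both cases. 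Combining the countably many pieces with the constant behaviour on the complement gives measurability of $r_\term$ and $f_\term$ on each $[0,1]^n$, and therefore on $\Tr$.

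The main obstacle is precisely the inductive invariant on reduction paths: tracking that every real occurring in a configuration, every accumulated weight factor, and every control-flow branch condition is a measurable function of a trace prefix, and verifying that the branch conditions carve out measurable sets so that the $U_\pi$ are both measurable and genuinely partition the terminating traces. Determinism of $\rightarrow$ is essential here, since it guarantees that no trace follows two distinct paths and lets us relegate all divergent or stuck runs to the measurable complement, avoiding any direct analysis of nonterminating evaluations.
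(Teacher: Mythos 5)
Your proposal is correct in substance, but it follows a genuinely different route from the paper's. The paper (following Borgström et al.) never decomposes $\Tr$ by reduction paths: it turns each small-step rule into a partial \emph{function} with measurable domain, proves each such function measurable — mostly by exhibiting it as a continuous map between separable metric spaces $(\T,\dt)$ and the configuration space $\X = \T\times\R_+\times\Tr$, whose induced Borel $\sigma$-algebras are shown to coincide with $\Tcal$ and $\Xcal$ — and then invokes one reusable structural lemma: if $\mathit{extract}$ is measurable and monotone for a flat order with bottom ($()$ for terms, $0$ for weights), then $\mathit{final}_{\rightarrow,\mathit{extract}} = \sup_n \mathit{extract}\circ\mathit{step}_\rightarrow^n$ is measurable, since preimages of sets avoiding $\bot$ are countable unions $\bigcup_n f_n^{-1}(H)$ and the $\bot$-preimage is a countable intersection; $r_\term$ and $f_\term$ are then the sections $s\mapsto\mathit{final}(\term,1,s)$, and nontermination is absorbed by the supremum with no path analysis at all. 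Your symbolic-execution partition into countably many sets $U_\pi$ trades that lemma for explicit bookkeeping: it is more elementary (no metrics; you work straight from the generators $\term_p^m(B)$ of $\Tcal$, correctly exploiting injectivity of the skeleton maps and disjointness of their images), and divergent or stuck runs come for free as a measurable complement, but all the difficulty is concentrated in your inductive invariant, which the paper's one-step continuity arguments (e.g., the Lipschitz-style bound for substitution under evaluation contexts) settle once and for all before taking the limit. One detail to repair: in this calculus $\true$ and $\false$ are the reals $1$ and $0$, and $\textsc{IfTrue}$/$\textsc{IfFalse}$ fire only on exactly those values — a guard equal to, say, $0.5$ is stuck — so your branch conditions must be preimages of $\{1\}$ and $\{0\}$ rather than ``$\neq 0$'' and ``$=0$''; as stated, some $U_\pi$ would contain stuck traces on which $f_\term = 0$ disagrees with your product formula. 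With that correction your partition is genuine and the argument goes through.
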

We can now proceed to define the measure $\llangle \term \rrangle$ over $\Tr$ induced by a term $\term$ using Lebesgue integration.
\begin{definition}\label{def:tracemeasure}
  $
  \llangle \term \rrangle(S) = \int_S f_\term(s) \ d\mu_\Tr(s)
  $
\end{definition}
\ifextended
Importantly, by Lemma~\ref{lemma:density} and Lemma~\ref{lemma:tracemeasspace}, it holds that the density $f_\term$ is unique $\mu_\Tr$-ae if $\llangle\term\rrangle$ is $\sigma$-finite.
\fi

Using Definition~\ref{def:tracemeasure} and the measurability of $r_\term$, we can also define a corresponding pushforward measure $\llbracket \term \rrbracket$ over $\T$.
\begin{definition}
  $
  \llbracket \term \rrbracket(T) =
  \llangle \term \rrangle(r_\term^{-1}(T)) =
  \int_{r_\term^{-1}(T)} f_\term(s) \ d\mu_\Tr(s).
  $
\end{definition}
The measure $\llbracket \term \rrbracket$ is our \emph{target measure}, i.e., the measure encoded by our program that we are interested in.

\ifextended
Let us now consider the target measures for our earlier examples.
Consider first the program in Fig.~\ref{fig:geometric:code}.
Recall that the density $f_\termgeo$ of a given trace $s$ is 1 if $s \in [0,0.6]^{n-1} \times (0.6,1]$, and 0 otherwise.
Hence, we can write
\begin{equation}
  \begin{aligned}
    \llangle \termgeo \rrangle(S)
    &=
    \int_S f_\termgeo(s) \ d\mu_\Tr(s)
    =
    \mu_\Tr\left(\bigcup_{n \in \mathbb{N}}S
    \cap \left([0,0.6]^{n-1} \times (0.6,1]\right)\right)
    \\
    &=
    \sum_{n \in \mathbb{N}}
    \lambda_n(S \cap \left([0,0.6]^{n-1} \times
    (0.6,1]\right))).
  \end{aligned}
\end{equation}
Since $\termgeo$ is a distribution over $\mathbb{N}$, we always have
\begin{equation}
  r_\termgeo^{-1}(T) = r_\termgeo^{-1}(T \cap \mathbb{N})
  = \bigcup_{n\in T \cap \mathbb{N}} [0,0.6]^{n-1} \times (0.6,1].
\end{equation}
Consequently,
\begin{equation}
  \begin{aligned}
    \llbracket \termgeo \rrbracket(T)
    &=
    \llangle \termgeo \rrangle(r_\termgeo^{-1}(T))
    =
    \sum_{n \in \mathbb{N}}
    \lambda_n\left(r_\termgeo^{-1}(T) \cap \left([0,0.6]^{n-1} \times
    (0.6,1]\right))\right)
    \\ &=
    \sum_{n \in T \cap \mathbb{N}}
    \lambda_n\left([0,0.6]^{n-1} \times (0.6,1]\right)
    =
    \sum_{n \in T \cap \mathbb{N}} 0.6^{n-1}\cdot 0.4.
  \end{aligned}
\end{equation}
As expected, by taking $\llbracket \termgeo \rrbracket(\{1\})$, $\llbracket \termgeo \rrbracket(\{2\})$, $\llbracket \termgeo \rrbracket(\{3\})$, $\ldots$, we exactly recover the graph from Fig.~\ref{fig:geometric:graph}.

Now consider the continuous distribution given by program $\termobs$, and recall that $h = F^{-1}_\x{Beta}(2,2,\cdot)$.
Furthermore, recall that $f_\termobs(\trace{c}) = {h(c)}^2(1-h(c))$, $r_\termobs(\trace{c}) = h(c)$, $f_\termobs(s) = 0$,  and $r_\termobs(s) = ()$ if $|s| \neq 1$.
Because only traces of length 1 have non-zero $f$, we have
\begin{equation}
  \begin{aligned}
    \llangle \termobs \rrangle(S) &= \int_S f_\termobs(s) \ d\mu_\Tr(s)
    =
    \int_{S \cap [0,1]} f_\termobs(c) \ d\lambda(c)
    \\ &=
    \int_{S \cap {[0,1]}} {h(c)}^2(1-h(c)) \ d\lambda(c)
  \end{aligned}
\end{equation}
The $\x{Beta}$ distributions have strictly increasing cumulative distribution functions $F_\x{Beta}(a,b,\cdot)$ for all $a$ and $b$.
It follows that $h$ is the true inverse of this function, and is therefore bijective\footnote{%
This property does \emph{not} hold for probability distributions in general. In particular, if $D$ is discrete, $F_D$ has no inverse, and $F_D^{-1}$ is defined differently (making the standard $^{-1}$ notation confusing).%
}.
Because of this,
\begin{equation}
  \begin{aligned}
    \llbracket \termobs \rrbracket(T) &=
    \int_{r_\termobs^{-1}(T)} f_\termobs(s) \ d\mu_\Tr(s)
    =
    \int_{r_\termobs^{-1}(T \cap \R)} f_\termobs(s) \ d\mu_\Tr(s)
    \\ &=
    \int_{h^{-1}(T \cap \R)} {h(c)}^2(1-h(c)) \ d\lambda(c)
    \\ &=
    \int_{T \cap \R} {h(h^{-1}(c))}^2(1-h(h^{-1}(c))) (h^{-1})'(c) \ d\lambda(c)
    \\ &=
    \int_{T \cap \R} c^2(1-c) \sigma_{f_\x{Beta}}(2,2,c) \ d\lambda(c)
    \\ &\propto
    \int_{T \cap \R} c^2(1-c) c (1-c) \ d\lambda(c)
    \\ &=
    \int_{T \cap \R} c^3{(1-c)}^2 \ d\lambda(c).
  \end{aligned}
\end{equation}
In the third equality, we have used integration by substitution.
We also used the fact that $(h^{-1})'$ is the derivative of the cumulative distribution function $F_\x{Beta}(2,2,\cdot)$.
That is, $(h^{-1})'$ is the probability density function $\sigma_{f_\x{Beta}}(2,2,x) \propto x(1-x)$.
\else
Let us now consider the target measure for the program given by $\termobs$.
It is not too difficult to show that
$
\llbracket \termobs \rrbracket(T) = \int_{T \cap \R} c^3(1-c)^2 \ d\lambda(c).
$
We recognize the integrand as the density for the $\x{Beta}(4,3)$ distribution, which, as expected, is exactly the graph shown in Fig.~\ref{fig:betaobs:graph}.
\fi

We should in some way ensure the target measure is finite (i.e., can be normalized to a probability measure), since we are in the end most often only interested in probability measures.
Unfortunately, as observed by Staton~\cite{staton2017commutative}, there is no known useful syntactic restriction that enforces finite measures in \glspl{ppl} while still admitting weights $>1$.
We will discuss this further in Section~\ref{sec:smcpplcorrect} in relation to \gls{smc} in our calculus.

Lastly, from Section~\ref{sec:semantics}, recall that we disallow non-empty final traces in $f_\term$ and $r_\term$.
We see here why this is needed: if allowed, for every trace $s$ with $f_\term(s) > 0$, \emph{all} extensions $s \ast s'$ have the same density $f_\term(s \ast s') = f_\term(s) > 0$.
From this, it is easy to check that if $\llbracket\term\rrbracket \neq 0$ (the zero measure), then $\llbracket\term\rrbracket(\T) = \infty$ (i.e., the measure is not finite).
In fact, for any $T \in \Tcal$,
$\llbracket\term\rrbracket(T) > 0 \implies \llbracket\term\rrbracket(T) = \infty$.
Clearly, this is not a useful target measure.

\section{Formal SMC}\label{sec:formalsmc}
In this section, we give a generic formalization of \gls{smc} based on Chopin~\cite{chopin2004central}.
We assume a basic understanding of \gls{smc}.
\ifextended
For a concrete \gls{smc} example, see Appendix~\ref{sec:smcintro}.
\fi
For a complete introduction to \Gls{smc}, we recommend Naesseth et al.~\cite{naesseth2019elements} and Doucet and Johansen~\cite{doucet2009tutorial}.

First, in Section~\ref{sec:prelim2}, we introduce transition kernels, which is a fundamental concept used in the remaining sections of the paper.
Second, in Section~\ref{sec:smcalg}, we describe Chopin's generic formalization of \gls{smc} as an algorithm for approximating a sequence of distributions based on a sequence of approximating transition kernels.
Lastly, in Section~\ref{sec:smccorrect}, we give standard correctness results for the algorithm.

\subsection{Preliminaries: Transition Kernels}\label{sec:prelim2}
Intuitively, transition kernels describe how elements move between measurable spaces.
For a more comprehensive introduction, see~V\'{a}k\'{a}r and Ong~\cite{vakar2018sfinite}.

\begin{definition}
  Let $(\mathbb{A},\mathcal{A})$ and $(\mathbb{A}',\mathcal{A}')$ be measurable spaces, and let $\Bor_+^* = \{ B \mid B \setminus \{\infty\} \in \Bor_+\}$.
  A function $k : \mathbb{A} \times \mathcal{A}' \to \R_+^*$ is a \emph{(transition) kernel} if
  (1) for all $a \in \mathbb{A}$, $k(a,\cdot) : \mathcal{A}' \to \R_+^*$ is a measure on $\mathcal{A}'$, and
  (2) for all $A' \in \mathcal{A}'$, $k(\cdot,A') : (\mathbb{A},\mathcal{A}) \to (\R_+^*,\Bor_+^*$) is measurable.
\end{definition}
Additionally, we can classify transition kernels according to the below definition.
\begin{definition}
  Let $(\mathbb{A},\mathcal{A})$ and $(\mathbb{A}',\mathcal{A}')$ be measurable spaces.
  A kernel $k : \mathbb{A} \times \mathcal{A}' \to \R_+^*$ is
  a \emph{sub-probability kernel} if $k(a,\cdot)$ is a sub-probability measure for all $a \in \mathbb{A}$;
  a \emph{probability kernel} if $k(a,\cdot)$ is a probability measure for all $a \in \mathbb{A}$; and
  a \emph{finite kernel} if $\sup_{a \in \mathbb{A}} k(a,\mathbb{A}') < \infty$.
\end{definition}

\subsection{Algorithm}\label{sec:smcalg}
The starting point in Chopin's formulation of \gls{smc} is a sequence of probability measures $\pi_n$ (over respective measurable spaces $(\mathbb{A}_n,\mathcal{A}_n)$, with $n\in\mathbb{N}_0$) that are difficult or impossible to directly draw samples from.

The \gls{smc} approach is to generate samples from the $\pi_n$ by first sampling from a sequence of \emph{proposal measures} $q_n$, and then correcting for the discrepancy between these measures by weighting the proposal samples.
The proposal distributions are generated from an initial measure $q_0$ and a sequence of transition kernels $k_n : \mathbb{A}_{n-1} \times \mathcal{A}_n \to [0,1], n \in \mathbb{N}$ as
\begin{equation}\label{eq:qn}
  q_n(A_n) =
  \int_{\mathbb{A}_{n-1}}\!\!\! k_n(a_{n-1}, A_n)\,
  d\pi_{n-1}(a_{n-1}).
\end{equation}
%
%
In order to approximate $\pi_{n}$ by weighting samples from $q_{n}$, we need some way of obtaining the appropriate weights.
Hence, we require each measurable space $(\mathbb{A}_n,\mathcal{A}_n)$ to have a default $\sigma$-finite measure $\mu_{\mathbb{A}_n}$, and the measures $\pi_n$ and $q_n$ to have densities $f_{\pi_n}$ and $f_{q_n}$ with respect to this default measure.
Furthermore, we require that the functions
$f_{\pi_n}$ and $f_{q_n}$ can be efficiently computed pointwise, up to an unknown constant factor per function and value of $n$. More precisely, we can efficiently compute the densities
$f_{\widetilde\pi_n} = Z_{\widetilde\pi_n}\cdot f_{\pi_n}$ and
$f_{\widetilde q_n} = Z_{\widetilde q_n}\cdot f_{q_n}$, corresponding to the unnormalized measures
$\widetilde\pi_n = Z_{\widetilde\pi_n} \cdot \pi_n$ and
$\widetilde q_n = Z_{\widetilde q_n} \cdot q_n$.
Here, $Z_{\widetilde \pi_n} = \widetilde\pi_n(\mathbb{A}_n) \in \R_+$ and $Z_{\widetilde q_n} = \widetilde q_n(\mathbb{A}_n) \in \R_+$ denote the unknown \emph{normalizing constants} for the distributions $\widetilde \pi_n$ and $\widetilde q_n$.

\begin{algorithm}[tb]
  \caption{%
    A generic formulation of sequential Monte Carlo inference based on Chopin~\cite{chopin2004central}.
    In each step, we let $1 \leq j \leq J$, where $J$ is the number of samples.
  }%
  \label{alg:smc}
  \begin{enumerate}

    \item \textbf{Initialization:}
      Set $n = 0$.
      Draw $a_{0}^j \sim q_0$ for $1\leq j \leq J$. \\
      The empirical distribution given by $\{a_0^j\}_{j=1}^J$ approximates $q_0$.

    \item \textbf{Correction:}
      \label{alg:smc:correction}
      Calculate
      $
      w_n^j = \frac{f_{\widetilde\pi_n}(a_n^j)}{f_{\widetilde q_n}(a_n^j)}
      $. \\
      The empirical distribution given by $\{(a_n^j,w_n^j)\}_{j=1}^J$ approximates $\pi_n$.

    \item \textbf{Selection:}
      Resample the empirical distribution $\{(a_n^j,w_n^j)\}_{j=1}^J$. \\
      The new empirical distribution is unweighted and is given by $\{\hat{a}_n^j\}_{j=1}^J$.
      This distribution also approximates $\pi_n$.

    \item \textbf{Mutation:}
      Increment $n$. \\
      Draw $a_{n}^j \sim k_n(\hat{a}_{n-1}^j,\cdot)$ for $1\leq j \leq J$.
      The empirical distribution given by $\{a_n^j\}_{j=1}^J$ approximates $q_n$.
      Go to \eqref{alg:smc:correction}.

  \end{enumerate}
\end{algorithm}

Algorithm~\ref{alg:smc} presents a generic version of \gls{smc} \cite{chopin2004central} for approximating ~$\pi_n$.
We make the notion of approximation used in the algorithm precise in Section~\ref{sec:smccorrect}.
Note that in the correction step, the unnormalized pointwise evaluation of $f_{\pi_n}$ and $f_{q_n}$ is used to calculate the weights.
In the algorithm description, we also use some new terminology.
First, an \emph{empirical distribution} is the discrete probability measure formed by a finite set of possibly weighted samples $\{(a_n^j,w_n^j)\}_{j=1}^J$, where $a_n^j\in\mathbb{A}_n$ and $w_n^j\in\R_+$.
Second, when \emph{resampling} an empirical distribution, we sample $J$ times from it (with replacement), with each sample having its normalized weight as probability of being sampled.
More specifically, this is known as \emph{multinomial resampling}.
Other resampling schemes also exist \cite{douc2005comparison}, and are often used in practice to reduce variance.
After resampling, the set of samples forms a new empirical distribution with $J$ unweighted (all $w_n^j = 1$) samples.

An important feature of \gls{smc} compared to other inference algorithms is that \gls{smc} produces, as a by-product of inference, unbiased estimates $\hat Z_{\widetilde \pi_n}$ of the normalizing constants $Z_{\widetilde \pi_n}$.
Stated differently, this means that Algorithm~\ref{alg:smc} not only approximates the $\pi_n$, but also the unnormalized versions $\widetilde \pi_n$.
From the weights $w_n^j$ in Algorithm~\ref{alg:smc}, the estimates are given by
\begin{equation}\label{eq:normconst}
  \hat Z_{\widetilde \pi_n} = \prod_{i=0}^n \frac{1}{J} \sum_{j=1}^J w_i^j \approx Z_{\widetilde \pi_n}
\end{equation}
for each $\widetilde \pi_n$.
We give the unbiasedness result of $\hat Z_{\widetilde \pi_n}$ in Lemma~\ref{lemma:smccorrect} (item \ref{item:smcunbiased}) below. The normalizing constant is often used to compare the accuracy of different probabilistic models, and as such, it is also known as the \emph{marginal likelihood}, or \emph{model evidence}.
For an example application, see Ronquist et al.~\cite{ronquist2020probabilistic}.

To conclude this section, note that many sequences of probability kernels $k_n$ can be used to approximate the same sequence of measures $\pi_n$.
The only requirement on the $k_n$ is that $f_{\pi_n}(a_n) > 0 \implies f_{q_n}(a_n) > 0$ must hold for all $n \in \mathbb{N}_0$ and $a_n \in \mathbb{A}_n$ (i.e., the proposals must ``cover'' the $\pi_n$) \cite{doucet2001sequential}. We call such a sequence of kernels $k_n$ \emph{valid}.
Different choices of $k_n$ induce different proposals $q_n$, and hence capture different \gls{smc} algorithms. The most common example is the \gls{bpf}, which directly uses the kernels from the model as the sequence of kernels in the \gls{smc} algorithm (hence the ``bootstrap'').
In Section~\ref{sec:bpf}, we formalize the bootstrap kernels in the context of our calculus.
However, we may want to choose other probability kernels that satisfy the covering condition, since the choice of kernels can have major implications for the rate of convergence \cite{pitt1999filtering}.

\subsection{Correctness}\label{sec:smccorrect}
We begin by defining the notion of approximation used in Algorithm~\ref{alg:smc}.

\begin{definition}[Based on {Chopin~\cite[p.~2387]{chopin2004central}}]\label{def:approx}
  \newcommand{\triarr}{\{\{(a^{j,J},w^{j,J})\}_{j=1}^J\}_{J \in \mathbb{N}}}
  Let $(\mathbb{A},\mathcal{A})$ denote a measurable space,
  $\triarr$ a triangular array of random variables in $\mathbb{A} \times \R$, and $\pi : \mathcal{A} \to \R_+^*$ a probability measure.
  We say that $\triarr$ \emph{approximates} $\pi$ if the equality
  $\displaystyle
    \lim_{J\to\infty}
    \frac{\sum_{j=1}^J w^{j,J}\varphi(a^{j,J})}
    {\sum_{j=1}^J w^{j,J}}
    =
    \mathbb{E}_{\pi}(\varphi)
  $
  holds almost surely for all measurable functions $\varphi : (\mathbb{A},\mathcal{A}) \to (\R,\Bor)$ such that $\mathbb{E}_{\pi}(\varphi)$---the expected value of the function $\varphi$ over the distribution $\pi$---exists.
\end{definition}
First, note that the triangular array can also be viewed as a sequence of random empirical distributions (indexed by $J$).
Precisely such sequences are formed by the random empirical distributions in Algorithm~\ref{alg:smc} when indexed by the increasing number of samples $J$.
For simplicity, we often let context determine the sequence, and directly state that a random empirical distribution approximates some distribution (as in Algorithm~\ref{alg:smc}).

Two classical results in \gls{smc} literature are given in the following lemma: a law of large numbers and the unbiasedness of the normalizing constant estimate.
We take these results as the definition of \gls{smc} correctness used in this paper.
\begin{lemma}\label{lemma:smccorrect}
  Let $\pi_n$, $n \in \mathbb{N}_0$, be a sequence of probability measures over measurable spaces $(\mathbb{A}_n, \mathcal{A}_n)$ with default $\sigma$-finite measures $\mu_{\mathbb{A}_n}$, such that the $\pi_n$ have densities $f_{\pi_n}$ with respect to these default measures.
  Furthermore, let $q_0$ be a probability measure with density $f_{q_0}$ with respect to $\mu_{\mathbb{A}_0}$, and $k_n$ a sequence of probability kernels inducing a sequence of proposal probability measures $q_n$, given by \eqref{eq:qn},
  over $(\mathbb{A}_n, \mathcal{A}_n)$ with densities $f_{q_n}$ with respect to $\mu_{\mathbb{A}_n}$. Also, assume the
  $k_n$ are valid, i.e., that that $f_{\pi_n}(a_n) > 0 \implies f_{q_n}(a_n) > 0$ holds for all $n \in \mathbb{N}_0$ and $a_n \in \mathbb{A}_n$.
  Then
  \begin{enumerate}
    \item \label{item:smcconvergence}
      the empirical distributions ${\{(a_n^j,w_n^j)\}}_{j=1}^J$ and ${\{\hat{a}_n^j\}}_{j=1}^J$ produced by Algorithm~\ref{alg:smc} approximate $\pi_n$ for each $n \in \mathbb{N}_0$; and
    \item \label{item:smcunbiased}
      $\mathbb{E}(\hat Z_{\widetilde \pi_n}) = Z_{\widetilde \pi_n}$ for each $n \in \mathbb{N}_0$, where the expectation is taken with respect to the weights produced when running Algorithm~\ref{alg:smc}, and $\hat Z_{\widetilde \pi_n}$ is given by \eqref{eq:normconst}.
  \end{enumerate}
\end{lemma}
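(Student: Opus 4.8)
The plan is to prove both items by induction on $n$, tracking the four phases of Algorithm~\ref{alg:smc} and relying on two ingredients: the strong law of large numbers (SLLN) for (conditionally) i.i.d.\ samples, and the importance-sampling identity relating $\pi_n$ and $q_n$ through their densities. Since these are classical facts, the argument essentially reconstructs those of Chopin~\cite{chopin2004central} (for item~\ref{item:smcconvergence}) and of the Feynman--Kac/particle-system literature~\cite{naesseth2019elements} (for item~\ref{item:smcunbiased}); I would sketch the self-contained induction below.

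For item~\ref{item:smcconvergence}, I would maintain the invariant that, after the mutation phase, the array $\{a_n^j\}_{j=1}^J$ approximates $q_n$ in the sense of Definition~\ref{def:approx}, and that, after the correction phase, $\{(a_n^j,w_n^j)\}_{j=1}^J$ approximates $\pi_n$. The base case is the initialization: since the $a_0^j$ are i.i.d.\ draws from $q_0$, the classical SLLN gives that $\{a_0^j\}$ approximates $q_0$. The inductive steps correspond to the remaining phases. \emph{Correction}: given an approximation of $q_n$, reweighting by $w_n = f_{\widetilde\pi_n}/f_{\widetilde q_n}$ yields an approximation of $\pi_n$; this is exactly the importance-sampling identity $\mathbb{E}_{\pi_n}(\varphi) = \mathbb{E}_{q_n}(\varphi\, f_{\pi_n}/f_{q_n})$, and here the validity hypothesis $f_{\pi_n}(a_n) > 0 \Rightarrow f_{q_n}(a_n) > 0$ is what makes the ratio well defined $\pi_n$-almost everywhere and the identity valid. \emph{Selection}: I would show that multinomial resampling of a weighted array approximating $\pi_n$ produces an unweighted array still approximating $\pi_n$; conditional on the current empirical distribution, the resampled particles are i.i.d.\ draws from it, so a conditional SLLN shows the resampling fluctuations are mean zero and vanish as $J \to \infty$. \emph{Mutation}: pushing the resampled particles through $k_n$ turns an approximation of $\pi_{n-1}$ into one of $q_n$, precisely because $q_n$ is the $k_n$-pushforward of $\pi_{n-1}$ in \eqref{eq:qn}; again a conditional SLLN across the freshly drawn $a_n^j \sim k_n(\hat a_{n-1}^j,\cdot)$ closes the step.

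For item~\ref{item:smcunbiased}, I would introduce the filtration $\mathcal{F}_n$ generated by all samples drawn up to and including the mutation at step $n$, and write $\hat Z_{\widetilde\pi_n} = \prod_{i=0}^n \bar w_i$ with $\bar w_i = \tfrac1J\sum_{j} w_i^j$ as in \eqref{eq:normconst}. The argument is a telescoping computation of nested conditional expectations via the tower property: conditioning on $\mathcal{F}_{n-1}$, the fresh particles satisfy $a_n^j \sim k_n(\hat a_{n-1}^j,\cdot)$, so that $\mathbb{E}[\bar w_n \mid \mathcal{F}_{n-1}]$ reduces to an integral of $f_{\widetilde\pi_n}/f_{\widetilde q_n}$ against the proposal, and the weight-proportional (hence conditionally unbiased) nature of multinomial resampling lets the normalizing constants $Z_{\widetilde q_n}$ and $Z_{\widetilde\pi_n}$ telescope along the product, leaving exactly $Z_{\widetilde\pi_n}$.

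The main obstacle, in both items, is the rigorous treatment of the \emph{selection} step, where multinomial resampling destroys the independence of the particles and makes the array genuinely triangular (row-dependent). For item~\ref{item:smcconvergence} this requires a conditional SLLN argument controlling the resampling noise uniformly enough to pass to the almost-sure limit despite the dependence accumulating across the $n$ rounds; for item~\ref{item:smcunbiased} it requires verifying that each resampling is conditionally unbiased so that the normalizing-constant factors telescope cleanly. The covering condition is the other place demanding care, since it is exactly what guarantees the importance weights are almost-everywhere well defined and the reweighting identity holds throughout the induction.
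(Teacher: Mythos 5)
Your proposal is sound, but it is worth noting that the paper does not prove Lemma~\ref{lemma:smccorrect} at all: its ``proof'' is a two-line citation, deferring item~\ref{item:smcconvergence} to Del Moral~\cite{moral2004feynman} (Theorem 7.4.3) and item~\ref{item:smcunbiased} to Naesseth et al.~\cite{naesseth2019elements} (Appendix 4.A). What you have written is a reconstruction of the standard arguments behind exactly those citations: the Chopin-style induction through the correction, selection, and mutation phases for the law of large numbers, and the tower-property/telescoping computation for unbiasedness. Your treatment of item~\ref{item:smcunbiased} is correct in structure---in particular you rightly avoid the false shortcut $\mathbb{E}\bigl[\prod_i \bar w_i\bigr] = \prod_i \mathbb{E}[\bar w_i]$ and instead let the conditional unbiasedness of weight-proportional resampling clear the normalizing ratios, which is precisely how the cited proof proceeds. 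One caution on item~\ref{item:smcconvergence}: you correctly identify the selection step as the main obstacle, but your phrase ``conditional SLLN'' understates it. After resampling, the particles are only conditionally i.i.d.\ given the current empirical measure, which itself changes with $J$, so the array is genuinely triangular and row-dependent; the classical SLLN does not apply, and the almost-sure statement in Definition~\ref{def:approx} (for all $\varphi$ with $\mathbb{E}_{\pi}(\varphi)$ existing, not merely bounded $\varphi$) is obtained in the literature via $L^p$ moment bounds propagated through the induction plus Borel--Cantelli, with the unbounded-$\varphi$ case requiring additional care. That is the content of Del Moral's Theorem 7.4.3, and presumably why the authors cite it rather than reproving it; your sketch names the right obstacle but does not discharge it, so as written it is a correct plan rather than a complete proof---on par with, and faithful to, the paper's own treatment.
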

\begin{proof}
  As referenced in~Naesseth et al.~\cite{naesseth2019elements}, see~Del Moral~\cite{moral2004feynman}[Theorem 7.4.3] for 1. For 2, see~Naesseth et al.~\cite{naesseth2019elements}[Appendix 4.A].
\end{proof}

Chopin~\cite{chopin2004central}[Theorem 1] gives another \gls{smc} convergence result in the form of a central limit.
This result, however, requires further restrictions on the weights $w_n^j$ in Algorithm~\ref{alg:smc}.
It is not clear when these restrictions are fulfilled when applying \gls{smc} on a program in our calculus.
This is an interesting topic for future work.

\section{Formal SMC for Probabilistic Programming Languages}%
\label{sec:smcppl}
This section contains our main contribution: how to interpret the operational semantics of our calculus as the unnormalized sequence of measures $\widetilde \pi_n$ in Chopin's formalization (Section~\ref{sec:seqmeas}), as well as sufficient conditions for this sequence of approximating measures to converge to $\llangle \term \rrangle$ and for the normalizing constant estimate to be correct (Section~\ref{sec:smcpplcorrect}).

An important insight during this work was that it is more convenient to find an approximating sequence of measures $\llangle\term\rrangle_n$ to the trace measure $\llangle\term\rrangle$, compared to finding a sequence of measures $\llbracket\term\rrbracket_n$ directly approximating the target measure $\llbracket\term\rrbracket$.
In Section~\ref{sec:seqmeas}, we define $\llangle\term\rrangle_n$ similarly to $\llangle\term\rrangle$, except that at most $n$ evaluations of \ttt{resample} are allowed.
This upper bound on the number of \ttt{resample}s is formalized through the relation $\hookrightarrow$ from Section~\ref{sec:ressemantics}.

In Section~\ref{sec:smcpplcorrect}, we obtain two different conditions for the convergence of the sequence $\llangle \term \rrangle_n$ to $\llangle \term \rrangle$: Theorem~\ref{theorem:main} states that for programs with an upper bound $N$ on the number of \ttt{resample}s they evaluate, $\llangle \term \rrangle_N=\llangle \term \rrangle$.
This precondition holds in many practical settings, for instance where each resampling is connected to a datum collected before inference starts.
Theorem~\ref{theorem:dct} states another convergence result for programs without such an upper bound but with dominated weights.
Because of these convergence results, we can often approximate $\llangle \term \rrangle$ by approximating $\llangle \term \rrangle_n$ with Algorithm~\ref{alg:smc}.
When this is the case, Lemma~\ref{lemma:smccorrect} implies that Algorithm~\ref{alg:smc}, either after a sufficient number of time steps or asymptotically, correctly approximates $\llangle\term\rrangle$ and the normalizing constant $Z_{\llangle\term\rrangle}$.
This is the content of Theorem~\ref{theorem:consistency}.
We conclude Section~\ref{sec:smcpplcorrect} by discussing \ttt{resample} placements and their relation to Theorem~\ref{theorem:consistency}, as well as practical implications of Theorem~\ref{theorem:consistency}.

\subsection{The Sequence of Measures Generated by a Program}\label{sec:seqmeas}
We now apply the formalization from Section~\ref{sec:targetmeasdef} again, but with $f_{\term,n}$ and $r_{\term,n}$ (from Section~\ref{sec:ressemantics}) replacing $f_\term$ and $r_\term$.
Intuitively, this yields a sequence of measures $\llbracket\term\rrbracket_n$ indexed by $n$, which are similar to $\llbracket\term\rrbracket$, but only allow for evaluating at most $n$ resamples.
To illustrate this idea, consider again the program $\termseq$ in \eqref{eq:seq}.
Here, $\llbracket\termseq\rrbracket_0$ is a distribution over terms of the form
\ttt{$\econt_{\x{seq}}^1[$resample; $x_{1}]$}, $\llbracket\termseq\rrbracket_1$ a distribution over terms of the form \ttt{$\econt_{\x{seq}}^2[$resample; $x_{2}]$}, and so forth.
For $n \geq t$, $\llbracket\termseq\rrbracket_n = \llbracket\termseq\rrbracket$, because it is clear that $t$ is an upper bound on the number of resamples evaluated in $\termseq$.

While the measures $\llbracket \term \rrbracket_n$ are useful for giving intuition, it is easier from a technical perspective to define and work with $\llangle \term \rrangle_n$, the sequence of measures over \emph{traces} where at most $n$ \texttt{resample}s are allowed.
First, we need the following result, analogous to Lemma~\ref{lemma:rfmeas}.
\begin{lemma}\label{lemma:rfnmeas}
  \lemmarfnmeas$^\dagger$
\end{lemma}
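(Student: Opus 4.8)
The statement to prove, Lemma~\ref{lemma:rfnmeas}, asserts that the result function $r_{\term,n} : (\Tr, \Trcal) \to (\T,\Tcal)$ and the density function $f_{\term,n} : (\Tr, \Trcal) \to (\R_+,\Bor_+)$ are measurable. This is the indexed-resample analogue of Lemma~\ref{lemma:rfmeas}, so my plan is to mirror whatever argument establishes that earlier result, adapting it to the $\hookrightarrow$ relation in place of $\rightarrow$.

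\textbf{Overall approach.} The key structural fact is that $r_{\term,n}$ and $f_{\term,n}$ are both determined by running the relation $\hookrightarrow$ starting from the configuration $\term, 1, s, n$ until it either reaches a value with an empty remaining trace, stops at a \ttt{resample} with counter $0$ and empty trace, or gets stuck/diverges. The plan is to stratify traces by the shape of the computation they induce. Concretely, I would partition $\Tr$ (up to a $\mu_\Tr$-null or measurable remainder) into measurable pieces on which $r_{\term,n}$ is constant in its ``syntactic skeleton'' and $f_{\term,n}$ is a nice (measurable, indeed Borel) function of the sampled reals. Because the deterministic steps $\rightarrow_\textsc{Det}^+$ are purely syntactic and the only trace-consuming rule is $(\textsc{Sample})$, a trace of a given length $k$ that leads to termination after exactly $k$ sample steps determines a fixed sequence of redexes; the placeholder-term machinery from Section~\ref{sec:termspace} ($\T_p^k$, the maps $\term_p^k : \R^k \to \T$, and the $\sigma$-algebra $\Tcal_{\term_p^k}$) is precisely what lets me express the resulting value as a measurable image.

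\textbf{Key steps in order.} First, I would fix the term $\term$ and the index $n$ and argue that for each natural number $k$, the set of traces $s \in [0,1]^k$ for which evaluation under $\hookrightarrow$ terminates (in either the value-sense or the \ttt{resample}-with-counter-$0$ sense) using all $k$ samples is a measurable subset of $\Bor_{[0,1]}^k \subseteq \Trcal$; this follows because each $(\textsc{Sample})$ step applies the measurable function $F^{-1}_D$, each $(\textsc{Weight})$ step multiplies by a value computed via measurable $g \in \mathbb{G}$, the $(\textsc{Resample-Fin})$ step only decrements a counter, and the branching in $(\textsc{IfTrue})/(\textsc{IfFalse})$ splits the domain along Borel sets (preimages of $\{\true\}$ and $\{\false\}$ under the measurable quantity being tested). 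Since there are only countably many distinct finite computation skeletons (finite sequences of rule applications), I index them and obtain a countable measurable partition of the domain. Second, on each such piece I would express $f_{\term,n}$ as a finite product of Borel-measurable factors (the accumulated weights), hence Borel measurable, and $r_{\term,n}$ as $\term_p^k(\cdot)$ applied to the measurable tuple of sampled reals, landing in a single $\Tcal_{\term_p^k}$-generating set, hence $(\Trcal,\Tcal)$-measurable. Third, I would glue the countably many pieces together: a function that is measurable on each block of a countable measurable partition of the domain is measurable, and the complement (stuck or divergent traces) maps to the constant $()$ with density $0$, which is trivially measurable. This handles both functions simultaneously since they are read off from the same computation.

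\textbf{Main obstacle.} The genuinely delicate part is showing that the stratification is itself measurable, i.e., that for each fixed computation skeleton the set of traces following it is measurable, and doing so uniformly over the counter $n$. The subtlety is that control flow depends on real-valued tests through \ttt{if}, so whether a given skeleton is followed depends measurably—but nontrivially—on the sampled reals; I expect the bulk of the work to be an induction on the length of the $\hookrightarrow^*$ derivation showing that the ``reached configuration after $m$ steps'' is a measurable function of the trace, with the inductive step tracking how each rule acts measurably on the weight, the consumed trace, and the residual term. The presence of the extra counter $n$ in $\hookrightarrow$ adds essentially no difficulty over the $\rightarrow$ case of Lemma~\ref{lemma:rfmeas}, since the counter evolves deterministically and only gates the \ttt{resample} rule; the new stopping configuration $\econt[\ttt{resample}]$ with counter $0$ is just one more family of terminal skeletons to include in the countable union, and it lies in $(\T,\Tcal)$ by the same placeholder argument. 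For this reason I anticipate the proof to be a close, almost mechanical, adaptation of the proof of Lemma~\ref{lemma:rfmeas}, with the counter bookkeeping being the only addition.
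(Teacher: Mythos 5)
Your proposal is sound, but it takes a genuinely different route from the paper's proof. The paper never stratifies traces by computation skeletons: it first shows that the one-step relation $\hookrightarrow$ is a measurable function on a measurable subset $\mathbb{Y}_s$ of $\mathbb{Y} = \X \times \mathbb{N}_0$, by reusing the measurability of $\rightarrow$ already established for Lemma~\ref{lemma:rfmeas} and pairing it with the (trivially measurable) counter decrement, gluing the two rule cases $\textsc{Stoch-Fin}$ and $\textsc{Resample-Fin}$ via Lemma~\ref{lemma:decompmeas}; it then defines flat-ordered extraction functions $\mathit{extract}_{\hookrightarrow,\text{term}}$ and $\mathit{extract}_{\hookrightarrow,\text{weight}}$, which return the result (resp.\ weight) exactly on configurations $\val,w,\trace{},n'$ or $\econt[\ttt{resample}],w,\trace{},0$, and invokes the generic Lemma~\ref{lemma:final}: the big-step function $\sup_m\,\mathit{extract} \circ \mathit{step}_{\hookrightarrow}^m$ exists and is measurable, since preimages of sets avoiding $\bot$ are countable unions over $m$ of measurable preimages. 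Your skeleton stratification replaces this packaging: within a fixed skeleton the configuration is a tuple of reals acted on by Borel maps ($F^{-1}_D$, $\sigma_g$, coordinate duplication for substitution, Borel splits at \ttt{if}), so you avoid the paper's metric-space machinery ($d_\T$, $d_\mathbb{E}$, separability, continuity of contexts and substitution) at the price of re-proving by hand, stratum by stratum, that the $m$-step configuration map is measurable---precisely the content the paper factors once into the $\mathit{step}$/$\mathit{final}$ lemmas and then reuses for $r_\term$, $f_\term$, $r_{\term,n}$, $f_{\term,n}$, and $p_{\term,n}$ alike. You are right that the counter is essentially free; the paper handles it with the powerset $\sigma$-algebra on $\mathbb{N}_0$. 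Two repairable slips in your write-up: a trace of length $k$ does \emph{not} determine a fixed sequence of redexes, since \ttt{if}-branches depend on sampled values---it is your countable indexing over skeletons, not over lengths, that carries the argument; and the result term's placeholder count need not equal $k$, nor are its constants the raw samples, so the correct per-stratum statement is that $r_{\term,n}$ equals $\term_p^m \circ (g_1,\ldots,g_m)$ for Borel $g_i$, which is $(\Trcal,\Tcal)$-measurable because images of distinct placeholder terms are disjoint.
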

This allows us to define $\llangle \term \rrangle_n$ (cf. Definition~\ref{def:tracemeasure}).
\begin{definition}\label{def:seqtarget}
  $
  \llangle \term \rrangle_n(S) = \int_S f_{\term,n}(s) \ d\mu_\Tr(s)
  $
\end{definition}
\ifextended
Analogously to Definition~\ref{def:tracemeasure}, by Lemma~\ref{lemma:density} and Lemma~\ref{lemma:tracemeasspace}, it holds that the density $f_{\term,n}$ is unique $\mu_\Tr$-ae if $\llangle\term\rrangle_n$ is $\sigma$-finite.
\fi
\ifextended
We can now also clarify how the \ttt{resample} construct relates to the resampling in the selection step of Algorithm~\ref{alg:smc}.
If we approximate the sequence $\llangle\term\rrangle_n$ with Algorithm~\ref{alg:smc}, at the $n$th selection step of the algorithm, all traces $s$ with non-zero weight must have $r_{\term,n}(s) = \val$ or $r_{\term,n}(s) = \econt[\ttt{resample}]$, by Definitions~\ref{lemma:resulttermn} and~\ref{def:termdensityn}.
That is, having a $q_n$ in Algorithm~\ref{alg:smc} proposing traces other than these is wasteful, since they will in any case have weight zero.
We illustrate this further when considering the bootstrap kernel in Section~\ref{sec:bpf}.
\fi

\subsection{Correctness}\label{sec:smcpplcorrect}
We begin with a convergence result for when the number of calls to $\ttt{resample}$ in a program is upper bounded.
\begin{theorem}\label{theorem:main}
  \theoremmain
\end{theorem}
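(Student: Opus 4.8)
The plan is to prove the theorem by showing that the densities agree, i.e. that $f_{\term,n} = f_\term$ for all $n > N$, since the equality of measures $\llangle \term \rrangle_n = \llangle \term \rrangle$ follows immediately from Definitions~\ref{def:tracemeasure} and~\ref{def:seqtarget} once the integrands coincide. The hypothesis already hands us $f_{\term,n} = f_\term$ for $n > N$, so at first glance the theorem looks almost immediate. The real content is therefore to verify that the stated hypothesis is meaningful and that nothing subtle is hidden in passing from pointwise equality of density functions to equality of the integrals $\int_S f_{\term,n} \, d\mu_\Tr = \int_S f_\term \, d\mu_\Tr$ for every $S \in \Trcal$.

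First I would unfold the definitions: for any measurable $S$, Definition~\ref{def:seqtarget} gives $\llangle \term \rrangle_n(S) = \int_S f_{\term,n}(s)\, d\mu_\Tr(s)$, and Definition~\ref{def:tracemeasure} gives $\llangle \term \rrangle(S) = \int_S f_\term(s)\, d\mu_\Tr(s)$. Both integrals are well-defined as Lebesgue integrals of non-negative measurable functions, using the measurability of $f_{\term,n}$ (Lemma~\ref{lemma:rfnmeas}) and of $f_\term$ (Lemma~\ref{lemma:rfmeas}) over the measure space $(\Tr,\Trcal,\mu_\Tr)$ from Lemma~\ref{lemma:tracemeasspace}. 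Then for any fixed $n > N$, the hypothesis $f_{\term,n} = f_\term$ (as functions on $\Tr$) lets me substitute one integrand for the other, yielding $\llangle \term \rrangle_n(S) = \llangle \term \rrangle(S)$ for every $S$, which is exactly equality of the two measures.

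The step I expect to require the most care is justifying why $f_{\term,n} = f_\term$ holds once $n > N$, since the theorem presents this as a hypothesis but it is really the semantic heart of the statement. I would argue this at the level of the operational semantics: by Definition~\ref{def:termdensityn}, $f_{\term,n}(s)$ is the weight accumulated when evaluating $\term,1,s$ under $\hookrightarrow$ allowing at most $n$ resamples, whereas $f_\term(s)$ under $\rightarrow$ allows unboundedly many. If $N$ upper-bounds the number of \ttt{resample} terms any evaluation of $\term$ can encounter, then for $n > N$ the cap on resamples is never reached, so no evaluation ever gets stuck at the $(n{+}1)$th resample; the $\hookrightarrow$ derivation therefore runs to a value exactly when the $\rightarrow$ derivation does, accumulating the identical weight along the identical sequence of \textsc{Weight} and \textsc{Sample} steps. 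Hence both functions return the same $w$ (or both return $0$) on every trace $s$, giving the pointwise equality that drives the argument.

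One subtlety I would flag is that equality of the densities as genuine functions (not merely $\mu_\Tr$-almost everywhere) is what makes the measure equality hold for \emph{every} $S \in \Trcal$ without any $\sigma$-finiteness caveat; the hypothesis is stated as a true functional identity $f_{\term,n} = f_\term$, so I would lean on that and avoid invoking the almost-everywhere uniqueness remark. The whole proof is thus short: establish the densities coincide for $n > N$ from the resample bound, then integrate both sides over an arbitrary $S$. The only real obstacle is stating precisely and correctly why exceeding the resample bound collapses $\hookrightarrow$-evaluation onto $\rightarrow$-evaluation, and I would handle that by a direct comparison of the two reduction relations rather than any limiting argument.
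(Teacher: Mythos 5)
Your proof is correct and takes essentially the same approach as the paper, whose entire argument is the one-line observation that $f_{\term,n}$ is not merely convergent to but equal to $f_\term$ for all $n > N$, so the integrals defining $\llangle \term \rrangle_n$ and $\llangle \term \rrangle$ coincide on every $S \in \Trcal$. Your extra paragraph deriving the density equality from a bound on the number of \ttt{resample}s goes beyond the theorem's literal hypothesis (which assumes $f_{\term,n} = f_\term$ outright), but it is a correct sketch of the intended semantic reading given in the surrounding text.
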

This follows directly since $f_{\term,n}$ not only converges to $f_\term$, but is also equal to $f_\term$ for all $n > N$.
However, even if the number of calls to \ttt{resample} in $\term$ is upper bounded, there is still one concern with using $\llangle\term\rrangle_n$ as $\widetilde \pi_n$ in Algorithm~\ref{alg:smc}: there is no guarantee that the measures $\llangle\term\rrangle_n$ can be normalized to probability measures and have unique densities (i.e., that they are finite).
This is a requirement for the correctness results in Lemma~\ref{lemma:smccorrect}.
Unfortunately, recall from Section~\ref{sec:targetmeasdef} that there is no known useful syntactic restriction that enforces finiteness of the target measure.
This is clearly true for the measures $\llangle\term\rrangle_n$ as well, and as such, we need to \emph{make the assumption} that the $\llangle \term \rrangle_n$ are finite---otherwise, it is not clear that Algorithm~\ref{alg:smc} produces the correct result, since the conditions in Lemma~\ref{lemma:smccorrect} are not fulfilled.
Fortunately, this assumption is valid for most, if not all, models of practical interest.
Nevertheless, investigating whether or not the restriction to probability measures in Lemma~\ref{lemma:smccorrect} can be lifted to some extent is an interesting topic for future work.
\ifextended
Note that, even if the target measure is finite, this does not necessarily imply that all measures $\llangle\term\rrangle_n$ are finite.
For example, consider the program
\begin{center}
  \begin{tabular}{c}
    \begin{lstlisting}
let rec $\x{inflate}$ $\_$ =
  if sample$_{\x{Bern}}(0.5)$ then weight $2$; 1 + $\x{inflate}$ $()$
  else $0$ in
let $\x{deflate}$ $n$ = weight $1/2^n$ in
let $n$ = $\x{inflate}$ $()$ in
resample; deflate $n$; $n$,
    \end{lstlisting}
  \end{tabular}
\end{center}
adapted from \cite{borgstrom2016lambda}.
Clearly, $\llangle\term\rrangle_0$ is not finite (in fact, it is not even $\sigma$-finite), while $\llangle\term\rrangle_1 = \llangle\term\rrangle$ is.
\fi

Although of limited practical interest, programs with an unbounded number of calls to \ttt{resample} are of interest from a semantic perspective.
If we have $\lim_{n \to \infty} \llangle \term \rrangle_n = \llangle \term \rrangle$ pointwise, then any \gls{smc} algorithm approximating the sequence $\llangle \term \rrangle_n$ also approximates $\llangle \term \rrangle$, at least asymptotically in the number of steps $n$.
\ifextended
First, consider the variation $\term_\x{geo-res}$ of the geometric program $\termgeo$ in Fig.~\ref{fig:geometric} given by
\begin{equation}\label{eq:geores}
  \hspace{-2cm}
  \begin{tabular}{c}
    \begin{lstlisting}
let rec $\x{geometric}$ $\_$ =
  (*@\fbox{resample;}\hspace{-4pt}@*) if sample$_{\x{bern}}(0.6)$ then $1$ + $\x{geometric}$ $()$ else $1$
in $\x{geometric}$ $()$(*@\text{.}@*)
    \end{lstlisting}
  \end{tabular}
\end{equation}
The only difference is the added \ttt{resample} (marked with a box).
Here $\llangle\termgeo\rrangle = \llangle\term_\x{geo-res}\rrangle$, since, in general, $\llangle\term\rrangle$ is unaffected by placing \ttt{resample}s in $\term$.
\else
First, consider the program $\term_\x{geo-res}$ given by
\begin{equation}\label{eq:geores}
  \begin{tabular}{c}
    \begin{lstlisting}
let rec $\x{geometric}$ $\_$ =
  resample; if sample$_{\x{bern}}(0.6)$ then $1$ + $\x{geometric}$ $()$ else $1$
in $\x{geometric}$ $()$(*@\text{.}@*)
    \end{lstlisting}
  \end{tabular}
\end{equation}
\fi
Note that $\term_\x{geo-res}$ has no upper bound on the number of calls to \ttt{resample}, and therefore Theorem~\ref{theorem:main} is not applicable.
\ifextended
We have, however, that
\begin{equation}
  \begin{aligned}
    &\llangle \term_\x{geo-res} \rrangle_n(S) =
    \\&\qquad
    \sum_{i = 1}^{n-1}
    \lambda_i(S \cap \left([0,0.6]^{i-1} \times
    (0.6,1]\right)))
    +
    \lambda_n(S \cap \left([0,0.6]^{i-1} \times
    [0,1]\right)),
  \end{aligned}
\end{equation}
and as a consequence, $\lim_{n\to\infty} \llangle \term_\x{geo-res} \rrangle_n = \llangle \term_\x{geo-res} \rrangle$ pointwise.
The question is then if
$\lim_{n \to \infty} \llangle \term \rrangle_n = \llangle \term \rrangle$
pointwise holds in general? The answer is no, as we demonstrate next.
\else
It is easy, however, to check that $\lim_{n\to\infty} \llangle \term_\x{geo-res} \rrangle_n = \llangle \term_\x{geo-res} \rrangle$ pointwise.
So does $\lim_{n \to \infty} \llangle \term \rrangle_n = \llangle \term \rrangle$ pointwise hold in general? The answer is no, as we demonstrate next.
\fi

For $\lim_{n\to\infty} \llangle \term \rrangle_n = \llangle \term \rrangle$ to hold pointwise, it must hold that $\lim_{n\to\infty} f_{\term,n} = f_\term$ pointwise $\mu_\Tr$-ae.
Unfortunately, this does not hold for all programs.
Consider the program $\term_\x{loop}$ defined by
\lstinline!let rec loop _ = resample; loop $()$ in loop $()$!.
Here, $f_\termloop = 0$ since the program diverges deterministically, but $f_{\termloop,n}(\trace{}) = 1$ for all $n$.
Because $\mu_\Tr(\{{\trace{}}\}) \neq 0$, we do not have $\lim_{n\to\infty} f_{\termloop,n} = f_\termloop$ pointwise $\mu_\Tr$-ae.

Even if we have $\lim_{n \to \infty} f_{\term,n} = f_\term$ pointwise $\mu_\Tr$-ae, we might not have $\lim_{n\to\infty} \llangle \term \rrangle_n = \llangle \term \rrangle$ pointwise.
Consider, for instance, the program $\termunit$ given by
\begin{equation}\label{eq:dctexample}
  \begin{tabular}{c}
    \begin{lstlisting}
let s = sample$_{\mathcal{U}}(0,1)$ in
let rec foo $n$ =
  if s $\leq$ $1/n$ then resample; weight $2$; foo ($2\cdot n$) else weight $0$ in
foo 1
    \end{lstlisting}
  \end{tabular}
\end{equation}
We have $f_\termunit = 0$ and $f_{\termunit,n} = 2^n\cdot\mathbf{1}_{[0,1/2^n]}$ for $n > 0$.
Also,
$
  \lim_{n \to \infty} f_{\termunit,n} = f_{\termunit}
$
pointwise.
\ifextended
However,
\begin{equation}
  \begin{aligned}
    \lim_{n \to \infty}
    \llangle \termunit \rrangle_n
    (\Tr)
    &=
    \lim_{n \to \infty}
    \int_\Tr f_{\termunit,n} d\mu_\Tr(s)
    =
    \lim_{n \to \infty}
    \int_{[0,1]} f_{\termunit,n} d\lambda(x)
    \\
    &=
    \lim_{n \to \infty}
    \int_{[0,1/2^n]} 2^n d\lambda(x)
    = 1
    \\
    &\neq
    0 = \int_\Tr f_\termunit d\mu_\Tr(s)
    = \llangle \termunit \rrangle(\Tr).
  \end{aligned}
\end{equation}
\else
However,
$
\lim_{n \to \infty} \llangle \termunit \rrangle_n (\Tr)
= 1
\neq
0
= \llangle \termunit \rrangle(\Tr).
$
\fi
This shows that the limit may fail to hold, even for programs that terminate almost surely, as is the case for the program $\termunit$ in \eqref{eq:dctexample}.
In fact, this program is positively almost surely terminating~\cite{bournez2005proving} since the expected number of recursive calls to \ttt{foo} is 1.

Guided by the previous example, we now state the dominated convergence theorem---a fundamental result in measure theory---in the context of \gls{smc} inference in our calculus.
\begin{theorem}\label{theorem:dct}
  \theoremdct
\end{theorem}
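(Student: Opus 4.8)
The plan is to reduce the statement to a direct application of the classical dominated convergence theorem (DCT) on the measure space $(\Tr, \Trcal, \mu_\Tr)$, which is a genuine (indeed $\sigma$-finite) measure space by Lemma~\ref{lemma:tracemeasspace}. Since ``pointwise'' convergence of the measures means convergence of $\llangle \term \rrangle_n(S)$ to $\llangle \term \rrangle(S)$ for each fixed argument $S \in \Trcal$, I would fix an arbitrary such $S$ at the outset and work with the truncated integrands $f_{\term,n}\cdot\mathbf{1}_S$.

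First I would record the measurability prerequisites: by Lemma~\ref{lemma:rfnmeas} each $f_{\term,n}$ is measurable, and by Lemma~\ref{lemma:rfmeas} so is $f_\term$; since $S \in \Trcal$, multiplying by the indicator $\mathbf{1}_S$ preserves measurability, so each $f_{\term,n}\cdot\mathbf{1}_S$ is measurable. Next I would check the two hypotheses of DCT for the sequence $(f_{\term,n}\cdot\mathbf{1}_S)_n$. The pointwise-limit hypothesis $\lim_{n\to\infty} f_{\term,n} = f_\term$ $\mu_\Tr$-ae immediately gives $\lim_{n\to\infty} f_{\term,n}\cdot\mathbf{1}_S = f_\term\cdot\mathbf{1}_S$ $\mu_\Tr$-ae. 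For the domination hypothesis, all the densities are non-negative and $\mathbf{1}_S \leq 1$, so $f_{\term,n} \leq g$ $\mu_\Tr$-ae yields $f_{\term,n}\cdot\mathbf{1}_S \leq g$ $\mu_\Tr$-ae for every $n$, with $g$ integrable by assumption.

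With these in place, I would invoke DCT to interchange limit and integral and then unfold Definition~\ref{def:seqtarget} and Definition~\ref{def:tracemeasure}:
\[
  \lim_{n\to\infty} \llangle \term \rrangle_n(S)
  = \lim_{n\to\infty} \int_S f_{\term,n} \ d\mu_\Tr
  = \int_S \lim_{n\to\infty} f_{\term,n} \ d\mu_\Tr
  = \int_S f_\term \ d\mu_\Tr
  = \llangle \term \rrangle(S).
\]
As $S$ was arbitrary, this is exactly the claimed pointwise equality.

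The proof is essentially routine measure-theoretic bookkeeping, so there is no serious obstacle; the only point requiring attention is conceptual, namely recognizing that the ``pointwise'' convergence of the sequence of measures reduces to applying DCT separately on each $S$, and that the domination hypothesis transfers unchanged to the truncated integrands precisely because the weight functions $f_{\term,n}$ are non-negative. The counterexamples $\termloop$ and $\termunit$ appearing just before the statement confirm that both hypotheses are genuinely needed: $\termloop$ violates the almost-everywhere pointwise convergence, and $\termunit$ admits no integrable dominating $g$.
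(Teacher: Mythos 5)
Your proof is correct and takes essentially the same route as the paper, which simply cites the classical dominated convergence theorem (McDonald and Weiss, Theorem 4.9) applied on the measure space $(\Tr,\Trcal,\mu_\Tr)$; your version just spells out the routine bookkeeping (fixing $S$, truncating with $\mathbf{1}_S$, and checking the DCT hypotheses via Lemmas~\ref{lemma:rfnmeas} and~\ref{lemma:rfmeas}) that the citation leaves implicit.
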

For a proof, see McDonald and Weiss~\cite[Theorem 4.9]{mcdonald2012course}.
It is easy to check that for our example in \eqref{eq:dctexample}, there is no dominating and integrable $g$ as is required in Theorem~\ref{theorem:dct}.
We have already seen that the conclusion of the theorem fails to hold here.
As a corollary, if there exists a dominating and integrable $g$, the measures $\llangle\term\rrangle_n$ are always finite.
\begin{corollary}
  If there exists a measurable function $g :
  (\Tr,\Trcal) \to (\R_+,\Bor_+)$ such that $f_{\term,n} \leq g$ $\mu_\Tr$-ae for
  all $n$, and $\int_\Tr g(s) d\mu_\Tr(s) < \infty$, then
  $\llangle\term\rrangle_n$ is finite for each $n \in \mathbb{N}_0$.
\end{corollary}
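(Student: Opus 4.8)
The plan is to reduce the claimed finiteness of each measure $\llangle\term\rrangle_n$ to a single application of the monotonicity of the Lebesgue integral. By definition, a measure is finite exactly when it assigns a finite value to the whole space, so I would first observe that it suffices to bound $\llangle\term\rrangle_n(\Tr)$. Unfolding Definition~\ref{def:seqtarget}, this quantity is precisely $\int_\Tr f_{\term,n}(s)\,d\mu_\Tr(s)$, which is well-defined since $f_{\term,n}$ is measurable by Lemma~\ref{lemma:rfnmeas}.

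Next I would invoke the hypotheses directly. Since $f_{\term,n} \le g$ holds $\mu_\Tr$-ae and both functions are non-negative and measurable, monotonicity of the integral (together with the fact that modifying an integrand on a $\mu_\Tr$-null set leaves its integral unchanged) gives
\begin{equation*}
  \llangle\term\rrangle_n(\Tr)
  = \int_\Tr f_{\term,n}(s)\,d\mu_\Tr(s)
  \le \int_\Tr g(s)\,d\mu_\Tr(s) < \infty,
\end{equation*}
where the final inequality is exactly the integrability assumption on $g$. Since this bound holds for every $n \in \mathbb{N}_0$, each $\llangle\term\rrangle_n$ is finite, as required.

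There is essentially no obstacle here: the statement is an immediate corollary of the same domination hypothesis that powers Theorem~\ref{theorem:dct}, and the only subtlety worth flagging is that the bound $f_{\term,n} \le g$ is assumed only $\mu_\Tr$-ae rather than everywhere. This is harmless, since the Lebesgue integral is insensitive to the values of an integrand on a null set, so the almost-everywhere inequality transfers directly to the integrals. I would therefore expect the entire argument to fit in a couple of lines.
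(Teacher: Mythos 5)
Your proof is correct and matches the paper's own argument, which likewise establishes finiteness in one line via $\llangle \term \rrangle_n(\Tr) = \int_\Tr f_{\term,n}(s)\, d\mu_\Tr(s) \leq \int_\Tr g(s)\, d\mu_\Tr(s) < \infty$. Your additional remark that the $\mu_\Tr$-ae hypothesis is harmless because Lebesgue integration ignores null sets is a fine (if implicit in the paper) clarification, and nothing more is needed.
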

This holds because
$
\llangle \term \rrangle_n(\Tr)
=
\int_\Tr f_{\term,n}(s) d\mu_\Tr(s)
\leq
\int_\Tr g(s) d\mu_\Tr(s) < \infty.
$
Hence, we do not need to assume the finiteness of $\llangle\term\rrangle_n$ in order for Algorithm~\ref{alg:smc} to be applicable, as was the case for the setting of Theorem~\ref{theorem:main}.

In Theorem~\ref{theorem:consistency}, we summarize and combine the above results with Lemma~\ref{lemma:smccorrect}.
\begin{theorem}\label{theorem:consistency}
  Let $\term$ be a term, and apply Algorithm~\ref{alg:smc} with $\llangle\term\rrangle_n$ as $\widetilde \pi_n$, and with arbitrary valid kernels $k_n$.
  If the condition of Theorem~\ref{theorem:main} holds and $\llangle\term\rrangle_n$ is finite for each $n \in \mathbb{N}_0$, then Algorithm~\ref{alg:smc} approximates $\llangle\term\rrangle$ and its normalizing constant after a finite number of steps.
  Alternatively, if the condition of Theorem~\ref{theorem:dct} holds, then Algorithm~\ref{alg:smc} approximates $\llangle\term\rrangle$ and its normalizing constant in the limit $n \to \infty$.
\end{theorem}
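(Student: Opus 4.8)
The plan is to reduce the statement to Lemma~\ref{lemma:smccorrect} by checking that its hypotheses hold when we take $\widetilde\pi_n = \llangle\term\rrangle_n$, and then to transport the two convergence results of Theorems~\ref{theorem:main} and~\ref{theorem:dct} through that lemma. First I would fix the normalization. In both cases each $\llangle\term\rrangle_n$ is finite (assumed outright in the first case, and forced in the second by the corollary following Theorem~\ref{theorem:dct}), so it has a finite normalizing constant $Z_{\llangle\term\rrangle_n} = \llangle\term\rrangle_n(\Tr)$, and the normalized measure $\pi_n = \llangle\term\rrangle_n / Z_{\llangle\term\rrangle_n}$ is a probability measure over the common space $(\Tr,\Trcal)$ with density $f_{\term,n}/Z_{\llangle\term\rrangle_n}$ with respect to $\mu_\Tr$ (using Lemma~\ref{lemma:rfnmeas} for measurability of $f_{\term,n}$). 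Since the $k_n$ are valid by assumption, all preconditions of Lemma~\ref{lemma:smccorrect} are met, so for every fixed $n$ Algorithm~\ref{alg:smc} approximates $\pi_n$ in the sense of Definition~\ref{def:approx} and produces an estimate $\hat Z_{\widetilde\pi_n}$ with $\mathbb{E}(\hat Z_{\widetilde\pi_n}) = Z_{\llangle\term\rrangle_n}$. Crucially, the approximation notion of Definition~\ref{def:approx} only involves the self-normalized ratio $\frac{\sum_j w^j\varphi(a^j)}{\sum_j w^j}$, so it is insensitive to the unknown scaling between $\llangle\term\rrangle_n$ and $\pi_n$; hence it is legitimate to feed the unnormalized densities $f_{\term,n}$ as $f_{\widetilde\pi_n}$ in the correction step.

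For the first case I would simply invoke Theorem~\ref{theorem:main}: under its hypothesis there is an $N$ with $\llangle\term\rrangle_n = \llangle\term\rrangle$, and therefore $\pi_n = \pi$ and $Z_{\llangle\term\rrangle_n} = Z_{\llangle\term\rrangle}$, for all $n > N$. Thus from step $N+1$ onwards the empirical distributions that Lemma~\ref{lemma:smccorrect} guarantees to approximate $\pi_n$ are literally approximating the normalized $\pi = \llangle\term\rrangle / Z_{\llangle\term\rrangle}$, and the unbiased estimate is already an estimate of $Z_{\llangle\term\rrangle}$. This yields the claim ``after a finite number of steps'' with no further work, the finiteness assumption being exactly what makes $\pi$ a probability measure.

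For the second case I would take the limit $n\to\infty$. Theorem~\ref{theorem:dct} gives $\llangle\term\rrangle_n(S) \to \llangle\term\rrangle(S)$ for every $S \in \Trcal$; specializing to $S = \Tr$ yields $Z_{\llangle\term\rrangle_n} \to Z_{\llangle\term\rrangle}$, which together with $\mathbb{E}(\hat Z_{\widetilde\pi_n}) = Z_{\llangle\term\rrangle_n}$ delivers convergence of the unbiased normalizing-constant estimates to $Z_{\llangle\term\rrangle}$. For the distributional part I would compose the per-$n$ law of large numbers from Lemma~\ref{lemma:smccorrect} (the inner limit $J\to\infty$) with the convergence $\pi_n \to \pi$ (the outer limit $n\to\infty$): writing $\mathbb{E}_{\pi_n}(\varphi) = \frac{1}{Z_{\llangle\term\rrangle_n}}\int_\Tr \varphi(s)\, f_{\term,n}(s)\, d\mu_\Tr(s)$, a further application of dominated convergence (the integrand being controlled by $|\varphi|\,g$) shows $\mathbb{E}_{\pi_n}(\varphi) \to \mathbb{E}_\pi(\varphi)$, so the quantity the empirical distributions converge to at step $n$ itself converges to the target expectation. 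Hence in the limit $n\to\infty$ Algorithm~\ref{alg:smc} approximates $\llangle\term\rrangle$ and its normalizing constant.

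I expect the genuine obstacle to be this second case's interchange of the two limits and, specifically, securing a dominating function for $\varphi\, f_{\term,n}$: the bound $f_{\term,n}\le g$ controls only the densities, whereas the test functions $\varphi$ in Definition~\ref{def:approx} need merely be $\pi$-integrable, so $|\varphi|\,g$ need not be $\mu_\Tr$-integrable. I would therefore state the asymptotic claim carefully --- either restricting the distributional conclusion to bounded (or $g$-integrable) $\varphi$, where dominated convergence applies cleanly, or making explicit that ``approximates in the limit'' refers to the iterated $J\to\infty$ then $n\to\infty$ limit rather than a single joint limit. The normalizing-constant half, by contrast, is unconditional once $Z_{\llangle\term\rrangle_n}\to Z_{\llangle\term\rrangle}$ is in hand.
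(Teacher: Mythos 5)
Your proposal is correct and follows essentially the same route as the paper, whose entire proof is the one-line observation that the result ``follows directly from Theorem~\ref{theorem:main}, Theorem~\ref{theorem:dct}, and Lemma~\ref{lemma:smccorrect}''; you have simply spelled out the hypothesis-checking (finiteness giving probability measures $\pi_n = \llangle\term\rrangle_n/Z_{\llangle\term\rrangle_n}$, validity of the $k_n$, and insensitivity of Definition~\ref{def:approx} to normalization) that the paper leaves implicit. Your closing caveat about the iterated $J\to\infty$, $n\to\infty$ limits and the lack of a dominating function for $\varphi\,f_{\term,n}$ is a genuine subtlety the paper glosses over, and your reading of ``approximates in the limit'' as the iterated limit is the correct interpretation of the theorem's informal phrasing.
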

This follows directly from Theorem~\ref{theorem:main}, Theorem~\ref{theorem:dct}, and Lemma~\ref{lemma:smccorrect}.

We conclude this section by discussing \ttt{resample} placements, and the practical implications of Theorem~\ref{theorem:consistency}.
First, we define a \emph{\ttt{resample} placement} for a term $\term$ as the term resulting from replacing arbitrary subterms $\term'$ of $\term$ with $\ttt{resample; } \term'$.
Note that such a placement directly corresponds to constructing the sequence $\llangle \term \rrangle_n$.
Second, note that the measure $\llangle \term \rrangle$ and the target measure $\llbracket \term \rrbracket$ are clearly \emph{unaffected} by such a placement---indeed, \ttt{resample} simply evaluates to $()$, and for $\llangle \term \rrangle$ and $\llbracket \term \rrbracket$, there is no bound on how many \ttt{resample}s we can evaluate.
As such, we conclude that \emph{all} resample placements in $\term$ fulfilling one of the two conditions in Theorem~\ref{theorem:consistency} leads to a correct approximation of $\llangle\term\rrangle$ when applying Algorithm~\ref{alg:smc}.
Furthermore, there is always, in practice, an upper bound on the number of calls to \ttt{resample}, since any concrete run of \gls{smc} has an (explicit or implicit) upper bound on its runtime.
This is a powerful result, since it implies that when implementing \gls{smc} for \glspl{ppl}, any method for selecting resampling locations in a program is correct under mild conditions (Theorem~\ref{theorem:main} or Theorem~\ref{theorem:dct}) that are most often, if not always, fulfilled in practice.
Most importantly, this justifies the basic approach for placing \ttt{resample}s found in WebPPL, Anglican, and Birch, in which every call to \ttt{weight} is directly followed (implicitly) by a call to \ttt{resample}.
It also justifies the approach to placing \ttt{resample}s described in Lundén et al.~\cite{lunden2018automatic}. This latter approach is essential in, e.g., Ronquist et al.~\cite{ronquist2020probabilistic}, in order to increase inference efficiency.

Our results also show that the restriction in Anglican requiring all executions to encounter the same number of \ttt{resample}s, is too conservative. Clearly, this is not a requirement in either Theorem~\ref{theorem:main} or Theorem~\ref{theorem:dct}.
For instance, the number of calls to \ttt{resample} varies significantly in \eqref{eq:geores}.

\section{SMC Algorithms}\label{sec:smckernel}
In this section, we take a look at how the kernels $k_n$ in Algorithm~\ref{alg:smc} can be instantiated to yield the concrete \gls{smc} algorithm known as the \acrlong{bpf} (Section~\ref{sec:bpf}), and also discuss other \gls{smc} algorithms and how they relate to Algorithm~\ref{alg:smc} (Section~\ref{sec:othersmc}).

\subsection{The Bootstrap Particle Filter}\label{sec:bpf}
We define for each term $\term$ a particular sequence of kernels $k_{\term,n}$, that gives rise to the \gls{smc} algorithm known as the \acrfull{bpf}.
Informally, these kernels correspond to simply continuing to evaluate the program until either arriving at a value $\val$ or a term of the form $\econt[\ttt{resample}]$.
For the bootstrap kernel, calculating the weights $w_n^j$ from Algorithm~\ref{alg:smc} is particularly simple.

Similarly to $\llangle \term \rrangle_n$, it is more convenient to define and work with sequences of kernels over traces, rather than terms.
We will define $k_{\term,n}(s,\cdot)$ to be the sub-probability measure over extended traces $s \ast s'$ resulting from evaluating the term $r_{\term,n-1}(s)$ until the next \ttt{resample} or value $\val$, ignoring any call to \ttt{weight}.
First, we immediately have that the set of all traces that do not have $s$ as prefix must have measure zero.
To make this formal, we will use the inverse images of the functions $\x{prepend}_s(s') = s \ast s'$, $s \in \Tr$ in the definition of the kernel.
\begin{lemma}\label{lemma:prependmeas}
  \lemmaprependmeas$^\dagger$
\end{lemma}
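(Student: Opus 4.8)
The plan is to exploit the disjoint-union structure of $(\Tr,\Trcal)$. By definition a set $S \in \Trcal$ is one whose intersection with each slice $[0,1]^n$ is Borel, i.e. $S = \bigcup_{n \in \mathbb{N}_0} B_n$ with $B_n \in \Bor_{[0,1]}^n$, and the crucial feature of $\x{prepend}_s$ is that it maps each slice into a single slice: if $s' \in [0,1]^k$ and $m = |s|$, then $s \ast s' \in [0,1]^{m+k}$. So I expect measurability to reduce to measurability of the restriction of $\x{prepend}_s$ to each $[0,1]^k$, which is an elementary fact. Throughout I fix $s \in \Tr$ and write $m = |s|$.

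Concretely, I would take an arbitrary $S = \bigcup_{n \in \mathbb{N}_0} B_n \in \Trcal$ and compute its preimage. Since $s \ast s' \in [0,1]^{m+k}$ whenever $s' \in [0,1]^k$, membership $s \ast s' \in S$ is equivalent to $s \ast s' \in B_{m+k}$; the slice index is shifted deterministically by $m$, and no slice is split. Hence
\[
  \x{prepend}_s^{-1}(S) = \bigcup_{k \in \mathbb{N}_0} C_k,
  \qquad
  C_k = \{\, s' \in [0,1]^k : s \ast s' \in B_{m+k} \,\},
\]
so that $C_k$ is exactly the preimage of $B_{m+k}$ under the restriction $\x{prepend}_s|_{[0,1]^k} : [0,1]^k \to [0,1]^{m+k}$, the map $(c_1,\ldots,c_k) \mapsto (s_1,\ldots,s_m,c_1,\ldots,c_k)$.

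The key step is then to note that this restriction is continuous --- it is the coordinate embedding that fixes the first $m$ coordinates to the constants $s_1,\ldots,s_m$ and copies the remaining $k$ --- and therefore Borel measurable as a map $[0,1]^k \to [0,1]^{m+k}$. Consequently each $C_k \in \Bor_{[0,1]}^k$, and $\x{prepend}_s^{-1}(S) = \bigcup_k C_k$ is precisely of the form defining members of $\Trcal$, giving $\x{prepend}_s^{-1}(S) \in \Trcal$. I do not anticipate a genuine obstacle here; the only care needed is the bookkeeping of the slice-index shift by $m$ (including the edge cases $k = 0$ and $m = 0$). Equivalently, one could shorten the argument by checking preimages only on the generating collection $\{\, B_n \mid n \in \mathbb{N}_0,\ B_n \in \Bor_{[0,1]}^n \,\}$, but the direct computation above is already self-contained.
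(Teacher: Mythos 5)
Your proof is correct, but it takes a genuinely different route from the paper's. The paper argues at the level of the trace metric space: it observes that $\x{prepend}_s$ is an isometry, $d_\Tr(s \ast s', s \ast s'') = d_\Tr(s', s'')$, hence continuous (take $\delta = \varepsilon$), and then measurability follows from the appendix infrastructure, namely that $\Trcal$ coincides with the Borel $\sigma$-algebra induced by $d_\Tr$ (Lemma~\ref{lemma:traceinducedmetric}) and that continuous maps are Borel measurable (Lemma~\ref{lemma:contmeas}). You instead compute preimages directly against the defining form of $\Trcal$: since $\x{prepend}_s$ shifts the slice index by $m = |s|$ without splitting slices, $\x{prepend}_s^{-1}\left(\bigcup_{n} B_n\right) = \bigcup_{k} C_k$ where $C_k$ is the preimage of $B_{m+k}$ under the continuous coordinate embedding $[0,1]^k \to [0,1]^{m+k}$, so $C_k \in \Bor_{[0,1]}^k$ and the union lies in $\Trcal$ by definition. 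Your version is more elementary and self-contained: it needs only the definition of $\Trcal$ and Borel measurability of continuous maps between Euclidean cubes, and in particular does not lean on Lemma~\ref{lemma:traceinducedmetric}, which itself requires proof. What the paper's route buys is brevity and uniformity: the metric-space machinery is built once and reused for the measurability of many other functions (e.g., $r_\term$, $f_\term$, the $\mathit{step}$ functions), so given that infrastructure the isometry observation settles the lemma in two lines, whereas your slice-by-slice bookkeeping is specific to this map. Both arguments handle the edge cases $k = 0$ and $m = 0$ without difficulty.
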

The next ingredient for defining the kernels $k_{\term,n}$ is a function $p_{\term,n}$ that indicates what traces are possible when executing $\term$ until the $n + 1$th \ttt{resample} or value.
\begin{definition}\label{def:termdensitykerneln}
  $
    p_{\term,n}(s) =
    \begin{cases}
      1 & \text{if }
      \term,\cdot,s,n \hookrightarrow^*
      \val,\cdot,\trace{},\cdot \\
      1 & \text{if }
      \term,\cdot,s,n \hookrightarrow^*
      \econt[\ttt{resample}],\cdot,\trace{},0 \\
      0 & \text{otherwise}
    \end{cases}
  $
\end{definition}
Note the similarities to Definition~\ref{def:termdensityn}.
In particular, $f_{\term,n}(s) > 0$ implies $p_{\term,n}(s) = 1$.
However, note that $f_{\term,n}(s) = 0$ does not imply $p_{\term,n}(s) = 0$, since $p_{\term,n}$ ignores weights.
As an example, \mbox{$f_{(\ttt{weight $0$}),n}(\trace{}) = 0$,} while $p_{(\ttt{weight $0$}),n}(\trace{}) = 1$.

\begin{lemma}
  $p_{\term,n} : (\Tr,\Trcal) \to (\R_+,\Bor_+)$ is measurable.
\end{lemma}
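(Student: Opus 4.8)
The plan is to exhibit $p_{\term,n}$ as the weight function $f_{\term^\sharp,n}$ of a syntactically modified program $\term^\sharp$, and then to invoke Lemma~\ref{lemma:rfnmeas}, which already establishes that every such weight function is measurable. The point is that $p_{\term,n}$ differs from $f_{\term,n}$ only in that it discards the magnitude of the accumulated weight while retaining the evaluation dynamics that decide whether execution reaches a value or a budget-exhausted \ttt{resample}. So I would construct a term whose weights are all neutral but whose control flow, randomness consumption, and \ttt{resample} positions coincide with those of $\term$.

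Concretely, I would let $\term^\sharp$ be obtained from $\term$ by replacing every subterm $\ttt{weight}(\expr)$ with $\ttt{if}\ 0 \leq \expr\ \ttt{then}\ \ttt{weight}(1)\ \ttt{else}\ \ttt{weight}(-1)$, where $({\leq}) \in \mathbb{G}$ is the comparison function with $\sigma_{\leq}(a,b) = \true$ iff $a \le b$ (available in the calculus, as discussed after Definition~\ref{def:syntax}). The \emph{then} branch contributes the neutral factor $1$ whenever the original argument is non-negative, and the \emph{else} branch reproduces, via $\ttt{weight}(-1)$, exactly the stuck configuration that the \textsc{Weight} rule produces when its argument is negative. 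Since the inserted comparison and conditional reduce only by $\rightarrow_\textsc{Det}$ and are folded into a single $\hookrightarrow$ step, the transformation consumes no additional randomness and introduces no new \ttt{resample}; hence the trace consumed, the remaining budget, and the configuration at which evaluation halts are unchanged relative to $\term$.

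I would then argue that $f_{\term^\sharp,n} = p_{\term,n}$ pointwise, and conclude measurability of $p_{\term,n}$ by applying Lemma~\ref{lemma:rfnmeas} to $\term^\sharp$. For the equality: on a trace $s$ with $p_{\term,n}(s) = 1$, i.e.\ where $\term,1,s,n$ reduces to a value with empty remaining trace or to $\econt[\ttt{resample}]$ with empty trace at budget $0$ (Definition~\ref{def:termdensitykerneln}), the program $\term^\sharp$ halts in the corresponding configuration with every weight factor equal to $1$, so its accumulated weight is $1$ and $f_{\term^\sharp,n}(s) = 1$ by Definition~\ref{def:termdensityn}; on every other $s$ both functions are $0$, the negative-argument case being mapped precisely to a stuck configuration so that $f_{\term^\sharp,n}(s) = 0 = p_{\term,n}(s)$ there as well.

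I expect the main obstacle to be the bookkeeping behind $f_{\term^\sharp,n} = p_{\term,n}$: one must check, for every $s$ and every budget $n$, that inserting the deterministic comparison and conditional does not shift the \ttt{resample} count or the portion of the trace consumed before halting, and that the negative-argument branch halts at the same evaluation point as the original stuck \ttt{weight}. None of this is deep, but it is the only nontrivial part of the argument. As an alternative to the transformation, one could reprove measurability directly by partitioning $\Tr$ along reduction paths exactly as in the proof of Lemma~\ref{lemma:rfnmeas}, recording the value $1$ in place of the accumulated weight; the construction above is simply a way to reuse that lemma verbatim rather than repeat its decomposition.
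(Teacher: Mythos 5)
Your proof is correct, but it takes a genuinely different route from the paper's. The paper disposes of this lemma in one line---``analogous to the proof of Lemma~\ref{lemma:rfnmeas}''---which, unfolded, means rerunning the big-step measurability machinery of the appendix: keep the measurable step function for $\hookrightarrow$ unchanged, and swap the extract function from one returning the accumulated weight $w$ at halting configurations ($\val$ or $\econt[\ttt{resample}]$ with empty remaining trace and budget $0$) to one returning the constant $1$ there and $0$ elsewhere; measurability then follows from the supremum construction (Lemma~\ref{lemma:final}) verbatim. You instead reuse Lemma~\ref{lemma:rfnmeas} as a black box by realizing $p_{\term,n}$ as $f_{\term^\sharp,n}$ for a weight-erased program $\term^\sharp$, and your transformation is chosen carefully enough to be sound: replacing $\ttt{weight}(\expr)$ by $\ttt{if}\ 0 \le \expr\ \ttt{then}\ \ttt{weight}(1)\ \ttt{else}\ \ttt{weight}(-1)$ handles both delicate cases---a zero argument, where $f_{\term,n}(s)=0$ but $p_{\term,n}(s)=1$ and your neutral factor yields $f_{\term^\sharp,n}(s)=1$, exactly matching the paper's example $p_{(\ttt{weight}\,0),n}(\trace{})=1$; and a negative argument, where the premise $c \ge 0$ of the $(\textsc{Weight})$ rule fails so the original is stuck, and your $\ttt{weight}(-1)$ is stuck at the corresponding point---while adding no \ttt{sample} or \ttt{resample}, so trace consumption and the budget $n$ are preserved. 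What each approach buys: the paper's route costs nothing beyond a new extract function and needs no assumptions on the primitive functions, whereas your route avoids re-opening the measure-theoretic machinery but shifts the burden onto an operational simulation between $\term$ and $\term^\sharp$ (including commutation of the transformation with substitution, since values under $\lambda$ may contain \ttt{weight} subterms, and the fact that the inserted conditional folds into the surrounding $\rightarrow_\textsc{Det}^+$ reductions), plus the mild assumption that $\le$ with $\sigma_\le(a,b) \in \{\true,\false\}$ belongs to $\mathbb{G}$---harmless, as $\mathbb{G}$ may be any countable family of measurable functions. You correctly identify the simulation bookkeeping as the only nontrivial step, and your closing alternative describes exactly the argument the paper actually intends.
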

The proof is analogous to that of Lemma~\ref{lemma:rfnmeas}.
We can now formally define the kernels $k_{\term,n}$.
\begin{definition}\label{def:kernelktermn}
  $
    k_{\term,n}(s,S) =
    \int_{prepend_s^{-1}(S)}
    p_{r_{\term,n-1}(s),1}(s')
    \ d\mu_\Tr(s') \\
  $
\end{definition}
By the definition of $p_{\term,n}$, the $k_{\term,n}$ are \emph{sub-probability} kernels rather than probability kernels.
Intuitively, the reason for this is that during evaluation, terms can get stuck, deterministically diverge, or even stochastically diverge.
Such traces are assigned 0 weight by $p_{\term,n}$.
\begin{lemma}\label{lemma:finkern}
  \lemmafinitekernel$^\dagger$\footnote{We only give a partial proof of this lemma.}%
\end{lemma}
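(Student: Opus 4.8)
The plan is to verify the two defining properties of a kernel separately and then bound the total mass. Throughout, fix $n$ and $\term$, and abbreviate $\term_s := r_{\term,n-1}(s)$. For the first condition, fix $s \in \Tr$. Since $p_{\term_s,1}$ is measurable (by the measurability of $p_{\term,n}$ established just before Definition~\ref{def:kernelktermn}), the set function $\nu_s(A) = \int_A p_{\term_s,1}(s')\,d\mu_\Tr(s')$ is a measure on $(\Tr,\Trcal)$, being the integral of a non-negative measurable density. By Lemma~\ref{lemma:prependmeas} the map $\x{prepend}_s$ is measurable, so $k_{\term,n}(s,\cdot)$ is exactly the pushforward $(\x{prepend}_s)_*\,\nu_s$ and hence a measure.

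For the second condition, fix $S \in \Trcal$ and rewrite $k_{\term,n}(s,S) = \int_\Tr \mathbf{1}_S(s \ast s')\, p_{r_{\term,n-1}(s),1}(s')\, d\mu_\Tr(s')$. I would show that the integrand $H(s,s') = \mathbf{1}_S(s \ast s')\, p_{r_{\term,n-1}(s),1}(s')$ is jointly measurable on $(\Tr \times \Tr, \Trcal \otimes \Trcal)$ and then invoke Tonelli's theorem, which yields measurability of the partial integral $s \mapsto \int_\Tr H(s,s')\,d\mu_\Tr(s')$. Joint measurability of $(s,s') \mapsto \mathbf{1}_S(s \ast s')$ reduces to measurability of concatenation $\ast : \Tr \times \Tr \to \Tr$. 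For the other factor I would decompose $(s,s') \mapsto p_{r_{\term,n-1}(s),1}(s')$ through $s \mapsto r_{\term,n-1}(s)$, which is measurable into $(\T,\Tcal)$ by Lemma~\ref{lemma:rfnmeas}, composed with the joint map $(\term'',s') \mapsto p_{\term'',1}(s')$ on $\T \times \Tr$. Establishing this last joint-in-the-term measurability is the delicate ingredient: it mirrors the single-term argument behind Lemma~\ref{lemma:rfnmeas}, but now the term varies, and it is handled by exploiting the placeholder structure of $(\T,\Tcal)$, since the density depends on the real subterms in a piecewise Borel way.

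It remains to show $k_{\term,n}(s,\Tr) \le 1$, i.e. $\int_\Tr p_{\term_s,1}(s')\,d\mu_\Tr(s') \le 1$ for every fixed term $\term_s$, using that $\x{prepend}_s^{-1}(\Tr)=\Tr$. The key observation is that a trace contributes only when it is consumed exactly: $p_{\term_s,1}(s')=1$ forces the run of $\term_s$ (with one unit of resample budget) to halt at a value or at the first $\econt[\ttt{resample}]$ precisely as the last sample of $s'$ is read. Partitioning by length, set $B_k = \{s' \in [0,1]^k : p_{\term_s,1}(s') = 1\}$, so that $\int_\Tr p_{\term_s,1}\,d\mu_\Tr = \sum_{k} \lambda_k(B_k)$. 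I would interpret these weights through the infinite product of uniform measures $P = \bigotimes_{i} \mathrm{Unif}[0,1]$ on $[0,1]^\infty$: the cylinder $A_k = B_k \times [0,1]^\infty$ is the event that execution halts after reading exactly $k$ samples, and since a run halts at most once the cylinders $A_k$ are pairwise disjoint. As $P(A_k) = \lambda_k(B_k)$, summing gives $\sum_k \lambda_k(B_k) = P\!\left(\bigcup_k A_k\right) \le 1$.

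This product-measure and stopping-time step is where the real work lies, and it is the part proved only partially: formalizing that the halting length is a well-defined measurable stopping time on $[0,1]^\infty$ within the operational semantics, and that the sets $B_k$ are genuinely prefix-disjoint, requires the most care, whereas the two measurability conditions above are comparatively routine once the joint measurability of $(\term'',s') \mapsto p_{\term'',1}(s')$ is in hand.
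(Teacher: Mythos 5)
Your proposal is correct, and for the key step---the mass bound $k_{\term,n}(s,\Tr) \le 1$---it takes a genuinely different route from the paper. The paper never leaves the finite-dimensional trace space: it observes that $p_{r_{\term,n-1}(s),1}(s')=1$ forces $p_{r_{\term,n-1}(s),1}(s'')=0$ for all proper prefixes and extensions $s''$ (citing Lemma~\ref{lemma:decompose}), and then applies an elementary induction on trace length (Lemma~\ref{lemma:finiteaux}), which shows that any such prefix-free $p:\Tr\to\{0,1\}$ satisfies $\int_{[0,1]^k} p\,d\mu_\Tr \le 1-\sum_{i<k}\int_{[0,1]^i} p\,d\mu_\Tr$, giving $\int_\Tr p\,d\mu_\Tr \le 1$ in the limit. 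You instead lift to the countable product space $[0,1]^\infty$ with $P=\bigotimes_i \mathrm{Unif}[0,1]$ and read the $B_k$ as pairwise disjoint halting events $A_k = B_k\times[0,1]^\infty$ with $P(A_k)=\lambda_k(B_k)$. That is a clean and conceptually appealing stopping-time argument, and your prefix-disjointness justification (determinism of $\hookrightarrow$ plus the empty-final-trace requirement in Definition~\ref{def:termdensitykerneln}) is exactly the content the paper extracts from Lemma~\ref{lemma:decompose}; but it buys the one-line bound at the cost of invoking the existence of the infinite product probability measure, whereas the paper's Lemma~\ref{lemma:finiteaux} needs nothing beyond finite-dimensional Lebesgue integration, in keeping with its deliberately elementary development. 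Your verification that $k_{\term,n}(s,\cdot)$ is a measure, via the pushforward $(\x{prepend}_s)_*\nu_s$ using Lemma~\ref{lemma:prependmeas}, is equivalent to the paper's direct axiom check through preimages. Finally, note that the two ``partial'' proofs are incomplete in different places: the paper simply omits measurability of $k_{\term,n}(\cdot,S)$, while you sketch it (Tonelli over the $\sigma$-finite $\mu_\Tr$ of Lemma~\ref{lemma:tracemeasspace}, reducing to joint measurability of $(\term'',s')\mapsto p_{\term'',1}(s')$ via Lemma~\ref{lemma:rfnmeas}) and correctly flag that joint-in-the-term measurability as the unproved delicate ingredient---so your sketch actually covers strictly more of the lemma than the paper's proof does, at a comparable level of rigor in what remains.
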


We get a natural starting measure $q_{0}$ from the sub-probability distribution resulting from running the initial program $\term$ until reaching a value or a call to \ttt{resample}, ignoring weights.
\begin{definition}\label{def:bootstrapstartingmeasure}
  $\langle \term \rangle_0(S) = \int_S p_{\term,0}(s) d\mu_\Tr(s)$.
\end{definition}

Now we have all the ingredients for the general \gls{smc} algorithm described in Section~\ref{sec:smcalg}: a sequence of target measures $\llangle \term \rrangle_n = \widetilde \pi_n$ (Definition~\ref{def:seqtarget}), a starting measure $\langle\term\rangle_0 \propto q_0$ (Definition~\ref{def:bootstrapstartingmeasure}), and a sequence of kernels $k_{\term,n} \propto k_n$ (Definition~\ref{def:kernelktermn}).
These then induce a sequence of proposal measures $\langle \term \rangle_n = \widetilde q_n$ as in Equation~(\ref{eq:qn}), which we instantiate in the following definition.
\begin{definition}
  $
    \langle \term \rangle_n(S) =
    \int_\Tr k_{\term,n}(s, S) f_{\term,n-1}(s)d\mu_\Tr(s)
    , \quad n > 0
  $
\end{definition}

Intuitively, the measures $\langle\term\rangle_n$ are obtained by evaluating the terms in the support of the measure $\llangle\term\rrangle_{n-1}$ until reaching the next \ttt{resample} or value.
For an efficient implementation, we need to factorize this definition into the history and the current step, which amounts to splitting the traces.
Each feasible trace can be split in such a way.
\begin{lemma}\label{lemma:decompose}
  \lemmadecompose$^\dagger$
\end{lemma}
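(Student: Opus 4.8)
The plan is to exploit the fact that the reduction relation $\hookrightarrow$ is deterministic, so that each configuration $\term,1,s,n$ has a unique maximal run, and then to locate a single canonical point in that run at which to split the trace. First I would record (or re-establish) three structural properties of $\hookrightarrow$. \emph{(a) Counter shift}: the runs started from $\term,1,s,n$ and from $\term,1,s,n-1$ take identical steps, consuming the same prefixes of $s$ and accumulating the same weights, up until the moment the smaller-allowance run would apply \textsc{Resample-Fin} with its counter at $0$ --- that is, until the $n$th \ttt{resample} is reached --- because the counter influences a step only through the side condition $n>0$ of \textsc{Resample-Fin}. \emph{(b) Prefix locality}: since \textsc{Sample} consumes only the head of the trace and no rule inspects the tail, a run on $\underline{s}\ast\overline{s}$ that consumes exactly $\underline{s}$ before reaching a state $\term',w',\overline{s},n'$ has a continuation that coincides with the fresh run on $\term',\cdot,\overline{s},\cdot$. \emph{(c) Weight multiplicativity}: because \textsc{Weight} multiplies the accumulated weight by a factor determined solely by the current term, restarting a run with weight $1$ instead of $w'$ rescales the final weight by $1/w'$; hence the final weight factors as the weight at the split point times the weight of the continuation run started afresh at weight $1$.

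With these in hand, suppose $f_{\term,n}(s)>0$ and consider the unique run of $\term,1,s,n$. I would define the split point to be the first state that is either a value $\val$ or of the form $\econt[\ttt{resample}]$ with counter exactly $1$ (the state immediately preceding evaluation of the $n$th \ttt{resample}); let $\underline{s}$ be the prefix of $s$ consumed up to that point and $\overline{s}$ the remainder, so $\underline{s}\ast\overline{s}=s$. By counter shift, the allowance-$(n-1)$ run on $\underline{s}$ reaches exactly this state with empty remaining trace, hence terminates at a value or at an exhausted \ttt{resample}; this shows $f_{\term,n-1}(\underline{s})$ equals the weight at the split point and identifies $r_{\term,n-1}(\underline{s})$ as the split term. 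By prefix locality and weight multiplicativity, the continuation from the split is the fresh run of $r_{\term,n-1}(\underline{s}),1,\overline{s},1$ reaching the same final state on the empty trace, so $f_{r_{\term,n-1}(\underline{s}),1}(\overline{s})$ equals the continuation weight; multiplying the two weights recovers $f_{\term,n}(s)$, giving the factorization. When fewer than $n$ \ttt{resample}s are evaluated, $\overline{s}=\trace{}$ and the second factor is $f_{\val,1}(\trace{})=1$.

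For uniqueness I would argue that any decomposition $\underline{s}'\ast\overline{s}'=s$ with $f_{\term,n-1}(\underline{s}')>0$ forces the allowance-$(n-1)$ run to halt exactly after consuming $\underline{s}'$; by prefix locality this run is a restriction of the single run on $s$, and since a halted state admits no further $\hookrightarrow$ step, determinism pins down the consumed prefix, so $\underline{s}'=\underline{s}$. The zero case is the contrapositive: if some decomposition made both factors positive, composing the two runs via (a)--(c) would exhibit a run witnessing $f_{\term,n}(s)>0$, a contradiction. Finally, the stated consequence is immediate, since $f_{r_{\term,n-1}(\underline{s}),1}(\overline{s})>0$ implies $p_{r_{\term,n-1}(\underline{s}),1}(\overline{s})=1$ by the remark following Definition~\ref{def:termdensitykerneln}. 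I expect the main obstacle to be the bookkeeping in property (a) together with the off-by-one case analysis at the split point --- precisely establishing that allowance $n-1$ halts at the $n$th \ttt{resample} exactly where allowance $n$ is poised to evaluate it --- rather than any deep measure-theoretic content.
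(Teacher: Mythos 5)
Your proposal is correct and takes essentially the same approach as the paper's proof: both split the unique deterministic run at the $n$th \ttt{resample} (your counter-shift, prefix-locality, and weight-multiplicativity properties are exactly what the paper's terser argument leaves implicit), identify $\underline{s}$ as the prefix consumed up to that point, and establish uniqueness by noting that any other prefix either depletes at a \ttt{sample} too early or leaves a nonempty trace at the halt state. The only cosmetic difference is that you handle the zero case by contraposition (composing the two runs to witness $f_{\term,n}(s)>0$), where the paper argues directly that the zero weight must arise between the $n$th and $(n+1)$th \ttt{resample}---an equivalent observation, and your version is arguably more careful about divergent runs.
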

This gives a more efficiently computable definition of the density.
\begin{lemma}\label{lemma:propdensity}
  \lemmapropdensity%
  $^\dagger$\footnote{We only give a proof sketch for this lemma.}
\end{lemma}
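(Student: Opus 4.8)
The plan is to unfold both definitions, turn the resulting iterated integral into a single integral over concatenated traces via a Tonelli-style reparametrization, and then collapse the sum over split points to a single summand using the determinism encapsulated in Lemma~\ref{lemma:decompose}.

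First I would expand the proposal measure using its definition together with Definition~\ref{def:kernelktermn}, writing, for $n > 0$,
\[
  \langle \term \rangle_n(S) = \int_\Tr \int_\Tr \mathbf{1}_S(s \ast s')\, p_{r_{\term,n-1}(s),1}(s')\, f_{\term,n-1}(s)\, d\mu_\Tr(s')\, d\mu_\Tr(s),
\]
where I have rewritten $s' \in \x{prepend}_s^{-1}(S)$ as $\mathbf{1}_S(s \ast s')$ and pulled the $s$-constant factor $f_{\term,n-1}(s)$ inside the inner integral. The integrand is non-negative and, by Lemmas~\ref{lemma:rfnmeas} and~\ref{lemma:prependmeas} together with the measurability of $p_{\term,n}$, jointly measurable, so Tonelli applies.

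The key step I would carry out is a reparametrization of $\mu_\Tr \times \mu_\Tr$ by concatenation. Since $\mu_\Tr$ is the sum over lengths of the Lebesgue measures $\lambda_i$, and since $\lambda_i \times \lambda_j$ on $[0,1]^i \times [0,1]^j$ is carried to $\lambda_{i+j}$ on $[0,1]^{i+j}$ under $(s,s') \mapsto s \ast s'$, I would establish the identity
\[
  \int_\Tr \int_\Tr H(s,s')\, d\mu_\Tr(s')\, d\mu_\Tr(s) = \int_\Tr \sum_{\underline{t} \ast \overline{t} = t} H(\underline{t},\overline{t})\, d\mu_\Tr(t),
\]
the inner sum ranging over the $|t|+1$ splits of $t$; one proves it by grouping the double sum over lengths $i,j$ by $m = i+j$. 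Applying this with $H(s,s') = \mathbf{1}_S(s \ast s')\, p_{r_{\term,n-1}(s),1}(s')\, f_{\term,n-1}(s)$ and pulling $\mathbf{1}_S(t)$ out of the sum gives
\[
  \langle \term \rangle_n(S) = \int_S \sum_{\underline{t} \ast \overline{t} = t} f_{\term,n-1}(\underline{t})\, p_{r_{\term,n-1}(\underline{t}),1}(\overline{t})\, d\mu_\Tr(t).
\]

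It then remains to show the inner sum collapses to the single summand indexed by the decomposition of Lemma~\ref{lemma:decompose}, and this is the heart of the argument. Because evaluation under $\hookrightarrow$ is deterministic and consumes the trace left to right, for a fixed $t$ there is at most one prefix $\underline{t}$ with $f_{\term,n-1}(\underline{t}) > 0$: too short a prefix leaves the run stuck for want of randomness, too long a prefix leaves a non-empty residual trace, and both force the density to $0$. Hence at most one split contributes, and when $f_{\term,n}(t) > 0$ it is exactly the decomposition furnished by Lemma~\ref{lemma:decompose}, for which that lemma also yields $p_{r_{\term,n-1}(\underline{t}),1}(\overline{t}) = 1$. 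The one point requiring care is that the proposal density uses $p$, not $f$, in the second factor, so it can remain positive even when $f_{\term,n}(t) = 0$, namely when a zero weight is encountered only in the final segment after the $(n-1)$th \ttt{resample}; in that case the contributing split is still the unique execution split at which the first $n-1$ \ttt{resample}s land, which is the decomposition meant in the statement. I expect this $f$-versus-$p$ bookkeeping at the last segment, rather than the Tonelli computation, to be the main obstacle, and I would handle it by phrasing the uniqueness of the split purely in terms of the deterministic position of the $(n-1)$th \ttt{resample}, independent of the accumulated weight.
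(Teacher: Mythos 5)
Your proof is correct and follows essentially the same route as the paper's own sketch: unfold Definition~\ref{def:kernelktermn}, exchange the order of integration, and collapse the integral over decomposed traces $\underline{s} \ast \overline{s} = s$ to the single contributing split identified via Lemma~\ref{lemma:decompose}. Your split-sum identity for concatenation under $\mu_\Tr \times \mu_\Tr$, and your observation that the second factor is $p_{r_{\term,n-1}(\underline{s}),1}$ rather than $f_{r_{\term,n-1}(\underline{s}),1}$ (so that uniqueness of the split must rest on the deterministic position of the $n$th \ttt{resample} rather than on the accumulated weight, covering the case $f_{\term,n}(s)=0$ with positive proposal density), supply exactly the precision that the paper's sketch explicitly defers.
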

\ifextended
Analogously to Definitions~\ref{def:tracemeasure} and~\ref{def:seqtarget}, by Lemma~\ref{lemma:density} and Lemma~\ref{lemma:tracemeasspace}, it holds that
the density
$
  s \mapsto
  p_{r_{\term,n-1}(\underline{s}),1}(\overline{s})
  f_{\term,n-1}(\underline{s})
$
is unique $\mu_\Tr$-ae if $\langle\term\rangle_n$ is $\sigma$-finite.
\fi

Since the kernels $k_{\term,n}$ are sub-probability kernels, the measures $\langle \term \rangle_n$ are finite given that the $\llangle\term\rrangle_{n}$ are finite.
\begin{lemma}\label{lemma:propfin}
  \lemmapropfin$^\dagger$
\end{lemma}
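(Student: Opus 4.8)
The plan is to treat the two assertions separately; both reduce quickly to Lemma~\ref{lemma:finkern}, so the only genuinely delicate point is the sub-probability bound underlying that lemma rather than anything new in the present statement.

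For the first assertion, I would begin by observing that $p_{\term,0}$ takes values in $\{0,1\}$ by Definition~\ref{def:termdensitykerneln}, so that $\langle\term\rangle_0(S) = \int_S p_{\term,0}(s)\,d\mu_\Tr(s) = \mu_\Tr(S \cap P)$ where $P = \{s : p_{\term,0}(s) = 1\}$. Hence $\langle\term\rangle_0$ is the restriction of $\mu_\Tr$ to $P$, and is automatically a (countably additive, nonnegative) measure; the real content is that it is a \emph{sub}-probability measure, i.e. that its total mass $\langle\term\rangle_0(\Tr) = \int_\Tr p_{\term,0}(s)\,d\mu_\Tr(s)$ is at most $1$. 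This is exactly the estimate that underlies Lemma~\ref{lemma:finkern}: the total mass $k_{\term,n}(s,\Tr) = \int_\Tr p_{r_{\term,n-1}(s),1}(s')\,d\mu_\Tr(s')$ is bounded by the inequality $\int_\Tr p_{\mathbf{u},m}(s)\,d\mu_\Tr(s) \le 1$, valid for any term $\mathbf{u}$ and budget $m$, because the trace semantics consumes fresh $[0,1]$-randomness at each \ttt{sample} and so cannot manufacture probability mass (the stopping traces of different lengths are disjoint and their Lebesgue measures sum to at most one). I would simply invoke that bound with $\mathbf{u} = \term$ and $m = 0$ to conclude $\langle\term\rangle_0(\Tr) \le 1$.

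For the second assertion, fix $n \ge 1$ and assume $\llangle\term\rrangle_{n-1}(\Tr) < \infty$. By Lemma~\ref{lemma:finkern} each $k_{\term,n}(s,\cdot)$ is a sub-probability measure, so $k_{\term,n}(s,\Tr) \le 1$ for every $s$; moreover $s \mapsto k_{\term,n}(s,\Tr)$ is measurable, since $k_{\term,n}$ is a kernel, and $f_{\term,n-1}$ is measurable by Lemma~\ref{lemma:rfnmeas}, so the integrand below is well defined and nonnegative. Using the definition of $\langle\term\rangle_n$ and monotonicity of the Lebesgue integral, $\langle\term\rangle_n(\Tr) = \int_\Tr k_{\term,n}(s,\Tr)\,f_{\term,n-1}(s)\,d\mu_\Tr(s) \le \int_\Tr f_{\term,n-1}(s)\,d\mu_\Tr(s) = \llangle\term\rrangle_{n-1}(\Tr) < \infty$. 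Since $\langle\term\rangle_n$ is a measure with $\langle\term\rangle_n(\Tr) < \infty$, it is finite, which completes the second part.

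The routine verifications (countable additivity of $\langle\term\rangle_0$ and $\langle\term\rangle_n$, and the measurability remarks) are immediate from the fact that both are defined as Lebesgue integrals of nonnegative measurable integrands, so I would not belabor them. The one step carrying real weight is the bound $\int_\Tr p_{\term,0}(s)\,d\mu_\Tr(s) \le 1$ — the sub-probability character of the fresh-randomness semantics — which I expect to be the main obstacle; fortunately this is precisely what Lemma~\ref{lemma:finkern} supplies, so here it can be cited rather than re-derived. Were one to avoid relying on the partial proof of Lemma~\ref{lemma:finkern}, the alternative would be an induction on the number of \ttt{sample} steps taken before reaching a value or the first \ttt{resample}, integrating out one uniform coordinate at a time by Fubini and using that the per-step branching proportions sum to at most one.
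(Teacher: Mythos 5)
Your proof is correct and takes essentially the same route as the paper: the paper establishes the sub-probability claim for $\langle\term\rangle_0$ by applying Lemma~\ref{lemma:finiteaux} to $p_{\term,0}$ (the prefix-free mass bound you describe, the same estimate used in the second part of Lemma~\ref{lemma:finkern}), and gets finiteness of $\langle\term\rangle_n$ from Lemma~\ref{lemma:finkern} together with Lemma~\ref{lemma:kernelmeas}, the latter being exactly your inline estimate $\langle\term\rangle_n(\Tr)\le \llangle\term\rrangle_{n-1}(\Tr)$. One small imprecision: Lemma~\ref{lemma:finkern} as stated only bounds $\int_\Tr p_{\mathbf{u},1}\,d\mu_\Tr$, not the budget-$0$ case you invoke, so the bound $\int_\Tr p_{\term,0}\,d\mu_\Tr\le 1$ should be cited to Lemma~\ref{lemma:finiteaux} directly after checking its prefix/extension hypothesis for $p_{\term,0}$ --- which your parenthetical justification already supplies, so nothing of substance is missing.
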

\begin{algorithm}[tb]
  \caption{%
    A concrete instantiation of Algorithm~\ref{alg:smc} with $\widetilde \pi_n = \llangle\term\rrangle_n$, $k_n \propto k_{\term,n}$, $q_0 \propto \langle\term\rangle_0$, and as a consequence $\widetilde q_n = \langle\term\rangle_n$ (for $n > 0$).
    In each step, we let $1 \leq j \leq J$, where $J$ is the number of samples.
  }
  \label{alg:smcppl}
  \begin{enumerate}
    \item \textbf{Initialization:}
      \label{alg:smcppl:init}
      Set $n = 0$.
      Draw $s_{0}^j \sim \langle\term\rangle_0$ for $1\leq j \leq J$.

      That is, run the program $\term$, and draw from $\mathcal{U}(0,1)$ whenever required by a $\ttt{sample}_D$.
      Record these draws as the trace $s_0^j$.
      Stop when reaching a term of the form $\econt[\ttt{resample}]$ or a value $\val$.
      The empirical distribution $\{s_0^j\}_{j=1}^J$ approximates $\langle\term\rangle_0$.
    \item \textbf{Correction:}
      \label{alg:smcppl:correction}
      Calculate $w_n^j = \frac {f_{\llangle\term\rrangle_n}(s_n^j)} {f_{\langle\term\rangle_n}(s_n^j)}$ for $1\leq j \leq J$.

      As a consequence of Lemma~\ref{lemma:weights}, this is trivial.
      Simply set $w_n^j$ to the weight accumulated while running $\term$ in step \eqref{alg:smcppl:init}, or $r_{\term,n-1}(\hat{s}_{n-1}^j)$ in step \eqref{alg:smcppl:mutation}.
      The empirical distribution given by $\{(s_n^j,w_n^j)\}_{j=1}^J$ approximates $\llangle\term\rrangle_n/Z_{\llangle\term\rrangle_n}$.

    \item \textbf{Termination:}
      If all samples $r_\term(s_{n}^j)$ are values, terminate and output $\{(s_n^j,w_n^j)\}_{j=1}^J$. If not, go to the next step.

      We cannot evaluate values further, so running the algorithm further if all samples are values is pointless.
      When terminating, assuming the conditions in Theorem~\ref{theorem:main} or Theorem~\ref{theorem:dct} holds, $\{(s_n^j,w_n^j)\}_{j=1}^J$ approximates $\llangle\term\rrangle/Z_{\llangle\term\rrangle_n}$.
      Also, by the definition of $\llbracket\term\rrbracket$, $\{(r_\term(s_n^j),w_n^j)\}_{j=1}^J$ approximates $\llbracket\term\rrbracket/Z_{\llbracket\term\rrbracket_n}$, the normalized version of $\llbracket\term\rrbracket$.

    \item \textbf{Selection:}
      Resample the empirical distribution $\{(s_n^j,w_n^j)\}_{j=1}^J$.
      The new empirical distribution is unweighted and given by $\{\hat{s}_n^j\}_{j=1}^J$.
      This distribution also approximates $\llangle\term\rrangle_n/Z_{\llangle\term\rrangle_n}$.

    \item \textbf{Mutation:}
      \label{alg:smcppl:mutation}
      Increment $n$.
      Draw $s_{n}^j \sim k_{\term,n}(\hat{s}_{n-1}^j,\cdot)$ for $1\leq j \leq J$.

      That is, simply run the intermediate program $r_{\term,n-1}(\hat{s}_{n-1}^j)$, and draw from $\mathcal{U}(0,1)$ whenever required by a $\ttt{sample}_D$.
      Record these draws and append them to $\hat{s}_{n-1}^j$, resulting in the trace $s_n^j$.
      Stop when reaching a term of the form $\econt[\ttt{resample}]$ or a value $\val$.
      The empirical distribution $\{s_n^j\}_{j=1}^J$ approximates $\langle\term\rangle_n/Z_{\langle\term\rangle_n}$.
      Go to \eqref{alg:smcppl:correction}.

    \end{enumerate}
\end{algorithm}
As discussed in Section~\ref{sec:smcpplcorrect}, the $\llangle\term\rrangle_n$ are finite, either by assumption (Theorem~\ref{theorem:main}) or as a consequence of the dominating function of Theorem~\ref{theorem:dct}.
From this and Lemma~\ref{lemma:propfin}, the $\langle\term\rangle_n$ are also finite.
Furthermore, checking that $\langle\term\rangle_n$ are valid, i.e.\ that the density $f_{\langle \term \rangle_n}$ of each $\langle\term\rangle_n$ covers the density $f_{\llangle \term \rrangle_n}$ of $\llangle\term\rrangle_n$ is trivial.
As such, by Lemma~\ref{lemma:smccorrect}, we can now correctly approximate $\llangle\term\rrangle_n$ using Algorithm~\ref{alg:smc}.
The details are given in Algorithm~\ref{alg:smcppl}, which closely resembles the standard \gls{smc} algorithm in WebPPL\@.
For ease of notation, we assume it possible to draw samples from $\langle\term\rangle_0$ and $k_{\term,n}(s,\cdot)$, even though these are sub-probability measures.
This essentially corresponds to assuming evaluation never gets stuck or diverges.
Making sure this is the case is not within the scope of this paper.
The weights in Algorithm~\ref{alg:smcppl} at time step $n$ can easily be calculated according to the following lemma.
\begin{lemma}\label{lemma:weights}
  \lemmaweights$^\dagger$
\end{lemma}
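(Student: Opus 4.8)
The plan is to compute the two densities appearing in the ratio $w_n = f_{\llangle\term\rrangle_n}/f_{\langle\term\rangle_n}$ directly from the definitions and from Lemma~\ref{lemma:propdensity}, and then to simplify using Lemma~\ref{lemma:decompose}. By Definition~\ref{def:seqtarget} the numerator is simply $f_{\llangle\term\rrangle_n}(s) = f_{\term,n}(s)$. For the denominator, Definition~\ref{def:bootstrapstartingmeasure} gives $f_{\langle\term\rangle_0}(s) = p_{\term,0}(s)$ in the base case $n=0$, while Lemma~\ref{lemma:propdensity} gives $f_{\langle\term\rangle_n}(s) = f_{\term,n-1}(\underline{s})\,p_{r_{\term,n-1}(\underline{s}),1}(\overline{s})$ for $n>0$, where $\underline{s}\ast\overline{s}=s$ is the decomposition of Lemma~\ref{lemma:decompose}. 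Throughout I work under the standing hypothesis $f_{\langle\term\rangle_n}(s)>0$, so that the ratio is well defined.

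For $n=0$, I would use that $p_{\term,0}$ takes values only in $\{0,1\}$ (Definition~\ref{def:termdensitykerneln}). Since $f_{\langle\term\rangle_0}(s)=p_{\term,0}(s)>0$ forces $p_{\term,0}(s)=1$, I immediately obtain $w_0(s)=f_{\term,0}(s)/1=f_{\term,0}(s)$, as claimed.

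For $n>0$, the hypothesis $f_{\langle\term\rangle_n}(s)=f_{\term,n-1}(\underline{s})\,p_{r_{\term,n-1}(\underline{s}),1}(\overline{s})>0$ forces both factors to be positive; since $p$ is $\{0,1\}$-valued this means $f_{\term,n-1}(\underline{s})>0$ and $p_{r_{\term,n-1}(\underline{s}),1}(\overline{s})=1$, so the denominator equals $f_{\term,n-1}(\underline{s})$. It then remains to show $f_{\term,n}(s)/f_{\term,n-1}(\underline{s})=f_{r_{\term,n-1}(\underline{s}),1}(\overline{s})$, which I would prove by a case split on the value of $f_{\term,n}(s)$. If $f_{\term,n}(s)>0$, Lemma~\ref{lemma:decompose} gives $f_{\term,n}(s)=f_{\term,n-1}(\underline{s})\,f_{r_{\term,n-1}(\underline{s}),1}(\overline{s})$ for exactly that same decomposition, and cancelling the positive common factor $f_{\term,n-1}(\underline{s})$ yields the result. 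If instead $f_{\term,n}(s)=0$, then $w_n(s)=0$, and I would invoke the second clause of Lemma~\ref{lemma:decompose}, which states that $f_{\term,n-1}(\underline{s})\,f_{r_{\term,n-1}(\underline{s}),1}(\overline{s})=0$ for every decomposition, in particular ours; since $f_{\term,n-1}(\underline{s})>0$ this forces $f_{r_{\term,n-1}(\underline{s}),1}(\overline{s})=0$, again matching $w_n(s)$.

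The one point requiring care---and the step I expect to be the main obstacle---is the consistent interpretation of ``the unique decomposition from Lemma~\ref{lemma:decompose}'' across the two regimes. Lemma~\ref{lemma:decompose} only asserts uniqueness when $f_{\term,n}(s)>0$, whereas the denominator density from Lemma~\ref{lemma:propdensity} is evaluated at the decomposition singled out by running $\term$ to its $(n-1)$th resample, namely the prefix $\underline{s}$ with $f_{\term,n-1}(\underline{s})>0$, which is unique by determinism of $\hookrightarrow$. I would therefore argue that, under $f_{\langle\term\rangle_n}(s)>0$, this proposal-induced decomposition is exactly the one Lemma~\ref{lemma:decompose} refers to, so that the two occurrences of $\underline{s}\ast\overline{s}$ coincide and the cancellation above is legitimate. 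The $\sigma$-finiteness remark following Definition~\ref{def:seqtarget} additionally ensures that these densities are the $\mu_\Tr$-ae unique representatives, so that the stated pointwise identity is meaningful.
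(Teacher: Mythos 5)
Your proof is correct and takes essentially the same route as the paper's: both identify the numerator as $f_{\term,n}(s)$, obtain the denominator $f_{\term,n-1}(\underline{s})\,p_{r_{\term,n-1}(\underline{s}),1}(\overline{s})$ from Lemma~\ref{lemma:propdensity} (resp.\ $p_{\term,0}(s)$ for $n=0$), use positivity to force the $p$-factor to equal $1$, and cancel via the factorization in Lemma~\ref{lemma:decompose}. Your explicit case split on $f_{\term,n}(s)=0$ and your remark about which decomposition is meant merely spell out what the paper's one-line chain of equalities uses implicitly via the second clause of Lemma~\ref{lemma:decompose}.
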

\ifextended
Finally, it is now obvious how the \ttt{resample} construct relates to the resampling in the selection step in Algorithm~\ref{alg:smcppl}---only traces for which $r_{\term,n}(s_n^j)$ is a term of the form $\econt[\ttt{resample}]$, or a value, will issue from the mutation step and thus participate in resampling at the selection step.
As a consequence of how the kernels $k_{\term,n}$ are constructed, we only stop at such terms in steps \eqref{alg:smcppl:init} and \eqref{alg:smcppl:mutation} when running the program.
This is the reason for naming the construct \ttt{resample}.
\fi

\subsection{Other SMC Algorithms}\label{sec:othersmc}
In this section, we discuss \gls{smc} algorithms other than the \gls{bpf}.

First, we have the \emph{resample-move} algorithm by Gilks and Berzuini~\cite{gilks2001following}, which is also implemented in WebPPL~\cite{goodman2008church}, and treated by~Chopin~\cite{chopin2004central} and \'{S}cibior et al.~\cite{scibior2017denotational}.
In this algorithm, the \gls{smc} kernel is composed with a suitable \gls{mcmc} kernel, such that one or more \gls{mcmc} steps are taken for each sample after each resampling.
This helps with the so-called degeneracy problem in \gls{smc}, which refers to the tendency of \gls{smc} samples to share a common ancestry as a result of resampling.
We can directly achieve this algorithm in our context by simply choosing appropriate transition kernels in Algorithm~\ref{alg:smc}.
Let $k_{\text{MCMC},n}$ be \gls{mcmc} transition kernels with $\widetilde\pi_{n-1} = \llangle\term\rrangle_{n-1}$ as \emph{invariant distributions}.
Using the bootstrap kernels as the main kernels, we let $k_n = k_{\term,n} \circ k_{\text{MCMC},n}$ where $\circ$ denotes kernel composition.
The sequence $k_n$ is valid because of the validity of the main \gls{smc} kernels and the invariance of the \gls{mcmc} kernels.

While Algorithm~\ref{alg:smc} captures different \gls{smc} algorithms by allowing the use of different kernels, some algorithms require changes to Algorithm~\ref{alg:smc} itself.
The first such variation of Algorithm~\ref{alg:smc} is the \emph{alive} particle filter, recently discussed by Kudlicka et al.~\cite{kudlicka2019probabilistic}, which reduces the tendency to degeneracy by not including sample traces with zero weight in resampling. This is done by repeating the selection and mutation steps (for each sample individually) until a trace with non-zero weight is proposed; the corresponding modifications to Algorithm~\ref{alg:smc} are straightforward. The unbiasedness result of Kudlicka et al.~\cite{kudlicka2019probabilistic} can easily be extended to our \gls{ppl} context, with another minor modification to Algorithm~\ref{alg:smc}.

Another variation of Algorithm~\ref{alg:smc} is the auxiliary particle filter~\cite{pitt1999filtering}.
Informally, this algorithm allows the selection and mutation steps of Algorithm~\ref{alg:smc} to be guided by future information regarding the weights $w_n$.
For many models, this is possible since the weighting functions $w_n$ from Algorithm~\ref{alg:smc} are often parametric in an explicitly available sequence of \emph{observation data points}, which can also be used to derive better kernels $k_n$.
Clearly, such optimizations are model-specific, and can not directly be applied in expressive \gls{ppl} calculi such as ours.
However, the general idea of using look-ahead in general-purpose \glspl{ppl} to guide selection and mutation is interesting, and should be explored.

\section{Related Work}\label{sec:rel}

The only major previous work related to formal \gls{smc} correctness in \glspl{ppl} is \'{S}cibior et al.~\cite{scibior2017denotational} (see Section~\ref{sec:intro}).
They validate both the \gls{bpf} and the resample-move \gls{smc} algorithms in a denotational setting.
In a companion paper, \'{S}cibior et al.~\cite{scibior2018functional} also give a Haskell implementation of these inference techniques.

Although formal correctness proofs of \gls{smc} in \glspl{ppl} are sparse, there are many languages that implement \gls{smc} algorithms.
Goodman and Stuhlm\"{u}ller~\cite{goodman2014design} describe \gls{smc} for the probabilistic programming language WebPPL\@.
They implement a basic \gls{bpf} very similar to Algorithm~\ref{alg:smcppl}, but do not show correctness with respect to any language semantics.
Also, related to WebPPL, Stuhlm\"{u}ller et al.~\cite{stuhlmuller2015coarsetofine} discuss a coarse-to-fine \gls{smc} inference technique for probabilistic programs with independent sample statements.

Wood et al.~\cite{wood2014new} describe PMCMC, an \gls{mcmc} inference technique that uses \gls{smc} internally, for the probabilistic programming language Anglican \cite{design2016tolpin}.
Similarly to WebPPL, Anglican also includes a basic \gls{bpf} similar to Algorithm~\ref{alg:smcppl}, with the exception that every execution needs to encounter the same number of calls to \ttt{resample}.
They use various types of empirical tests to validate correctness, in contrast to the formal proof found in this paper.
Related to Anglican, a brief discussion on resample placement requirements can be found
in van de Meent et al.~\cite{vandemeent2018introduction}.

Birch \cite{murray2018automated} is an imperative object-oriented \gls{ppl}, with a particular focus on \gls{smc}.
It supports a number of \gls{smc} algorithms, including the \gls{bpf}~\cite{gordon1993novel} and the auxiliary particle filter~\cite{pitt1999filtering}.
Furthermore, they support dynamic analytical optimizations, for instance using locally-optimal proposals and Rao--Blackwellization~\cite{murray2018delayed}.
As with WebPPL and Anglican, the focus is on performance and efficiency, and not on formal correctness.

There are quite a few papers studying the correctness of \gls{mcmc} algorithms for \glspl{ppl}.
Using the same underlying framework as for their \gls{smc} correctness proof, \'{S}cibior et al.~\cite{scibior2017denotational} also validates a trace \gls{mcmc} algorithm.
Another proof of correctness for trace \gls{mcmc} is given in Borgström et al.~\cite{borgstrom2016lambda}, which instead uses an untyped lambda calculus and an operational semantics.
Much of the formalization in this paper is based on constructions used as part of their paper.
For instance, the functions $f_\term$ and $r_\term$ are defined similarly, as well as the measure space $(\Tr,\Trcal,\mu_\Tr)$ and the measurable space $(\T,\Tcal)$.
Our measurability proofs of $f_\term$, $r_\term$, $f_{\term,n}$, and $r_{\term,n}$ largely follow the same strategies as found in their paper.
Similarly to us, they also relate their proof of correctness to classical results from the \gls{mcmc} literature.
A difference is that we use inverse transform sampling, whereas they use probability density functions.
As a result of this, our traces consist of numbers on $[0,1]$, while their traces consist of numbers on $\R$.
Also, inverse transform sampling naturally allows for built-in discrete distributions.
In contrast, discrete distributions must be encoded in the language itself when using probability densities.
Another difference is that they restrict the arguments to \ttt{weight} to $[0,1]$, in order to ensure the finiteness of the target measure.

Other work related to ours include Jacobs~\cite{jacobs2021paradoxes}, V\'{a}k\'{a}r et al.~\cite{vakar2019domain}, and Staton et al.~\cite{staton2016semantics}.
Jacobs~\cite{jacobs2021paradoxes} discusses problems with models in which \ttt{observe} (related to \ttt{weight}) statements occur conditionally. While our results show that \gls{smc} inference for such models is correct, the models themselves may not be useful.
V\'{a}k\'{a}r et al.~\cite{vakar2019domain} develops a powerful domain theory for term recursion in \glspl{ppl}, but does not cover \gls{smc} inference in particular.
Staton et al.~\cite{staton2016semantics} develops both operational and denotational semantics for a \gls{ppl} calculus with higher-order functions, but without recursion.
They also briefly mention \gls{smc} as a program transformation.

Classical work on \gls{smc} includes Chopin~\cite{chopin2004central}, which we use as a basis for our formalization.
In particular, Chopin~\cite{chopin2004central} provides a general formulation of \gls{smc}, placing few requirements on the underlying model.
The book by Del Moral~\cite{moral2004feynman} contains a vast number of classical \gls{smc} results, including the law of large numbers and unbiasedness result from Lemma~\ref{lemma:smccorrect}.
A more accessible summary of the important \gls{smc} convergence results from Del Moral~\cite{moral2004feynman} can be found in~Naesseth et al.~\cite{naesseth2019elements}.


\section{Conclusions}\label{sec:conclusions}
In conclusion, we have formalized \gls{smc} inference for an expressive functional \gls{ppl} calculus, based on the formalization by Chopin~\cite{chopin2004central}.
We showed that in this context, \gls{smc} is correct in that it approximates the target measures encoded by programs in the calculus under mild conditions.
Furthermore, we illustrated a particular instance of \gls{smc} for our calculus, the \acrlong{bpf}, and discussed other variations of \gls{smc} and their relation to our calculus.

As indicated in Section~\ref{sec:motivation}, the approach used for selecting resampling locations can have a large impact on \gls{smc} accuracy and performance.
This leads us to the following general question: can we select optimal resampling locations in a given program, according to some formally defined measure of optimality?
We leave this important research direction for future work.


\subsection*{Acknowledgments}
We thank our colleagues Lawrence Murray and Fredrik Ronquist for fruitful discussions and ideas.
We also thank Sam Staton and the anonymous reviewers at ESOP for their detailed and helpful comments.

%
%
%
\clearpage
\bibliographystyle{splncs04}
\bibliography{references}

\ifextended
\appendix

\clearpage
\section{SMC\@: an Illustrative Example}\label{sec:smcintro}
%
%
%
In order to fully appreciate the contributions of this paper, we devote this section to introducing SMC inference for the unfamiliar with an informal example. The example is based on Lindholm~\cite{lindholm2013particle}.

\subsection{Model}
\begin{figure}[tb]

  \pgfplotsset{particles/.style={%
      scatter,
      only marks,
      scatter src=explicit,
      fill opacity=0.5,
      draw opacity=0,
      scatter/use mapped color=black,
      scatter/@pre marker code/.append style=
      {/tikz/mark size=1+\pgfplotspointmetatransformed/200}
  }}

  \pgfplotsset{aircraft/.style={%
      enlargelimits=false,
      ymin=0,
      ymax=90,
      xmin=0,
      xmax=200,
      ticks=none,
      xlabel=Position,
      ylabel=Altitude
  }}

  \pgfdeclareplotmark{plane}{\node {\includegraphics[scale=0.03]{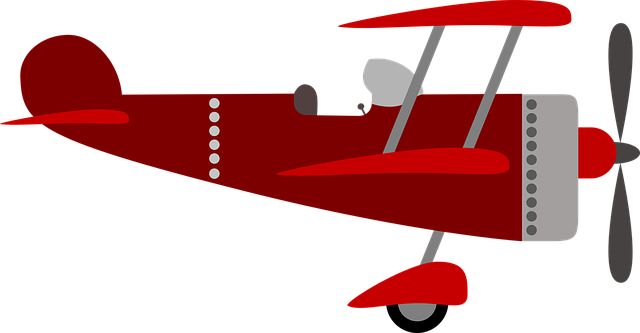}};}

  \newcommand{\addplotplane}[2]{%
    \addplot[mark=plane,nodes near coords,
    node near coords style={yshift=1mm},
    point meta={symbolic=(#1)}]
    table [header=false] {data/pos_#2.dat};
    \addplot[dashed] table [header=false] {data/dist_#2.dat};%
  }

  \newcommand{\particles}[3]{%
    \begin{tikzpicture}[trim axis left, trim axis right]
      \begin{axis}[
        width=0.9\textwidth,
        height=16mm,
        enlargelimits=false,
        xmin=0,
        xmax=200,
        ticks=none,
        axis line style={draw=none},
        ylabel=(#1),
        ylabel style={rotate=-90}
        ]
        \addplot[particles] table [x index=0,y expr=0,meta index=2,header=false]
          {data/BPF_#3#2.dat};
      \end{axis}
    \end{tikzpicture}
  }

  \centering
  \begin{tabular}{l}
    \begin{tikzpicture}[trim axis left, trim axis right]
      \begin{axis}[%
        width=0.9\textwidth,
        height=45mm,
        aircraft
        ]
        \addplot[no markers] table [header=false] {data/map.dat};
        \addplotplane{1}{1};
        \addplotplane{2}{25};
        \addplotplane{3}{50};
        \addplotplane{4}{70};
      \end{axis}
    \end{tikzpicture} \\[-3mm]
    \particles{1.1}{1}{trans_} \\[-3mm]
    \particles{1.2}{1}{} \\[-3mm]
    \particles{1.3}{1}{resample_} \\[-3mm]
    \particles{1.4}{2}{trans_} \\[-3mm]
    \particles{2}{25}{resample_} \\[-3mm]
    \particles{3}{50}{resample_} \\[-3mm]
    \particles{4}{70}{resample_}
  \end{tabular}

  \caption{%
    An illustration of the \gls{bpf} for estimating the position of an aircraft.
    A detailed description is given in the text.
  }
  \label{fig:smcaircraft}
\end{figure}

%
%
%

Consider the following scenario: a pilot is flying an aircraft in bad weather with zero visibility, and is attempting to estimate the aircraft's position.
In order to do this, available is an elevation map of the area, a noisy altimeter, and a noisy sensor for measuring the vertical distance to the ground (see Fig.~\ref{fig:smcaircraft} for an illustration).
Concretely, assume that
\begin{enumerate}
  \item[(a)]
    $X_{0:t} = X_0,X_1,\ldots,X_t$ are real-valued random variables
    representing the \emph{true} horizontal position of the aircraft at the
    discrete time steps $0,1,\ldots,t$, and
  \item[(b)]
    $Y_{0:t} = Y_0,Y_1,\ldots,Y_t$ are
    real-valued random variables for the measurements given by subtracting the
    vertical distance sensor reading from the altimeter sensor reading.
\end{enumerate}
The problem we consider is to estimate the positions $X_n$, $n \leq t$, given all combined sensor measurements $Y_{0:n}$ collected up until time $n$.
This random variable is denoted $X_n \mid Y_{0:n}$, and the distribution for this random variable is known as the \emph{target measure}.
In general, $X \mid Y$ denotes the random variable $X$ conditioned on $Y$ having been observed.

Concretely, we assume the following model for $n \in \mathbb{N}$:
\begin{equation}
  \begin{gathered}
    X_0 \sim \mathcal{U}(0, 100) \qquad
    X_n \mid X_{n-1} \sim \mathcal{N}(X_{n-1} + 2, 1) \\
    Y_n \mid X_n \sim \mathcal{N}(\mathit{elevation}(X_n), 2).
  \end{gathered}
\end{equation}
In other words, we have that the initial position $X_0$ of the aircraft is uniformly distributed between 0 and 100, and at each time step $n$, $X_n$ is normally distributed around $X_{n-1} + 2$ with variance $1$ (the conditional distribution of $X_n \mid X_{n-1}$ is known as a \emph{transition kernel}).
Finally the combined measurement $Y_n$ from the sensors is normally distributed around the true elevation of the ground at the current horizontal position $X_n$ with variance $2$, where the true position is given by our elevation map, here modeled as a function $\x{elevation}$.

\subsection{Inference}
With the model in place, we can proceed to sequentially estimating the probability distributions for the random variables $X_n \mid Y_{0:n}$ using the \gls{bpf}, a fundamental \gls{smc} algorithm.
In Section~\ref{sec:bpf}, we will give a formal definition of this algorithm for models encoded in our calculus.
Here, we instead give an informal description for our current aircraft model.
In Fig.~\ref{fig:smcaircraft}, we show the true initial aircraft position (1), and the true position at three later time steps, denoted by (2), (3), and (4).
In addition, for each of these time steps, we show the empirical \gls{smc} approximations to the distributions for $X_n \mid Y_{0:n}$, where $n$ is increasing for each of the four positions.
Step (1) is further broken down into substeps (1.1)-(1.4).
The empirical approximations are given by a set of \emph{weighted random samples}, where the weights are indicated by the sizes of each individual sample.
We give the details for each time step in Fig.~\ref{fig:smcaircraft} below.
\begin{enumerate}[align=left]
  \item[(1.1)]
    Begin by drawing many samples from $\mathcal{U}(0,100)$.
    These samples represent the distribution for $X_0$, the initial horizontal position.
  \item[(1.2)]
    Consider the first observation $Y_0$, given by the sensors at time step 0.
    For each drawn sample in (1.1), the relative likelihood of seeing the particular observation $Y_0$ varies.
    For example, the position (1) in Fig.~\ref{fig:smcaircraft} is much more likely to have produced the sensor reading $Y_0$ compared to (3) (since (1) is in fact the true position).
    Because of this, we \emph{weight} each sample according to the observation $Y_0$.
    Logically, we see that positions with ground elevation similar to the true position (1) are assigned the most weight.
  \item[(1.3)]
    Next, we take the set of weighted particles from the previous time step and \emph{resample} them according to their weights.
    That is, we draw (with replacement) a set of new samples from the previous set of samples, based on their relative weights.
    We see that the samples with high weight are indeed the ones to survive this resampling step.
    Note that after resampling, we also reset the weights (which is required for correctness).
  \item[(1.4)]
    For each sample of $X_0$, draw from the distribution of $X_1 \mid X_0$ to propagate it forwards by one time step.
  \item[(2)]
    At this point, we have completed many iterations of the above four sub-steps---the exception being that in the first sub-step, we don't draw from $\mathcal{U}(0,100)$, but instead reuse the set of particles from the previous step.
    We see that the set of samples now correctly cluster on the true position.
  \item[(3)]
    Here, we have flown over a body of water for some time.
    Due to this, the recent sensor readings have not been very informative, and the set of samples have diverged slightly, representing the increased uncertainty in the aircraft's position.
  \item[(4)]
    When encountering more varied terrain once again, the uncertainty is reduced, and the set of samples again cluster more closely on the true position.
\end{enumerate}
The key step in every \gls{smc} algorithm is the resampling step illustrated above.
Resampling allows for focusing the empirical approximations on regions of the sample space with high probability, yielding efficient inference for many models of practical interest.
For instance, \gls{smc} is commonly used in tracking problems \cite{arulampalam2002tutorial,isard1998condensation}.


\begin{figure}[tb]
  \centering
  \begin{tabular}{c}
    \begin{lstlisting}
let $\x{observations}$ = [$c_0,c_1,c_2,\ldots,c_{t-1}$] in
let $\x{observe}$ $x$ $o$ = weight$(f_\mathcal{N}(\x{elevation}$ $x,2,o))$ in
let $\x{sim}$ $x_n$ $o$ = $\x{observe}$ $x_n$ $o$; sample$_\mathcal{N}(x_n + 2, 1)$ in
let $x_0$ = sample$_{\mathcal{U}}(0,100)$ in
let $x_t$ = $\x{foldl}$ $\x{sim}$ $x_0$ $\x{observations}$ in
$\x{observe}$ $x_t$ $c_t$; $x_t$
    \end{lstlisting}
  \end{tabular}
  \caption{%
    The aircraft model, encoded as a program $\termair$.
    We assume that the $\x{elevation}$ function is previously defined, and that $\x{foldl}$ implements a standard left fold.
    For various empirical visualizations of this model (for increasing sets of observations), see Fig.~\ref{fig:smcaircraft}.
  }
  \label{fig:smcaircraftcode}
\end{figure}
It is also possible to encode the example as a program in the calculus from Section~\ref{sec:calculus}.
This is done in Fig.~\ref{fig:smcaircraftcode}.
The real numbers $c_0,c_1,c_2,\ldots,c_t$ in the program correspond to the observations of $Y_{0:t}$.

\section{Definitions and Proofs}
In this appendix, we prove lemmas found throughout the main article.
First, we introduce measure theory and Borel spaces (Section~\ref{sec:prelim1}), and define pointwise convergence of functions (Section~\ref{sec:prelimconvergence}).
Then, we introduce metric spaces and their properties (Section~\ref{sec:prelimmetric}), and look closer at the measure space $(\Tr,\Trcal,\mu_\Tr)$ (Section~\ref{seca:tracespace}) and the measurable space $(\T,\Tcal)$ (Section~\ref{seca:termspace}).
In Section~\ref{sec:prelimscont} and Section~\ref{sec:functioninduce}, we establish further results required for proving the measurability of $r_\term$ and $f_\term$ (Section~\ref{sec:rfmeasurable}), and $r_{\term,n}$ and $f_{\term,n}$ (Section~\ref{sec:rfnmeasurable}).
Lastly, we look at the bootstrap particle filter kernels $k_{\term,n}$ and induced proposal measures $\langle \term \rangle_n$ (Section~\ref{sec:kernelproof}).

\subsection{Preliminaries: Measure Theory and Borel Spaces}
\label{sec:prelim1}
This section gives fundamental definitions and lemmas from measure theory, and defines Borel spaces.
For a more pedagogical introduction to the subject, we recommend McDonald and Weiss~\cite{mcdonald2012course}.

\begin{definition}
  Let $\mathbb{A}$ be a set.
  We say that $\mathcal{A} \subset \mathcal{P}(\mathbb{A})$ is a \emph{$\sigma$-algebra} on $\mathbb{A}$ if
  (1) $\mathcal{A} \neq \emptyset$,
  (2) if $A \in \mathcal{A}$, then $A^c \in \mathcal{A}$, and
  (3) if $\{A_n\}_n \subset \mathcal{A}$ is countable, then $\bigcup_n A_n \in \mathcal{A}$.
  Furthermore, we call $(\mathbb{A},\mathcal{A})$ a \emph{measurable space} if $\mathcal{A}$ is a $\sigma$-algebra on $\mathbb{A}$.
\end{definition}

\begin{definition}
  Let $(\mathbb{A},\mathcal{A})$ and $(\mathbb{A}',\mathcal{A}')$ be measurable spaces.
  A function $f : \mathbb{A} \to \mathbb{A}'$ is called \emph{measurable} if $f^{-1}(A') \in \mathcal{A}$ for each $A' \in \mathcal{A}'$.
  To indicate that a function is measurable with respect to specific measurable spaces, we write $f : (\mathbb{A},\mathcal{A}) \to (\mathbb{A}',\mathcal{A}')$.
\end{definition}

\begin{definition}
  Let $(\mathbb{A},\mathcal{A})$ be a measurable space, and let $\R_+^* = \R_+ \cup \{\infty\}$.
  A function $\mu : \mathcal{A} \to \R_+^*$ is called a \emph{measure} if
  (1) $\mu(A) \geq 0$ for all $A \in \mathcal{A}$,
  (2) $\mu(\emptyset) = 0$, and
  (3) if $\{A_n\}_n \subset \mathcal{A}$ is countable, and such that $A_i \cap A_j = \emptyset$ for $i \neq j$, then $\mu\left(\bigcup_n A_n\right) = \sum_n \mu(A_n)$.
  Furthermore, we call $(\mathbb{A},\mathcal{A},\mu)$ a \emph{measure space} if $\mathcal{A}$ is a $\sigma$-algebra on $\mathbb{A}$, and $\mu$ is a measure on $\mathcal{A}$.
\end{definition}

\begin{definition}
  Let $(\mathbb{A},\mathcal{A},\mu)$ be a measure space.
  We say that the measure $\mu$ is
  a sub-probability measure if $\mu(\mathbb{A}) < 1$,
  a probability measure if $\mu(\mathbb{A}) = 1$,
  finite if $\mu(\mathbb{A}) < \infty$, and
  $\sigma$-finite if there exists a countable set $\{A_n\}_n$ such that $\bigcup_n A_n = \mathbb{A}$ and $\mu(A_n) < \infty$ for each $A_n$.
\end{definition}


\begin{definition}
  Let $\mathbb{A}$ be a set, and $\mathbf{A} \subset \mathcal{P}(\mathbb{A})$.
  We denote by $\sigma(\mathbf{A})$ the smallest $\sigma$-algebra such that $\mathbf{A} \subset \sigma(\mathbf{A}) \subset \mathcal{P}(\mathbb{A})$.
\end{definition}

\begin{definition}
  Let $\mathbf{A} \subset \mathcal{P}(\mathbb{A})$ and $A \subset \mathbb{A}$.
  We define
  $
  \mathbf{A}|_A = \{ A' \cap A \mid A' \in \mathbf{A} \}.
  $
  $\mathbf{A}|_A$ is known as the restriction of $\mathbf{A}$ on $A$.
\end{definition}

\begin{lemma}\label{lemma:measrest}
  If $(\mathbb{A},\mathcal{A})$ is a measurable space, and $A \in \mathcal{A}$, then
  $
    \mathcal{A}|_A = \{ A' \subset A \mid A' \in \mathcal{A} \}.
  $
  Furthermore, $(A,\mathcal{A}|_A)$ is a measurable space.
\end{lemma}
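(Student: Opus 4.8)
The plan is to prove the two claims in sequence: first establish the set equality, then leverage it to verify the three $\sigma$-algebra axioms for $\mathcal{A}|_A$. The one fact I will use repeatedly is that any $\sigma$-algebra is closed under binary intersection, which follows from closure under complement and countable union via $A' \cap A = (A'^c \cup A^c)^c$.

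First I would establish $\mathcal{A}|_A = \{A' \subseteq A \mid A' \in \mathcal{A}\}$ by double inclusion. For $\subseteq$, given $B = A' \cap A$ with $A' \in \mathcal{A}$, clearly $B \subseteq A$, and since $A \in \mathcal{A}$ by hypothesis, closure under intersection yields $B = A' \cap A \in \mathcal{A}$. For $\supseteq$, given $B \in \mathcal{A}$ with $B \subseteq A$, simply write $B = B \cap A$ to exhibit $B \in \mathcal{A}|_A$. This forward inclusion is exactly where the hypothesis $A \in \mathcal{A}$ is essential; without it one cannot conclude $A' \cap A \in \mathcal{A}$.

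Next I would verify that $\mathcal{A}|_A$ is a $\sigma$-algebra on $A$, using the characterization just proved so that every element of $\mathcal{A}|_A$ is simultaneously a subset of $A$ and a member of $\mathcal{A}$. For nonemptiness, observe that $\emptyset \in \mathcal{A}$ (since $\mathcal{A} \neq \emptyset$, pick any $C \in \mathcal{A}$ and form $C \cap C^c = \emptyset$), and $\emptyset \subseteq A$, so $\emptyset \in \mathcal{A}|_A$. For complementation relative to $A$, given $B \in \mathcal{A}|_A$ the relative complement is $A \setminus B = A \cap B^c$; since $B^c \in \mathcal{A}$ and $A \in \mathcal{A}$, this intersection lies in $\mathcal{A}$ and is contained in $A$, so $A \setminus B \in \mathcal{A}|_A$. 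For countable unions, given $\{B_n\}_n \subseteq \mathcal{A}|_A$ each $B_n \in \mathcal{A}$ and $B_n \subseteq A$, whence $\bigcup_n B_n \in \mathcal{A}$ and $\bigcup_n B_n \subseteq A$, giving $\bigcup_n B_n \in \mathcal{A}|_A$.

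The statement is routine, so I do not anticipate a genuine obstacle. The only point requiring care is that the complement axiom for $\mathcal{A}|_A$ must be interpreted relative to the ambient set $A$ rather than $\mathbb{A}$, which is precisely why it is the computation $A \setminus B = A \cap B^c$ (and not $B^c$ by itself) that must be shown to land back in $\mathcal{A}|_A$.
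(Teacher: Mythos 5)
Your proof is correct and complete; the paper itself offers no argument here, deferring instead to Papadimitrakis~\cite[Proposition 1.8]{papadimitrakis2004measure}, and your double-inclusion argument followed by verification of the three $\sigma$-algebra axioms (with the complement correctly taken relative to $A$ via $A \setminus B = A \cap B^c$) is exactly the standard proof that citation outsources. Your explicit remarks that binary intersection closure follows from $A' \cap A = (A'^c \cup A^c)^c$ and that the hypothesis $A \in \mathcal{A}$ is what makes the forward inclusion work are both accurate and are the only points of substance in the lemma.
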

\begin{proof}
  See Papadimitrakis~\cite[Proposition 1.8]{papadimitrakis2004measure}.
\end{proof}

\begin{definition}
  Denote by $\mathbf{O}_n$ the standard set of all open sets in $\mathbb{R}^n$.
  We let
  $\Bor^n = \sigma(\mathbf{O}_n)$,
  $\Bor_+^n = \Bor^n|_{\R_+^n}$, and
  $\Bor_{[0,1]}^n = \Bor^n|_{[0,1]^n}$.
  $\Bor^n$ is known as the \emph{Borel $\sigma$-algebra on $\mathbb{R}^n$}.
  Furthermore, we define $\Bor = \Bor^1$, $\Bor_+ = \Bor_+^1$, and $\Bor_{[0,1]} = \Bor_{[0,1]}^1$.
\end{definition}

\begin{definition}\label{def:lebesgue}
  Let $(\mathbb{A},\mathcal{A},\mu)$ be a measure space and $f : (\mathbb{A},\mathcal{A}) \to (\R,\Bor)$ be a measurable function.
  For $A \in \mathcal{A}$, we denote by
  $
    \int_A f(a) d\mu(a)
  $
  the standard \emph{Lebesgue integral of $f$ over $A$ with respect to $\mu$}.
\end{definition}

\begin{definition}\label{def:ae}
  Let $(\mathbb{A},\mathcal{A},\mu)$ be a measure space.
  We say that a property holds $\mu$ almost everywhere, or $\mu$-ae for short, if there is a set $B\in\mathcal{A}$ of $\mu$-measure 0 such that the property holds on $\mathbb{A}\setminus B$.
\end{definition}
When $\mu$ is a (sub-)probability measure, the term ``almost surely'' is used interchangeably with ``almost everywhere''.

\begin{lemma}\label{lemma:density}
  Let $(\mathbb{A},\mathcal{A},\mu)$ be a measure space.
  If $f : (\mathbb{A},\mathcal{A}) \to (\R_+,\Bor_+)$ is a measurable function, then
  $
    \mu'(A) = \int_A f(a)d\mu(a)
  $
  is a measure on $(\mathbb{A},\mathcal{A})$.
  We say that $f$ is a \emph{density} of $\mu'$ with respect to $\mu$.
  Also, if $\mu$ and $\mu'$ are $\sigma$-finite measures on $(\mathbb{A},\mathcal{A})$, then $f$ is unique $\mu$-ae and is denoted with $f_{\mu'}$.
\end{lemma}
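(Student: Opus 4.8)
The plan is to verify the three measure axioms for $\mu'$ and then establish uniqueness under $\sigma$-finiteness using the standard fact that a non-negative measurable function whose integral vanishes must be zero almost everywhere.

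First I would show $\mu'$ is a measure. Non-negativity $\mu'(A) \geq 0$ is immediate from $f \geq 0$, and $\mu'(\emptyset) = \int_\emptyset f \, d\mu = 0$ is immediate. For countable additivity, let $\{A_n\}_n$ be a countable disjoint family. Writing $\mathbf{1}_{\bigcup_n A_n} = \sum_n \mathbf{1}_{A_n}$ and using $\int_A f \, d\mu = \int_\mathbb{A} f \cdot \mathbf{1}_A \, d\mu$, I would apply the monotone convergence theorem to the increasing sequence of partial sums $f \cdot \sum_{k=1}^n \mathbf{1}_{A_k}$ (each non-negative since $f \geq 0$) to obtain $\mu'(\bigcup_n A_n) = \int_\mathbb{A} f \sum_n \mathbf{1}_{A_n} \, d\mu = \sum_n \int_\mathbb{A} f \cdot \mathbf{1}_{A_n} \, d\mu = \sum_n \mu'(A_n)$. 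This establishes that $\mu'$ is a measure with density $f$.

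For uniqueness, suppose $f$ and $g$ are both densities of $\mu'$ with respect to $\mu$, so that $\int_A f \, d\mu = \int_A g \, d\mu = \mu'(A)$ for all $A \in \mathcal{A}$. Using $\sigma$-finiteness of $\mu'$ (intersecting if necessary with the finite-measure cover furnished by $\sigma$-finiteness of $\mu$, and taking finite unions) I would construct an increasing sequence of sets $B_m \uparrow \mathbb{A}$ with $\mu'(B_m) < \infty$. Let $C = \{a \in \mathbb{A} : f(a) > g(a)\}$. On $C \cap B_m$ both $\int_{C \cap B_m} f \, d\mu$ and $\int_{C \cap B_m} g \, d\mu$ are finite (each bounded by $\mu'(B_m) < \infty$) and equal, so $\int_{C \cap B_m}(f - g)\, d\mu = 0$. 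Since $f - g > 0$ on $C \cap B_m$, the standard result that the integral of a non-negative function vanishes only if the function is zero almost everywhere forces $\mu(C \cap B_m) = 0$; letting $m \to \infty$ gives $\mu(C) = 0$. By the symmetric argument with the roles of $f$ and $g$ exchanged, $\mu(\{g > f\}) = 0$, hence $f = g$ holds $\mu$-ae, and we may unambiguously write $f_{\mu'}$ for the density.

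The main obstacle is purely a matter of careful bookkeeping around $\sigma$-finiteness rather than any conceptual difficulty: the subtraction $f - g$ is only meaningful where at least one value is finite, which is exactly why I restrict to the finite-measure sets $B_m$ before forming the difference, thereby avoiding any $\infty - \infty$ ambiguity. The remaining ingredients---monotone convergence and the vanishing-integral lemma---are entirely standard (see McDonald and Weiss~\cite{mcdonald2012course}), so the result follows once this restriction is set up correctly.
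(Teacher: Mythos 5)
Your proof is correct, and since the paper does not give a proof of its own---it only cites McDonald and Weiss~\cite{mcdonald2012course} (Exercise 4.61 and Theorem 6.10)---your argument is precisely the standard one behind that citation: monotone convergence applied to the partial sums $f\cdot\sum_{k=1}^n \mathbf{1}_{A_k}$ for countable additivity, and restriction to an exhausting sequence of finite-$\mu'$-measure sets combined with the vanishing-integral lemma for the $\mu$-ae uniqueness. The only slack is your parenthetical intersection with the finite-measure cover coming from $\sigma$-finiteness of $\mu$, which is unnecessary: because $f$ and $g$ take values in $\R_+$ rather than the extended reals, $\sigma$-finiteness of $\mu'$ alone already furnishes the sets $B_m \uparrow \mathbb{A}$ with $\mu'(B_m) < \infty$ on which your subtraction argument goes through.
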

\begin{proof}
  See, e.g., Papadimitrakis~\cite[Exercise 4.61 and Theorem 6.10]{mcdonald2012course}.
\end{proof}

\subsection{Preliminaries: Convergence}\label{sec:prelimconvergence}
In this section, we recall the definition of pointwise convergence of sequences of functions.
Convergence is used to define correctness in Section~\ref{sec:smccorrect} and Section~\ref{sec:smcpplcorrect}.
For a more comprehensive introduction to convergence, we recommend~McDonald and Weiss~\cite{mcdonald2012course}.

\begin{definition}
  Let ${\{x_n\}}_{n\in\mathbb{N}}$ be a sequence of real numbers, and $x$ a real number.
  We say that $\lim_{n \to \infty} x_n = x$ if for all $\varepsilon > 0$, there exists an $N$ such that $|x_n-x| < \varepsilon$ for all $n > N$.
\end{definition}

\begin{definition}\label{def:convergence}
  Let ${\{f_n : \mathbb{A} \to \R\}}_{n\in\mathbb{N}}$ be a sequence of functions, and $f : \mathbb{A} \to \R$ a function.
  We say that $\lim_{n \to \infty} f_n = f$ pointwise if for all $x \in \mathbb{A}$,
  it holds that $\lim_{n \to \infty} f_n(x) = f(x)$.
\end{definition}
In particular, we say that $\lim_{n \to \infty} f_n = f$ $\mu$-ae if the sequence $f_n$ converges pointwise to $f$, except on a set of $\mu$-measure 0.

\subsection{Preliminaries: Metric Spaces}\label{sec:prelimmetric}
\begin{definition}
  Given a space $\mathbb{M}$, a function $d : \mathbb{M} \times \mathbb{M} \to \R_+^*$ is called a \emph{metric on $\mathbb{M}$} if for all $m,m' \in
  \mathbb{M}$
  \begin{enumerate}
    \item $d(m,m') = 0 \iff m = m'$
    \item $d(m,m') = d(m',m)$
    \item $d(m,m'') \leq d(m,m') + d(m',m'')$
  \end{enumerate}
  Furthermore, we call $(\mathbb{M},d)$ a \emph{metric space} if $d : \mathbb{M} \times \mathbb{M} \to \R_+^*$ is a metric on $\mathbb{M}$.
\end{definition}

\begin{definition}
  For $n \in \mathbb{N}$, we let
  \begin{equation}
    \drn{n}((x_1,x_2,\ldots,x_n),(y_1,y_2,\ldots,y_n)) =
    |x_1-y_1| + |x_2-y_2| + \cdots + |x_n-y_n|,
  \end{equation}
  and $\dr = \drn{1}$.
  It is easy to verify that $\drn{n}$ is a metric for each $n$.
\end{definition}

\begin{definition}
  Let $d : \mathbb{M} \times \mathbb{M} \to \R_+^*$ be a metric, and let $M \subset \mathbb{M}$.
  We denote by $d|_M : M \times M \to \R_+^*$ the \emph{restriction} of $d$ to $M$.
  It is easy to verify that $d|_M$ is in fact a metric.
\end{definition}


\begin{definition}
  We let $\drpos = \dr|_{\mathbb{R}_+}$ and $d_{[0,1]} = \dr|_{[0,1]}$.
\end{definition}

\begin{definition}
  A subset $M \subset \mathbb{M}$ of a metric space is called \emph{dense} if for all $m \in \mathbb{M}$ and $\varepsilon > 0$, there exists $m' \in M$ such that $d(m,m') < \varepsilon$.
\end{definition}

\begin{definition}
  A metric space is called \emph{separable} if it has a countable dense subset.
\end{definition}

\begin{lemma}\label{lemma:rsep}
  For every $n \in \mathbb{N}$, $(\R^n,\drn{n})$ is a separable metric space.
\end{lemma}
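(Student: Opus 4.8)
The plan is to exhibit an explicit countable dense subset of $\R^n$, namely $\mathbb{Q}^n$, the set of points all of whose coordinates are rational, and to verify the two defining properties of separability in turn. First I would establish countability: since $\mathbb{Q}$ is countable and a finite Cartesian product of countable sets is countable, $\mathbb{Q}^n$ is countable for every $n \in \mathbb{N}$. This half of the argument is purely set-theoretic and uses nothing about the metric.

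The substantive step is density with respect to $\drn{n}$. Given an arbitrary $x = (x_1,\ldots,x_n) \in \R^n$ and $\varepsilon > 0$, I would invoke the density of $\mathbb{Q}$ in $\R$ to choose, for each coordinate $i$, a rational $q_i$ with $|x_i - q_i| < \varepsilon/n$. Setting $q = (q_1,\ldots,q_n) \in \mathbb{Q}^n$, the definition of $\drn{n}$ as the sum of coordinatewise absolute differences gives
\begin{equation}
  \drn{n}(x,q) = \sum_{i=1}^n |x_i - q_i| < n \cdot \frac{\varepsilon}{n} = \varepsilon.
\end{equation}
Hence every $x \in \R^n$ admits points of $\mathbb{Q}^n$ arbitrarily close to it, which is precisely the statement that $\mathbb{Q}^n$ is dense. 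Combined with countability, this shows $(\R^n,\drn{n})$ is separable.

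The only point requiring any care is the bookkeeping in the density argument: because $\drn{n}$ is the $\ell^1$ (sum) metric rather than a max or Euclidean metric, the per-coordinate tolerance must be taken to be $\varepsilon/n$ so that the $n$ coordinate errors sum to strictly less than $\varepsilon$. I do not anticipate a genuine obstacle here; this is the standard fact that $\R^n$ under any of the usual metrics is separable, and the proof reduces to the coordinatewise rational approximation above together with the two elementary closure facts about countable sets.
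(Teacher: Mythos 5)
Your proof is correct and takes essentially the same route as the paper, which simply cites the standard fact that $\mathbb{Q}^n$ is a countable dense subset of $\R^n$; you merely spell out the details, including the correct per-coordinate tolerance $\varepsilon/n$ for the $\ell^1$-style metric $\drn{n}$. Nothing further is needed.
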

\begin{proof}
  Standard result which follows from the fact that $\mathbb{Q}^n$ is a countable dense subset.
\end{proof}

\begin{definition}
  The \emph{Borel} $\sigma$-algebra induced by a separable metric space $(\mathbb{M},d)$ is defined as
  \begin{equation}
    \Bor_{(\mathbb{M},d)}
    = \sigma(\{B_{(\mathbb{M},d)}(r,m) \mid m \in \mathbb{M}, r \in \R_+ \})
  \end{equation}
  where $B_{(\mathbb{M},d)}(r,m) = \{ m' \in \mathbb{M} \mid d(m,m') < r \}$ is known as the \emph{ball of radius $r$ centered at $m$ (in $\mathbb{M}$)}.
  We call the tuple $(\mathbb{M},\Bor_{(\mathbb{M},d)})$ the \emph{Borel space} corresponding to $(\mathbb{M},d)$.
\end{definition}

\begin{lemma}\label{lemma:openmetric}
  $\Bor_{(\R^n,d_{\R^n})} = \Bor^n$,
\end{lemma}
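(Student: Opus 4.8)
The plan is to prove the two inclusions $\Bor_{(\R^n,d_{\R^n})} \subseteq \Bor^n$ and $\Bor^n \subseteq \Bor_{(\R^n,d_{\R^n})}$ separately, in each case exploiting that both sides are by definition the \emph{smallest} $\sigma$-algebra containing a given generating family: the open $d_{\R^n}$-balls for $\Bor_{(\R^n,d_{\R^n})}$, and the standard open sets $\mathbf{O}_n$ for $\Bor^n$. So it suffices to show that each generating family is contained in the other $\sigma$-algebra.

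For the first inclusion, I would observe that every open ball $B_{(\R^n,d_{\R^n})}(r,m)$ is a standard open set, i.e.\ a member of $\mathbf{O}_n$. This holds because the $\ell^1$-type metric $d_{\R^n}$ and the Euclidean metric induce the same topology on $\R^n$, so a $d_{\R^n}$-ball is open in the usual sense. Consequently the generating family of $\Bor_{(\R^n,d_{\R^n})}$ lies in $\mathbf{O}_n \subseteq \Bor^n$; since $\Bor^n$ is a $\sigma$-algebra and $\Bor_{(\R^n,d_{\R^n})}$ is the smallest $\sigma$-algebra containing the balls, we conclude $\Bor_{(\R^n,d_{\R^n})} \subseteq \Bor^n$.

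For the reverse inclusion, the key is to write every standard open set as a countable union of $d_{\R^n}$-balls, which is exactly where separability enters via Lemma~\ref{lemma:rsep}. Fixing a countable dense set such as $\mathbb{Q}^n$, for any open $O \in \mathbf{O}_n$ and any $x \in O$ there is a $d_{\R^n}$-ball around $x$ contained in $O$; by density I can replace it with a ball of rational radius centered at a point of $\mathbb{Q}^n$ that still contains $x$ and remains inside $O$. The union of all such rational balls contained in $O$ is a countable union equal to $O$. Hence $\mathbf{O}_n \subseteq \Bor_{(\R^n,d_{\R^n})}$, and since $\Bor^n = \sigma(\mathbf{O}_n)$ is the smallest $\sigma$-algebra over $\mathbf{O}_n$, we obtain $\Bor^n \subseteq \Bor_{(\R^n,d_{\R^n})}$. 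Combining both inclusions yields the claimed equality.

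The main obstacle is the countable-union representation of open sets in the reverse inclusion; everything else is routine bookkeeping about smallest $\sigma$-algebras. This representation is precisely why separability was established in Lemma~\ref{lemma:rsep}, so the only genuine content is checking that balls with rational centers and radii suffice, a standard density argument.
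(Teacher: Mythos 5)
Your proof is correct: the paper itself offers no argument for this lemma, simply noting ``Standard result in topology,'' and your two-inclusion argument---$d_{\R^n}$-balls are standard open sets (since the $\ell^1$ metric is topologically equivalent to the Euclidean one), and every standard open set is a countable union of balls with rational centers and radii via the separability from Lemma~\ref{lemma:rsep}---is precisely the standard proof that citation refers to. Both the minimality bookkeeping for the generated $\sigma$-algebras and the rational-ball density step are handled correctly, so nothing is missing.
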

\begin{proof}
  Standard result in topology.
\end{proof}

\begin{lemma}\label{lemma:genrest}
  Let $\mathbf{A} \subset \mathcal{P}(\mathbb{A})$ and $\emptyset \neq A \subset \mathbb{A}$.
  Then $\sigma(\mathbf{A}|_A) = \sigma(\mathbf{A})|_A$.
\end{lemma}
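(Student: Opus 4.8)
The plan is to prove the two inclusions $\sigma(\mathbf{A}|_A) \subseteq \sigma(\mathbf{A})|_A$ and $\sigma(\mathbf{A})|_A \subseteq \sigma(\mathbf{A}|_A)$ separately, and combine them. A preliminary observation I would establish first is that for \emph{any} $\sigma$-algebra $\mathcal{A}$ on $\mathbb{A}$, the trace $\mathcal{A}|_A = \{B \cap A \mid B \in \mathcal{A}\}$ is itself a $\sigma$-algebra on $A$, and that this holds even when $A \notin \mathcal{A}$. This is a routine verification: $A = \mathbb{A} \cap A$ lies in the trace; complements within $A$ arise from complements in $\mathbb{A}$, since $A \setminus (B \cap A) = B^c \cap A$; and countable unions commute with intersecting by $A$, as $\bigcup_n (B_n \cap A) = (\bigcup_n B_n) \cap A$.

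For the first inclusion, I would note that each generator $A' \cap A \in \mathbf{A}|_A$ (with $A' \in \mathbf{A} \subseteq \sigma(\mathbf{A})$) also lies in $\sigma(\mathbf{A})|_A$, so $\mathbf{A}|_A \subseteq \sigma(\mathbf{A})|_A$. Since the right-hand side is a $\sigma$-algebra on $A$ by the preliminary observation, and $\sigma(\mathbf{A}|_A)$ is by definition the smallest $\sigma$-algebra on $A$ containing $\mathbf{A}|_A$, the inclusion $\sigma(\mathbf{A}|_A) \subseteq \sigma(\mathbf{A})|_A$ follows immediately.

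The second inclusion is the main obstacle and calls for a ``good sets'' argument. I would define $\mathcal{D} = \{B \subseteq \mathbb{A} \mid B \cap A \in \sigma(\mathbf{A}|_A)\}$ and aim to show $\sigma(\mathbf{A}) \subseteq \mathcal{D}$, which is exactly the statement $\sigma(\mathbf{A})|_A \subseteq \sigma(\mathbf{A}|_A)$. To this end I verify, again using that $\sigma(\mathbf{A}|_A)$ is a $\sigma$-algebra on $A$, that $\mathcal{D}$ is a $\sigma$-algebra on $\mathbb{A}$: the whole space satisfies $\mathbb{A} \cap A = A \in \sigma(\mathbf{A}|_A)$; closure under complement and countable union transfer through the operation $B \mapsto B \cap A$ exactly as in the preliminary observation. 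I then check $\mathbf{A} \subseteq \mathcal{D}$, since for $A' \in \mathbf{A}$ we have $A' \cap A \in \mathbf{A}|_A \subseteq \sigma(\mathbf{A}|_A)$. Minimality of $\sigma(\mathbf{A})$ then forces $\sigma(\mathbf{A}) \subseteq \mathcal{D}$, and the two inclusions together give the claimed equality.

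The one delicate point is resisting the temptation to invoke Lemma~\ref{lemma:measrest}, whose hypothesis $A \in \mathcal{A}$ is \emph{not} available here (we have no reason to expect $A \in \sigma(\mathbf{A})$). The correct ingredient is the weaker but unconditional fact, singled out above, that the trace of a $\sigma$-algebra is always a $\sigma$-algebra on the subspace; this is what makes both the ``smallest $\sigma$-algebra'' steps and the good-sets argument go through.
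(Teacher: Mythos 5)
Your proof is correct and complete. Note, however, that the paper does not actually prove Lemma~\ref{lemma:genrest} itself: it simply defers to Papadimitrakis~\cite[Theorem 1.2]{papadimitrakis2004measure}, and your argument is precisely the standard proof of that cited theorem. Both inclusions check out: the easy direction follows, as you say, once one knows that $\sigma(\mathbf{A})|_A$ is a $\sigma$-algebra on $A$, and your good-sets class $\mathcal{D} = \{B \subseteq \mathbb{A} \mid B \cap A \in \sigma(\mathbf{A}|_A)\}$ is verified to be a $\sigma$-algebra containing $\mathbf{A}$ exactly as needed, with the identities $A \setminus (B \cap A) = B^c \cap A$ and $\bigcup_n (B_n \cap A) = \left(\bigcup_n B_n\right) \cap A$ carrying the closure properties through. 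Your closing caveat is also well taken and worth making explicit: Lemma~\ref{lemma:measrest} assumes $A \in \mathcal{A}$, which is unavailable here since there is no reason to have $A \in \sigma(\mathbf{A})$, so the unconditional fact that the trace of a $\sigma$-algebra on an arbitrary nonempty subset is again a $\sigma$-algebra---which you prove directly---is the right ingredient, and it is what makes both the minimality step in the first inclusion and the verification that $\mathcal{D}$ is a $\sigma$-algebra go through. The net effect of your approach is a self-contained proof where the paper relies on an external reference, at the modest cost of a routine page of verification.
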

\begin{proof}
  See Papadimitrakis~\cite[Theorem 1.2]{papadimitrakis2004measure}.
\end{proof}

\begin{lemma}\label{lemma:metricrest}
  Let $(\mathbb{M},d)$ be a separable metric space, $(\mathbb{M},\Bor_{(\mathbb{M},d)})$ the corresponding Borel space, and $\emptyset \neq M \in \Bor_{(\mathbb{M},d)}$.
  Then $(M,d|_M)$ is a separable metric space, and $\Bor_{(M,d|_M)} = \Bor_{(\mathbb{M},d)}|_M$.
\end{lemma}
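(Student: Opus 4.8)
The plan is to prove the two claims separately. That $(M,d|_M)$ is a metric space is immediate, since the paper already notes that the restriction $d|_M$ of a metric is again a metric. The only real content in the first claim is \emph{separability}, and the obstacle here is that if $D$ is a countable dense subset of $\mathbb{M}$, then $D \cap M$ need not be dense in $M$ (indeed it may be empty). I would get around this with the standard ``representative'' construction: enumerate $D = \{x_n\}_{n\in\mathbb{N}}$ and, for every pair $(n,k)\in\mathbb{N}^2$ for which the ball $B_{(\mathbb{M},d)}(1/k,x_n)$ meets $M$, pick one point $m_{n,k}\in B_{(\mathbb{M},d)}(1/k,x_n)\cap M$. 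The set $\{m_{n,k}\}$ is countable, and given $m\in M$ and $\varepsilon>0$ I would choose $k$ with $2/k<\varepsilon$ and then $x_n$ with $d(m,x_n)<1/k$ (possible since $D$ is dense); this makes $B_{(\mathbb{M},d)}(1/k,x_n)\cap M$ nonempty, so $m_{n,k}$ exists and $d(m,m_{n,k})\le d(m,x_n)+d(x_n,m_{n,k})<2/k<\varepsilon$. Hence $\{m_{n,k}\}$ is a countable dense subset of $(M,d|_M)$.

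For the second claim, write $\mathbf{B}=\{B_{(\mathbb{M},d)}(r,m)\mid m\in\mathbb{M},\,r\in\R_+\}$ for the generators of $\Bor_{(\mathbb{M},d)}$ and $\mathbf{B}_M=\{B_{(M,d|_M)}(r,m)\mid m\in M,\,r\in\R_+\}$ for the generators of $\Bor_{(M,d|_M)}$. The first observation is that for $m\in M$ one has $B_{(M,d|_M)}(r,m)=B_{(\mathbb{M},d)}(r,m)\cap M$, so $\mathbf{B}_M\subseteq\mathbf{B}|_M$. By Lemma~\ref{lemma:genrest}, $\sigma(\mathbf{B}|_M)=\sigma(\mathbf{B})|_M=\Bor_{(\mathbb{M},d)}|_M$, and passing to generated $\sigma$-algebras gives the easy inclusion $\Bor_{(M,d|_M)}=\sigma(\mathbf{B}_M)\subseteq\Bor_{(\mathbb{M},d)}|_M$.

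The main obstacle is the reverse inclusion, because the generators in $\mathbf{B}|_M$ allow \emph{arbitrary} centers $m\in\mathbb{M}$, whereas those in $\mathbf{B}_M$ only allow centers in $M$. This is exactly where the separability from the first claim is needed. I would show that the countable family $\{B_{(M,d|_M)}(q,m_{n,k})\mid q\in\mathbb{Q}_+,\,(n,k)\in\mathbb{N}^2\}$ is a base for the subspace topology of $(M,d|_M)$; consequently every set of the form $B_{(\mathbb{M},d)}(r,m)\cap M$ (with $m\in\mathbb{M}$ arbitrary), being open in $M$, is a \emph{countable} union of such basic balls and hence lies in $\sigma(\mathbf{B}_M)$. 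This yields $\mathbf{B}|_M\subseteq\sigma(\mathbf{B}_M)$, so $\Bor_{(\mathbb{M},d)}|_M=\sigma(\mathbf{B}|_M)\subseteq\sigma(\mathbf{B}_M)=\Bor_{(M,d|_M)}$, and combined with the easy inclusion the two $\sigma$-algebras coincide. The only delicate point in this last step is verifying the ``countable base'' claim: given $p\in B_{(\mathbb{M},d)}(r,m)\cap M$, I would pick a dense point $m_{n,k}$ with $d(p,m_{n,k})$ small and a rational radius $q$ so that, by the triangle inequality, $p\in B_{(M,d|_M)}(q,m_{n,k})\subseteq B_{(\mathbb{M},d)}(r,m)\cap M$.
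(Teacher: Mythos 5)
Your proof is correct, and it is genuinely more self-contained than the paper's: the paper disposes of Lemma~\ref{lemma:metricrest} by declaring separability of $(M,d|_M)$ a standard topological fact and deriving the $\sigma$-algebra identity from Munkres~\cite[Exercise 1, p.~133]{munkres2000topology} combined with Lemma~\ref{lemma:genrest}, explicitly declining to spell out the topological details. You instead prove both halves from scratch. For separability, your representative construction (picking $m_{n,k}\in B_{(\mathbb{M},d)}(1/k,x_n)\cap M$ whenever that intersection is nonempty) is the standard proof that separable metric spaces are hereditarily separable, and it correctly sidesteps the trap that $D\cap M$ may be empty. For the $\sigma$-algebra equality, your easy inclusion uses Lemma~\ref{lemma:genrest} just as the paper does, and your reverse inclusion isolates precisely the point the paper's citation hides: balls in $\mathbf{B}|_M$ may be centered anywhere in $\mathbb{M}$, and one needs separability to rewrite each such trace as a \emph{countable} union of balls centered in $M$ with rational radii; your triangle-inequality verification of the countable-base claim (choosing $d(p,m_{n,k})$ small and a rational $q$ strictly between $d(p,m_{n,k})$ and $r-d(p,m)-d(p,m_{n,k})$) is exactly where that countability comes from, and it goes through. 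What the paper's route buys is brevity and reuse of textbook topology (subspace topology plus the fact that the metric Borel $\sigma$-algebra is generated by open sets); what yours buys is an elementary, citation-free argument that also makes visible a small bonus: the hypothesis $M\in\Bor_{(\mathbb{M},d)}$ is never used in your proof, so both conclusions hold for arbitrary nonempty $M\subset\mathbb{M}$ (measurability of $M$ matters only elsewhere, e.g.\ when Lemma~\ref{lemma:measrest} is invoked to identify $\Bor_{(\mathbb{M},d)}|_M$ with the Borel subsets of $\mathbb{M}$ contained in $M$).
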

\begin{proof}
%
  That $(M,d|_M)$ is separable is a standard result in topology.
  The second part follows from Munkres~\cite[Exercise 1, p. 133]{munkres2000topology} together with Lemma~\ref{lemma:genrest}. We will not give the details here, since this requires more definitions and results from topology.
\end{proof}

\begin{lemma}
  $\Bor_{(\R^n_+,d_{\R_+})} = \Bor^n_+$ and
  $\Bor_{({[0,1]}^n,d_{[0,1]})} = \Bor^n_{[0,1]}$.
\end{lemma}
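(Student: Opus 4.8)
The plan is to obtain both identities as special cases of the general restriction result, Lemma~\ref{lemma:metricrest}, applied to the ambient space $(\R^n,\drn{n})$. The bridge to the standard Borel $\sigma$-algebra is Lemma~\ref{lemma:openmetric}, which states that the metric Borel $\sigma$-algebra $\Bor_{(\R^n,\drn{n})}$ coincides with $\Bor^n$. Granting this, it suffices to instantiate $M = \R^n_+$ and $M = [0,1]^n$ in Lemma~\ref{lemma:metricrest} and then read off the resulting restricted $\sigma$-algebras as $\Bor^n|_{\R^n_+} = \Bor^n_+$ and $\Bor^n|_{[0,1]^n} = \Bor^n_{[0,1]}$, where the last equalities are exactly the definitions of $\Bor^n_+$ and $\Bor^n_{[0,1]}$.

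Before applying Lemma~\ref{lemma:metricrest} I would check its two hypotheses for each choice of $M$. Separability of $(\R^n,\drn{n})$ is supplied by Lemma~\ref{lemma:rsep}. Nonemptiness of $\R^n_+$ and $[0,1]^n$ is immediate. The remaining requirement is that $\R^n_+, [0,1]^n \in \Bor_{(\R^n,\drn{n})} = \Bor^n$; this holds because both are closed subsets of $\R^n$, being products of the closed sets $[0,\infty)$ and $[0,1]$ respectively, and every closed set belongs to the Borel $\sigma$-algebra.

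With the hypotheses verified, Lemma~\ref{lemma:metricrest} gives $\Bor_{(\R^n_+,\,\drn{n}|_{\R^n_+})} = \Bor_{(\R^n,\drn{n})}|_{\R^n_+} = \Bor^n|_{\R^n_+} = \Bor^n_+$, and likewise $\Bor_{([0,1]^n,\,\drn{n}|_{[0,1]^n})} = \Bor^n|_{[0,1]^n} = \Bor^n_{[0,1]}$. The one point requiring care is matching notation: the metrics written $d_{\R_+}$ and $d_{[0,1]}$ in the statement are understood as the restrictions $\drn{n}|_{\R^n_+}$ and $\drn{n}|_{[0,1]^n}$ of the ambient sum-metric to the respective subsets (the formula for $\drn{n}$ is unchanged on these subsets), so the left-hand sides coincide with those produced by the lemma and the two claimed equalities follow directly.

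Overall this is a short assembly of already-established facts rather than a genuinely new argument; the only mild obstacle is confirming membership of $\R^n_+$ and $[0,1]^n$ in $\Bor^n$ and reconciling the metric-restriction notation, neither of which poses a real difficulty.
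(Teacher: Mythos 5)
Your proof is correct and takes essentially the same route as the paper, whose entire proof is ``By Lemma~\ref{lemma:openmetric} and Lemma~\ref{lemma:metricrest}.'' You have merely made explicit the hypothesis checks (separability via Lemma~\ref{lemma:rsep}, Borel-measurability of $\R^n_+$ and $[0,1]^n$ as closed sets) and the notational reading of $d_{\R_+}$ and $d_{[0,1]}$ on the $n$-fold products, all of which the paper leaves implicit.
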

\begin{proof}
  By Lemma~\ref{lemma:openmetric} and Lemma~\ref{lemma:metricrest}.
\end{proof}

\subsection{The Measure Space $(\mathbb{S},\mathcal{S},\mu_{\mathbb{S}})$}%
\label{seca:tracespace}

\repeatlemma{lemma:tracemeas}{\lemmatracemeas}
\begin{proof}
  We have to show that $\Trcal$ is a $\sigma$-algebra:
  \begin{enumerate}
    \item Clearly, $\Trcal \neq \emptyset$.

    \item $S \in \Trcal \implies S^c \in \Trcal$.
      \begin{equation}
        S^c
        = {\left(\bigcup_{n\in\mathbb{N}_0} B_n\right)}^c
        = \bigcup_{n\in\mathbb{N}_0} B_n^c
      \end{equation}
      Since $B_n^c \in \Bor^n_{[0,1]}$, the implication holds.

    \item
      ${\{ S_i \}}_i \subset \Trcal \implies \bigcup_i S_i \in \Trcal$.
      \begin{equation}
        \bigcup_i S_i
        = \bigcup_i \bigcup_{n\in\mathbb{N}_0} B_{n,i}
        = \bigcup_{n\in\mathbb{N}_0} \bigcup_i B_{n,i}
      \end{equation}
      Since $\bigcup_i B_{n,i} \in \Bor^n_{[0,1]}$, the implication holds.

  \end{enumerate}
\end{proof}

\repeatlemma{lemma:tracemeasspace}{\lemmatracemeasspace}
\begin{proof}
  We begin by showing that $\mu_\Tr$ is a measure.
  \begin{enumerate}
    \item $\mu_\Tr(S) \geq 0, S \in \Trcal$.
      Follows since
      \begin{equation}
        \mu_\Tr(S) = \mu_\Tr\left(\bigcup_{n\in\mathbb{N}_0} B_n\right)
        = \sum_{n\in\mathbb{N}_0} \lambda_n(B_n)
        \geq \sum_{n\in\mathbb{N}_0} 0 = 0.
      \end{equation}
    \item $\mu_\Tr(\emptyset) = 0$.
      Follows since
      \begin{equation}
        \mu_\Tr(\emptyset) =
        \mu_\Tr\left(\bigcup_{n\in\mathbb{N}_0} \emptyset\right)
        = \sum_{n\in\mathbb{N}_0} \lambda_n(\emptyset) = 0.
      \end{equation}
    \item If ${\{ S_i \}}_i \subset \Trcal$ with $S_j \cap S_k = \emptyset$
      for $j \neq k$, then $\mu_\Tr\left(\bigcup_i S_i\right) = \sum_i
      \mu_\Tr(S_i)$
      \begin{equation}\begin{aligned}
        \mu_\Tr\left(\bigcup_i S_i\right)
        &= \mu_\Tr\left(\bigcup_i \bigcup_{n\in\mathbb{N}_0} B_{i,n}\right) \\
        &= \mu_\Tr\left(\bigcup_{n\in\mathbb{N}_0} \bigcup_i B_{i,n}\right) \\
        &= \sum_{n\in\mathbb{N}_0} \lambda_n\left( \bigcup_i B_{i,n} \right)
        & &
        \text{%
          (By definition of $\mu_\Tr$)
        } \\
        &= \sum_{n\in\mathbb{N}_0} \sum_i \lambda_n\left( B_{i,n} \right)
        & &
        \text{%
          (The $B_{i,n}$ are disjoint)
        } \\
        &= \sum_i \sum_{n\in\mathbb{N}_0} \lambda_n\left( B_{i,n} \right) \\
        &= \sum_i \mu_\Tr(S_i)
      \end{aligned}\end{equation}
  \end{enumerate}
  Next, we need to show that $\mu_\Tr$ is $\sigma$-finite.
  To do this, we show that there is a sequence ${\{S_i\}}_i \subset \Trcal$, $\mu_\Tr(S_i) < \infty$ for all $i$, such that $\bigcup_i S_i = \Tr$.
  We can choose these $S_i$ simply as $S_i = {[0,1]}^i$, $i \in \mathbb{N}_0$.
  Trivially, $\bigcup_i S_i = \Tr$, and $\mu_\Tr(S_i) = \lambda_n({[0,1]}^i) = 1 < \infty$.
\end{proof}

We now define a metric on $\Tr$.
\begin{definition}
  Let $c_i$ and $c'_i$ denote the $i$th element of $s \in \Tr$ and $s' \in \Tr$, respectively.
  \begin{equation}
    d_\Tr(s,s') =
    \begin{cases}
      \sum_{i=1}^{|s|} |c_i-c'_i| & \text{if } |s| = |s'| \\
      \infty & \text{otherwise}
    \end{cases}
  \end{equation}
\end{definition}

\begin{lemma}\label{lemma:tracesep}
  $(\Tr,d_\Tr)$ is a separable metric space.
\end{lemma}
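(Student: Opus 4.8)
The plan is to check the two requirements separately: that $d_\Tr$ is genuinely a metric, and that $(\Tr,d_\Tr)$ admits a countable dense subset. The metric axioms are routine, but I would flag that the triangle inequality needs a short case split because of the $\infty$ values. For traces $s,s',s''$, if $|s|=|s'|=|s''|$ then $d_\Tr$ restricted to this common length is exactly $\drn{|s|}$ on $[0,1]^{|s|}$, so the inequality is inherited from the fact, already recorded in the excerpt, that each $\drn{n}$ is a metric. Otherwise at least two of the three lengths differ, so at least one summand on the right-hand side equals $\infty$ and the inequality is trivial. Symmetry and the identity-of-indiscernibles axiom follow immediately from the corresponding properties of $\drn{n}$, together with the observation that $d_\Tr(s,s')<\infty$ forces $|s|=|s'|$.

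For separability I would exhibit the explicit countable dense set of rational traces
\[
  Q \;=\; \bigcup_{n\in\mathbb{N}_0} (\mathbb{Q}\cap[0,1])^n .
\]
Countability is immediate: $\mathbb{Q}\cap[0,1]$ is countable, hence each finite power $(\mathbb{Q}\cap[0,1])^n$ is countable, and $Q$ is a countable union of these. As an alternative I could instead deduce separability of each block $([0,1]^n,d_\Tr|_{[0,1]^n})$ from Lemma~\ref{lemma:rsep}, since a subspace of the separable space $(\R^n,\drn{n})$ is separable, and then take the union of the blockwise dense sets; but the explicit construction above is more self-contained.

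The one genuine subtlety, and the step I would be most careful about, is density, precisely because traces of differing lengths sit at distance $\infty$ from one another. I would turn this into an advantage: given $s\in\Tr$ with $|s|=n$ and any $\varepsilon>0$, every trace within finite distance of $s$ has length exactly $n$, so it suffices to approximate $s$ inside the single block $[0,1]^n$. Using density of $\mathbb{Q}\cap[0,1]$ in $[0,1]$, I would pick rationals $q_i\in\mathbb{Q}\cap[0,1]$ with $|c_i-q_i|<\varepsilon/n$ for each coordinate $c_i$ of $s$, so that $s'=(q_1,\dots,q_n)\in Q$ satisfies $d_\Tr(s,s')=\sum_{i=1}^n|c_i-q_i|<\varepsilon$; for $n=0$ the empty trace already lies in $Q$. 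Hence $Q$ is dense, which together with its countability establishes separability.
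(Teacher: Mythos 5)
Your proof is correct and follows essentially the same route as the paper, which also takes the countable dense set $\Tr_\mathbb{Q} = \bigcup_{n \in \mathbb{N}_0} ([0,1] \cap \mathbb{Q})^n$ and leaves the verification as ``easy to verify.'' Your additional care with the metric axioms and with density across blocks at distance $\infty$ simply fills in details the paper omits.
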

\begin{proof}
  Consider $\Tr_\mathbb{Q} = \bigcup_{n \in \mathbb{N}_0} {([0,1] \cap \mathbb{Q})}^n$.
  It is easy to verify that $\Tr_\mathbb{Q}$ is a countable dense subset of $\Tr$, from which the result follows.
\end{proof}

\begin{lemma}\label{lemma:traceinducedmetric}
  $\Bor_{(\Tr,d_\Tr)} = \Trcal$.
\end{lemma}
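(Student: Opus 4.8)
The plan is to exploit the fact that the metric $d_\Tr$ turns the length components $[0,1]^n$ into separated, clopen pieces, and then to transfer the already-known Borel structure of each piece to the whole space. The crucial first observation is that each $[0,1]^n$ is in fact a single open ball of $(\Tr,d_\Tr)$. Fixing any $s_0 \in [0,1]^n$, every trace of length $\neq n$ lies at distance $\infty$ from $s_0$, while the $d_\Tr$-diameter of $[0,1]^n$ is at most $n$ (the $\ell^1$-distance of two length-$n$ traces in $[0,1]^n$ is $\sum_{i=1}^n |c_i - c_i'| \leq n$). Hence the ball $B_{(\Tr,d_\Tr)}(n+1,s_0)$ is exactly $[0,1]^n$, and since the Borel $\sigma$-algebra is generated by such balls, $[0,1]^n \in \Bor_{(\Tr,d_\Tr)}$ for every $n$.

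With measurability of the pieces in hand, I would invoke Lemma~\ref{lemma:metricrest}. Since $(\Tr,d_\Tr)$ is separable (Lemma~\ref{lemma:tracesep}) and $[0,1]^n \in \Bor_{(\Tr,d_\Tr)}$, the lemma yields $\Bor_{([0,1]^n,\,d_\Tr|_{[0,1]^n})} = \Bor_{(\Tr,d_\Tr)}|_{[0,1]^n}$. Next I would identify the restricted metric: for two traces of equal length $n$, $d_\Tr$ coincides with the $\ell^1$ metric $\drn{n}$, so $d_\Tr|_{[0,1]^n} = \drn{n}|_{[0,1]^n}$, which is the metric written $d_{[0,1]}$ in the Euclidean identification established just above. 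Combining the two facts with that identification $\Bor_{([0,1]^n,d_{[0,1]})} = \Bor^n_{[0,1]}$ gives the key equality $\Bor^n_{[0,1]} = \Bor_{(\Tr,d_\Tr)}|_{[0,1]^n}$ for every $n$.

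The final step is to glue the components together, using that $\Tr = \bigcup_{n\in\mathbb{N}_0}[0,1]^n$ is a disjoint union and that, by Lemma~\ref{lemma:measrest}, $\Bor_{(\Tr,d_\Tr)}|_{[0,1]^n} = \{A \subseteq [0,1]^n : A \in \Bor_{(\Tr,d_\Tr)}\}$ since $[0,1]^n$ is itself measurable. For the inclusion $\Trcal \subseteq \Bor_{(\Tr,d_\Tr)}$, take $S = \bigcup_n B_n$ with $B_n \in \Bor^n_{[0,1]}$; by the key equality each $B_n \in \Bor_{(\Tr,d_\Tr)}$, so their countable union $S$ is too. Conversely, for $A \in \Bor_{(\Tr,d_\Tr)}$, write $A = \bigcup_n (A \cap [0,1]^n)$; each $A \cap [0,1]^n$ lies in $\Bor_{(\Tr,d_\Tr)}|_{[0,1]^n} = \Bor^n_{[0,1]}$, so $A$ has precisely the form defining members of $\Trcal$, giving $A \in \Trcal$.

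I expect the main difficulty to be bookkeeping rather than conceptual: correctly matching the restriction notation across Lemma~\ref{lemma:metricrest} (Borel-of-restriction equals restriction-of-Borel) and Lemma~\ref{lemma:measrest} (restriction of a $\sigma$-algebra to a measurable subset), and verifying the metric identification $d_\Tr|_{[0,1]^n} = \drn{n}|_{[0,1]^n}$ so that the previously proved Euclidean identification applies verbatim. Once the components are recognized as single balls, the rest is routine.
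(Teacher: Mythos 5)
Your proof is correct, and it rests on the same underlying observation as the paper's proof --- the length components $[0,1]^n$ are mutually isolated under $d_\Tr$ (traces of different lengths are at distance $\infty$) and carry the $\ell^1$ metric internally --- but the formal route is genuinely different. The paper gives a terse two-sided generator argument: it writes $\Trcal = \sigma\bigl(\bigcup_{n}\Bor_{[0,1]}^n\bigr)$, asserts ``clearly, by definition'' that $\bigcup_n \Bor_{[0,1]}^n \subset \Bor_{(\Tr,d_\Tr)}$, and for the converse notes that every finite-radius ball lies inside a single component, so the generating family of balls is contained in $\Trcal$ and hence so is the $\sigma$-algebra it generates. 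You instead first observe explicitly that $[0,1]^n = B_{(\Tr,d_\Tr)}(n+1,s_0)$ is itself a single open ball (diameter at most $n$, everything else at distance $\infty$) --- a fact the paper leaves implicit --- and then transfer structure via Lemma~\ref{lemma:metricrest} to obtain the componentwise identity $\Bor_{[0,1]}^n = \Bor_{(\Tr,d_\Tr)}|_{[0,1]^n}$, gluing by decomposing an arbitrary Borel set as $\bigcup_n (A \cap [0,1]^n)$ rather than arguing at the level of ball generators. What your route buys is rigor exactly where the paper waves its hands: the inclusion $\Bor_{[0,1]}^n \subset \Bor_{(\Tr,d_\Tr)}$ is derived from the ball identification and the restriction lemma instead of asserted, and the reverse inclusion handles arbitrary Borel sets directly via Lemma~\ref{lemma:measrest} without revisiting the generating family; the cost is a slightly longer argument. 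Your side remark is also well taken: the paper's auxiliary identification $\Bor_{([0,1]^n,d_{[0,1]})} = \Bor^n_{[0,1]}$ abuses notation (the metric $d_{[0,1]}$ is defined only in one dimension), and your reading of it as $\drn{n}|_{[0,1]^n}$, which agrees with $d_\Tr$ on equal-length traces, is the intended one, so the transfer step is sound.
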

\begin{proof}
  Informally, this follows since $\Tr$ is the union of a countable set of isolated subspaces (the distance from each element in a subset to all elements of other subsets is $\infty$) which are all isomorphic to $\mathbb{R}^n$, for some $n \in \mathbb{N}_0$.

  More formally, note that $\mathcal{S} = \sigma\left(\bigcup_{n\in\mathbb{N}_0}
  \Bor_{[0,1]}^n\right)$.
  Clearly, by definition,
  \begin{equation}
    \bigcup_{n\in\mathbb{N}_0} \Bor_{[0,1]}^n \subset \Bor_{(\Tr,d_\Tr)}.
  \end{equation}
  Hence,
  \begin{equation}
    \mathcal{S} = \sigma\left(\bigcup_{n\in\mathbb{N}_0}
    \Bor_{[0,1]}^n\right) \subset \sigma(\Bor_{(\Tr,d_\Tr)}) =
    \Bor_{(\Tr,d_\Tr)}.
  \end{equation}

  Next, because the distance between traces of different length is $\infty$, we note that
  \begin{equation}
    \Bor_{(\Tr,d_\Tr)} =
    \sigma(\{B_{(\Tr,d_\Tr)}(r,s) \mid s \in \Tr, r \in \R_+ \})
    \subset
    \sigma\left(\bigcup_{n\in\mathbb{N}_0} \Bor_{[0,1]}^n\right)
    = \mathcal{S}.
  \end{equation}
  The result follows.
\end{proof}

\subsection{The Measurable Space $(\mathbb{T},\mathcal{T})$}%
\label{seca:termspace}

\begin{lemma}\label{lemma:placeholderbijection}
  The $\term_p^n$ are bijections.
\end{lemma}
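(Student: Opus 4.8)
The plan is to separate the claim into surjectivity, which is immediate, and injectivity, which carries all the content. Since each $\term_p^n$ is regarded as a function onto its own image $\term_p^n(\R^n)$, every element of the codomain is by construction of the form $\term_p^n(c_1,\ldots,c_n)$ for some tuple $(c_1,\ldots,c_n) \in \R^n$, so surjectivity holds by definition of the codomain. It therefore remains to show that the substitution map is injective: distinct tuples of reals are sent to distinct terms of $\T$.

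For injectivity I would proceed by structural induction on the placeholder term $\term_p \in \T_p$, with the number of placeholders $n$ treated as a derived quantity. The governing intuition is that the $n$ placeholders occupy fixed leaf positions in the abstract syntax tree of $\term_p^n$, and that filling them deposits the $i$th argument $c_i$ at the $i$th such position (in left-to-right order) without otherwise altering the tree. The base cases are $\term_p = [\cdot]$, where $n = 1$ and the map is $c \mapsto c$, the inclusion of $\R$ into $\T$, which is injective; and the placeholder-free leaves $\term_p \in \{x, \ttt{resample}\}$, where $n = 0$, the domain $\R^0$ is a singleton, and injectivity is vacuous.

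For the inductive step, each remaining constructor — application $\term^{(1)}\,\term^{(2)}$, conditionals, $g(\term_1,\ldots,\term_{|g|})$, $\ttt{sample}_D(\ldots)$, $\ttt{weight}(\term)$, and $\lambda x.\term$ — partitions the $n$ placeholders among its immediate subterms, inducing a corresponding split of $(c_1,\ldots,c_n)$ into consecutive blocks. Filling distributes over the constructor, so that, for instance, $\term_p^n(c_1,\ldots,c_n) = (\term^{(1)}\text{ filled})\,(\term^{(2)}\text{ filled})$ at an application node, and analogously for the others. Two terms of $\T$ are equal exactly when their syntax trees coincide; hence, if two fillings of $\term_p^n$ are equal, the shared head constructor forces the respective filled subterms to be equal, and the induction hypothesis then yields equality of the corresponding blocks of reals, and thus of the whole tuple.

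I expect the only real obstacle to be bookkeeping rather than mathematics: one must fix a canonical (say, left-to-right pre-order) enumeration of placeholder positions and verify that the partition of arguments among subterms respects this enumeration, so that the induction hypothesis applies to each subterm with the correct block of coordinates. Once this indexing convention is pinned down, every case reduces to the observation that the term constructors are injective on syntax trees, and injectivity of $\term_p^n$ follows.
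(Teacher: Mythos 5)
Your proof is correct and matches the paper's approach: the paper simply states that ``it is easy to verify that $\term_p^n$ is both one-to-one and onto,'' and your argument---surjectivity by definition of the codomain $\term_p^n(\R^n)$, injectivity by structural induction using that constructors partition the placeholders and that filled positions (being real constants) are untouched by $\alpha$-renaming---is exactly the verification the paper leaves implicit. No gaps; the indexing bookkeeping you flag is the only content, and you handle it correctly.
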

\begin{proof}
  It is easy to verify that $\term_p^n$ is both one-to-one and onto.
\end{proof}

\begin{lemma}
  $\Tcal_{\term_p^n}$ is a $\sigma$-algebra.
\end{lemma}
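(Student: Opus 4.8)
The plan is to exploit that $\term_p^n$ is a bijection (Lemma~\ref{lemma:placeholderbijection}) from $\R^n$ onto its image $\term_p^n(\R^n)$, so that $\Tcal_{\term_p^n}$ is simply the transport of the Borel $\sigma$-algebra $\Bor^n$ along this bijection. Since transporting a $\sigma$-algebra along a bijection again yields a $\sigma$-algebra, the three defining properties of a $\sigma$-algebra on the ground set $\term_p^n(\R^n)$ should follow directly from the corresponding properties of $\Bor^n$, using only that a bijection commutes with the Boolean set operations.

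Concretely, I would verify the three axioms in turn, in the same spirit as the proof of Lemma~\ref{lemma:tracemeas}. Non-emptiness is immediate, since $\emptyset = \term_p^n(\emptyset) \in \Tcal_{\term_p^n}$ (and indeed $\term_p^n(\R^n) \in \Tcal_{\term_p^n}$). For closure under complements, the key identity is $\term_p^n(\R^n) \setminus \term_p^n(B_n) = \term_p^n(\R^n \setminus B_n) = \term_p^n(B_n^c)$, which holds precisely because $\term_p^n$ is injective; since $B_n^c \in \Bor^n$, the complement lies in $\Tcal_{\term_p^n}$. For closure under countable unions, I would use the identity $\bigcup_i \term_p^n(B_{n,i}) = \term_p^n\!\left(\bigcup_i B_{n,i}\right)$, which holds for any function, together with $\bigcup_i B_{n,i} \in \Bor^n$.

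The only real subtlety---and a step to get right rather than a genuine obstacle---is that complements must be taken relative to the correct ambient set $\term_p^n(\R^n)$ rather than all of $\T$, and that the image-of-complement identity requires injectivity of $\term_p^n$ (both injectivity and surjectivity onto the image are supplied by Lemma~\ref{lemma:placeholderbijection}). Everything else reduces to the fact that $\Bor^n$ is already a $\sigma$-algebra on $\R^n$, so each verification amounts to pushing an established membership in $\Bor^n$ through $\term_p^n$.
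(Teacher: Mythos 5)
Your proposal is correct and takes essentially the same route as the paper: the paper's own proof is a one-liner citing exactly the two ingredients you use, namely that the $\term_p^n$ are bijections (Lemma~\ref{lemma:placeholderbijection}) and that $\Bor^n$ is a $\sigma$-algebra, i.e., the transport-along-a-bijection argument. Your explicit verification of the three axioms (including the point that complements are taken relative to $\term_p^n(\R^n)$ and that injectivity is what makes the image commute with complements) merely fills in details the paper leaves implicit.
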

\begin{proof}
  Follows from Lemma~\ref{lemma:placeholderbijection}, and because $\mathcal{B}^n$ is a $\sigma$-algebra for each $n \in \mathbb{N}_0$.
\end{proof}

\repeatlemma{lemma:termmeas}{\lemmatermmeas}
\begin{proof}
  We have to show that $\Tcal$ is a $\sigma$-algebra.
  \begin{enumerate}
    \item Clearly, $\mathcal{T} \neq \emptyset$.

    \item $T \in \Tcal \implies T^c \in \Tcal$.
      \begin{equation}
        \begin{aligned}
          T^c
          &=
          {%
            \left(\bigcup_{n \in \mathbb{N}_0}
            \bigcup_{\term_p^n \in \T^n_p}
            \term_p^n(B_n)
            \right)
          }^c
          \\ &=
          \bigcup_{n \in \mathbb{N}_0}
          \bigcup_{\term_p^n \in \T^n_p}
          \term_p^n{(B_n)}^c
        \end{aligned}
      \end{equation}
      Since $\term_p^n{(B_n)}^c \in \Tcal_{\term_p^n}$, the implication holds.

    \item
      ${\{ T_i \}}_i \subset \Tcal \implies \bigcup_i T_i \in \Tcal$.
      \begin{equation}
        \bigcup_i T_i
        = \bigcup_i \bigcup_{n \in \mathbb{N}_0}
          \bigcup_{\term_p^n \in \T^n_p}
          \term_p^n(B_{n,i})
        = \bigcup_{n \in \mathbb{N}_0}
          \bigcup_{\term_p^n \in \T^n_p}
          \bigcup_i
          \term_p^n(B_{n,i})
      \end{equation}
      Since $\bigcup_i \term_p^n(B_{n,i}) \in \Tcal_{\term_p^n}$, the
      implication holds.
  \end{enumerate}
\end{proof}

\begin{definition}
  \begin{equation}
    \begin{aligned}
      \dt(c_1,c_2) &= |c_1 - c_2| \\
      \dt(\lambda x. \term,\lambda x. \term') &= \dt(\term,\term') \\
      \dt(x,x) &= 0 \\
      \dt(\term_1 \enspace \term_2,\term_1' \enspace \term_2') &=
      \dt(\term_1,\term_2) + \dt(\term_1',\term_2') \\
      \dt(\ttt{if} \ \term_1 \ \ttt{then} \ \term_2 \ \ttt{else} \ \term_3,
      \quad & \\
      \ttt{if} \ \term_1' \ \ttt{then} \ \term_2' \ \ttt{else} \ \term_3')
      &= \dt(\term_1,\term_1') + \dt(\term_2,\term_2') + \dt(\term_3,\term_3') \\
      \dt(g(\term_1, \ldots, \term_{|g|}),G(\term_1', \ldots, \term_{|g|}' )) &=
      \dt(\term_1,\term_1') + \cdots + \dt(\term_{|g|},\term_{|g|}') \\
      \dt(\ttt{sample}_D(\term_1, \ldots, \term_{|D|} ),
      \quad & \\
      \ttt{sample}_D(\term_1', \ldots, \term_{|D|}' ))
      &= \dt(\term_1,\term_1') + \cdots + \dt(\term_{|D|},\term_{|D|}') \\
      \dt(\ttt{weight}({\term}),\ttt{weight}({\term'}))
      &= \dt(\term,\term') \\
      \dt(\term,\term') &= \infty \ \text{if none of the above applies}
    \end{aligned}
  \end{equation}
\end{definition}
Note that distances between terms are taken modulo $\alpha$-conversion.
Hence,
\begin{equation}
  \dt(\lambda x.x,\lambda y.y) = 0,
\end{equation} but
\begin{equation}
  \dt(\lambda x.\lambda y.x, \lambda x.\lambda y.y) = \infty
\end{equation}

\begin{lemma}\label{lemma:termsep}
  $(\T,d_\T)$ is a separable metric space.
\end{lemma}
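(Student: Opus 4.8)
The plan is to follow the same strategy as the proof of Lemma~\ref{lemma:tracesep} for traces, exploiting the fact that the metric $\dt$ decomposes $\T$ into a countable collection of isolated components, each isometric to some $(\R^n,\drn{n})$. First I would record that $\dt$ is genuinely a metric: non-negativity and symmetry are immediate from the defining clauses, the identity of indiscernibles holds because distances are read modulo $\alpha$-conversion (so $\dt(\term,\term') = 0$ forces $\term = \term'$ as $\alpha$-equivalence classes), and the triangle inequality follows by structural induction, using the triangle inequality of $|\cdot|$ in the base case and the convention that a mismatch in term shape contributes $\infty$.

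The key structural observation is this: two terms lie at finite $\dt$-distance if and only if they share the same placeholder skeleton $\term_p^n \in \T_p^n$, i.e.\ they differ only in the real constants occupying their $n$ placeholder positions. Indeed, every defining clause of $\dt$ that is not the catch-all $\infty$ case requires the two arguments to have matching outermost shape, so finite distance propagates down to matching shapes everywhere, which is exactly equality of skeletons (modulo $\alpha$). Conversely, on the component $\term_p^n(\R^n)$---the image of the bijection $\term_p^n$ from Lemma~\ref{lemma:placeholderbijection}---the recursive clauses simply sum the absolute differences $|c_i - c_i'|$ over the $n$ placeholder slots, so by induction on the skeleton $\dt$ restricted to $\term_p^n(\R^n)$ coincides with $\drn{n}$ pulled back along $\term_p^n$. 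Thus $\term_p^n : (\R^n,\drn{n}) \to (\term_p^n(\R^n),\dt)$ is an isometry, and distinct components sit at mutual distance $\infty$.

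With this in hand, separability is routine. I would take
\begin{equation}
  \T_\mathbb{Q} = \bigcup_{n \in \mathbb{N}_0} \bigcup_{\term_p^n \in \T_p^n} \term_p^n(\mathbb{Q}^n),
\end{equation}
the set of all terms whose real constants are rational. This set is countable, since $\T_p = \bigcup_n \T_p^n$ is countable and each $\mathbb{Q}^n$ is countable, making $\T_\mathbb{Q}$ a countable union of countable sets. For density, fix any $\term \in \T$ and $\varepsilon > 0$. By the component decomposition, $\term = \term_p^n(c_1,\ldots,c_n)$ for a unique skeleton and unique reals $(c_1,\ldots,c_n) \in \R^n$. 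Since $\mathbb{Q}^n$ is dense in $(\R^n,\drn{n})$ (Lemma~\ref{lemma:rsep}), I pick $(q_1,\ldots,q_n) \in \mathbb{Q}^n$ with $\drn{n}((c_1,\ldots,c_n),(q_1,\ldots,q_n)) < \varepsilon$; then $\term' = \term_p^n(q_1,\ldots,q_n) \in \T_\mathbb{Q}$ and, because $\term_p^n$ is an isometry, $\dt(\term,\term') < \varepsilon$.

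The main obstacle is the isometry claim underpinning everything: establishing carefully, by structural induction on $\term_p^n$, that the recursively-defined $\dt$ restricted to a single skeleton's component is exactly $\drn{n}$, and that all cross-skeleton distances are $\infty$. The bookkeeping here must correctly track the correspondence between placeholder positions and coordinates of $\R^n$ and must respect $\alpha$-conversion---the latter is what makes the quotient by $\alpha$-equivalence essential, as illustrated by $\dt(\lambda x.\lambda y.x,\lambda x.\lambda y.y) = \infty$. Once the components are understood as isometric copies of $\R^n$ glued at infinite distance, separability transfers immediately from the separability of each $\R^n$, exactly as in Lemma~\ref{lemma:tracesep}.
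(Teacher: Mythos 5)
Your proof is correct and follows essentially the same route as the paper, whose entire argument is the one-liner that the terms with rational constants form a countable dense subset of $\T$. Your additional detail---the decomposition of $(\T,\dt)$ into countably many isolated components $\term_p^n(\R^n)$, each isometric to $(\R^n,\drn{n})$---is exactly the structural observation the paper itself invokes (in the proof of Lemma~\ref{lemma:metricequalterm}) and is a faithful elaboration rather than a different approach.
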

\begin{proof}
  Replace $\T$ with a language where constants are rational.
  This is a countable dense subset of $\T$, and the result follows.
\end{proof}

\begin{lemma}\label{lemma:metricequalterm}
  $\Bor_{(\T,\dt)} = \Tcal$.
\end{lemma}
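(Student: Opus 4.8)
The plan is to follow the template of Lemma~\ref{lemma:traceinducedmetric} almost verbatim, exploiting that the metric $\dt$ splits $\T$ into countably many isolated \emph{sheets}, one for each placeholder term. Concretely, for $\term,\term'$ in the same sheet (i.e.\ $\term,\term' \in \term_p^n(\R^n)$ for a common $\term_p^n \in \T_p^n$) the recursive clauses of $\dt$ sum the absolute differences of the corresponding constants, so $\dt(\term_p^n(\vec c),\term_p^n(\vec c')) = \drn{n}(\vec c,\vec c')$, while for $\term,\term'$ lying in different sheets none of the structural clauses applies and $\dt(\term,\term') = \infty$. Thus each $\term_p^n$ is an isometry from $(\R^n,\drn{n})$ onto the subspace $M_{\term_p^n} := \term_p^n(\R^n)$ equipped with $\dt|_{M_{\term_p^n}}$, and since $\T_p$ is countable, $\T$ is the countable disjoint union of the $M_{\term_p^n}$, each at distance $\infty$ from the rest. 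As a first step I would record that $\Tcal = \sigma\!\left(\bigcup_{n}\bigcup_{\term_p^n \in \T_p^n} \Tcal_{\term_p^n}\right)$: the inclusion $\supseteq$ is immediate since each $\Tcal_{\term_p^n}\subseteq\Tcal$, and $\subseteq$ holds because every $T\in\Tcal$ is, by definition, a countable union of sets $\term_p^n(B_n)\in\Tcal_{\term_p^n}$.

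For the inclusion $\Tcal \subseteq \Bor_{(\T,\dt)}$ it then suffices to show $\Tcal_{\term_p^n}\subseteq\Bor_{(\T,\dt)}$ for each $\term_p^n$. Because $M_{\term_p^n}$ sits at infinite distance from its complement it is both open and closed, hence $M_{\term_p^n}\in\Bor_{(\T,\dt)}$; by Lemma~\ref{lemma:metricrest} (using that $(\T,\dt)$ is separable, Lemma~\ref{lemma:termsep}) we get $\Bor_{(M_{\term_p^n},\dt|_{M_{\term_p^n}})} = \Bor_{(\T,\dt)}|_{M_{\term_p^n}}$, which by Lemma~\ref{lemma:measrest} is a sub-$\sigma$-algebra of $\Bor_{(\T,\dt)}$. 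Since $\term_p^n$ is an isometric bijection (Lemma~\ref{lemma:placeholderbijection}), it carries $\Bor_{(\R^n,\drn{n})}$ onto $\Bor_{(M_{\term_p^n},\dt|_{M_{\term_p^n}})}$; combined with $\Bor_{(\R^n,\drn{n})}=\Bor^n$ (Lemma~\ref{lemma:openmetric}) this gives exactly $\Tcal_{\term_p^n}=\{\term_p^n(B_n)\mid B_n\in\Bor^n\}=\Bor_{(\T,\dt)}|_{M_{\term_p^n}}\subseteq\Bor_{(\T,\dt)}$.

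For the reverse inclusion $\Bor_{(\T,\dt)}\subseteq\Tcal$, I would use that $\Bor_{(\T,\dt)}$ is generated by the open balls and that $\Tcal$ is already a $\sigma$-algebra (Lemma~\ref{lemma:termmeas}), so it is enough to place every ball in $\Tcal$. Any ball $B_{(\T,\dt)}(r,\term)$ of finite radius $r\in\R_+$ centred at $\term\in M_{\term_p^n}$ contains no point of another sheet, hence equals an open ball of $(M_{\term_p^n},\dt|_{M_{\term_p^n}})$; transporting through the isometry $\term_p^n$ it is $\term_p^n(B)$ for an open ball $B\subseteq\R^n$, so $B\in\Bor^n$ and the ball lies in $\Tcal_{\term_p^n}\subseteq\Tcal$. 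Therefore $\Bor_{(\T,\dt)}=\sigma(\{\text{balls}\})\subseteq\Tcal$, and combining the two inclusions yields $\Bor_{(\T,\dt)}=\Tcal$.

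The main obstacle is not any single inclusion but making the isometry rigorous: one must verify that $\dt(\term,\term')<\infty$ holds \emph{exactly} when $\term$ and $\term'$ share the same placeholder skeleton $\term_p^n$ (so that the sheets are genuinely the classes indexed by $\T_p$), and that the summation clauses of $\dt$ really collapse to $\drn{n}$ on each sheet. This requires an induction on term structure, with care taken over the $\alpha$-conversion convention (so that e.g.\ $\dt(\lambda x.x,\lambda y.y)=0$ but $\dt(\lambda x.\lambda y.x,\lambda x.\lambda y.y)=\infty$) and over the bound-variable positions, which carry no placeholders. Everything else is a faithful transcription of the proof of Lemma~\ref{lemma:traceinducedmetric}.
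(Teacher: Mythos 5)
Your proposal is correct and takes essentially the same approach as the paper: the paper's proof of this lemma is a one-line appeal to the fact that $\T$ is a countable union of isolated subsets, each isometric to some $\R^n$ (explicitly mirroring the argument for Lemma~\ref{lemma:traceinducedmetric}), which is exactly the sheet decomposition you develop. You merely make explicit the details the paper leaves implicit---that each $\term_p^n$ is an isometry onto its sheet, that sheets are clopen at mutual distance $\infty$, and that finite-radius balls stay within a single sheet---so your write-up is a faithful expansion of the intended proof.
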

\begin{proof}
  As for Lemma~\ref{lemma:traceinducedmetric}, the result follows since $\T$ is the union of a countable set of isolated subsets  which are all isomorphic to $\mathbb{R}^n$, for some $n \in \mathbb{N}_0$.
\end{proof}

We now extend the above metric to evaluation contexts.
\begin{definition}
  \begin{equation}
    \begin{aligned}
      \de([\cdot],[\cdot]) &= 0 \\
      \de(\econt_1 \enspace \term_1,\econt_2 \enspace \term_2) &=
      \de(\econt_1,\econt_2) + \de(\term_1, \term_2) \\
      \de((\lambda x.\term_1) \enspace \econt_1,
      (\lambda x.\term_2) \enspace \econt_2) &=
      \de(\term_1,\term_2) + \de(\econt_1, \econt_2) \\
      \de(\ttt{if} \ \econt \ \ttt{then} \ \term_1 \ \ttt{else} \ \term_2,
      \enspace & \\
      \ttt{if} \ \econt' \ \ttt{then} \ \term_1' \ \ttt{else} \ \term_2') &=
      \de(\econt_1, \econt')
      + \de(\term_1,\term_1')
      \\ & \qquad
      + \de(\term_2,\term_2') \\
      \de(g(c_1, \ldots, c_m, \econt, \term_{m+2}, \ldots, \term_{|g|}),
      \enspace & \\
      g(c_1', \ldots, c_m', \econt', \term_{m+2}', \ldots, \term_{|g|}')) &=
      \de(c_1,c_1') + \cdots + \de(c_{m},c_{m}') \\ & \qquad
      + \de(\econt,\econt')
      \\ & \qquad
      + \de(\term_{m+2},\term_{m+2}')
      \\ & \qquad
      + \cdots + \de(\term_{|g|},\term_{|g|}')
      \\
      \de(\ttt{sample}_D
      (c_1, \ldots, c_m, \econt, \term_{m+2}, \ldots, \term_{|D|}),
      \enspace & \\
      \ttt{sample}_D
      (c_1', \ldots, c_m', \econt', \term_{m+2}', \ldots, \term_{|D|}')) &=
      \de(c_1,c_1') + \cdots + \de(c_{m},c_{m}') \\ & \qquad
      + \de(\econt,\econt')
      \\ & \qquad
      + \de(\term_{m+2},\term_{m+2}')
      \\ & \qquad
      + \cdots + \de(\term_{|g|},\term_{|g|}')
      \\
      \de(\ttt{weight}(\econt_1),\ttt{weight}(\econt_2)) &= \de(\econt_1,\econt_2)
      \\
      \de(\econt,\econt') &= \infty \ \text{if none of the above applies}
    \end{aligned}
  \end{equation}
\end{definition}

\begin{definition}
  The set of \emph{reducible expressions} is given by
  \begin{equation}
    \begin{aligned}
      \redex \Coloneqq& \enspace
      (\lambda x. \term) \enspace \val
      \enspace | \enspace
      g(c_1, \ldots, c_{|g|})
      \enspace | \enspace
      \ttt{sample}_D(c_1, \ldots, c_{|D|} )
      \\& \enspace | \enspace
      \ttt{if} \ \mathit{true} \ \ttt{then} \ \term_1 \ \ttt{else} \ \term_2
      \enspace | \enspace
      \ttt{if} \ \mathit{false} \ \ttt{then} \ \term_1 \ \ttt{else} \ \term_2
      \\& \enspace | \enspace
      \ttt{weight}(c)
      \enspace | \enspace
      \ttt{resample}
    \end{aligned}
  \end{equation}
\end{definition}

\begin{lemma}\label{lemma:econtleq}
  $\dt(\econt_1[\term_1],\econt_2[\term_2]) \leq \de(\econt_1,\econt_2) + \dt(\term_1,\term_2)$.
\end{lemma}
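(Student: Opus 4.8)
The plan is to prove the bound by structural induction on the evaluation context. Since the right-hand side is $\de(\econt_1,\econt_2) + \dt(\term_1,\term_2)$, and $\de(\econt,\econt') = \infty$ whenever $\econt$ and $\econt'$ are built from different productions of the context grammar, I may assume throughout that $\econt_1$ and $\econt_2$ share the same top-level shape; in the mismatched case the right-hand side is already $\infty$ and there is nothing to prove.

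For the base case $\econt_1 = \econt_2 = [\cdot]$, filling gives $\econt_1[\term_1] = \term_1$ and $\econt_2[\term_2] = \term_2$, while $\de([\cdot],[\cdot]) = 0$, so the bound holds with equality. The structural fact driving the inductive step is that plugging a term into a context preserves the top-level shape dictated by the context: if, say, $\econt_1 = \econt_1'\,\term_1^a$ and $\econt_2 = \econt_2'\,\term_2^a$, then $\econt_1[\term_1]$ and $\econt_2[\term_2]$ are both applications, so the application clause of $\dt$ applies and splits $\dt(\econt_1[\term_1],\econt_2[\term_2])$ into $\dt(\econt_1'[\term_1],\econt_2'[\term_2]) + \dt(\term_1^a,\term_2^a)$. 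The matching clause of $\de$ splits $\de(\econt_1,\econt_2)$ into $\de(\econt_1',\econt_2') + \dt(\term_1^a,\term_2^a)$. Applying the induction hypothesis to $\econt_1',\econt_2'$ and cancelling the common $\dt(\term_1^a,\term_2^a)$ yields exactly the claimed inequality. Every remaining production (the second application form $(\lambda x.\term)\,\econt$, the $\ttt{if}$ form, $g$, $\ttt{sample}_D$, and $\ttt{weight}$) is treated identically: both metrics decompose additively into a contribution from the subcontext plus contributions from the fixed surrounding subterms, the induction hypothesis bounds the subcontext contribution, and the surrounding contributions coincide on the two sides.

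I expect each matching case to reduce to an equality, with the inequality entering only through the induction hypothesis; there is no case in which filling the hole can make two terms strictly closer than the contexts were, because the hole occupies a single leaf position whose surrounding structure is fixed by the context. The two points requiring care are notational rather than mathematical. First, the non-hole subterm arguments of a context (such as $\term_1^a$ above) are genuine terms, so the $\de$-distance written for them in the definition of $\de$ is to be read as the $\dt$-distance; this clause-by-clause correspondence between the two metric definitions is precisely what makes the surrounding contributions cancel. Second, the $(\lambda x.\term)\,\econt$ production must be handled modulo $\alpha$-conversion, consistent with the stated convention that $\dt$ and $\de$ identify terms and contexts up to renaming of bound variables. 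Once these conventions are fixed, the induction is mechanical, and the only real work is verifying that the additive decompositions of $\dt$ and $\de$ line up across the finitely many productions.
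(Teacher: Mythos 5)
Your proof is correct and follows exactly the route the paper takes: the paper's own proof is a one-line appeal to induction over the structure of $\econt_1$ and $\econt_2$ (deferring details to Borgström et al., Lemma 63), and your case analysis—equality in each matching production via the additive clauses of $\dt$ and $\de$, triviality when $\de(\econt_1,\econt_2)=\infty$ (which also covers holes at different argument positions of $g$ or $\ttt{sample}_D$), with the inequality entering only through the induction hypothesis—is precisely that induction carried out in full. You also correctly flag the paper's notational slip of writing $\de$ for term arguments in the definition of $\de$, which must be read as $\dt$ for the decompositions to line up.
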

\begin{proof}
  Follows by induction over the structure of $\econt_1$ and $\econt_2$.
  For a partial proof, see Borgström et al.~\cite[Lemma 63]{borgstrom2015lambda}.
\end{proof}

\begin{lemma}\label{lemma:econtinfty}
  If $\de(\econt_1,\econt_2) = \infty$, then $\dt(\econt_1[\redex_1],\econt_2[\redex_2]) = \infty$ for all $\redex_1$ and $\redex_2$.
\end{lemma}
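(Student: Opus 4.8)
The plan is to argue by induction on the combined size (number of syntactic constructors) of the pair $(\econt_1,\econt_2)$, reducing everything to a single auxiliary observation about redexes. First I would record the auxiliary fact that for every evaluation context $\econt$ and every redex $\redex$, the filled term $\econt[\redex]$ is never shape-close to a value: $\dt(\econt[\redex],\val)=\infty$ for every value $\val$ (i.e. $\val=c$ or $\val=\lambda x.\term$). This holds because $\dt(\term_1,\term_2)<\infty$ forces $\term_1$ and $\term_2$ to share their outermost constructor (every clause of $\dt$ except the final one matches identical head symbols, and constant differences $|c-c'|$ are finite), while the outermost constructor of $\econt[\redex]$ is either that of $\redex$ itself when $\econt=[\cdot]$ --- and no redex is a constant or a bare $\lambda$-abstraction --- or one of application, $\ttt{if}$, $g$, $\ttt{sample}_D$, $\ttt{weight}$ when $\econt\neq[\cdot]$, none of which is a value. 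This is exactly the point where the restriction to redexes, rather than arbitrary terms, is used.

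For the induction itself, I would first dispatch the easy preamble: if $\econt_1[\redex_1]$ and $\econt_2[\redex_2]$ have different outermost constructors, then $\dt(\econt_1[\redex_1],\econt_2[\redex_2])=\infty$ by the final clause of $\dt$, and we are done. So assume the two filled terms share an outermost constructor. If one context is the empty context, say $\econt_1=[\cdot]$, then $\de([\cdot],\econt_2)=\infty$ forces $\econt_2\neq[\cdot]$, so $\econt_2$ is compound and $\econt_2[\redex_2]$ has a compound head. Matching this head against $\redex_1$ and inspecting the redex grammar, the subterm of $\redex_1$ sitting opposite the filled hole of $\econt_2$ is forced to be a value or a constant (the function of an application-redex is a $\lambda$-value, the scrutinee of an $\ttt{if}$-redex is $\true$ or $\false$, the hole-argument of a $g$-, $\ttt{sample}_D$-, or $\ttt{weight}$-redex is a constant), whereas the corresponding subterm of $\econt_2[\redex_2]$ is $\econt_2'[\redex_2]$, which is not a value by the auxiliary fact; that coordinate contributes $\infty$.

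When both contexts are compound, I would split on whether they are the same grammar clause with the hole in the same position. If so, the structural clause of $\de$ and the structural clause of $\dt$ produce the same sum, except that the hole-subcontexts are compared by $\de(\econt_1',\econt_2')$ in the former and by $\dt(\econt_1'[\redex_1],\econt_2'[\redex_2])$ in the latter (reading the term-argument occurrences of $\de$ as $\dt$, as intended). Since constant differences are finite, the infinite summand of $\de(\econt_1,\econt_2)$ lies either on a fixed term-argument position, where it transfers verbatim to $\dt$, or on the hole-subcontext, where the induction hypothesis applied to the strictly smaller $\econt_1',\econt_2'$ gives $\dt(\econt_1'[\redex_1],\econt_2'[\redex_2])=\infty$; either way the sum is $\infty$. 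The remaining, genuinely delicate, possibility is that the two compound contexts yield the same head but place the hole incompatibly: one application context has the hole in function position, $\econt_1'\,\term_1$, while the other has it in argument position, $(\lambda x.\term_2)\,\econt_2'$ (or a $g$/$\ttt{sample}_D$ context with the hole at different argument indices). Here the outermost constructors agree, so the preamble does not apply; instead I would compare the function position (resp. the lower hole index), where a value or constant from one side faces $\econt_i'[\redex_i]$ from the other, and invoke the auxiliary fact to conclude $\infty$. This mismatched-hole case is the main obstacle, and it is precisely where one sees that the statement would fail for arbitrary filled terms and genuinely needs the redex hypothesis.
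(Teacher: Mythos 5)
Your proof is correct and takes essentially the same approach as the paper, whose own proof consists only of the remark ``induction over the structure of $\econt_1$'' with the details deferred to Borgström et al.; your induction on the combined size of $(\econt_1,\econt_2)$ is that same structural induction, fully worked out. In particular, your auxiliary fact that $\dt(\econt[\redex],\val)=\infty$ for every value $\val$---which you correctly invoke both when one context is $[\cdot]$ and in the mismatched-hole case (function vs.\ argument position, or differing hole indices in $g$/$\ttt{sample}_D$ frames)---is precisely where the restriction to redexes is needed, and your handling of it is sound.
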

\begin{proof}
  Follows by induction over the structure of $\econt_1$.
  For a partial proof, see Borgström et al.~\cite[Lemma 64]{borgstrom2015lambda}.
\end{proof}

\begin{lemma}\label{lemma:econtredex}
  $\dt(\econt_1[\redex_1],\econt_2[\redex_2]) = \de(\econt_1,\econt_2) + \dt(\redex_1,\redex_2)$.
\end{lemma}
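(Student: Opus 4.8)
The plan is to prove the two inequalities separately. The direction $\dt(\econt_1[\redex_1],\econt_2[\redex_2]) \leq \de(\econt_1,\econt_2) + \dt(\redex_1,\redex_2)$ is immediate from Lemma~\ref{lemma:econtleq}, specialised to the terms $\term_1 = \redex_1$ and $\term_2 = \redex_2$ (every redex is a term). So the real work is the reverse inequality $\dt(\econt_1[\redex_1],\econt_2[\redex_2]) \geq \de(\econt_1,\econt_2) + \dt(\redex_1,\redex_2)$. Note that equality cannot follow from Lemma~\ref{lemma:econtleq} alone: for arbitrary terms that bound can be strict, because a term may itself extend the surrounding context structure and thus sit at finite distance even when the contexts are structurally incompatible. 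The point of restricting to redexes is precisely that this cannot happen.

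First I would dispose of the case $\de(\econt_1,\econt_2) = \infty$. Here Lemma~\ref{lemma:econtinfty} gives $\dt(\econt_1[\redex_1],\econt_2[\redex_2]) = \infty$ directly, and since $\dt(\redex_1,\redex_2) \geq 0$ the right-hand side is also $\infty$; hence equality holds. This is exactly the place where redex-ness is essential, and it is the conceptual crux of the lemma.

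For the remaining case $\de(\econt_1,\econt_2) < \infty$ I would induct on the structure of $\econt_1$. When $\de(\econt_1,\econt_2)$ is finite the two contexts must share the same top-level constructor (otherwise the final ``none of the above'' clause of $\de$ would force $\infty$), so the induction splits into the context formation cases $[\cdot]$, $\econt\,\term$, $(\lambda x.\term)\,\econt$, $\ttt{if}\,\econt\,\ldots$, $g(\ldots,\econt,\ldots)$, $\ttt{sample}_D(\ldots,\econt,\ldots)$, and $\ttt{weight}(\econt)$, with $\econt_2$ having the matching shape in each. In the base case $\econt_1=\econt_2=[\cdot]$ both sides reduce to $\dt(\redex_1,\redex_2)$ since $\de([\cdot],[\cdot])=0$. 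In each inductive case the plugged terms $\econt_1[\redex_1]$ and $\econt_2[\redex_2]$ are built with the same outermost term constructor, so the recursive clause of $\dt$ applies and separates the distance into the distance between the recursive context holes plus the distances between the fixed surrounding subterms; unfolding the matching clause of $\de$ in parallel and applying the induction hypothesis to the subcontext then yields the claimed equality.

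I expect the main obstacle to be purely administrative: the case analysis is broad (one case per context former, each with a structure-matching and a structure-mismatching subcase), and one must check that the defining clauses of $\dt$ and $\de$ line up termwise in every case so that the context contribution and the redex contribution add without interference. No single case is difficult, but care is needed to confirm that filling the hole with a redex never alters the outermost constructor seen by $\dt$, which is what guarantees the recursive clause fires and, together with Lemma~\ref{lemma:econtinfty}, upgrades the inequality of Lemma~\ref{lemma:econtleq} to an equality.
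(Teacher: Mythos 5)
Your proof is correct and takes essentially the same route as the paper's: the paper likewise proves the lemma by structural induction over $\econt_1$ and $\econt_2$, invoking Lemma~\ref{lemma:econtinfty} to dispose of the case $\de(\econt_1,\econt_2)=\infty$ and otherwise mirroring the case analysis of Lemma~\ref{lemma:econtleq}. Your packaging of the $\leq$ direction as a direct corollary of Lemma~\ref{lemma:econtleq} before establishing equality in the finite case is only a cosmetic difference.
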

\begin{proof}
  Follows by induction over the structure of $\econt_1$ and $\econt_2$.
  The proof uses Lemma~\ref{lemma:econtinfty} and is otherwise similar to Lemma~\ref{lemma:econtleq}.
\end{proof}

\begin{lemma}\label{lemma:econtval}
  $\dt(\econt_1[\val_1],\econt_2[\val_2]) = \de(\econt_1,\econt_2) + \dt(\val_1,\val_2)$.
\end{lemma}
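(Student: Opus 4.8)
The plan is to prove the two inequalities separately. The direction $\dt(\econt_1[\val_1],\econt_2[\val_2]) \le \de(\econt_1,\econt_2) + \dt(\val_1,\val_2)$ is already supplied by Lemma~\ref{lemma:econtleq} (instantiating the generic subterms $\term_1,\term_2$ there with the values $\val_1,\val_2$), so the only real work is the reverse inequality $\dt(\econt_1[\val_1],\econt_2[\val_2]) \ge \de(\econt_1,\econt_2) + \dt(\val_1,\val_2)$. I would obtain this by simultaneous structural induction on $\econt_1$ and $\econt_2$, in exactly the shape of the argument for Lemma~\ref{lemma:econtredex}: values here play the role that redexes played there.

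First I would dispatch the cases in which $\econt_1$ and $\econt_2$ share the same top-level shape. When both equal $[\cdot]$, both sides collapse to $\dt(\val_1,\val_2)$. When both are, say, $\econt_1'\,\term_a$ and $\econt_2'\,\term_b$ (and likewise for the $(\lambda x.\term)\,\econt$, $\ttt{if}$, $g$, $\ttt{sample}_D$, and $\ttt{weight}$ clauses), the defining clause of $\dt$ splits the application into its two components, the defining clause of $\de$ splits the context identically, the non-hole components $\term_a,\term_b$ contribute the same summand on both sides, and the induction hypothesis applied to $\econt_1',\econt_2'$ closes the case. These carry through essentially verbatim from the redex proof, since they never inspect what sits in the hole beyond invoking the hypothesis.

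The crux is the remaining cases, where $\econt_1$ and $\econt_2$ have mismatched shapes and hence $\de(\econt_1,\econt_2)=\infty$; here I must show $\dt(\econt_1[\val_1],\econt_2[\val_2])=\infty$, i.e. establish the value-analogue of Lemma~\ref{lemma:econtinfty}. For redexes that analogue holds because a redex is never an abstraction, so it can never match the $\lambda x.\term$ heading the context clause $(\lambda x.\term)\,\econt$. For values this is exactly where I expect the difficulty: an abstraction \emph{is} a value, so a value $\lambda y.\term'$ filling the hole of a context of the form $[\cdot]\,\term_a$ produces an application whose operator is an abstraction, which can structurally agree with the operator of $(\lambda x.\term)\,\econt_2'$, threatening a finite $\dt$ against an infinite right-hand side. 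I would resolve this by appealing to the fact that the decompositions of interest are the canonical evaluation-context decompositions delivered by the semantics, in which the hole occupies the unique leftmost-outermost active position; a value in the hole of $[\cdot]\,\term_a$ is not such a decomposition (the operator would already be a value, so reduction would proceed elsewhere), so the offending coincidence does not arise and every mismatched pair genuinely presents disagreeing head constructors (constants, applications with the hole truly on the left, $\ttt{if}$, $g$, $\ttt{sample}_D$, or $\ttt{weight}$), forcing $\dt=\infty$. Checking that this canonical restriction is precisely what the surrounding measurability argument provides, and that it excludes the abstraction coincidence uniformly across all context clauses, is the step I expect to demand the most care.
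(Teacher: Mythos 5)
You have isolated exactly the right pressure point: the paper's entire proof is the sentence ``Analogous to the proof for Lemma~\ref{lemma:econtredex}'', and the analogy strains precisely where you say it does, at the value-analogue of Lemma~\ref{lemma:econtinfty}, because a redex is never an abstraction while a value can be. But your proposed repair---restricting to ``canonical'' evaluation-context decompositions---does not close the gap, and no repair of that shape can, because the statement quantified over arbitrary $\econt_1,\econt_2,\val_1,\val_2$ is false. Concretely, take $\econt_1 = [\cdot] \enspace 0$ with $\val_1 = \lambda x.x$, and $\econt_2 = (\lambda x.x) \enspace [\cdot]$ with $\val_2 = 0$: both fillings yield the same term $(\lambda x.x) \enspace 0$, so the left-hand side is $0$, while $\de(\econt_1,\econt_2) = \infty$ since no clause of $\de$ matches the two shapes. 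Worse for your diagnosis, the abstraction coincidence is not the only failure mode, since constants are values too: with a binary $g$, the contexts $\econt_1 = g([\cdot],2)$, $\val_1 = 1$ and $\econt_2 = g(1,[\cdot])$, $\val_2 = 2$ both fill to $g(1,2)$, giving left-hand side $0$ against $\de(\econt_1,\econt_2) = \infty$ (the holes sit at different argument indices, so again no clause of $\de$ applies). This second example also defeats the canonicity manoeuvre outright: during call-by-value evaluation of a term $g(\term_1,\term_2)$, both $g([\cdot],\term_2)$ and $g(1,[\cdot])$ occur as the genuine operational context at the moment an argument finishes reducing to a value, so ``the decomposition delivered by the semantics'' does not single out one of them---unique leftmost-outermost decomposition is a property of redex fillings, not value fillings.

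What your induction actually proves, with no extra input, is the correct salvage: the inequality $\leq$ is Lemma~\ref{lemma:econtleq}, and the equality holds under the side condition $\de(\econt_1,\econt_2) < \infty$, since finiteness of $\de$ forces the two contexts to match clause-for-clause with holes at the same position, so only your matching-shape cases arise and they close componentwise by the induction hypothesis. Lemma~\ref{lemma:econtredex} is unconditional precisely because a redex is neither an abstraction nor a constant: every shape mismatch then pushes a $\dt$-distance of $\infty$ through the hole, and that is the one structural fact about redexes your value version cannot borrow. So the missing step in your proposal is not a step that more care would supply---in the $\de(\econt_1,\econt_2) = \infty$ cases the claim you set out to prove is false. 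To be fair, the paper's one-line proof by analogy glosses over the same point; the lemma is harmless downstream because the appendix's measurability arguments only ever invoke Lemma~\ref{lemma:econtleq}, Lemma~\ref{lemma:econtsubst}, and Lemma~\ref{lemma:econtredex}, and the equality for values would only be sound in the matching-shape regime anyway.
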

\begin{proof}
  Analogous to the proof for Lemma~\ref{lemma:econtredex}.
\end{proof}

\begin{lemma}\label{lemma:econtsubst}
  $\dt([x \mapsto \val_1]\term_1,[x \mapsto \val_2]\term_2) \leq \dt(\term_1,\term_2) + k \cdot \dt(\val_1,\val_2)$, where $k$ is the maximum of the number of occurrences of $x$ in $\term_1$ and $\term_2$.
\end{lemma}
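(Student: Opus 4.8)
The plan is to proceed by structural induction on $\term_1$, performing a case analysis on the outermost shape of $\term_2$ in each case. The crucial preliminary observation is that $\dt(\term_1,\term_2)$ is finite exactly when $\term_1$ and $\term_2$ share the same syntactic skeleton (they agree everywhere except possibly at the real constants sitting at the leaves), in which case $\dt$ decomposes as the sum of the $\dt$-distances of the corresponding immediate subterms. Whenever the two skeletons differ we have $\dt(\term_1,\term_2)=\infty$, so the right-hand side is $\infty$ and the inequality holds trivially; thus only the matching case requires real work.

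In the matching case, note first that because $\term_1$ and $\term_2$ have the same skeleton, variable names occur in exactly the same positions in both terms (only reals are leaf placeholders, so variables are part of the skeleton). Consequently $x$ occurs the same number of times in $\term_1$ and $\term_2$, and these occurrences are distributed identically across corresponding subterms. Hence $k$ equals this common occurrence count, and when $k$ is split across the immediate subterms the per-subterm maxima simply add up to $k$. This is exactly what makes the inductive step close, and it is the reason the statement can afford a maximum rather than a sum of independent counts.

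Then I would treat the cases. For a constant $\term_i=c_i$ and for $\ttt{resample}$ there are no occurrences of $x$, substitution is the identity, and both sides equal $\dt(\term_1,\term_2)$. For a variable: if it is $x$ then substitution yields $\val_1,\val_2$, giving $\dt(\val_1,\val_2)$ on the left with $k=1$ and $\dt(\term_1,\term_2)=0$; if it is some $y\neq x$ both sides are $0$. For each compound constructor (application, $\ttt{if}$, a primitive $g$, $\ttt{sample}_D$, $\ttt{weight}$), substitution commutes with the constructor, $\dt$ decomposes as the sum over the immediate subterms, and I apply the induction hypothesis to each corresponding pair; summing the resulting bounds and using that the per-subterm occurrence counts add to $k$ yields the claim. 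The $\lambda$-abstraction case is the only one needing care: since distances are taken modulo $\alpha$-equivalence, I first rename the bound variable so that it differs from $x$ and is not free in $\val_1$ or $\val_2$, after which substitution pushes under the binder, the occurrence count is unchanged, and the induction hypothesis applies directly.

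I expect the main obstacle to be bookkeeping rather than anything conceptual: making the occurrence-counting argument precise so that the per-subterm maxima provably sum to $k$ (which relies on the equal-skeleton observation), and handling the degenerate arithmetic $k\cdot\dt(\val_1,\val_2)$ when $x$ is absent. For the latter one can either adopt the convention $0\cdot\infty=0$ used throughout the paper, or, more cleanly, observe that if $x$ occurs in neither term then substitution is the identity and the inequality collapses to $\dt(\term_1,\term_2)\le\dt(\term_1,\term_2)$.
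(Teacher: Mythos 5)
Your proposal is correct and takes essentially the same route as the paper's proof: dispatch the case $\dt(\term_1,\term_2)=\infty$ trivially, then induct on the structure of the two terms, decomposing $\dt$ over corresponding subterms and summing the inductive bounds. If anything you are more careful than the paper, which in the application case asserts $k_1+k_2=k$ ``by definition'' and pushes substitution under the $\lambda$ without comment, whereas your equal-skeleton observation (finite distance forces the free occurrences of $x$ to sit at matching positions, so the per-subterm maxima genuinely sum to $k$) and your explicit capture-avoiding renaming of the bound variable supply exactly the justifications the paper leaves implicit.
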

\begin{proof}
  The result follows immediately if $\dt(\term_1,\term_2) = \infty$.
  Therefore, assume $\dt(\term_1,\term_2) < \infty$.
  We now proceed by induction over the structure of $\term_1$ and $\term_2$.
  \begin{itemize}
    \item
      \textbf{Case %
        $\term_1 = c_1$,
        $\term_2 = c_2$.
      }
      We have
      \begin{equation}
        \dt([x \mapsto \val_1]c_1,[x \mapsto \val_2]c_2) = \dt(c_1,c_2).
      \end{equation}
      The result follows immediately.

    \item
      \textbf{Case %
        $\term_1 = (\lambda x'. \term_1')$,
        $\term_2 = (\lambda x'. \term_2')$.
      }
      By using the induction hypothesis, we have
      \begin{equation}
        \begin{aligned}
          &\dt([x \mapsto \val_1](\lambda x'. \term_1'),
          [x \mapsto \val_2](\lambda x'. \term_2'))
          \\ &= \dt([x \mapsto \val_1]\term_1',
          [x \mapsto \val_2]\term_2') \\ &\leq
          \dt(\term_1', \term_2') + k\cdot \dt(\val_1,\val_2) \\ &=
          \dt(\lambda x'. \term_1', \lambda x'. \term_2') + k\cdot
          \dt(\val_1,\val_2)
        \end{aligned}
      \end{equation}
      because the number of occurrences $k$ of $x$ are the same in $(\lambda x'. \term)$ and $\term$.

    \item
      \textbf{Case %
        $\term_1 = x'$,
        $\term_2 = x'$.
      }
      In this case, we have two subcases: either $x = x'$ or $x \neq x'$.
      In the case $x = x'$,
      \begin{equation}
        \begin{aligned}
          \dt([x \mapsto \val_1]x', [x \mapsto \val_2]x') &=
          \dt(\val_1,\val_2)
        \end{aligned}
      \end{equation}
      and the result follows immediately ($k = 1$).
      In the case $x \neq x'$,
      \begin{equation}
        \begin{aligned}
          \dt([x \mapsto \val_1]x', [x \mapsto \val_2]x') &=
          \dt(x',x') = 0.
        \end{aligned}
      \end{equation}
      and the result follows immediately ($k = 0$).

    \item
      \textbf{Case %
        $\term_1 = \term_1' \enspace \term_2'$,
        $\term_2 = \term_1'' \enspace \term_2''$.
      }
      By using the induction hypothesis, we have
      \begin{equation}
        \begin{aligned}
          \dt([x \mapsto \val_1](\term_1' \enspace \term_2'),
          \quad & \\
          [x \mapsto \val_2](\term_1'' \enspace \term_2''))
          &=
          \dt([x \mapsto \val_1]\term_1' \enspace [x \mapsto \val_1]\term_2',
          \\ & \hspace{11mm}
          [x \mapsto \val_2]\term_1'' \enspace [x \mapsto \val_2]\term_2'')
          \\ &=
          \dt([x \mapsto \val_1]\term_1', [x \mapsto \val_2]\term_1'')
          \\ & \hspace{11mm}
          + \dt([x \mapsto \val_1]\term_2',[x \mapsto \val_2]\term_2'')
          \\ &\leq
          \dt(\term_1',\term_1'') + k_1 \cdot \dt(\val_1,\val_2)
          \\ & \hspace{11mm}
          + \dt(\term_2',\term_2'') + k_2 \cdot \dt(\val_1,\val_2)
        \end{aligned}
      \end{equation}
      where $k_1 + k_2 = k$, by definition.
      Now
      \begin{equation}
        \begin{aligned}
          \dt(\term_1',\term_1'')
          + k_1 \cdot \dt(\val_1,\val_2)
          + \dt(\term_2',\term_2'')
          & \\
          + k_2 \cdot \dt(\val_1,\val_2)
          &=
          \dt(\term_1',\term_1'') + \dt(\term_2',\term_2'')
          \\ & \hspace{2cm}
          + k \cdot \dt(\val_1,\val_2)
          \\ &=
          \dt(\term_1,\term_2) + k \cdot \dt(\val_1,\val_2).
        \end{aligned}
      \end{equation}

    \item
      The remaining cases follow by largely similar arguments.
  \end{itemize}
\end{proof}

\subsection{Preliminaries: Measure Theory and Metric Spaces, Continued}%
\label{sec:prelimscont}

\begin{definition}
  Given a finite set of measurable spaces ${\{(\mathbb{A}_i,\mathcal{A}_i)\}}_{i=1}^n$, we define the \emph{product $\sigma$-algebra} on $\bigtimes_{i=1}^n \mathbb{A}_i$ as
  \begin{equation}
    \bigotimes_{i=1}^n \mathcal{A}_i =
    \sigma\left(\left\{\bigtimes_{i=1}^n A_i \mid A_i \in
    \mathcal{A}_i\right\}\right).
  \end{equation}
  where $\bigtimes$ denotes the usual Cartesian product of sets.
\end{definition}

\begin{lemma}\label{lemma:decompmeas}
Let
\begin{itemize}
  \item
    $(\mathbb{A},\mathcal{A})$ and $(\mathbb{A}',\mathcal{A}')$ be measurable
    spaces,
  \item
    $f : \mathbb{A} \to \mathbb{A}'$ be a function,
  \item
    ${\{A_i\}}_i$ be a countable set of elements of $\mathcal{A}$
    such that $\bigcup_i A_i = \mathbb{A}$,
  \item
    $\mathcal{A}_i = \{ A \in
    \mathcal{A} \mid A \subset A_i \}$, and
  \item
    $f_i = f|_{A_i}$ be the restriction of $f$ to $A_i$.
\end{itemize}
Then $f : (\mathbb{A},\mathcal{A}) \to (\mathbb{A}',\mathcal{A}')$ is measurable iff $f_i : (A_i,\mathcal{A}_i) \to (\mathbb{A}',\mathcal{A}')$ is measurable for each $i$.
\end{lemma}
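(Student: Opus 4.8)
The plan is to prove the two implications separately, both resting on a single set-theoretic identity. First I would record that since $A_i \in \mathcal{A}$, Lemma~\ref{lemma:measrest} guarantees each $(A_i,\mathcal{A}_i)$ is genuinely a measurable space with $\mathcal{A}_i = \{A' \subset A_i \mid A' \in \mathcal{A}\}$, and in particular $\mathcal{A}_i \subset \mathcal{A}$. The key observation I would state up front is that, because $f_i$ is merely $f$ restricted to $A_i$, for every $A' \in \mathcal{A}'$ we have $f_i^{-1}(A') = f^{-1}(A') \cap A_i$.

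For the forward direction, I would assume $f$ measurable and fix $A' \in \mathcal{A}'$. Then $f^{-1}(A') \in \mathcal{A}$, and intersecting with $A_i \in \mathcal{A}$ keeps us inside $\mathcal{A}$; since the result is a subset of $A_i$, it lies in $\mathcal{A}_i$. By the identity above this set is precisely $f_i^{-1}(A')$, so every $f_i$ is measurable.

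For the converse, I would assume each $f_i$ measurable and again fix $A' \in \mathcal{A}'$. Using $\bigcup_i A_i = \mathbb{A}$ together with the identity, I would write
\[
  f^{-1}(A') = \bigcup_i \left(f^{-1}(A') \cap A_i\right) = \bigcup_i f_i^{-1}(A').
\]
Each term lies in $\mathcal{A}_i \subset \mathcal{A}$, and as the cover $\{A_i\}_i$ is countable, the union remains in $\mathcal{A}$; hence $f^{-1}(A') \in \mathcal{A}$ and $f$ is measurable.

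I do not expect a genuine obstacle here. The only two points needing care are invoking Lemma~\ref{lemma:measrest} so as to know that preimages under the $f_i$ already sit inside $\mathcal{A}$ (and not merely inside the smaller $\mathcal{A}_i$), and relying on the countability of $\{A_i\}_i$ in the converse so that the union of preimages stays within the $\sigma$-algebra $\mathcal{A}$. Both hypotheses are supplied explicitly, so the argument is essentially a bookkeeping exercise once the restriction identity is in hand.
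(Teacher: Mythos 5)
Your proof is correct. The paper gives no argument of its own for this lemma, deferring entirely to Billingsley~\cite[Problem 13.1]{billingsley1995probability}, and your argument---the identity $f_i^{-1}(A') = f^{-1}(A') \cap A_i$, closure of $\mathcal{A}$ under finite intersections for the forward direction, and the countable union $f^{-1}(A') = \bigcup_i f_i^{-1}(A')$ with $\mathcal{A}_i \subset \mathcal{A}$ for the converse---is exactly the standard solution that citation points to, so it matches the intended proof.
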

\begin{proof}
  See Billingsley~\cite[Problem 13.1]{billingsley1995probability}.
\end{proof}

\begin{definition}
  Given a finite set of metric spaces ${\{(\mathbb{M}_i,d_i)\}}_{i=1}^n$, we can form the \emph{product metric space}
  \begin{equation}
    \left(\bigtimes_{i=1}^n \mathbb{M}_i, \sum_{i=1}^n d_i\right),
  \end{equation}
  where $\sum_{i=1}^n d_i$ is the Manhattan metric formed from the component metrics $d_i$.
\end{definition}

\begin{lemma}\label{lemma:boreqmeas}
  Let ${\{(\mathbb{M}_i,d_i)\}}_{i=1}^n$ be a set of separable metric spaces.
  Then
  \begin{equation}
    \left(\bigtimes_{i=1}^n \mathbb{M}_i, \sum_{i=1}^n d_i\right)
  \end{equation}
  is a separable metric space, and
  \begin{equation}
    \Bor_{\left(\bigtimes_{i=1}^n \mathbb{M}_i, \sum_{i=1}^n d_i\right)}
    =
    \bigotimes_{i=1}^n \Bor_{(\mathbb{M}_i,d_i)}.
  \end{equation}
\end{lemma}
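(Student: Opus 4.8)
The plan is to establish the two assertions separately: first that the product carries a separable metric, and then that its ball-generated Borel $\sigma$-algebra coincides with the product $\sigma$-algebra $\bigotimes_{i=1}^n \Bor_{(\mathbb{M}_i,d_i)}$. For separability I would take a countable dense subset $Q_i \subseteq \mathbb{M}_i$ for each $i$ (which exists by hypothesis) and check that the finite product $\bigtimes_{i=1}^n Q_i$ is countable and dense in $(\bigtimes_i \mathbb{M}_i, \sum_i d_i)$: given any point and any $\varepsilon > 0$, approximate each coordinate to within $\varepsilon/n$, so that the Manhattan distances sum to less than $\varepsilon$. That $\sum_i d_i$ is itself a metric (valued in $\R_+^*$) is the routine Manhattan-metric verification. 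I would record here the central recurring fact, used repeatedly below, that in a separable metric space every open set is a countable union of open balls; this is precisely why the paper's ball-generated $\sigma$-algebra agrees with the $\sigma$-algebra generated by all open sets.

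For the inclusion $\bigotimes_{i=1}^n \Bor_{(\mathbb{M}_i,d_i)} \subseteq \Bor_{(\bigtimes_i \mathbb{M}_i, \sum_i d_i)}$, I would show each coordinate projection $\pi_i$ is measurable into $(\mathbb{M}_i, \Bor_{(\mathbb{M}_i,d_i)})$. Since $d_i(\pi_i(x),\pi_i(y)) \le \sum_j d_j(x,y)$, each $\pi_i$ is continuous, so the preimage of any open ball of $\mathbb{M}_i$ is open in the product. By separability of the product, that preimage is a countable union of product balls and hence lies in $\Bor_{(\bigtimes_i \mathbb{M}_i, \sum_i d_i)}$; since it suffices to check measurability on the generating balls, each $\pi_i$ is measurable. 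Consequently every generating rectangle $\bigtimes_i A_i = \bigcap_i \pi_i^{-1}(A_i)$ (with $A_i \in \Bor_{(\mathbb{M}_i,d_i)}$) is a finite intersection of measurable sets, and the inclusion follows.

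For the reverse inclusion, I would argue that every open ball of the product metric already lies in $\bigotimes_{i=1}^n \Bor_{(\mathbb{M}_i,d_i)}$, whence $\Bor_{(\bigtimes_i \mathbb{M}_i, \sum_i d_i)} = \sigma(\text{product balls})$ is contained in it. Fixing a center $m = (m_1,\dots,m_n)$, set $h(x) = \sum_i d_i(\pi_i(x), m_i)$. Each summand is the composition of the product-measurable projection $\pi_i$ with the continuous (hence Borel) map $y \mapsto d_i(y, m_i)$, so $h$ is $\bigotimes_i \Bor_{(\mathbb{M}_i,d_i)}$-measurable as a finite sum of measurable $\R_+^*$-valued functions; the ball of radius $r$ about $m$ is then $h^{-1}([0,r))$, a member of $\bigotimes_i \Bor_{(\mathbb{M}_i,d_i)}$.

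The main obstacle is purely that the paper defines the Borel $\sigma$-algebra of a separable metric space via open balls rather than all open sets, so both inclusions must be phrased against ball generators rather than arbitrary open sets; separability is exactly what lets me pass back and forth, since it guarantees that the continuous preimages of balls — which are open — are countable unions of balls and hence measurable. A secondary point requiring a little care is that the component metrics may be $\R_+^*$-valued (as for $d_\Tr$), so I would make sure continuity, and therefore Borel measurability, of $y \mapsto d_i(y, m_i)$ is read in the extended sense; this does not affect the argument.
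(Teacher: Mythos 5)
Your proof is correct, but it takes a different route from the paper in the trivial sense that the paper offers no argument at all: its ``proof'' of Lemma~\ref{lemma:boreqmeas} is a bare citation to Kallenberg~\cite[Lemma 1.2]{kallenberg2002foundations}, which states exactly this product-Borel identity for separable metric spaces. What your reconstruction buys is self-containment plus two refinements that the citation glosses over. First, you work directly against the paper's ball-generated definition of $\Bor_{(\mathbb{M},d)}$, and you correctly identify the pivot: separability gives that every open set is a countable union of open balls (countable dense set plus rational radii), which is what lets you convert ``continuous, hence preimages of balls are open'' into genuine measurability statements, and is also how one sees the paper's definition agrees with the usual open-set Borel $\sigma$-algebra. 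Second, you handle the $\R_+^*$-valued metrics explicitly; this matters here because $d_\Tr$, $d_\T$, and hence $d_\X$ and $d_\mathbb{Y}$ all take the value $\infty$, so the lemma as used in the paper (e.g.\ in Lemmas~\ref{lemma:xsep} and the $\mathbb{Y}$ analogue) is technically outside the literal scope of a statement about ordinary metric spaces, and your remark that $y \mapsto d_i(y,m_i)$ is continuous in the extended sense (locally constant at infinite-distance points, by the triangle inequality) closes that gap. Your treatment of the harder inclusion is also the right move: since a Manhattan-metric ball is not a rectangle, writing it as $h^{-1}([0,r))$ for the product-measurable sum $h(x) = \sum_i d_i(\pi_i(x),m_i)$ sidesteps the need to exhaust it by countable unions of rectangles of rational-radius balls, which would be the other standard argument. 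All the individual steps (denseness of $\bigtimes_i Q_i$ via $\varepsilon/n$ per coordinate, $1$-Lipschitz projections, checking measurability only on generating balls as in Lemma~\ref{lemma:measurablegenerate}, rectangles $\bigtimes_i A_i = \bigcap_i \pi_i^{-1}(A_i)$ generating the product $\sigma$-algebra) check out.
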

\begin{proof}
  See Kallenberg~\cite[Lemma 1.2]{kallenberg2002foundations}.
\end{proof}

\begin{definition}\label{definition:cont}
  Given metric spaces $({\mathbb{M}_1},d_{\mathbb{M}_1})$ and $({\mathbb{M}_2},d_{\mathbb{M}_2})$, a function $f : {\mathbb{M}_1} \to {\mathbb{M}_2}$ is \emph{continuous} if for each $m \in {\mathbb{M}_1}$ and $\varepsilon > 0$, there is a $\delta > 0$ such that for all $m' \in \mathbb{M}_1$, $d_{\mathbb{M}_1}(m,m') < \delta \implies d_{\mathbb{M}_2}(f(m),f(m')) < \varepsilon$.
\end{definition}

\begin{lemma}\label{lemma:contmeas}
  If a function $f : \mathbb{M}_1 \to \mathbb{M}_2$ between metric spaces $(\mathbb{M}_1,d_{\mathbb{M}_1})$ and $(\mathbb{M}_2,d_{\mathbb{M}_2})$ is continuous, it is measurable with respect to the induced Borel $\sigma$-algebras $\Bor_{(\mathbb{M}_1,d_{\mathbb{M}_1})}$ and $\Bor_{(\mathbb{M}_2,d_{\mathbb{M}_2})}$.
\end{lemma}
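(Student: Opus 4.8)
The plan is to combine the standard ``good sets'' reduction with the $\varepsilon$--$\delta$ characterization of continuity. First I would recall that to prove $f : (\mathbb{M}_1, \Bor_{(\mathbb{M}_1,d_{\mathbb{M}_1})}) \to (\mathbb{M}_2, \Bor_{(\mathbb{M}_2,d_{\mathbb{M}_2})})$ is measurable, it suffices to verify that $f^{-1}(A) \in \Bor_{(\mathbb{M}_1,d_{\mathbb{M}_1})}$ only for $A$ ranging over a generating family of $\Bor_{(\mathbb{M}_2,d_{\mathbb{M}_2})}$. Indeed, the collection $\mathcal{G} = \{ A \subset \mathbb{M}_2 \mid f^{-1}(A) \in \Bor_{(\mathbb{M}_1,d_{\mathbb{M}_1})} \}$ is a $\sigma$-algebra, since preimages commute with complementation and countable unions; so if $\mathcal{G}$ contains a generating family, it contains the whole generated $\sigma$-algebra. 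By the definition of the Borel $\sigma$-algebra, the natural generating family is the set of open balls $B_{(\mathbb{M}_2,d_{\mathbb{M}_2})}(r,m)$.

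The crux is therefore to show $f^{-1}(B_{(\mathbb{M}_2,d_{\mathbb{M}_2})}(r,m)) \in \Bor_{(\mathbb{M}_1,d_{\mathbb{M}_1})}$ for every ball, which I would do by proving that this preimage is \emph{open} in $(\mathbb{M}_1, d_{\mathbb{M}_1})$. Fix a ball $B = B_{(\mathbb{M}_2,d_{\mathbb{M}_2})}(r,m)$ and a point $x \in f^{-1}(B)$; then $d_{\mathbb{M}_2}(m, f(x)) < r$, so $\varepsilon := r - d_{\mathbb{M}_2}(m, f(x)) > 0$. By Definition~\ref{definition:cont}, continuity at $x$ yields a $\delta > 0$ with $d_{\mathbb{M}_1}(x,x') < \delta \implies d_{\mathbb{M}_2}(f(x),f(x')) < \varepsilon$, and the triangle inequality then gives $d_{\mathbb{M}_2}(m, f(x')) \le d_{\mathbb{M}_2}(m, f(x)) + d_{\mathbb{M}_2}(f(x), f(x')) < r$. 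Hence $B_{(\mathbb{M}_1,d_{\mathbb{M}_1})}(\delta, x) \subset f^{-1}(B)$, so $f^{-1}(B)$ contains an open ball around each of its points and is open.

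Finally I would use that in a separable metric space every open set lies in the ball-generated $\Bor_{(\mathbb{M}_1,d_{\mathbb{M}_1})}$: fixing a countable dense subset, any open set can be written as a countable union of open balls with rational radii centered at points of that subset, and hence belongs to $\Bor_{(\mathbb{M}_1,d_{\mathbb{M}_1})}$. Combining the three steps, each preimage of a generating ball is open and therefore measurable, so $f$ is measurable.

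The argument is routine and carries no serious obstacle; the only point needing care is this last step, where one must bridge the topological notion ``open'' and membership in $\Bor_{(\mathbb{M}_1,d_{\mathbb{M}_1})}$, because the paper defines the Borel $\sigma$-algebra via open balls rather than via the full topology. Separability of $\mathbb{M}_1$ is exactly what supplies this bridge, and it is available here since $\Bor_{(\mathbb{M},d)}$ is only defined for separable metric spaces. Alternatively, as the paper does for several neighboring lemmas, one could simply cite this as a standard result.
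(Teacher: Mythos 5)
Your proof is correct, but it takes a different route from the paper: the paper's entire proof of Lemma~\ref{lemma:contmeas} is a citation to Kallenberg~\cite[Lemma 1.5]{kallenberg2002foundations}, whereas you supply a self-contained argument. Your three steps are all sound: the good-sets reduction is exactly the paper's own Lemma~\ref{lemma:measurablegenerate} (which you re-derive inline); the $\varepsilon$--$\delta$ argument that $f^{-1}\bigl(B_{(\mathbb{M}_2,d_{\mathbb{M}_2})}(r,m)\bigr)$ is open, via $\varepsilon = r - d_{\mathbb{M}_2}(m,f(x))$ and the triangle inequality, is standard and correct; and the final bridge is the genuinely delicate point, which you handle properly. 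Since the paper defines $\Bor_{(\mathbb{M},d)}$ as generated by open balls rather than by the topology, one must argue that every open subset of $\mathbb{M}_1$ lies in the ball-generated $\sigma$-algebra, and this is where separability is essential (in a non-separable metric space the ball $\sigma$-algebra can be strictly smaller than the topological Borel $\sigma$-algebra); your observation that the paper only defines $\Bor_{(\mathbb{M},d)}$ for separable spaces, so that every open set is a countable union of balls with rational radii centered in a countable dense set, closes this gap cleanly. What the citation buys the paper is brevity; what your proof buys is that this separability subtlety, invisible in the one-line citation, is made explicit and verified against the paper's nonstandard (ball-based) definition of the Borel $\sigma$-algebra.
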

\begin{proof}
  See Kallenberg~\cite[Lemma 1.5]{kallenberg2002foundations}.
\end{proof}

\begin{lemma}\label{lemma:measurablegenerate}
  Let $\mathbf{A} \subset \mathcal{P}(\mathbb{A})$.
  Furthermore, let $(\mathbb{A}',\mathcal{A}')$, and $(\mathbb{A}, \sigma(\mathbf{A}))$ be measurable spaces.
  Then $f : (\mathbb{A}',\mathcal{A}') \to (\mathbb{A}, \sigma(\mathbf{A}))$ is measurable iff $f^{-1}(A) \in \mathcal{A}'$ for each $A \in \mathbf{A}$.
\end{lemma}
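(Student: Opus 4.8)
The plan is to prove the two directions separately, with the forward direction being immediate and the backward direction relying on the standard ``good sets'' argument. For the forward direction ($\Rightarrow$), I would simply observe that $\mathbf{A} \subseteq \sigma(\mathbf{A})$ by the definition of the generated $\sigma$-algebra. Hence, if $f : (\mathbb{A}',\mathcal{A}') \to (\mathbb{A}, \sigma(\mathbf{A}))$ is measurable, then $f^{-1}(A) \in \mathcal{A}'$ for every $A \in \sigma(\mathbf{A})$, and in particular for every $A \in \mathbf{A}$. This direction requires nothing beyond unfolding the definition of measurability.

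For the backward direction ($\Leftarrow$), I would assume $f^{-1}(A) \in \mathcal{A}'$ for each $A \in \mathbf{A}$ and introduce the collection of preimage-measurable sets
\[
  \mathcal{G} = \{ A \subseteq \mathbb{A} \mid f^{-1}(A) \in \mathcal{A}' \}.
\]
The first key step is to verify that $\mathcal{G}$ is itself a $\sigma$-algebra on $\mathbb{A}$. This follows from the elementary commutation properties of preimages: $f^{-1}(\mathbb{A}) = \mathbb{A}' \in \mathcal{A}'$ ensures $\mathcal{G} \neq \emptyset$; the identity $f^{-1}(A^c) = (f^{-1}(A))^c$ together with closure of $\mathcal{A}'$ under complements gives closure of $\mathcal{G}$ under complements; and $f^{-1}(\bigcup_n A_n) = \bigcup_n f^{-1}(A_n)$ together with closure of $\mathcal{A}'$ under countable unions gives closure of $\mathcal{G}$ under countable unions.

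With $\mathcal{G}$ established as a $\sigma$-algebra, the assumption states exactly that $\mathbf{A} \subseteq \mathcal{G}$. Since $\sigma(\mathbf{A})$ is by definition the smallest $\sigma$-algebra containing $\mathbf{A}$, I would conclude $\sigma(\mathbf{A}) \subseteq \mathcal{G}$. Unfolding the definition of $\mathcal{G}$, this means $f^{-1}(A) \in \mathcal{A}'$ for every $A \in \sigma(\mathbf{A})$, which is precisely the measurability of $f$. I do not expect any genuine obstacle: the single point that demands care is the routine verification that $\mathcal{G}$ is closed under the $\sigma$-algebra operations, and even that reduces entirely to the fact that preimages commute with complements and countable unions, so the argument is purely formal.
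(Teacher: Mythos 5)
Your proposal is correct and follows exactly the paper's own argument: the forward direction by $\mathbf{A} \subseteq \sigma(\mathbf{A})$, and the backward direction via the ``good sets'' collection (your $\mathcal{G}$ is the paper's $\mathbf{B}$), shown to be a $\sigma$-algebra containing $\mathbf{A}$ and hence containing $\sigma(\mathbf{A})$. The only difference is that you spell out the preimage verifications that the paper dismisses as ``easy to check,'' which is fine.
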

\begin{proof}
  The ``only if'' part is trivial.
  We now show the ``if'' part.
  Consider the set $\mathbf{B} = \{ A \in \mathcal{P}(\mathbb{A}) \mid f^{-1}(A) \in \mathcal{A}'\}$.
  Obviously, $\mathbf{A} \subset \mathbf{B}$.
  Furthermore, from properties of the preimage, it is easy to check that $\mathbf{B}$ is a $\sigma$-algebra.
  Therefore, $\sigma(\mathbf{A}) \subset \mathbf{B}$, and $f^{-1}(A) \in \mathcal{A'}$ for each $A \in \sigma(\mathbf{A})$.
  Hence, $f$ is measurable.
\end{proof}

\begin{lemma}\label{lemma:compmeasurable}
  Let
  \begin{equation}
    {\{
        f_i : (\mathbb{A},\mathcal{A})
        \to (\mathbb{A}_i, \mathcal{A}_i)
    \}}_{i=1}^n
  \end{equation}
  be a finite set of measurable functions.
  Then
  \begin{equation}
    f = (f_1,f_2,\ldots,f_n) : (\mathbb{A},\mathcal{A}) \to
    \left(\bigtimes_{i=1}^n \mathbb{A}_i,
      \bigotimes_{i=1}^n \mathcal{A}_i \right)
  \end{equation}
  is measurable.
\end{lemma}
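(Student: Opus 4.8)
The plan is to reduce measurability of the tuple map $f=(f_1,\dots,f_n)$ to checking preimages of a convenient generating family, using Lemma~\ref{lemma:measurablegenerate}. By definition, the product $\sigma$-algebra on $\bigtimes_{i=1}^n \mathbb{A}_i$ is generated by the family of measurable rectangles
\[
  \mathbf{R}=\left\{ \bigtimes_{i=1}^n A_i \;\middle|\; A_i\in\mathcal{A}_i \right\},
\]
that is, $\bigotimes_{i=1}^n\mathcal{A}_i=\sigma(\mathbf{R})$. Hence, by Lemma~\ref{lemma:measurablegenerate}, it suffices to show that $f^{-1}(R)\in\mathcal{A}$ for every $R\in\mathbf{R}$; there is no need to argue about arbitrary elements of the product $\sigma$-algebra.

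The key computation is the preimage of a single rectangle. For $R=\bigtimes_{i=1}^n A_i$ with $A_i\in\mathcal{A}_i$, an element $a\in\mathbb{A}$ satisfies $f(a)\in R$ precisely when $f_i(a)\in A_i$ for every $i$. Therefore
\[
  f^{-1}(R)=\bigcap_{i=1}^n f_i^{-1}(A_i).
\]
Each $f_i$ is measurable by assumption, so $f_i^{-1}(A_i)\in\mathcal{A}$, and since $\mathcal{A}$ is a $\sigma$-algebra it is closed under finite intersections; thus $f^{-1}(R)\in\mathcal{A}$. Combining this with the reduction of the previous paragraph yields measurability of $f$.

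The main obstacle is essentially bookkeeping rather than conceptual: one must correctly identify the preimage of a rectangle as the finite intersection of the individual preimages, and invoke the definition of $\bigotimes_{i=1}^n\mathcal{A}_i$ in terms of its generating rectangles so that Lemma~\ref{lemma:measurablegenerate} applies cleanly. No separability or metric structure enters here — in contrast to several of the surrounding lemmas, this is a purely $\sigma$-algebraic argument, and finiteness of the index set is exactly what lets us stay within the closure properties of $\mathcal{A}$ (a countable intersection would work equally well, but is not needed).
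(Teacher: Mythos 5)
Your proof is correct and follows essentially the same route as the paper's: both reduce the problem via Lemma~\ref{lemma:measurablegenerate} to the generating rectangles $\bigtimes_{i=1}^n A_i$, and then use the identity $f^{-1}\left(\bigtimes_{i=1}^n A_i\right) = \bigcap_{i=1}^n f_i^{-1}(A_i)$ together with measurability of the $f_i$ and closure of $\mathcal{A}$ under finite intersections. Your additional remarks (that no metric or separability structure is needed, and that countable intersections would also be fine) are accurate but inessential elaborations of the same argument.
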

\begin{proof}
  By Lemma~\ref{lemma:measurablegenerate}, it suffices to check that $f^{-1}(A_\times) \in \mathcal{A}$, for all $A_\times \in \{\bigtimes_{i=1}^n A_i \mid A_i \in \mathcal{A}_i\}$.
  Hence, for all $A_\times = \bigtimes_{i=1}^n A_i$, by properties of the preimage and the measurability of the $f_i$,
  \begin{equation}
    f^{-1}(A_\times) = f^{-1}\left(\bigtimes_{i=1}^n A_i\right)
    = \bigcap_{i=1}^n f_i^{-1}(A_i)
    \in \mathcal{A}.
  \end{equation}
  The result follows.
\end{proof}

\subsection{The Big-Step Function Induced by a Small-Step Relation.}%
\label{sec:functioninduce}
Assume there is a small-step relation $\rightarrow$ which can be regarded as a measurable function \begin{equation}\label{eq:startmulti} \rightarrow : (A',\mathcal{A}|_{A'}) \to (\mathbb{A},\mathcal{A}), \end{equation} with $A' \in \mathcal{A}$.
We complete this function, forming the function $\mathit{step}_\rightarrow : \mathbb{A} \to \mathbb{A}$.
\begin{definition}
  $
    \mathit{step}_\rightarrow = \ \rightarrow \cup \ \id|_{\mathbb{A} \setminus A'}.
  $
\end{definition}
\begin{lemma}\label{lemma:stepmeasurable}
  $\mathit{step}_\rightarrow : (\mathbb{A},\mathcal{A}) \to
  (\mathbb{A},\mathcal{A})$ is measurable.
\end{lemma}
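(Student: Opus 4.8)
The plan is to reduce measurability of $\mathit{step}_\rightarrow$ to measurability of its two pieces via the gluing result Lemma~\ref{lemma:decompmeas}. The obvious countable (indeed finite) measurable cover of $\mathbb{A}$ to use is $\{A', \mathbb{A}\setminus A'\}$: we have $A' \in \mathcal{A}$ by hypothesis, and hence $\mathbb{A}\setminus A' = (A')^c \in \mathcal{A}$ since $\mathcal{A}$ is a $\sigma$-algebra, and clearly $A' \cup (\mathbb{A}\setminus A') = \mathbb{A}$. By Lemma~\ref{lemma:measrest}, the associated restricted $\sigma$-algebras $\mathcal{A}_{A'} = \{A \in \mathcal{A} \mid A \subset A'\}$ and $\mathcal{A}_{\mathbb{A}\setminus A'}$ coincide with $\mathcal{A}|_{A'}$ and $\mathcal{A}|_{\mathbb{A}\setminus A'}$, so Lemma~\ref{lemma:decompmeas} tells us $\mathit{step}_\rightarrow$ is measurable iff each of its restrictions to these two sets is measurable.

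Next I would identify the two restrictions directly from the definition $\mathit{step}_\rightarrow = \ \rightarrow \cup\ \id|_{\mathbb{A}\setminus A'}$. On $A'$ the function agrees with $\rightarrow$, so $\mathit{step}_\rightarrow|_{A'} = \ \rightarrow$, which is measurable as a map $(A',\mathcal{A}|_{A'}) \to (\mathbb{A},\mathcal{A})$ exactly by the standing assumption~\eqref{eq:startmulti}. On the complement the function is the restricted identity $\id|_{\mathbb{A}\setminus A'}$; for any $A \in \mathcal{A}$ its preimage is $(\id|_{\mathbb{A}\setminus A'})^{-1}(A) = A \cap (\mathbb{A}\setminus A')$, which lies in $\mathcal{A}|_{\mathbb{A}\setminus A'}$ by the description of the restriction from Lemma~\ref{lemma:measrest}. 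Hence both restrictions are measurable with respect to the relevant restricted $\sigma$-algebras.

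Applying the ``if'' direction of Lemma~\ref{lemma:decompmeas} with $n=2$ (i.e.\ the cover $\{A',\mathbb{A}\setminus A'\}$) then immediately yields that $\mathit{step}_\rightarrow : (\mathbb{A},\mathcal{A}) \to (\mathbb{A},\mathcal{A})$ is measurable, completing the proof. I do not anticipate any genuine obstacle here: the only points requiring care are bookkeeping, namely confirming that the index set $\{A',\mathbb{A}\setminus A'\}$ is a legitimate measurable cover and that the restricted $\sigma$-algebras used by Lemma~\ref{lemma:decompmeas} match those appearing in hypothesis~\eqref{eq:startmulti} and in the trivial measurability of the restricted identity. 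Both are handled by Lemma~\ref{lemma:measrest}, so the argument is essentially a one-line application once the decomposition is set up.
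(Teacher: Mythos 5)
Your proof is correct, and it rests on the same decomposition the paper uses---splitting the domain into $A'$ and $\mathbb{A}\setminus A'$---but you route the conclusion through the gluing result Lemma~\ref{lemma:decompmeas} (with Lemma~\ref{lemma:measrest} to identify the restricted $\sigma$-algebras), whereas the paper argues bare-handed: for $A\in\mathcal{A}$ it writes $A=(A\cap A')\cup(A\cap A'^c)$ and computes the preimage $\mathit{step}_\rightarrow^{-1}(A)$ directly, concluding from the measurability of $\rightarrow$ and $\id$. The two routes cost about the same, and yours is arguably the more uniform choice, since the paper itself applies exactly your Lemma~\ref{lemma:decompmeas} pattern to the analogous two-piece functions $\mathit{extract}_{\rightarrow_\textsc{Det}}$ and $\mathit{extract}_{\rightarrow,\text{term}}$. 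Your packaging is also slightly more robust: the paper's displayed chain ends with $\rightarrow^{-1}(A\cap A')\cup\id^{-1}(A\cap A'^c)$, which as written drops the points of $A'$ that $\rightarrow$ sends into $A\cap A'^c$ (the correct identity is $\mathit{step}_\rightarrow^{-1}(A)=\rightarrow^{-1}(A)\cup(A\cap A'^c)$); this is a harmless slip, since each piece lies in $\mathcal{A}$ either way, but your appeal to Lemma~\ref{lemma:decompmeas} performs that bookkeeping automatically and sidesteps it.
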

\begin{proof}
  It holds that $A = (A \cap A') \cup (A \cap A'^c)$ for any $A \in \mathcal{A}$.
  Hence,
  \begin{equation}
    \begin{aligned}
      \mathit{step}_\rightarrow^{-1}(A) &=
      \mathit{step}_\rightarrow^{-1}((A \cap A') \cup (A \cap A'^c)) \\
      &=
      \mathit{step}_\rightarrow^{-1}(A \cap A') \cup
      \mathit{step}_\rightarrow^{-1}(A \cap A'^c) \\
      &=
      \ \rightarrow^{-1}\!\!(A \cap A') \cup
      \id^{-1}(A \cap A'^c) \\
    \end{aligned}
  \end{equation}
  Because $\rightarrow$ and $\id$ are measurable, we have $\mathit{step}_\rightarrow^{-1}(A) \in \mathcal{A}$, as required.
\end{proof}
In the following, we use the notation
\begin{equation}
  \mathit{step}_\rightarrow^n =
  \underbrace{%
    \mathit{step}_\rightarrow
    \circ \mathit{step}_\rightarrow \circ
    \ldots
    \circ \mathit{step}_\rightarrow
  }_\text{$n$ times}
\end{equation}
with $n \in \mathbb{N}_0$.
Next, assume that we have a measurable function $\mathit{extract} : (\mathbb{A},\mathcal{A}) \to (\mathbb{H},\mathcal{H})$.
We require that $\mathbb{H}$ has a bottom element $\bot$ (such that $\{\bot\} \in \mathcal{H}$) and that $\mathbb{H}$ is equipped with a flat partial order $\leq_\mathbb{H}$ (i.e., the smallest partial order with $\bot \leq_\mathbb{H} h$ for all $h \in \mathbb{H}$).
Furthermore, we require that $\mathit{extract}$ has the following property with respect to the function $\mathit{step}_\rightarrow$.
\begin{condition}\label{cond:extractprop}
  $
    \mathit{extract}(a)
    \leq_\mathbb{H} \mathit{extract}(\mathit{step}_\rightarrow(a))
  $
\end{condition}
\begin{lemma}\label{lemma:extractlemma}
  $\mathit{extract}(a) \neq \bot \implies
  \forall n>0. \ \mathit{extract}(a) =
  \mathit{extract}(\mathit{step}_\rightarrow^n(a))$
\end{lemma}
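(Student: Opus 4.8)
The plan is to proceed by induction on $n$, the whole argument hinging on one elementary observation about flat partial orders. By definition, $\leq_\mathbb{H}$ is the smallest partial order with $\bot \leq_\mathbb{H} h$ for every $h \in \mathbb{H}$; the elements of $\mathbb{H} \setminus \{\bot\}$ are therefore pairwise incomparable. Consequently, for any $h_1, h_2 \in \mathbb{H}$, the relation $h_1 \leq_\mathbb{H} h_2$ holds if and only if $h_1 = \bot$ or $h_1 = h_2$. I would record this characterization first, since it converts the inequality supplied by Condition~\ref{cond:extractprop} into a dichotomy: either $\mathit{extract}(a) = \bot$, or $\mathit{extract}(a) = \mathit{extract}(\mathit{step}_\rightarrow(a))$. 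Under our hypothesis $\mathit{extract}(a) \neq \bot$, only the second disjunct survives.

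For the base case $n = 1$, I apply Condition~\ref{cond:extractprop} directly to $a$ to obtain $\mathit{extract}(a) \leq_\mathbb{H} \mathit{extract}(\mathit{step}_\rightarrow(a))$, and the characterization together with $\mathit{extract}(a) \neq \bot$ forces $\mathit{extract}(a) = \mathit{extract}(\mathit{step}_\rightarrow(a))$. For the inductive step, I assume $\mathit{extract}(a) = \mathit{extract}(\mathit{step}_\rightarrow^n(a))$ for some $n \geq 1$; in particular $\mathit{extract}(\mathit{step}_\rightarrow^n(a)) \neq \bot$. Applying Condition~\ref{cond:extractprop} now to the point $\mathit{step}_\rightarrow^n(a)$ gives
\begin{equation}
  \mathit{extract}(\mathit{step}_\rightarrow^n(a))
  \leq_\mathbb{H}
  \mathit{extract}(\mathit{step}_\rightarrow(\mathit{step}_\rightarrow^n(a)))
  = \mathit{extract}(\mathit{step}_\rightarrow^{n+1}(a)).
\end{equation}
Since the left-hand side is not $\bot$, the characterization upgrades this inequality to an equality, and chaining it with the induction hypothesis yields $\mathit{extract}(a) = \mathit{extract}(\mathit{step}_\rightarrow^{n+1}(a))$, closing the induction.

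There is no real obstacle here: the entire content is the lemma about flatness, and the only point that warrants a sentence of care is verifying that the \emph{intermediate} value $\mathit{extract}(\mathit{step}_\rightarrow^n(a))$ is still distinct from $\bot$ when invoking the characterization in the inductive step. This is immediate from the induction hypothesis, which equates it with $\mathit{extract}(a) \neq \bot$. Intuitively, the lemma just says that once $\mathit{extract}$ has committed to a non-bottom value, the monotonicity of Condition~\ref{cond:extractprop} in a flat order prevents it from ever changing again; this is exactly what makes $\mathit{extract}$ well-behaved as the ``big-step'' readout of the iterated small-step function.
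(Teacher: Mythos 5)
Your proof is correct and follows essentially the same route as the paper's: both exploit that in the flat order any non-bottom element is maximal, so Condition~\ref{cond:extractprop} upgrades to an equality, and then close by induction on $n$. The paper merely compresses the inductive step you spell out (including the observation that the intermediate value stays non-bottom) into ``the result now follows by induction.''
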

\begin{proof}
If $\mathit{extract}(a) \neq \bot$, then $\mathit{extract}(a) \leq_\mathbb{H} h$ implies $\mathit{extract}(a) = h$.
From this and by Condition~\ref{cond:extractprop}, we must have $\mathit{extract}(a) = \mathit{extract}(\mathit{step}_\rightarrow(a))$.
The result now follows by induction.
\end{proof}

Now, we make the following definition.
\begin{definition}
  $
    \mathit{final}_{\rightarrow,\mathit{extract}} = \sup
    \{\mathit{extract} \circ \mathit{step}_\rightarrow^n \mid n \in \mathbb{N}_0 \}
  $
  where the supremum is taken with respect to the pointwise order on functions induced by $\leq_\mathbb{H}$.
\end{definition}

\begin{lemma}
  $\mathit{final}_{\rightarrow,\mathit{extract}}$ exists.
\end{lemma}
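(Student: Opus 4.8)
The plan is to reduce the existence of the supremum of the family of functions $\{\mathit{extract}\circ\mathit{step}_\rightarrow^n\}_{n\in\mathbb{N}_0}$ to the existence of a supremum in $\mathbb{H}$ at each individual point, and then to exploit the flatness of $\leq_\mathbb{H}$ together with Condition~\ref{cond:extractprop} and Lemma~\ref{lemma:extractlemma}. Concretely, I would fix $a \in \mathbb{A}$ and consider the sequence $h_n := \mathit{extract}(\mathit{step}_\rightarrow^n(a))$ for $n \in \mathbb{N}_0$. Applying Condition~\ref{cond:extractprop} at the point $\mathit{step}_\rightarrow^n(a)$ gives $h_n \leq_\mathbb{H} h_{n+1}$, and hence $h_i \leq_\mathbb{H} h_j$ whenever $i \leq j$ by transitivity. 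Thus $\{h_n \mid n \in \mathbb{N}_0\}$ is a chain in $(\mathbb{H},\leq_\mathbb{H})$.

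The key step is to observe that every such chain has a supremum, using flatness. Since $\leq_\mathbb{H}$ is flat, any two distinct non-$\bot$ elements are incomparable, so a chain can contain at most one non-$\bot$ element. Therefore either $h_n = \bot$ for all $n$, in which case $\sup_n h_n = \bot$, or there is a least index $m$ with $h_m \neq \bot$; by Lemma~\ref{lemma:extractlemma} applied at $\mathit{step}_\rightarrow^m(a)$ we then have $h_n = h_m$ for all $n \geq m$, so all elements of the chain lie in $\{\bot,h_m\}$ and $\sup_n h_n = h_m$. In both cases the pointwise supremum exists in $\mathbb{H}$, and I would write $\mathit{final}_{\rightarrow,\mathit{extract}}(a)$ for its value, giving a well-defined function $\mathbb{A} \to \mathbb{H}$.

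It then remains to confirm that this pointwise-defined function is genuinely the supremum of $\{\mathit{extract}\circ\mathit{step}_\rightarrow^n\}_n$ in the pointwise order on functions, which is routine: it is an upper bound because $h_n(a) \leq_\mathbb{H} \mathit{final}_{\rightarrow,\mathit{extract}}(a)$ for every $a$ and $n$, and it is the least upper bound because any pointwise upper bound $g$ satisfies $g(a) \geq_\mathbb{H} h_n(a)$ for all $n$, hence $g(a) \geq_\mathbb{H} \sup_n h_n(a) = \mathit{final}_{\rightarrow,\mathit{extract}}(a)$. I do not anticipate a real obstacle: the only delicate point is the claim that a $\leq_\mathbb{H}$-increasing sequence stabilizes, and this is exactly what Lemma~\ref{lemma:extractlemma} delivers once the sequence leaves $\bot$, while flatness rules out the possibility of a chain with two incomparable non-$\bot$ values.
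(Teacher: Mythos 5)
Your proof is correct and follows essentially the same route as the paper's: both use Condition~\ref{cond:extractprop} and Lemma~\ref{lemma:extractlemma} to show the pointwise sequence $\mathit{extract}\circ\mathit{step}_\rightarrow^n(a)$ stabilizes, so the pointwise supremum exists. If anything, your version is slightly more careful---the paper asserts the stabilized function is the \emph{only} upper bound (which fails at points where the limit is $\bot$, since flat upper bounds are then unconstrained), whereas you correctly argue it is the \emph{least} upper bound.
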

\begin{proof}
  Take an arbitrary $a \in \mathbb{A}$.
  From~\eqref{cond:extractprop} and Lemma~\ref{lemma:extractlemma}, we must have
  \begin{equation}
    \exists n \in \mathbb{N}. \ \forall m \geq n. \
    \mathit{extract} \circ \mathit{step}_\rightarrow^n(a)
    = \mathit{extract} \circ \mathit{step}_\rightarrow^m(a) = h(a).
  \end{equation}
  The function mapping $a$ to $h(a)$ for all $a$ is the only upper bound of the set.
  Hence, the supremum necessarily exists.
\end{proof}

\begin{lemma}
  $\mathit{final}_{\rightarrow,\mathit{extract}}
  : (\mathbb{A},\mathcal{A}) \to (\mathbb{H},\mathcal{H})$ is measurable.
\end{lemma}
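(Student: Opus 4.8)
The plan is to reduce measurability of $F := \mathit{final}_{\rightarrow,\mathit{extract}}$ to the measurability of the individual iterates $g_n := \mathit{extract} \circ \mathit{step}_\rightarrow^n$ and then exploit the flatness of $\leq_\mathbb{H}$ to write the preimages of $F$ as countable unions and intersections of the $g_n^{-1}$. First I would note that each $g_n$ is measurable: $\mathit{step}_\rightarrow$ is measurable by Lemma~\ref{lemma:stepmeasurable}, hence so is its $n$-fold composite $\mathit{step}_\rightarrow^n$, and postcomposing with the measurable map $\mathit{extract}$ preserves measurability.

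Next I would record the pointwise behaviour of the sequence $(g_n(a))_n$. Applying Condition~\ref{cond:extractprop} at $\mathit{step}_\rightarrow^n(a)$ gives $g_n(a) \leq_\mathbb{H} g_{n+1}(a)$ for every $n$, so for each fixed $a$ the values form a $\leq_\mathbb{H}$-chain. Since $\leq_\mathbb{H}$ is flat, a chain contains at most one element different from $\bot$; combined with Lemma~\ref{lemma:extractlemma} this yields a dichotomy: either $g_n(a) = \bot$ for all $n$, in which case $F(a) = \bot$, or there is a least index $n$ with $g_n(a) \neq \bot$ and then $g_m(a) = g_n(a) = F(a)$ for all $m \geq n$. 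In particular the supremum defining $F$ is actually attained and equals the eventual constant value of the chain.

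With this characterization I would compute preimages by splitting on $\bot$. For $H \in \mathcal{H}$ with $\bot \notin H$ the two observations above give
\begin{equation}
  F^{-1}(H) = \bigcup_{n \in \mathbb{N}_0} g_n^{-1}(H),
\end{equation}
a countable union of measurable sets: if $F(a) \in H$ then $F(a)$ is a non-bottom value attained by some $g_n$, and conversely any $g_n(a) \in H$ is non-bottom and hence equals $F(a)$ by the stabilization. For the singleton $\{\bot\}$, which is measurable by hypothesis, the equivalence ``$F(a) = \bot$ iff every $g_n(a) = \bot$'' yields
\begin{equation}
  F^{-1}(\{\bot\}) = \bigcap_{n \in \mathbb{N}_0} g_n^{-1}(\{\bot\}),
\end{equation}
a countable intersection of measurable sets. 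For an arbitrary $H$ I would then decompose $H = (H \setminus \{\bot\}) \cup (H \cap \{\bot\})$, where $H \setminus \{\bot\} = H \cap \{\bot\}^c \in \mathcal{H}$, and conclude that $F^{-1}(H) = F^{-1}(H \setminus \{\bot\}) \cup F^{-1}(H \cap \{\bot\})$ is measurable from the two special cases above.

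The main obstacle is the careful bookkeeping around $\bot$: a supremum over the infinite family $\{g_n\}$ is measurable ``for free'' only once it is turned into countable set operations, and this is exactly what flatness of $\leq_\mathbb{H}$ together with the stabilization of Lemma~\ref{lemma:extractlemma} provides. The delicate point is justifying the union formula when $\bot \notin H$ (both inclusions rely on eventual-constancy of the chain) and remembering that $\{\bot\}$ must be treated by an intersection rather than a union.
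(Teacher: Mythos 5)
Your proof is correct and follows essentially the same route as the paper's: measurability of each iterate $\mathit{extract} \circ \mathit{step}_\rightarrow^n$, the identities $F^{-1}(H) = \bigcup_n g_n^{-1}(H)$ for $\bot \notin H$ and $F^{-1}(\{\bot\}) = \bigcap_n g_n^{-1}(\{\bot\})$, and the decomposition $H = (H \setminus \{\bot\}) \cup \{\bot\}$ for the general case. In fact you justify the two preimage identities via flatness and the stabilization from Lemma~\ref{lemma:extractlemma} more explicitly than the paper, which simply asserts them.
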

\begin{proof}
  This proof is based on Borgström et al.~\cite[Lemma 89]{borgstrom2015lambda}.
  Let $f_n = \mathit{extract} \circ \mathit{step}^n_\rightarrow$.
  The function $f_n$ is clearly measurable, since it is a composition of measurable functions ($\mathit{step}^n_\rightarrow$ is measurable as a consequence of Lemma~\ref{lemma:stepmeasurable}).
  Next, let $\sup f_n = \mathit{final}_{\rightarrow,\mathit{extract}}$, and pick an arbitrary $H \in \mathcal{H}$ such that $\bot \not\in H$.
  Then
  \begin{equation}\label{eq:nobot}
    {(\sup f_n)}^{-1}(H) = \bigcup_n f_n^{-1}(H),
  \end{equation}
  which is measurable by definition.
  Also,
  \begin{equation}\label{eq:bot}
    {(\sup f_n)}^{-1}(\{\bot\}) = \bigcap_n f_n^{-1}(\{\bot\})
  \end{equation}
  is also measurable by definition.
  Now assume $\bot \in H$.
  Then
  \begin{equation}
    \begin{aligned}
      {(\sup f_n)}^{-1}(H) &= {(\sup f_n)}^{-1}((H \setminus \{\bot\}) \cup \{\bot\}) \\
      &= {(\sup f_n)}^{-1}(H \setminus \{\bot\}) \cup {(\sup f_n)}^{-1}(\{\bot\}),
    \end{aligned}
  \end{equation}
  which is also measurable by~\eqref{eq:nobot} and~\eqref{eq:bot}.
\end{proof}

We summarize all of the above in the following lemma.
\begin{lemma}\label{lemma:final}
  Given
  \begin{enumerate}
    \item a measurable function
      $\rightarrow : (A',\mathcal{A}|_{A'}) \to (\mathbb{A},\mathcal{A})$, and

    \item
      a measurable space $(\mathbb{H},\mathcal{H})$ equipped with a flat partial order $\leq_\mathbb{H}$ (we require $\{\bot\} \in \mathcal{H}$), and
    \item
      a measurable function $\mathit{extract} : (\mathbb{A},\mathcal{A}) \to (\mathbb{H},\mathcal{H})$, such that for all $a \in \mathbb{A}$,
      \begin{equation}
        \mathit{extract}(a)
        \leq_\mathbb{H} \mathit{extract}(\mathit{step}_\rightarrow(a))
      \end{equation}
      where $\mathit{step}_\rightarrow = \ \rightarrow \cup \ \id|_{\mathbb{A} \setminus A'}$,
  \end{enumerate}
  the function $\mathit{final}_{\rightarrow,\mathit{extract}} = \sup \{\mathit{extract} \circ \mathit{step}_\rightarrow^n \mid n \in \mathbb{N}_0 \} : (A,\mathcal{A}) \to (H,\mathcal{H})$ exists and is measurable
\end{lemma}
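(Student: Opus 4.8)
The plan is to assemble the statement directly from the components developed just above, since its three hypotheses are exactly those needed for the preceding existence and measurability arguments; the lemma is essentially a packaging of those two facts. I would split the proof into \emph{existence} of the pointwise supremum and its \emph{measurability}.

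For existence, fix an arbitrary $a \in \mathbb{A}$ and consider the sequence $n \mapsto \mathit{extract}(\mathit{step}_\rightarrow^n(a))$. By hypothesis (3) this sequence is non-decreasing with respect to $\leq_\mathbb{H}$. Since $\leq_\mathbb{H}$ is flat, a non-decreasing sequence is $\bot$ up to some index and thereafter constant: either every term is $\bot$ (supremum $\bot$), or there is a least $n_0$ with $\mathit{extract}(\mathit{step}_\rightarrow^{n_0}(a)) = h \neq \bot$, and then Lemma~\ref{lemma:extractlemma}, applied to $\mathit{step}_\rightarrow^{n_0}(a)$, forces the value to stay $h$ for all larger indices. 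In either case the supremum over $n \in \mathbb{N}_0$ exists and equals this eventual value, and ranging over all $a$ defines $\mathit{final}_{\rightarrow,\mathit{extract}}$.

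For measurability, write $f_n = \mathit{extract} \circ \mathit{step}_\rightarrow^n$; each $f_n$ is measurable as a composition, because $\mathit{step}_\rightarrow$ is measurable by Lemma~\ref{lemma:stepmeasurable} (hence so is each iterate), and $\mathit{extract}$ is measurable by hypothesis (3). The main obstacle is that the supremum is taken with respect to the flat order $\leq_\mathbb{H}$ rather than the usual order on $\R$, so the standard theorem on suprema of measurable functions does not apply. I would instead compute preimages explicitly using flatness: for $H \in \mathcal{H}$ with $\bot \notin H$ one has $(\sup f_n)^{-1}(H) = \bigcup_n f_n^{-1}(H)$, while $(\sup f_n)^{-1}(\{\bot\}) = \bigcap_n f_n^{-1}(\{\bot\})$, both measurable. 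A general $H \in \mathcal{H}$ is handled by splitting $H = (H \setminus \{\bot\}) \cup (H \cap \{\bot\})$ and taking the union of the corresponding preimages, which is legitimate since $\{\bot\} \in \mathcal{H}$ by hypothesis (2). This yields $(\sup f_n)^{-1}(H) \in \mathcal{A}$ for every $H \in \mathcal{H}$, establishing measurability and completing the lemma.
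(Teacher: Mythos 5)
Your proof is correct and takes essentially the same route as the paper: existence via the stabilization argument based on the monotonicity condition and Lemma~\ref{lemma:extractlemma}, and measurability via the explicit preimage identities $(\sup f_n)^{-1}(H) = \bigcup_n f_n^{-1}(H)$ for $\bot \notin H$, $(\sup f_n)^{-1}(\{\bot\}) = \bigcap_n f_n^{-1}(\{\bot\})$, and splitting a general $H$ as $(H \setminus \{\bot\}) \cup \{\bot\}$. Your two-case analysis (all terms $\bot$ versus eventually constant $h \neq \bot$) is just a slightly more explicit rendering of the paper's existence argument.
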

\subsection{The Measurable Functions $r_\term$ and $f_\term$}%
\label{sec:rfmeasurable}
In this section, we prove that $r_\term$ and $f_\term$ are measurable.
We follow the proof strategy from Borgström et al.~\cite{borgstrom2015lambda}.

\begin{condition}
  We require that, for each identifier $D \in \mathbb{D}$, the function
  \begin{equation}
    F^{-1}_D :
    (\R^{|D|} \times [0,1], \Bor^{|D|} \otimes
    \Bor_{[0,1]})
    \to (\R, \Bor)
  \end{equation}
  is measurable.
\end{condition}

\begin{condition}
  We require that, for each identifier $g \in \mathbb{G}$, the function
  \begin{equation}
    \sigma_g : (\R^{|g|},\Bor^{|g|}) \to (\R, \Bor)
  \end{equation}
  is measurable.
\end{condition}

\begin{definition}
  \begin{equation}
    \begin{aligned}
      \T_{\textsc{App}} &=
      \{
        \econt[(\lambda x. \term) \enspace \val] \mid
        \econt \in \mathbb{E},
        ((\lambda x. \term) \enspace \val) \in \T
      \}
      \\
      \T_{\textsc{Prim}} &=
      \{
        \econt[g(c_1, \ldots, c_{|g|})] \mid
        \econt \in \mathbb{E},
        g \in \mathbb{G},
        (c_1, \ldots, c_{|g|}) \in \R^{|g|}
      \}
      \\
      \T_{\textsc{IfTrue}} &=
      \{
        \econt[\ttt{if} \ \mathit{true} \ \ttt{then} \ \term_1 \ \ttt{else} \ \term_2]
        \mid
        \econt \in \mathbb{E},
        \term_1 \in \T,\term_2 \in \T
      \}
      \\
      \T_{\textsc{IfFalse}} &=
      \{
        \econt[\ttt{if} \ \mathit{false} \ \ttt{then} \ \term_1 \ \ttt{else} \ \term_2]
        \mid
        \econt \in \mathbb{E},
        \term_1 \in \T,\term_2 \in \T
      \}
      \\
      \T_{d} &=
      \T_\textsc{App} \cup
      \T_\textsc{Prim} \cup
      \T_\textsc{IfTrue} \cup
      \T_\textsc{IfFalse}
    \end{aligned}
  \end{equation}
\end{definition}
\begin{lemma}\label{lemma:detsetmeasurable}
    $\T_\textsc{App}$,
    $\T_\textsc{Prim}$,
    $\T_\textsc{IfTrue}$,
    $\T_\textsc{IfFalse}$,
    and $\T_{d}$
    are $\Tcal$-measurable.
\end{lemma}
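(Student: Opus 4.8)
The plan is to argue directly from the explicit description of $\Tcal$: every member of $\Tcal$ has the form $\bigcup_{n\in\mathbb{N}_0}\bigcup_{\term_p^n\in\T_p^n}\term_p^n(B_n)$ with each $B_n\in\Bor^n$. Since each term of $\T$ arises from exactly one shape $\term_p^n$ by filling its $n$ placeholders with reals, the images $\term_p^n(\R^n)$ partition $\T$. Hence it suffices, for each of the four target sets, to exhibit for every shape $\term_p^n$ a Borel set $B_n^{\term_p^n}\subseteq\R^n$ such that $\term_p^n(c_1,\dots,c_n)$ lies in the target set precisely when $(c_1,\dots,c_n)\in B_n^{\term_p^n}$. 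The four sets are then of the required form, hence in $\Tcal$, and $\T_d$ lies in $\Tcal$ as a finite union of measurable sets.

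First I would fix a shape $\term_p^n$ and appeal to the standard unique decomposition of a term into an evaluation context $\econt$ and the subterm $\term'$ occupying its evaluation position, $\term_p^n=\econt[\term']$. The crucial observation is that this decomposition depends only on the shape and not on the constants filling the placeholders: an evaluation context advances past a subterm exactly when that subterm is a value, and in the placeholder calculus value-ness is purely syntactic, since placeholders (which become constants) and $\lambda$-abstractions are values irrespective of the fillings. Consequently both the evaluation position and the top-level constructor of $\term'$ are determined by $\term_p^n$ alone, and moreover any guard or argument sitting in evaluation position is already a value, i.e. a placeholder or a $\lambda$-abstraction.

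With the focus $\term'$ identified I would read off $B_n^{\term_p^n}$ case by case. If $\term'=(\lambda x.\term)\,\val$ then $\term_p^n(c_1,\dots,c_n)\in\T_\textsc{App}$ for every filling, so $B_n^{\term_p^n}=\R^n$; otherwise $B_n^{\term_p^n}=\emptyset$. Likewise, if $\term'$ is a primitive application whose $|g|$ arguments are all placeholders then $B_n^{\term_p^n}=\R^n$ for $\T_\textsc{Prim}$, and $\emptyset$ otherwise. The only value dependence arises for conditionals, since $(\textsc{IfTrue})$ and $(\textsc{IfFalse})$ require the guard to be literally $\true=1$ or $\false=0$: if $\term'=\ttt{if}\ [\cdot]_i\ \ttt{then}\ \term_1\ \ttt{else}\ \term_2$ with the guard the $i$th placeholder, then for $\T_\textsc{IfTrue}$ I take $B_n^{\term_p^n}=\{(c_1,\dots,c_n):c_i=1\}=\R^{i-1}\times\{1\}\times\R^{n-i}$ and for $\T_\textsc{IfFalse}$ the analogous set $\{c_i=0\}$; in all other cases (guard a $\lambda$-abstraction, or $\term'$ not a conditional) the set is $\emptyset$. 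Each such set is a closed subset of $\R^n$, hence Borel, and membership in $\Tcal$ follows.

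I expect the main obstacle to be the rigorous justification that the evaluation-context decomposition is filling-independent, i.e. packaging the unique-decomposition property together with the fact that value-ness of subterms is shape-determined; this is where the one genuine value dependence, in the conditional guard, must be carefully isolated to a single coordinate so that it yields a Borel (indeed closed) slice. Everything else is bookkeeping: once $B_n^{\term_p^n}$ is Borel for every shape, the four sets are visibly of the defining form of $\Tcal$, and closure of the $\sigma$-algebra under finite unions disposes of $\T_d$.
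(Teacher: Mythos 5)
Your proof is correct, and it follows the same basic route as the paper: partition $\T$ into the shape classes $\term_p^n(\R^n)$ and exhibit, for each shape, a Borel set of fillings landing in the target set, so that each set has the defining form $\bigcup_n \bigcup_{\term_p^n} \term_p^n(B_n)$ of an element of $\Tcal$. In fact your version is more precise than the paper's one-line proof, which asserts that all five sets are countable unions of \emph{full} images $\term_p^n(\R^n)$. That claim is literally accurate only for $\T_\textsc{App}$ and $\T_\textsc{Prim}$: since $\true$ and $\false$ are aliases for the constants $1$ and $0$, the guard of a conditional becomes a placeholder in $\T_p$, so $\T_\textsc{IfTrue}$ and $\T_\textsc{IfFalse}$ intersect a shape class in a proper slice $\{c_i = 1\}$ or $\{c_i = 0\}$ rather than the whole class. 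Your isolation of this single value-dependent coordinate, together with the observation that the evaluation-context decomposition and value-ness of subterms are otherwise shape-determined, is exactly the bookkeeping the paper's proof glosses over, and the resulting slices are closed, hence Borel, so measurability of all four sets and of the finite union $\T_d$ follows as you say.
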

\begin{proof}
  We can write all of these sets as countable unions of sets of the form $\term_p^n(\R_n)$.
  Hence, they must be $\Tcal$-measurable.
\end{proof}

\begin{definition}
  \begin{equation}
    \begin{gathered}
      \Tcal_\textsc{App} = \Tcal|_{\T_\textsc{App}} \quad
      \Tcal_\textsc{Prim} = \Tcal|_{\T_\textsc{Prim}} \\
      \Tcal_\textsc{IfTrue} = \Tcal|_{\T_\textsc{IfTrue}} \quad
      \Tcal_\textsc{IfFalse} = \Tcal|_{\T_\textsc{IfFalse}} \\
      \Tcal_d = \Tcal|_{\T_{d}}.
    \end{gathered}
  \end{equation}
\end{definition}
\begin{lemma}
    $\Tcal_\textsc{App}$,
    $\Tcal_\textsc{Prim}$,
    $\Tcal_\textsc{IfTrue}$,
    $\Tcal_\textsc{IfFalse}$,
    and $\Tcal_d$
    are $\sigma$-algebras.
\end{lemma}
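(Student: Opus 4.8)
The plan is to recognize this as an immediate consequence of Lemma~\ref{lemma:measrest} together with Lemma~\ref{lemma:detsetmeasurable}, since all the genuine work has already been done. By Lemma~\ref{lemma:termmeas}, $(\T,\Tcal)$ is a measurable space, so $\Tcal$ is a $\sigma$-algebra on $\T$. By Lemma~\ref{lemma:detsetmeasurable}, each of the five sets $\T_\textsc{App}$, $\T_\textsc{Prim}$, $\T_\textsc{IfTrue}$, $\T_\textsc{IfFalse}$, and $\T_d$ is $\Tcal$-measurable, i.e.\ an element of $\Tcal$.

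The core step is then to invoke Lemma~\ref{lemma:measrest}, which states that whenever $(\mathbb{A},\mathcal{A})$ is a measurable space and $A \in \mathcal{A}$, the pair $(A,\mathcal{A}|_A)$ is again a measurable space. Since $\T_\textsc{App} \in \Tcal$, it follows that $(\T_\textsc{App},\Tcal|_{\T_\textsc{App}})$ is a measurable space, and hence by definition $\Tcal_\textsc{App} = \Tcal|_{\T_\textsc{App}}$ is a $\sigma$-algebra on $\T_\textsc{App}$. I would then apply the identical argument to each of the remaining four sets, each of which lies in $\Tcal$ by Lemma~\ref{lemma:detsetmeasurable}, obtaining that $\Tcal_\textsc{Prim}$, $\Tcal_\textsc{IfTrue}$, $\Tcal_\textsc{IfFalse}$, and $\Tcal_d$ are $\sigma$-algebras as well.

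There is essentially no obstacle here: the lemma is a packaging result rather than a new argument. The only point worth stating carefully is that $\sigma$-algebra-hood of a \emph{restriction} is exactly what ``being a measurable space'' delivers in Lemma~\ref{lemma:measrest}, so no separate verification of the three $\sigma$-algebra axioms for the restricted collections is needed. The substantive content—that each of the five term-sets is itself measurable, established via the countable-union-of-placeholder-images representation used in Lemma~\ref{lemma:detsetmeasurable}—was already discharged, so the proof reduces to one sentence per set, or a single sentence covering all five at once.
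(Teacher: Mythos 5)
Your proposal is correct and matches the paper's proof exactly: the paper also disposes of this lemma by citing Lemma~\ref{lemma:measrest}, with the measurability of the five term-sets supplied by Lemma~\ref{lemma:detsetmeasurable} immediately beforehand. Your write-up merely spells out the same one-line argument in full, which is fine.
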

\begin{proof}
  By Lemma~\ref{lemma:measrest}.
\end{proof}

\begin{lemma}\label{lemma:detinduce}
  \begin{equation}
    \begin{gathered}
      \Bor_{(\T_\textsc{App},\dt)} = \Tcal_\textsc{App} \quad
      \Bor_{(\T_\textsc{Prim},\dt)} = \Tcal_\textsc{Prim} \\
      \Bor_{(\T_\textsc{IfTrue},\dt)} = \Tcal_\textsc{IfTrue} \quad
      \Bor_{(\T_\textsc{IfFalse},\dt)} = \Tcal_\textsc{IfFalse} \\
      \Bor_{(\T_{d},\dt)} = \Tcal_d.
    \end{gathered}
  \end{equation}
\end{lemma}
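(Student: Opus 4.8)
The plan is to obtain all five equalities as instances of the restriction result Lemma~\ref{lemma:metricrest}, applied to the ambient separable metric space $(\T,\dt)$. The one mild notational point is that $\Bor_{(\T_\textsc{App},\dt)}$ should be read as $\Bor_{(\T_\textsc{App},\dt|_{\T_\textsc{App}})}$, and similarly for the other four subsets, since $\dt$ becomes a genuine metric on each subset only after restriction.

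First I would record the hypotheses needed to invoke Lemma~\ref{lemma:metricrest} with $\mathbb{M} = \T$ and $d = \dt$. By Lemma~\ref{lemma:termsep}, $(\T,\dt)$ is a separable metric space, and by Lemma~\ref{lemma:metricequalterm} the induced Borel $\sigma$-algebra satisfies $\Bor_{(\T,\dt)} = \Tcal$. Next, for each of the subsets $M \in \{\T_\textsc{App}, \T_\textsc{Prim}, \T_\textsc{IfTrue}, \T_\textsc{IfFalse}, \T_d\}$, Lemma~\ref{lemma:detsetmeasurable} gives $M \in \Tcal$, hence $M \in \Bor_{(\T,\dt)}$; moreover each $M$ is nonempty, since it contains at least one term of the corresponding shape. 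Thus every $M$ satisfies the hypothesis $\emptyset \neq M \in \Bor_{(\T,\dt)}$ of Lemma~\ref{lemma:metricrest}.

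Now I would apply Lemma~\ref{lemma:metricrest} to each such $M$ in turn. The lemma yields $\Bor_{(M,\dt|_M)} = \Bor_{(\T,\dt)}|_M$, and substituting $\Bor_{(\T,\dt)} = \Tcal$ gives $\Bor_{(M,\dt|_M)} = \Tcal|_M$. Finally, by the definitions $\Tcal_\textsc{App} = \Tcal|_{\T_\textsc{App}}$, $\Tcal_\textsc{Prim} = \Tcal|_{\T_\textsc{Prim}}$, and so on, the right-hand side is exactly the claimed $\sigma$-algebra. Taking $M = \T_\textsc{App}, \T_\textsc{Prim}, \T_\textsc{IfTrue}, \T_\textsc{IfFalse}$, and $\T_d$ successively establishes the five equalities.

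There is essentially no hard step here: the entire content has been pushed into Lemma~\ref{lemma:metricrest} (a topological restriction fact) and into the identification $\Bor_{(\T,\dt)} = \Tcal$ of Lemma~\ref{lemma:metricequalterm}. The only thing requiring attention is bookkeeping, namely checking that each of the five sets is both nonempty and $\Tcal$-measurable so that the restriction lemma applies uniformly, and this is immediate from Lemma~\ref{lemma:detsetmeasurable}.
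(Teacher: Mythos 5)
Your proposal is correct and follows essentially the same route as the paper, whose proof is simply ``By Lemma~\ref{lemma:metricequalterm} and Lemma~\ref{lemma:metricrest}.'' You merely spell out the bookkeeping the paper leaves implicit (separability via Lemma~\ref{lemma:termsep}, measurability of the five sets via Lemma~\ref{lemma:detsetmeasurable}, and their non-emptiness so that Lemma~\ref{lemma:metricrest} applies), which is fine.
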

\begin{proof}
  By Lemma~\ref{lemma:metricequalterm} and Lemma~\ref{lemma:metricrest}.
\end{proof}

\begin{definition}
  \begin{align}
    \mathit{step}_{\textsc{App}}(\econt[(\lambda x. \term) \enspace \val])
    &= \econt[[x \mapsto \val]\term] \\
    \mathit{step}_{\textsc{Prim}}(\econt[g(c_1, \ldots, c_{|g|})])
    &= \econt[\sigma_g(c_1, \ldots, c_{|g|})] \\
    \mathit{step}_{\textsc{IfTrue}}
    (\econt[\ttt{if} \ \mathit{true} \ \ttt{then} \ \term_1 \ \ttt{else} \ \term_2])
    &= \econt[\term_1] \\
    \mathit{step}_{\textsc{IfFalse}}
    (\econt[\ttt{if} \ \mathit{false} \ \ttt{then} \ \term_1 \ \ttt{else} \ \term_2])
    &= \econt[\term_2]
  \end{align}
\end{definition}

\begin{lemma}\label{lemma:detunion}
  $\rightarrow_\textsc{Det} =
  \mathit{step}_{\textsc{App}} \cup
  \mathit{step}_{\textsc{Prim}} \cup
  \mathit{step}_{\textsc{IfTrue}} \cup
  \mathit{step}_{\textsc{IfFalse}}$
\end{lemma}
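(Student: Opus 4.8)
The plan is to establish the identity as an equality of relations (sets of input–output pairs) by matching each inference rule of $\rightarrow_\textsc{Det}$ with the step function of the same name, in both directions. The conceptual core, however, is not the rule matching itself but a \emph{unique decomposition} property for evaluation contexts: every term is either a value or can be written as $\econt[\redex]$ for a unique context $\econt \in \mathbb{E}$ and a unique reducible expression $\redex$. I would first state and prove this by structural induction on $\term$, tracking the evaluation-context grammar of $\mathbb{E}$ (the call-by-value ordering forces, at each syntactic node, a unique choice of which immediate subterm the hole must descend into). This is the step I expect to be the main obstacle, since it is where all the actual content lies; it is nonetheless a routine—if tedious—induction for a calculus of this style.

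Given unique decomposition, two consequences fall out immediately. First, each $\mathit{step}_X$ is a well-defined partial function on its domain $\T_X$: the pattern $\econt[\redex]$ on the left of its defining equation pins down $\econt$ and $\redex$, hence pins down the right-hand side. Second, the four domains $\T_\textsc{App}$, $\T_\textsc{Prim}$, $\T_\textsc{IfTrue}$, $\T_\textsc{IfFalse}$ are pairwise disjoint, because they are separated by the syntactic shape of the unique redex $\redex$—an application of a $\lambda$-abstraction, a fully applied primitive $g$, or a conditional with a $\true$ or $\false$ guard, respectively. Thus the right-hand side of the lemma is a genuine partial function and the union is in fact disjoint.

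For the inclusion $\rightarrow_\textsc{Det} \subseteq \bigcup_X \mathit{step}_X$, I would observe that any derivation of $\term \rightarrow_\textsc{Det} \term'$ concludes with exactly one of the four rules. If it is $(\textsc{App})$, then $\term = \econt[(\lambda x.\,\term'')\,\val]$ and $\term' = \econt[[x \mapsto \val]\term'']$, which is precisely the equation defining $\mathit{step}_\textsc{App}$; the cases $(\textsc{IfTrue})$ and $(\textsc{IfFalse})$ are identical in spirit, and $(\textsc{Prim})$ matches because the side condition $c = \sigma_g(c_1,\ldots,c_{|g|})$ of the rule is exactly the value that $\mathit{step}_\textsc{Prim}$ substitutes into the hole. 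The reverse inclusion is symmetric: for each $\term \in \T_X$, the defining equation of $\mathit{step}_X$ exhibits $\term$ in the form required by rule $(X)$, so $(X)$ derives $\term \rightarrow_\textsc{Det} \mathit{step}_X(\term)$. Combining the two inclusions yields the claimed equality.
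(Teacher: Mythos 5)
Your proposal is correct, and it is an elaboration of exactly the check the paper performs: the paper's entire proof of this lemma is ``By inspection,'' i.e., the observation that the defining equations of the four step functions are verbatim the four rules of $\rightarrow_\textsc{Det}$, which is precisely the two-inclusion argument you spell out. One caveat: your auxiliary decomposition lemma is overstated. It is not true that every term is either a value or of the form $\econt[\redex]$---the calculus is untyped, so there are stuck terms, e.g.\ a free variable $x$, an application $c \enspace \val$ with a constant in head position, $\ttt{if} \ c \ \ttt{then} \ \term_1 \ \ttt{else} \ \term_2$ with $c \notin \{\false, \true\}$, or $g$ applied to a $\lambda$-abstraction; none of these admits any decomposition $\econt[\redex]$. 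The correct (and provable) statement is the uniqueness half alone: if $\term = \econt[\redex] = \econt'[\redex']$ then $\econt = \econt'$ and $\redex = \redex'$. Since your argument only ever invokes uniqueness---for well-definedness of each $\mathit{step}_X$ on its domain and for the pairwise disjointness of $\T_\textsc{App}$, $\T_\textsc{Prim}$, $\T_\textsc{IfTrue}$, $\T_\textsc{IfFalse}$---the proof goes through once the lemma is restated without the existence claim. Note also that disjointness is not actually needed for the stated equality of relations (a union of relations is a relation regardless); the paper defers that observation to the immediately following lemma, where it is used to conclude that $\rightarrow_\textsc{Det}$ is a function, so you are proving slightly more than the statement requires.
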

\begin{proof}
  By inspection.
\end{proof}
\begin{lemma}
  The relation $\rightarrow_\textsc{Det}$ is a function.
\end{lemma}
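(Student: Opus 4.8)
The plan is to show that $\rightarrow_\textsc{Det}$ is single-valued, i.e.\ that $\term \rightarrow_\textsc{Det} \term'$ and $\term \rightarrow_\textsc{Det} \term''$ imply $\term' = \term''$. By Lemma~\ref{lemma:detunion}, $\rightarrow_\textsc{Det}$ is the union of the four partial maps $\mathit{step}_\textsc{App}$, $\mathit{step}_\textsc{Prim}$, $\mathit{step}_\textsc{IfTrue}$, $\mathit{step}_\textsc{IfFalse}$, whose domains are $\T_\textsc{App}$, $\T_\textsc{Prim}$, $\T_\textsc{IfTrue}$, $\T_\textsc{IfFalse}$. A union of partial functions is again a function exactly when each constituent is single-valued and any two of them agree on the overlap of their domains. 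Since the four domains turn out to be pairwise disjoint, it suffices to verify that each $\mathit{step}_X$ is single-valued and that the $\T_X$ do not overlap.

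The crux is a \emph{unique decomposition} lemma: every term $\term$ is either a value, or is stuck, or can be written as $\econt[\redex]$ for a \emph{unique} evaluation context $\econt$ and reducible expression $\redex$. I would prove this by structural induction on $\term$, tracking the grammar of evaluation contexts. The call-by-value context grammar enforces a leftmost-outermost, left-to-right evaluation order: in $\econt\;\term$ the hole is explored before the argument; in $(\lambda x.\term)\;\econt$ the hole is explored only once the operator is a value; in $g(c_1, \ldots, c_m, \econt, \term_{m+2}, \ldots, \term_{|g|})$ only after the preceding arguments are constants; and analogously for $\ttt{sample}_D$, $\ttt{weight}$, and the guard of an $\ttt{if}$. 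At each syntactic node exactly one production of the context grammar applies, so the position of the hole, and hence the pair $(\econt,\redex)$, is forced.

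Given unique decomposition, single-valuedness of each $\mathit{step}_X$ is immediate: the context $\econt$ is fixed, so the only freedom is in the reduct of the redex sitting in the hole, and each redex form has a uniquely determined reduct. For $\textsc{App}$ the reduct $[x \mapsto \val]\term$ is the output of capture-avoiding substitution, which is a function; for $\textsc{Prim}$ it is $\sigma_g(c_1, \ldots, c_{|g|})$, and $\sigma_g$ is a genuine function by assumption; for $\textsc{IfTrue}$ and $\textsc{IfFalse}$ the reducts $\term_1$ and $\term_2$ are syntactic constituents of the redex itself. Disjointness of the $\T_X$ then reduces to inspecting redex shapes: the head redex in the hole has exactly one of the forms $(\lambda x.\term)\;\val$, $g(c_1, \ldots, c_{|g|})$, $\ttt{if}\ \true\ \ldots$, or $\ttt{if}\ \false\ \ldots$, and these are pairwise syntactically distinct (note in particular that $\true$ and $\false$ separate the two $\ttt{if}$ cases), so no term can lie in two domains at once.

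The main obstacle is the unique decomposition lemma, specifically making the induction clean for the multi-argument forms $g(\ldots)$ and $\ttt{sample}_D(\ldots)$, where the context grammar admits a hole only after a prefix of constant arguments. Here I would split on the leftmost non-constant argument: if all arguments are already constants, the application is itself the redex and the context is $[\cdot]$; otherwise the hole must sit inside the leftmost non-constant argument, and the induction hypothesis applied to that argument fixes its decomposition, which in turn fixes the whole $(\econt,\redex)$. The remaining cases are routine inspection of the grammar.
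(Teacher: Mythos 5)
Your proof is correct and follows the same route as the paper's: the paper's entire proof is the observation that $\mathit{step}_{\textsc{App}}$, $\mathit{step}_{\textsc{Prim}}$, $\mathit{step}_{\textsc{IfTrue}}$, and $\mathit{step}_{\textsc{IfFalse}}$ have pairwise disjoint domains, so their union is a function. The only difference is that you additionally make explicit, and sketch a proof of, the unique-decomposition property (each $\term$ is a value, stuck, or equal to $\econt[\redex]$ for exactly one pair $(\econt,\redex)$), which the paper's one-line argument tacitly relies on both for the single-valuedness of each $\mathit{step}$ map and for the disjointness of their domains.
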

\begin{proof}
  The functions
  $\mathit{step}_{\textsc{App}}$,
  $\mathit{step}_{\textsc{Prim}}$,
  $\mathit{step}_{\textsc{IfTrue}}$,
  and $\mathit{step}_{\textsc{IfFalse}}$ have disjoint domains.
  It follows that $\rightarrow_\textsc{Det}$ is a function.
\end{proof}

\begin{lemma}
  $\mathit{step}_\textsc{App} :
  (\T_\textsc{App},\Tcal_\textsc{App})
  \to
  (\T,\Tcal)$ is measurable.
\end{lemma}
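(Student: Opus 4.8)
The plan is to reduce measurability to continuity, exactly as the metric machinery in the appendix is designed for. By Lemma~\ref{lemma:detinduce} we have $\Tcal_\textsc{App} = \Bor_{(\T_\textsc{App},\dt)}$, and by Lemma~\ref{lemma:metricequalterm} we have $\Tcal = \Bor_{(\T,\dt)}$; thus both the domain and codomain $\sigma$-algebras are precisely the Borel $\sigma$-algebras induced by the metric $\dt$ (restricted to $\T_\textsc{App}$ on the domain). Consequently, by Lemma~\ref{lemma:contmeas} it suffices to show that $\mathit{step}_\textsc{App}$ is continuous as a map between the metric spaces $(\T_\textsc{App},\dt)$ and $(\T,\dt)$, and measurability follows immediately.

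To prove continuity, I would fix a point $m = \econt_1[(\lambda x.\term_1)\,\val_1] \in \T_\textsc{App}$ and $\varepsilon > 0$. Because $\dt$ assigns distance $\infty$ to terms of differing syntactic shape, any $m'$ within a finite radius of $m$ shares its shape and therefore admits a matching decomposition $m' = \econt_2[(\lambda x.\term_2)\,\val_2]$; in particular the number $k$ of occurrences of $x$ in $\term_1$ equals the number in $\term_2$, so $k$ is \emph{locally constant}, determined by $m$ alone. Since $(\lambda x.\term_i)\,\val_i$ is a reducible expression $\redex$, Lemma~\ref{lemma:econtredex} together with the definition of $\dt$ on applications and abstractions gives the exact identity $\dt(m,m') = \de(\econt_1,\econt_2) + \dt(\term_1,\term_2) + \dt(\val_1,\val_2)$.

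For the images I would bound the distance from above. By Lemma~\ref{lemma:econtleq}, $\dt(\mathit{step}_\textsc{App}(m),\mathit{step}_\textsc{App}(m')) \le \de(\econt_1,\econt_2) + \dt([x \mapsto \val_1]\term_1,[x \mapsto \val_2]\term_2)$, and Lemma~\ref{lemma:econtsubst} bounds the second summand by $\dt(\term_1,\term_2) + k\cdot\dt(\val_1,\val_2)$. Comparing this with the identity for $\dt(m,m')$ and using $\max(k,1) \ge 1$ on each nonnegative summand yields the local Lipschitz estimate $\dt(\mathit{step}_\textsc{App}(m),\mathit{step}_\textsc{App}(m')) \le \max(k,1)\,\dt(m,m')$. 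Taking $\delta = \varepsilon/\max(k,1)$ then forces $\dt(m,m') < \delta \implies \dt(\mathit{step}_\textsc{App}(m),\mathit{step}_\textsc{App}(m')) < \varepsilon$, establishing continuity at $m$ and hence everywhere on $\T_\textsc{App}$.

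I expect the only real obstacle to be the substitution step: unlike application, context, or abstraction—each of which is componentwise $1$-Lipschitz—the operation $[x \mapsto \val]\term$ duplicates $\val$ once per occurrence of $x$, so the bound of Lemma~\ref{lemma:econtsubst} carries the factor $k$. The crucial observation that defuses this is that $k$ is determined by the fixed shape of $m$ and is therefore constant on a neighborhood of $m$; it can be absorbed into the choice of $\delta$, so pointwise continuity (which is all Lemma~\ref{lemma:contmeas} requires) goes through even though the Lipschitz constant is not uniform over $\T_\textsc{App}$.
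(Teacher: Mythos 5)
Your proposal is correct and follows essentially the same route as the paper's proof: both reduce measurability to continuity via Lemma~\ref{lemma:detinduce} and Lemma~\ref{lemma:contmeas}, then chain Lemma~\ref{lemma:econtleq}, Lemma~\ref{lemma:econtsubst}, and Lemma~\ref{lemma:econtredex} into a pointwise Lipschitz estimate (your constant $\max(k,1)$ versus the paper's $k+1$, an immaterial difference) and choose $\delta$ accordingly. Your explicit observation that $k$ is locally constant---since finite $\dt$-distance forces matching syntactic shape, so $k$ is determined by the fixed point alone---is a nice clarification of a point the paper leaves implicit when it selects $\delta = \varepsilon/(k+1)$.
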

\begin{proof}
  We show that $\mathit{step}_\textsc{App}$ is continuous as a function between
  the metric spaces $(\T_\textsc{App},\dt)$ and $(\T,\dt)$.
  By Lemma~\ref{lemma:detinduce} and Lemma~\ref{lemma:contmeas}, the result then follows.

  Pick arbitrary $\econt[(\lambda x. \term) \enspace \val] \in \T_\textsc{App}$ and $\varepsilon > 0$.
  Following Definition~\ref{definition:cont}, we want to show that there exists a $\delta > 0$ such that for all $\econt'[(\lambda x. \term') \enspace \val'] \in \T_\textsc{App}$,
  \begin{equation}\label{eq:stepappcont}
    \dt(\econt[(\lambda x. \term) \enspace \val],
    \econt'[(\lambda x. \term') \enspace \val']) < \delta
    \implies
    \dt(\econt[[x \mapsto \val]\term],
    \econt'[[x \mapsto \val']\term']) < \varepsilon
  \end{equation}
  By applying Lemma~\ref{lemma:econtleq}, Lemma~\ref{lemma:econtsubst}, and Lemma~\ref{lemma:econtredex} (in that order), we have
  \begin{equation}
    \begin{aligned}
      \dt(\econt[[x \mapsto \val]\term],
      \quad & \\
      \econt'[[x \mapsto \val']\term'])
      &\leq
      \de(\econt,\econt')
      + \dt([x \mapsto \val]\term, [x \mapsto \val']\term')
      \\ &\leq
      \de(\econt,\econt')
      + \dt(\term,\term') + k \cdot \dt(\val,\val')
      \\ &\leq
      (k + 1) \cdot ( \de(\econt,\econt') + \dt(\term,\term') + \dt(\val,\val'))
      \\ &=
      (k + 1) \cdot ( \de(\econt,\econt')
      + \dt((\lambda x. \term) \enspace \val,
      (\lambda x. \term') \enspace \val'))
      \\ &=
      (k + 1) \cdot \dt(\econt[(\lambda x. \term) \enspace
      \val],\econt'[(\lambda x. \term') \enspace \val'])
    \end{aligned}
  \end{equation}
  Hence, we see that by selecting $\delta = \frac{\varepsilon}{k+1}$, we get the implication~\eqref{eq:stepappcont} and the function is continuous, and hence measurable.
\end{proof}

\begin{lemma}\label{lemma:primmeasurable}
  $\mathit{step}_\textsc{Prim} :
  (\T_\textsc{Prim},\Tcal_\textsc{Prim})
  \to
  (\T,\Tcal)$ is measurable.
\end{lemma}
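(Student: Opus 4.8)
The plan is to mirror the structure of the proof just given for $\mathit{step}_\textsc{App}$, but the crucial difference---and the main obstacle---is that $\sigma_g$ is only assumed \emph{measurable}, not continuous. Consequently the metric/continuity argument used for $\mathit{step}_\textsc{App}$ cannot be reused here, since applying $\sigma_g$ to the arguments of a redex may be a genuinely discontinuous operation. I would therefore work with measurable maps on Euclidean space rather than with the metric $\dt$.

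First I would decompose the domain $\T_\textsc{Prim}$ into countably many pieces indexed by syntactic shape. Every element of $\T_\textsc{Prim}$ has the form $\econt[g(c_1,\ldots,c_{|g|})]$, and fixing the primitive $g$ together with the placeholder structure of $\econt$ and the surrounding term determines a placeholder term $\term_p^m \in \T_p^m$, where $m = k + |g|$ counts the $k$ real constants occurring in $\econt$ plus the $|g|$ arguments of $g$. The pieces $A_i = \term_p^m(\R^m)$ are countable in number (as $\T_p$ is countable) and cover $\T_\textsc{Prim}$, so by Lemma~\ref{lemma:decompmeas} it suffices to prove that the restriction of $\mathit{step}_\textsc{Prim}$ to each $A_i$ is measurable, where the restricted $\sigma$-algebra is $\Tcal|_{A_i}$ by Lemma~\ref{lemma:measrest}.

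On a single piece I would transport the problem to Euclidean space via the placeholder bijections (Lemma~\ref{lemma:placeholderbijection}). Letting $u : \R^{k+1} \to \T$ denote the placeholder bijection for the reduced syntactic shape (which has $k+1$ placeholders, the $k$ constants of $\econt$ together with the single result of the primitive), the restricted map factors as
\begin{equation*}
  u \circ \phi \circ (\term_p^m)^{-1},
  \qquad
  \phi(d_1,\ldots,d_k,c_1,\ldots,c_{|g|}) = (d_1,\ldots,d_k,\sigma_g(c_1,\ldots,c_{|g|})).
\end{equation*}
Here $(\term_p^m)^{-1} : (A_i,\Tcal|_{A_i}) \to (\R^m,\Bor^m)$ is measurable because $\Tcal_{\term_p^m}$ is, by definition, the pushforward of $\Bor^m$ along $\term_p^m$ (and one checks routinely that $\Tcal|_{A_i} = \Tcal_{\term_p^m}$), while $u$ is measurable into $(\T,\Tcal)$ since the preimage of each generator $\term_p^n(B_n)$ is either empty or a Borel box, which suffices by Lemma~\ref{lemma:measurablegenerate}.

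It then remains to show that $\phi$ is measurable, and this is where the measurability hypothesis on $\sigma_g$ is used. Each of its first $k$ coordinates is a projection, hence continuous and therefore measurable by Lemma~\ref{lemma:contmeas}, and its final coordinate is $\sigma_g$ precomposed with the projection onto the last $|g|$ coordinates, which is measurable by the assumed measurability of $\sigma_g$. By Lemma~\ref{lemma:compmeasurable} the tuple $\phi$ is measurable, and composing the three measurable maps yields measurability of $\mathit{step}_\textsc{Prim}$ on each piece, hence on all of $\T_\textsc{Prim}$. The only subtle point is precisely the detour through Euclidean measurability forced by the possible discontinuity of $\sigma_g$; the remaining bookkeeping (identifying the reduced target shape and matching the restricted $\sigma$-algebra with the pushforward one) is routine.
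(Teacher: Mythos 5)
Your proof is correct, but it takes a genuinely different route from the paper's. The paper keeps everything at the level of terms and contexts: it defines $\mathit{unbox}(\econt[g(c_1,\ldots,c_{|g|})]) = (c_1,\ldots,c_{|g|})$ and $\mathit{box}_\econt(c) = \econt[c]$, proves both \emph{continuous} via the metric identities for contexts (Lemma~\ref{lemma:econtredex}), writes $\mathit{step}_\textsc{Prim} = \bigcup_{\econt \in \mathbb{E}} \bigcup_{g \in \mathbb{G}} \mathit{box}_\econt \circ \sigma_g \circ \mathit{unbox}$, and then invokes Lemma~\ref{lemma:decompmeas}, asserting that $\mathbb{E}$ and $\mathbb{G}$ are countable. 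You instead decompose by countable placeholder shapes $\term_p^m$ and transport each piece to $(\R^m,\Bor^m)$ via the placeholder bijections, doing all the analytic work in Euclidean space. Your instinct that $\sigma_g$'s mere measurability blocks a continuity argument is exactly right and matches the paper, which also uses $\sigma_g$ only as a measurable function while reserving continuity for the box/unbox bookkeeping. What your route buys is notable: the paper's countability claim for $\mathbb{E}$ is, as literally stated, false --- evaluation contexts contain arbitrary real constants and arbitrary subterms, so $\mathbb{E}$ has the cardinality of the continuum --- and Lemma~\ref{lemma:decompmeas} genuinely requires a countable cover. Your shape-indexed decomposition (the same device the paper itself uses in Lemma~\ref{lemma:detsetmeasurable}) is the natural repair, so your argument is more rigorous at precisely the point where the paper's proof is sloppy; what the paper's route buys is brevity and uniform reuse of the metric machinery already set up for $\mathit{step}_\textsc{App}$. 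Two small wrinkles in yours, neither a gap: the redex arguments need not occupy the \emph{last} $|g|$ placeholder positions of $\term_p^m$, so $\phi$ should apply $\sigma_g$ to the appropriate coordinate subset and insert the result at the correct position --- a harmless measurable (indeed continuous) coordinate permutation --- and your identification $\Tcal|_{A_i} = \Tcal_{\term_p^n}$ does check out because distinct shapes have disjoint images.
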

\begin{proof}
  Define
  \begin{equation}
    \begin{aligned}
      \mathit{unbox}(\econt[g(c_1, \ldots, c_{|g|})])
      &= (c_1,\ldots,c_{|g|}) \\
      \mathit{box}_{\econt}(c)
      &= (\econt[c]).
    \end{aligned}
  \end{equation}
  For any $\econt'[g(c_1', \ldots, c_{|g|}')] \in \T_\textsc{Prim}$, by Lemma~\ref{lemma:econtredex}, we have
  \begin{equation}
    \begin{aligned}
      \drn{|g|}((c_1, \ldots, c_{|g|}),
      \quad & \\
      (c_1', \ldots, c_{|g|}'))
      &=
      \dr(c_1,c_2) + \cdots + \dr(c_{|g|},c_{|g|}')
      \\ &\leq
      \de(\econt,\econt') + \dt(c_1,c_2) + \cdots + \dt(c_{|g|},c_{|g|}')
      \\ &=
      \de(\econt[g(c_1, \ldots, c_{|g|})], \econt[g(c_1, \ldots, c_{|g|})])
    \end{aligned}
  \end{equation}
  From this, it follows that $\mathit{unbox}$ is continuous (set $\delta = \varepsilon$) and hence measurable.
  Furthermore,
  \begin{equation}
    \dt(\econt[c],\econt[c'])
    =
    \de(\econt,\econt) + \dt(c,c')
    =
    \dt(c,c'),
  \end{equation}
  implying that $\mathit{box}_\econt$ is continuous (set $\delta = \varepsilon$) and measurable as well.

  Lastly, we have
  \begin{equation}
    \mathit{step}_\textsc{Prim} =
    \bigcup_{\econt \in \mathbb{E}}
    \bigcup_{g \in \mathbb{G}} \mathit{box}_\econt \circ \sigma_g \circ
    \mathit{unbox}.
  \end{equation}
  It holds that $\mathit{box}_\econt \circ \sigma_g \circ \mathit{unbox}$ is measurable (composition of measurable functions) for each $g$ and $\econt$.
  Because $\mathbb{E}$ and $\mathbb{G}$ are countable, by Lemma~\ref{lemma:decompmeas}, $\mathit{step}_\textsc{Prim}$ is measurable.
\end{proof}

\begin{lemma}\label{lemma:contiftrue}
  $\mathit{step}_\textsc{IfTrue} :
  (\T_\textsc{IfTrue},\Tcal_\textsc{IfTrue})
  \to
  (\T,\Tcal)$ is measurable.
\end{lemma}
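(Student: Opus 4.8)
The plan is to follow exactly the strategy used for $\mathit{step}_\textsc{App}$ and $\mathit{step}_\textsc{Prim}$: rather than arguing measurability directly, I would show that $\mathit{step}_\textsc{IfTrue}$ is continuous as a map between the metric spaces $(\T_\textsc{IfTrue},\dt)$ and $(\T,\dt)$, and then conclude via Lemma~\ref{lemma:detinduce} (which identifies $\Bor_{(\T_\textsc{IfTrue},\dt)}$ with $\Tcal_\textsc{IfTrue}$) together with Lemma~\ref{lemma:contmeas} (continuity implies measurability). This case is in fact easier than the $\textsc{App}$ case, since reducing an $\ttt{if}\ \true$ redex performs no substitution, so no appeal to Lemma~\ref{lemma:econtsubst} is needed.

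For the continuity argument I would fix an arbitrary $\econt[\ttt{if}\ \true\ \ttt{then}\ \term_1\ \ttt{else}\ \term_2] \in \T_\textsc{IfTrue}$ and $\varepsilon > 0$, and take any competing $\econt'[\ttt{if}\ \true\ \ttt{then}\ \term_1'\ \ttt{else}\ \term_2'] \in \T_\textsc{IfTrue}$. Since $\ttt{if}\ \true\ \ttt{then}\ \term_1\ \ttt{else}\ \term_2$ is a reducible expression $\redex$ by definition, Lemma~\ref{lemma:econtredex} gives the \emph{exact} decomposition of the input distance, and using $\dt(\true,\true) = 0$ together with the definition of $\dt$ on $\ttt{if}$-expressions this becomes
\begin{equation}
  \begin{aligned}
    \dt(\econt[\ttt{if}\ \true\ \ttt{then}\ \term_1\ \ttt{else}\ \term_2],
    \quad & \\
    \econt'[\ttt{if}\ \true\ \ttt{then}\ \term_1'\ \ttt{else}\ \term_2'])
    &= \de(\econt,\econt') + \dt(\term_1,\term_1') + \dt(\term_2,\term_2').
  \end{aligned}
\end{equation}
On the output side, Lemma~\ref{lemma:econtleq} bounds the distance between the reducts by
\begin{equation}
  \dt(\econt[\term_1], \econt'[\term_1']) \leq \de(\econt,\econt') + \dt(\term_1,\term_1').
\end{equation}

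Comparing the two displays shows the output distance is bounded above by the input distance, the only difference being the extra nonnegative summand $\dt(\term_2,\term_2')$ on the right. Hence $\mathit{step}_\textsc{IfTrue}$ is $1$-Lipschitz, and taking $\delta = \varepsilon$ verifies the continuity condition of Definition~\ref{definition:cont}; measurability then follows as above. I do not expect any genuine obstacle here. The only point requiring care is to confirm that $\ttt{if}\ \true\ \ttt{then}\ \term_1\ \ttt{else}\ \term_2$ falls under the $\redex$ grammar, so that the exact identity of Lemma~\ref{lemma:econtredex} (rather than merely the inequality of Lemma~\ref{lemma:econtleq}) is available on the input side — which it is by the definition of $\redex$. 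The proof for $\mathit{step}_\textsc{IfFalse}$ would then be completely analogous, discarding $\term_1$ instead of $\term_2$.
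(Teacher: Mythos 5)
Your proposal is correct and follows essentially the same route as the paper: the paper likewise proves continuity via the chain $\dt(\econt[\term_1],\econt'[\term_1']) \leq \de(\econt,\econt') + \dt(\term_1,\term_1') \leq \de(\econt,\econt') + \dt(\term_1,\term_1') + \dt(\term_2,\term_2') = \dt(\econt[\ttt{if}\ \true\ \ttt{then}\ \term_1\ \ttt{else}\ \term_2], \econt'[\ttt{if}\ \true\ \ttt{then}\ \term_1'\ \ttt{else}\ \term_2'])$, takes $\delta = \varepsilon$, and concludes by Lemmas~\ref{lemma:detinduce} and~\ref{lemma:contmeas}. The only cosmetic difference is that you explicitly cite Lemma~\ref{lemma:econtredex} (after checking the $\ttt{if}\ \true$ term is a redex) to justify the final exact equality, which the paper uses implicitly.
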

\begin{proof}
  We show that $\mathit{step}_\textsc{IfTrue}$ is continuous as a function between the metric spaces $(\T_\textsc{IfTrue},\dt)$ and $(\T,\dt)$.
  By Lemma~\ref{lemma:detinduce} and Lemma~\ref{lemma:contmeas}, the result then follows.

  Pick arbitrary $\econt[\ttt{if } \mathit{true} \ttt{ then } \term_1 \ttt{ else } \term_2] \in \T_\textsc{IfTrue}$ and $\varepsilon > 0$.
  Following Definition~\ref{definition:cont}, we want to show that there exists a $\delta > 0$ such that for all $\econt'[\ttt{if } \mathit{true} \ttt{ then } \term_1' \ttt{ else } \term_2'] \in \T_\textsc{IfTrue}$,
  \begin{equation}\label{eq:contiftrue}
    \begin{aligned}
      &
      \dt(
      \econt[\ttt{if} \ \mathit{true} \ \ttt{then} \ \term_1 \ \ttt{else} \ \term_2],
      \econt'[\ttt{if} \ \mathit{true} \ \ttt{then} \ \term_1' \ \ttt{else} \ \term_2'])
      < \delta
      \\ & \quad
      \implies
      \dt(\econt[\term_1], \econt'[\term_1'])
      < \varepsilon
    \end{aligned}
  \end{equation}
  We have
  \begin{equation}
    \begin{aligned}
      \dt(\econt[\term_1], \econt'[\term_1'])
      &\leq
      \de(\econt, \econt') + \dt(\term_1,\term_1')
      \\ &\leq
      \de(\econt, \econt') + \dt(\term_1,\term_1') + \dt(\term_2,\term_2')
      \\ &=
      \dt(
      \econt[\ttt{if} \ \mathit{true} \ \ttt{then} \ \term_1 \ \ttt{else} \ \term_2],
      \\ & \hspace{15mm}
      \econt'[\ttt{if} \ \mathit{true} \ \ttt{then} \ \term_1' \ \ttt{else} \ \term_2'])
    \end{aligned}
  \end{equation}
  Hence, we see that by selecting $\delta = \varepsilon$, we get the implication~\eqref{eq:contiftrue} and the function is continuous, and hence measurable.
\end{proof}

\begin{lemma}
  $\mathit{step}_\textsc{IfFalse} :
  (\T_\textsc{IfFalse},\Tcal_\textsc{IfFalse})
  \to
  (\T,\Tcal)$ is measurable.
\end{lemma}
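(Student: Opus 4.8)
The plan is to mirror the proof of Lemma~\ref{lemma:contiftrue} (the \textsc{IfTrue} case) essentially verbatim, since the two situations are entirely symmetric: the only structural difference is that $\mathit{step}_\textsc{IfFalse}$ extracts the \emph{else}-branch $\term_2$ rather than the \emph{then}-branch $\term_1$. As before, I would establish measurability by proving that $\mathit{step}_\textsc{IfFalse}$ is \emph{continuous} as a function between the metric spaces $(\T_\textsc{IfFalse},\dt)$ and $(\T,\dt)$, and then appeal to Lemma~\ref{lemma:detinduce} (which identifies $\Bor_{(\T_\textsc{IfFalse},\dt)}$ with $\Tcal_\textsc{IfFalse}$) together with Lemma~\ref{lemma:contmeas} (continuous maps are Borel-measurable) to transfer continuity into the desired measurability statement.

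For the continuity step, I would fix an arbitrary $\econt[\ttt{if} \ \mathit{false} \ \ttt{then} \ \term_1 \ \ttt{else} \ \term_2] \in \T_\textsc{IfFalse}$ and $\varepsilon > 0$, and show that $\delta = \varepsilon$ works. The key chain of inequalities is obtained by first applying Lemma~\ref{lemma:econtleq} to split the context and the extracted term, and then simply padding with the nonnegative term $\dt(\term_1,\term_1')$:
\begin{equation}
  \begin{aligned}
    \dt(\econt[\term_2], \econt'[\term_2'])
    &\leq \de(\econt,\econt') + \dt(\term_2,\term_2') \\
    &\leq \de(\econt,\econt') + \dt(\term_1,\term_1') + \dt(\term_2,\term_2') \\
    &= \dt(\econt[\ttt{if} \ \mathit{false} \ \ttt{then} \ \term_1 \ \ttt{else} \ \term_2],
         \econt'[\ttt{if} \ \mathit{false} \ \ttt{then} \ \term_1' \ \ttt{else} \ \term_2']).
  \end{aligned}
\end{equation}
The final equality uses the definition of $\dt$ on \ttt{if}-terms, which sums the distances of the three subterms (and here the guard distance is $0$, both guards being $\mathit{false}$). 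Thus whenever the left-hand distance in the hypothesis is below $\delta = \varepsilon$, the distance between the outputs is below $\varepsilon$, giving continuity.

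I do not anticipate any genuine obstacle here, since the argument is a direct transcription of the \textsc{IfTrue} proof. The only point requiring mild care is the padding step: unlike in the \textsc{App} or \textsc{Prim} cases, no Lipschitz constant is needed because dropping the discarded branch can only decrease the distance, so the bound is an inequality with constant $1$ and $\delta = \varepsilon$ suffices. Everything else—the applicability of Lemma~\ref{lemma:econtleq}, and the reduction from continuity to measurability—is inherited from the preceding lemmas without modification.
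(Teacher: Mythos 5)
Your proposal is correct and matches the paper exactly: the paper's own proof of this lemma is literally ``Analogous to the proof of Lemma~\ref{lemma:contiftrue}'', and your write-up is precisely that analogous continuity argument (with $\delta = \varepsilon$, padding by $\dt(\term_1,\term_1')$, and concluding via Lemma~\ref{lemma:detinduce} and Lemma~\ref{lemma:contmeas}). The only cosmetic remark is that the final equality in your chain implicitly uses Lemma~\ref{lemma:econtredex} in addition to the definition of $\dt$ on \ttt{if}-terms, but the paper's \textsc{IfTrue} proof leaves this equally implicit.
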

\begin{proof}
  Analogous to the proof of Lemma~\ref{lemma:contiftrue}
\end{proof}

\begin{lemma}
  $\rightarrow_\textsc{Det} : (\T_{d},\Tcal_d) \to (\T,\Tcal)$ is measurable.
\end{lemma}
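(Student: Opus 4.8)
The plan is to reduce the measurability of $\rightarrow_\textsc{Det}$ to that of its four constituent step functions—already established in the preceding lemmas—by invoking the decomposition lemma (Lemma~\ref{lemma:decompmeas}). First I would observe that $\T_d = \T_\textsc{App} \cup \T_\textsc{Prim} \cup \T_\textsc{IfTrue} \cup \T_\textsc{IfFalse}$ is a finite, and hence countable, cover of the domain by $\Tcal$-measurable sets (Lemma~\ref{lemma:detsetmeasurable}), so the hypotheses of Lemma~\ref{lemma:decompmeas} on the index family are met.

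Next I would identify the restriction of $\rightarrow_\textsc{Det}$ to each cover element. By Lemma~\ref{lemma:detunion}, $\rightarrow_\textsc{Det} = \mathit{step}_\textsc{App} \cup \mathit{step}_\textsc{Prim} \cup \mathit{step}_\textsc{IfTrue} \cup \mathit{step}_\textsc{IfFalse}$, and since these four functions have pairwise disjoint domains (the very fact that makes $\rightarrow_\textsc{Det}$ a well-defined function), the restriction of $\rightarrow_\textsc{Det}$ to $\T_\textsc{App}$ is exactly $\mathit{step}_\textsc{App}$, and analogously for the other three cases.

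The only technical point to verify is that the $\sigma$-algebra on each cover element required by Lemma~\ref{lemma:decompmeas}, namely $\{ T \in \Tcal_d \mid T \subset \T_\textsc{App} \}$, agrees with the $\sigma$-algebra $\Tcal_\textsc{App} = \Tcal|_{\T_\textsc{App}}$ used in the step lemmas. By Lemma~\ref{lemma:measrest} the former equals $\Tcal_d|_{\T_\textsc{App}} = (\Tcal|_{\T_d})|_{\T_\textsc{App}}$, which coincides with $\Tcal|_{\T_\textsc{App}}$ since $\T_\textsc{App} \subset \T_d$ (restriction of a $\sigma$-algebra is transitive). With the measurable spaces aligned in each of the four cases, the already-proven measurability of each $\mathit{step}$ function supplies precisely the hypothesis of Lemma~\ref{lemma:decompmeas}, and I would conclude that $\rightarrow_\textsc{Det} : (\T_d,\Tcal_d) \to (\T,\Tcal)$ is measurable.

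There is essentially no difficult step here; the result is a bookkeeping assembly of the four cases handled individually above. The only mild obstacle is notational, ensuring the restricted $\sigma$-algebras coincide, and this is dispatched immediately by the transitivity of restriction.
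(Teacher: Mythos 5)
Your proof is correct and follows exactly the paper's approach: the paper's own proof is a one-line appeal to Lemma~\ref{lemma:detunion} and Lemma~\ref{lemma:decompmeas}, and you have simply spelled out the bookkeeping (disjoint measurable cover, agreement of the restricted $\sigma$-algebras) that the paper leaves implicit.
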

\begin{proof}
  Follows from Lemma~\ref{lemma:detunion} and Lemma~\ref{lemma:decompmeas}.
\end{proof}

Let us now make the following definition
\begin{definition}
  $
    \mathit{extract}_{\rightarrow_\textsc{Det}}(\term) =
    \begin{cases}
      \term & \text{if } \term \not\in \T_d \\
      \bot & \text{otherwise}.
    \end{cases}
  $
\end{definition}
With $\T_\bot = \T \cup \{\bot\}$, and $\Tcal_\bot$ the corresponding least $\sigma$-algebra such that $\Tcal \subset \Tcal_\bot$ (which must necessarily contain $\{\bot\}$), we have the following lemma.
\begin{lemma}
  $\mathit{extract}_{\rightarrow_\textsc{Det}} : (\T,\Tcal) \to
  (\T_\bot,\Tcal_\bot)$
  is measurable.
\end{lemma}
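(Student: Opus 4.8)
The plan is to invoke Lemma~\ref{lemma:measurablegenerate}, which lets us verify measurability by checking preimages only on a generating collection of the codomain $\sigma$-algebra. Since $\Tcal_\bot$ is by definition the least $\sigma$-algebra on $\T_\bot$ containing $\Tcal$, we have $\Tcal_\bot = \sigma(\Tcal)$ (computed inside $\T_\bot$); note that $\{\bot\} = \T_\bot \setminus \T$ is then automatically a member, since $\T \in \Tcal$. Thus it suffices to take $\mathbf{A} = \Tcal$ as the generating collection and verify that $\mathit{extract}_{\rightarrow_\textsc{Det}}^{-1}(T) \in \Tcal$ for every $T \in \Tcal$.

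First I would compute this preimage explicitly. Fix $T \in \Tcal$, so $T \subseteq \T$ and in particular $\bot \notin T$. A term $\term$ satisfies $\mathit{extract}_{\rightarrow_\textsc{Det}}(\term) \in T$ exactly when $\term \notin \T_d$ (so that the function returns $\term$ itself) and $\term \in T$; the branch returning $\bot$ never lands in $T$. Hence
\begin{equation}
  \mathit{extract}_{\rightarrow_\textsc{Det}}^{-1}(T) = T \cap \T_d^c.
\end{equation}
Next I would appeal to Lemma~\ref{lemma:detsetmeasurable}, which states that $\T_d$ is $\Tcal$-measurable. Consequently $\T_d^c \in \Tcal$, and since $T \in \Tcal$ we obtain $T \cap \T_d^c \in \Tcal$. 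By Lemma~\ref{lemma:measurablegenerate}, this establishes measurability of $\mathit{extract}_{\rightarrow_\textsc{Det}}$.

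There is essentially no hard step here: the entire argument rests on the already-established fact that $\T_d \in \Tcal$ (Lemma~\ref{lemma:detsetmeasurable}), together with the observation that the $\bot$-valued branch corresponds precisely to $\T_d$. The only point that warrants care is confirming that $\Tcal$ (equivalently $\Tcal \cup \{\{\bot\}\}$) generates $\Tcal_\bot$, so that checking preimages of members of $\Tcal$ alone is enough; this is immediate from the definition of $\Tcal_\bot$. As an alternative route one could instead partition $\T = \T_d \cup \T_d^c$ and apply Lemma~\ref{lemma:decompmeas}, noting that $\mathit{extract}_{\rightarrow_\textsc{Det}}$ is the constant map $\bot$ on $\T_d$ and the measurable inclusion $\T_d^c \hookrightarrow \T_\bot$ on the complement.
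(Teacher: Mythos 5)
Your proof is correct, but your main argument takes a genuinely different route from the paper's. The paper partitions the domain as $\T = \T_d \cup \T_d^c$, writes $\mathit{extract}_{\rightarrow_\textsc{Det}} = \id|_{\T_d^c} \cup \bot|_{\T_d}$ (with $\bot$ the constant function), and invokes the gluing result Lemma~\ref{lemma:decompmeas} together with the measurability of $\id$, of the constant map, and of $\T_d$ --- this is exactly the ``alternative route'' you sketch in your final sentence. Your primary argument instead works on the codomain side: you observe $\Tcal_\bot = \sigma(\Tcal)$ by definition, apply Lemma~\ref{lemma:measurablegenerate} to reduce measurability to checking preimages of generators, and compute $\mathit{extract}_{\rightarrow_\textsc{Det}}^{-1}(T) = T \cap \T_d^c \in \Tcal$ directly from Lemma~\ref{lemma:detsetmeasurable}. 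Both proofs hinge on the same fact ($\T_d \in \Tcal$), but your computation is slightly more self-contained --- it needs no separate appeal to measurability of the identity and constant maps, and your remark that $\{\bot\} = \T^c \in \sigma(\Tcal)$ substantiates the parenthetical claim the paper makes when defining $\Tcal_\bot$ --- whereas the paper's decomposition argument fits the template it reuses throughout the appendix (e.g., for $\mathit{extract}_{\rightarrow,\text{term}}$ and $\mathit{extract}_{\hookrightarrow,\text{term}}$), which is presumably why it was preferred there.
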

\begin{proof}
  We have $\mathit{extract}_{\rightarrow_\textsc{Det}} = \id|_{\T_d^c} \cup \bot|_{\T_d}$, where $\bot$ here denotes the constant function producing $\bot$.
  Because $\id$, $\bot$, and $\T_d$ are measurable, the result follows by Lemma~\ref{lemma:decompmeas}.
\end{proof}
\begin{definition}
  The partial order $\leq_d$ is the least partial order on $\T_\bot$ with $\bot \leq_d \term$.
\end{definition}
\begin{lemma}
  \begin{equation}
    \mathit{extract}_{\rightarrow_\textsc{Det}}(\term)
    \leq_d \mathit{extract}_{\rightarrow_\textsc{Det}}(\mathit{step}_{\rightarrow_\textsc{Det}}(\term)),
  \end{equation}
  where $\mathit{step}_{\rightarrow_\textsc{Det}} = \ \rightarrow_\textsc{Det} \cup \ \id|_{\T \setminus \T_{d}}$.
\end{lemma}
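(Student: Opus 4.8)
The plan is to verify the monotonicity condition directly by a case analysis on whether $\term$ is reducible, i.e.\ on whether $\term \in \T_{d}$. The essential observation is that $\leq_d$ is the \emph{flat} order on $\T_\bot$: by construction it is the least partial order with $\bot \leq_d \term$ for all $\term$, so $a \leq_d b$ holds precisely when $a = \bot$ or $a = b$. Hence the inequality can only fail when its left-hand side is a non-bottom term that differs from the right-hand side, and I would show that this situation never arises.

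First I would treat the case $\term \in \T_{d}$. Here the definition of $\mathit{extract}_{\rightarrow_\textsc{Det}}$ gives $\mathit{extract}_{\rightarrow_\textsc{Det}}(\term) = \bot$, and since $\bot \leq_d h$ for every $h \in \T_\bot$ by construction of the flat order, the desired inequality holds immediately, independently of the value of the right-hand side.

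Next I would treat the complementary case $\term \notin \T_{d}$. Now two facts coincide: on the one hand $\mathit{extract}_{\rightarrow_\textsc{Det}}(\term) = \term$ by definition; on the other hand, because $\mathit{step}_{\rightarrow_\textsc{Det}} = \ \rightarrow_\textsc{Det} \cup \ \id|_{\T \setminus \T_{d}}$ and $\term$ lies in $\T \setminus \T_{d}$, the identity summand applies, so $\mathit{step}_{\rightarrow_\textsc{Det}}(\term) = \term$. Applying $\mathit{extract}_{\rightarrow_\textsc{Det}}$ once more yields $\mathit{extract}_{\rightarrow_\textsc{Det}}(\mathit{step}_{\rightarrow_\textsc{Det}}(\term)) = \term$, so both sides are equal and the inequality follows by reflexivity of $\leq_d$.

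I do not expect any genuine obstacle: the statement is essentially a design invariant baked into the definitions, pairing ``$\mathit{extract}$ returns $\bot$ exactly on reducible terms'' with ``$\mathit{step}$ acts as the identity exactly on irreducible terms''. The only point requiring a little care is confirming that $\rightarrow_\textsc{Det}$ is total as a function on $\T_{d}$, so that $\mathit{step}_{\rightarrow_\textsc{Det}}$ is well defined and the first case genuinely applies a reduction; this is guaranteed by the earlier lemmas showing $\rightarrow_\textsc{Det}$ is a function with domain $\T_{d}$. Once that is in hand, the two cases exhaust all terms and complete the proof, thereby verifying the hypothesis of Condition~\ref{cond:extractprop} and enabling Lemma~\ref{lemma:final} to yield measurability of the induced big-step function.
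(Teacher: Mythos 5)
Your proof is correct and follows essentially the same route as the paper's: the same two-case split on $\term \in \T_d$ versus $\term \notin \T_d$, using $\bot \leq_d h$ in the first case and $\mathit{step}_{\rightarrow_\textsc{Det}}(\term) = \term$ together with reflexivity in the second. Your extra remarks on the flatness of $\leq_d$ and the totality of $\rightarrow_\textsc{Det}$ on $\T_d$ are accurate but not needed beyond what the paper's proof already records.
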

\begin{proof}
  Consider first $\term \in \T_{d}$.
  We then have $\mathit{extract}_{\rightarrow_\textsc{Det}}(\term) = \bot$ by definition, and the result holds immediately.
  Now consider $\term \not\in \T_{d}$.
  By definition, $\mathit{step}_{\rightarrow_\textsc{Det}}(\term) = \term$, and the result holds.
\end{proof}
Lastly, we apply Lemma~\ref{lemma:final} to get the measurable function $\mathit{final}_{\rightarrow_\textsc{Det}}$.
\begin{definition}
  $
  \mathit{final}_{\rightarrow_\textsc{Det}} =
  \mathit{final}_{\rightarrow_\textsc{Det},\mathit{extract}_{\rightarrow_\textsc{Det}}}
  : (\T,\Tcal)
  \to (\T_\bot,\Tcal_\bot)
  $
\end{definition}

We now proceed to the stochastic semantics.
\begin{definition}
  \begin{align}
    \T_{\textsc{Sample}} &=
    \{
      \econt[\ttt{sample}_D(c_1,\ldots,c_{|D|})]
      \mid \econt \in \mathbb{E}, D \in \mathbb{D},
      \\ & \hspace{4cm}
      (c_1,\ldots,c_{|D|}) \in \R^{|D|}
    \}
    \\
    \T_{\textsc{Weight}} &=
    \{
      \econt[\ttt{weight}(c)] \mid \econt \in \mathbb{E}, c \in \R_+
    \}
    \\
    \T_{\textsc{Resample}} &=
    \{
      \econt[\ttt{resample}] \mid \econt \in \mathbb{E}, c \in \R_+
    \}
    \\
    \T_{\textsc{Det}} &=
    \mathit{final}_{\rightarrow_\textsc{Det}}^{-1}(
    \mathbb{V}
    \cup \T_\textsc{Sample}
    \cup \T_\textsc{Weight}
    \cup \T_{\textsc{Resample}}
    )
    \\ & \hspace{2cm} \setminus
    (\mathbb{V}
    \cup \T_\textsc{Sample}
    \cup \T_\textsc{Weight}
    \cup \T_{\textsc{Resample}}
    )
    \\
    \T_{s} &=
    \T_{\textsc{Det}} \cup
    \T_\textsc{Sample} \cup
    \T_\textsc{Weight} \cup
    \T_\textsc{Resample}
  \end{align}
\end{definition}
\begin{lemma}\label{lemma:stochsetmeasurable}
    $\T_\textsc{Sample}$,
    $\T_\textsc{Weight}$,
    $\T_\textsc{Resample}$,
    $\T_{\textsc{Det}}$,
    and $\T_{s}$ are $\Tcal$-measurable.
\end{lemma}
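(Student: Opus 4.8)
The plan is to reduce every claim to the defining description of $\Tcal$ and to the closure properties of a $\sigma$-algebra: a set lies in $\Tcal$ exactly when it is a countable union of sets $\term_p^n(B_n)$ with $B_n \in \Bor^n$, and, since $\Tcal$ is a $\sigma$-algebra (Lemma~\ref{lemma:termmeas}), measurability is preserved under countable unions, complements, set differences, and preimages under $\Tcal$-measurable maps. I will use each of these.

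First I would treat $\T_{\textsc{Sample}}$, $\T_{\textsc{Weight}}$, and $\T_{\textsc{Resample}}$ exactly as the deterministic sets in Lemma~\ref{lemma:detsetmeasurable}. Abstracting every real constant in a term to a placeholder sends each term of the form $\econt[\ttt{sample}_D(c_1,\ldots,c_{|D|})]$ to one of the countably many skeletons in $\T_p$ (recall $\T_p$ is countable). Conversely, for any such ``sample-shaped'' skeleton $\term_p^n$, filling its placeholders with arbitrary reals yields precisely the elements of $\T_{\textsc{Sample}}$ of that shape, so $\term_p^n(\R^n) \subseteq \T_{\textsc{Sample}}$. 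Hence $\T_{\textsc{Sample}} = \bigcup \term_p^n(\R^n)$, a countable union over the sample-shaped skeletons with $\R^n \in \Bor^n$, and is therefore in $\Tcal$. The set $\T_{\textsc{Resample}}$ is identical in structure. For $\T_{\textsc{Weight}}$ the only difference is the side condition $c \in \R_+$ on the weight argument: here I would take $B_n$ to be $\R^n$ with the single coordinate corresponding to the weight argument restricted to $\R_+$; since $\R_+ \in \Bor$ this $B_n$ is still Borel, and the same countable-union argument applies. The set of values $\mathbb{V}$ is handled identically (constants give the skeleton $[\cdot]$ and abstractions give skeletons $\lambda x.\term_p^n$), so $\mathbb{V} \in \Tcal$ as well.

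With these in hand, $\T_{\textsc{Det}}$ follows from the closure properties. Writing $A = \mathbb{V} \cup \T_{\textsc{Sample}} \cup \T_{\textsc{Weight}} \cup \T_{\textsc{Resample}}$, the previous paragraph gives $A \in \Tcal$, and since $\Tcal \subset \Tcal_\bot$ we also have $A \in \Tcal_\bot$. The map $\mathit{final}_{\rightarrow_{\textsc{Det}}} : (\T,\Tcal) \to (\T_\bot,\Tcal_\bot)$ is measurable by Lemma~\ref{lemma:final}, so its preimage $\mathit{final}_{\rightarrow_{\textsc{Det}}}^{-1}(A)$ is $\Tcal$-measurable; as $\T_{\textsc{Det}} = \mathit{final}_{\rightarrow_{\textsc{Det}}}^{-1}(A) \setminus A$ is a difference of $\Tcal$-measurable sets, it is $\Tcal$-measurable. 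Finally $\T_{s} = \T_{\textsc{Det}} \cup A$ is a finite union of $\Tcal$-measurable sets, hence measurable.

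The only genuinely non-routine point is the treatment of $\T_{\textsc{Det}}$, and even there the work has been front-loaded: the substance lies in the already-established measurability of $\mathit{final}_{\rightarrow_{\textsc{Det}}}$ together with the observation that its target set $A$ is $\Tcal_\bot$-measurable because $\Tcal \subset \Tcal_\bot$. Everything else is bookkeeping with the skeleton decomposition and the closure properties of $\Tcal$.
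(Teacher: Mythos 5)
Your proposal is correct and follows essentially the same route as the paper: write $\T_{\textsc{Sample}}$, $\T_{\textsc{Weight}}$, and $\T_{\textsc{Resample}}$ as countable unions of skeleton-images $\term_p^n(B_n)$, obtain $\T_{\textsc{Det}}$ as the preimage of the measurable set $\mathbb{V} \cup \T_{\textsc{Sample}} \cup \T_{\textsc{Weight}} \cup \T_{\textsc{Resample}}$ under the measurable map $\mathit{final}_{\rightarrow_{\textsc{Det}}}$ minus that same set, and conclude for $\T_s$ by finite union. If anything, you are slightly more careful than the paper's own proof, which glosses over the restriction $c \in \R_+$ in $\T_{\textsc{Weight}}$ (you correctly restrict the corresponding Borel coordinate) and leaves the measurability of $\mathbb{V}$ implicit.
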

\begin{proof}
  We can write the sets $\T_\textsc{Sample}$, $\T_\textsc{Weight}$, and $\T_\textsc{Resample}$ as countable unions of sets of the form $\term_p^n(\R_n)$.
  Hence, they must be $\Tcal$-measurable.
  $\T_{\textsc{Det}}$ is measurable because $\mathit{final}_{\rightarrow_\textsc{Det}}$ is a measurable function, and $\mathbb{V}$, $\T_\textsc{Sample}$, $\T_\textsc{Weight}$, and $\T_\textsc{Resample}$ are measurable.
  Finally, $\T_s$ is measurable because it is a finite union of measurable sets.
\end{proof}

The below Lemma allows us to ignore the element $\bot$ introduced by the function $\mathit{extract}_{\rightarrow_\textsc{Det}}$.
\begin{lemma}\label{lemma:botremove}
  $\mathit{final}_{\rightarrow_\textsc{Det}}({\T_{s}})
  \subset \T$
\end{lemma}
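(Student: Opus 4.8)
The plan is to decompose $\T_s = \T_\textsc{Det} \cup \T_\textsc{Sample} \cup \T_\textsc{Weight} \cup \T_\textsc{Resample}$ along its four constituents and show that $\mathit{final}_{\rightarrow_\textsc{Det}}$ sends each into $\T$, i.e.\ never returns $\bot$. The key observation to set up first is a characterization of $\mathit{final}_{\rightarrow_\textsc{Det}} = \sup\{\mathit{extract}_{\rightarrow_\textsc{Det}} \circ \mathit{step}_{\rightarrow_\textsc{Det}}^n \mid n \in \mathbb{N}_0\}$: since $\mathit{extract}_{\rightarrow_\textsc{Det}}$ returns $\bot$ exactly on $\T_d$ and, by Lemma~\ref{lemma:extractlemma}, becomes constant once it is non-$\bot$, the value $\mathit{final}_{\rightarrow_\textsc{Det}}(\term)$ is precisely the first iterate $\mathit{step}_{\rightarrow_\textsc{Det}}^n(\term)$ lying outside $\T_d$, and equals $\bot$ only if the deterministic reduction never escapes $\T_d$.

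For the $\T_\textsc{Det}$ component the conclusion is immediate from its definition. Since $\T_\textsc{Det} \subseteq \mathit{final}_{\rightarrow_\textsc{Det}}^{-1}(\mathbb{V} \cup \T_\textsc{Sample} \cup \T_\textsc{Weight} \cup \T_\textsc{Resample})$ and $\mathbb{V} \cup \T_\textsc{Sample} \cup \T_\textsc{Weight} \cup \T_\textsc{Resample} \subset \T$, every $\term \in \T_\textsc{Det}$ has $\mathit{final}_{\rightarrow_\textsc{Det}}(\term) \in \T$ by construction, with no further work required.

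For the remaining three components I would argue that each such term is already $\rightarrow_\textsc{Det}$-irreducible, that is, disjoint from $\T_d$. Once $\term \in \T_\textsc{Sample} \cup \T_\textsc{Weight} \cup \T_\textsc{Resample}$ is known to satisfy $\term \notin \T_d$, it follows that $\mathit{step}_{\rightarrow_\textsc{Det}}(\term) = \term$ (the map is the identity off $\T_d$) and $\mathit{extract}_{\rightarrow_\textsc{Det}}(\term) = \term$, so the $n=0$ term of the supremizing sequence is already non-$\bot$ and hence $\mathit{final}_{\rightarrow_\textsc{Det}}(\term) = \term \in \T$.

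The main obstacle is thus establishing that $\T_\textsc{Sample} \cup \T_\textsc{Weight} \cup \T_\textsc{Resample}$ is disjoint from $\T_d = \T_\textsc{App} \cup \T_\textsc{Prim} \cup \T_\textsc{IfTrue} \cup \T_\textsc{IfFalse}$. This rests on the standard unique-decomposition property of the call-by-value evaluation contexts: a non-value term factors uniquely as $\econt[\redex]$ for a reducible expression $\redex$. I would verify that in a fully-applied $\ttt{sample}_D(c_1,\ldots,c_{|D|})$ term, a $\ttt{weight}(c)$ term, and a $\ttt{resample}$ term the active redex is exactly the sample, weight, or resample redex, none of which is a deterministic ($\textsc{App}$, $\textsc{Prim}$, $\textsc{IfTrue}$, $\textsc{IfFalse}$) redex; uniqueness then forbids any alternative decomposition $\econt'[\redex_d]$ with $\redex_d$ deterministic, giving the required disjointness. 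Citing the unique-decomposition lemma (as used in Borgström et al.~\cite{borgstrom2015lambda}) closes this step and, with the two cases above, the proof.
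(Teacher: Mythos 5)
Your proof is correct and takes essentially the same route as the paper's, which disposes of the lemma in one line (``by definition''): the $\T_\textsc{Det}$ case is immediate from its definition as a preimage, and terms in $\T_\textsc{Sample} \cup \T_\textsc{Weight} \cup \T_\textsc{Resample}$ lie outside $\T_d$, so $\mathit{extract}_{\rightarrow_\textsc{Det}}$ already returns the term itself at stage $n=0$ and $\mathit{final}_{\rightarrow_\textsc{Det}}$ never yields $\bot$ on $\T_s$. Your explicit appeal to unique decomposition of evaluation contexts to establish the disjointness from $\T_d$ simply spells out a step the paper leaves implicit.
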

\begin{proof}
  By definition, $\mathit{final}_{\rightarrow_\textsc{Det}}(\T_{s})
  \subset
  \mathbb{V} \cup
  \T_\textsc{Sample}' \cup
  \T_\textsc{Weight}' \cup
  \T_\textsc{Resample}'$.
  The result follows.
\end{proof}

\begin{lemma}\label{lemma:finalrestmeas}
  $\mathit{final}_{\rightarrow_\textsc{Det}}|_{\T_{s}} : (\T,\Tcal) \to (\T,\Tcal)$ is measurable.
\end{lemma}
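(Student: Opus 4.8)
The plan is to leverage the measurability of $\mathit{final}_{\rightarrow_\textsc{Det}} : (\T,\Tcal) \to (\T_\bot,\Tcal_\bot)$, already established, and to handle separately (a) restricting the \emph{domain} to the measurable set $\T_{s}$ and (b) restricting the \emph{codomain} from $\T_\bot$ back to $\T$, the latter being justified by Lemma~\ref{lemma:botremove}. Here the restriction $\mathit{final}_{\rightarrow_\textsc{Det}}|_{\T_{s}}$ is understood as a function on the measurable subspace $(\T_{s},\Tcal|_{\T_{s}})$.

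First I would restrict the domain. Since $\T_{s} \in \Tcal$ by Lemma~\ref{lemma:stochsetmeasurable}, Lemma~\ref{lemma:measrest} shows that $(\T_{s},\Tcal|_{\T_{s}})$ is a measurable space with $\Tcal|_{\T_{s}} = \{ T \in \Tcal \mid T \subset \T_{s} \}$. Applying the forward direction of Lemma~\ref{lemma:decompmeas} to the finite cover $\{\T_{s},\ \T \setminus \T_{s}\}$ of $\T$ by $\Tcal$-measurable sets, the measurability of $\mathit{final}_{\rightarrow_\textsc{Det}}$ immediately yields measurability of the restriction $\mathit{final}_{\rightarrow_\textsc{Det}}|_{\T_{s}} : (\T_{s},\Tcal|_{\T_{s}}) \to (\T_\bot,\Tcal_\bot)$. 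Equivalently, and more directly, for any $A \in \Tcal_\bot$ one has $(\mathit{final}_{\rightarrow_\textsc{Det}}|_{\T_{s}})^{-1}(A) = \mathit{final}_{\rightarrow_\textsc{Det}}^{-1}(A) \cap \T_{s}$, which lies in $\Tcal$ and is contained in $\T_{s}$, hence in $\Tcal|_{\T_{s}}$.

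Next I would restrict the codomain. By Lemma~\ref{lemma:botremove}, $\mathit{final}_{\rightarrow_\textsc{Det}}(\T_{s}) \subset \T$, so the map in fact never produces $\bot$. The point to check is that restricting the codomain from $(\T_\bot,\Tcal_\bot)$ to $(\T,\Tcal)$ preserves measurability. Here I would use that $\Tcal_\bot$ is the least $\sigma$-algebra on $\T_\bot = \T \cup \{\bot\}$ containing $\Tcal$: every $A \in \Tcal_\bot$ therefore has the form $T$ or $T \cup \{\bot\}$ with $T \in \Tcal$, so that $\Tcal_\bot|_{\T} = \Tcal$. Consequently, for any $T \in \Tcal$ there is $A \in \Tcal_\bot$ with $T = A \cap \T$, and since the restriction maps into $\T$ we obtain $(\mathit{final}_{\rightarrow_\textsc{Det}}|_{\T_{s}})^{-1}(T) = (\mathit{final}_{\rightarrow_\textsc{Det}}|_{\T_{s}})^{-1}(A) \in \Tcal|_{\T_{s}}$ by the previous paragraph. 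This establishes the claimed measurability into $(\T,\Tcal)$.

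There is no substantial obstacle: the result is a routine combination of a domain restriction (via Lemma~\ref{lemma:decompmeas} or a direct preimage computation) and a codomain restriction. The only delicate point is the bookkeeping around the artificial bottom element $\bot$ introduced by $\mathit{extract}_{\rightarrow_\textsc{Det}}$; one must verify $\Tcal_\bot|_{\T} = \Tcal$ and invoke Lemma~\ref{lemma:botremove} to guarantee that $\bot$ never appears in the image, so that the codomain may legitimately be taken to be $(\T,\Tcal)$ rather than $(\T_\bot,\Tcal_\bot)$.
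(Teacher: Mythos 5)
Your proof is correct and follows exactly the paper's argument: the paper likewise restricts the domain via Lemma~\ref{lemma:decompmeas} and the codomain via Lemma~\ref{lemma:botremove} together with the definition of $(\T_\bot,\Tcal_\bot)$. You merely spell out the details the paper leaves implicit, in particular the verification that $\Tcal_\bot|_{\T} = \Tcal$, which is a welcome clarification but not a different route.
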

\begin{proof}
  The restriction of a measurable function to a measurable set is also a measurable function (follows from Lemma~\ref{lemma:decompmeas}).
  Furthermore, we can restrict the codomain from $(\T_\bot,\Tcal_\bot)$ to $(\T,\Tcal)$ as a result of Lemma~\ref{lemma:botremove} and by the definition of $(\T_\bot,\Tcal_\bot)$.
\end{proof}

\begin{lemma}\label{lemma:finalrestrictedmeasurable}
  For $T \subset \T_{s}$, $T \in \Tcal$, $\mathit{final}_{\rightarrow_\textsc{Det}}|_T : (\T,\Tcal) \to (\T,\Tcal)$ is measurable.
\end{lemma}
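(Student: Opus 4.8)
The plan is to mirror the proof of Lemma~\ref{lemma:finalrestmeas} almost verbatim, since the only properties of $\T_{s}$ used there are that it is a measurable set and that its image under $\mathit{final}_{\rightarrow_\textsc{Det}}$ avoids $\bot$, and both of these hold for any $T \subset \T_{s}$ with $T \in \Tcal$. Concretely, I would argue in two steps: first restrict the \emph{domain} of the already-established measurable function $\mathit{final}_{\rightarrow_\textsc{Det}} : (\T,\Tcal) \to (\T_\bot,\Tcal_\bot)$ to $T$, and then restrict its \emph{codomain} from $(\T_\bot,\Tcal_\bot)$ back to $(\T,\Tcal)$.

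For the domain restriction, note that $T \in \Tcal$, so $\{T, \T \setminus T\}$ is a finite (hence countable) measurable cover of $\T$. Applying Lemma~\ref{lemma:decompmeas} to this cover, the measurability of $\mathit{final}_{\rightarrow_\textsc{Det}}$ on all of $\T$ yields the measurability of each of its restrictions to the cover elements; in particular $\mathit{final}_{\rightarrow_\textsc{Det}}|_T : (T,\Tcal|_T) \to (\T_\bot,\Tcal_\bot)$ is measurable, where by Lemma~\ref{lemma:measrest} the domain $\sigma$-algebra is $\Tcal|_T = \{A \in \Tcal \mid A \subset T\}$.

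For the codomain restriction, I would observe that since $T \subset \T_{s}$ we have $\mathit{final}_{\rightarrow_\textsc{Det}}(T) \subset \mathit{final}_{\rightarrow_\textsc{Det}}(\T_{s}) \subset \T$ by Lemma~\ref{lemma:botremove}, so the value $\bot$ never occurs in the image. Consequently, exactly as in Lemma~\ref{lemma:finalrestmeas}, the codomain may be narrowed from $(\T_\bot,\Tcal_\bot)$ to $(\T,\Tcal)$ without affecting measurability, which gives the claim. I do not anticipate a genuine obstacle here: the statement is essentially a corollary of Lemma~\ref{lemma:finalrestmeas}, and the only point requiring care is confirming that the image of $T$ stays inside $\T$, which is immediate from the inclusion $T \subset \T_{s}$ together with Lemma~\ref{lemma:botremove}.
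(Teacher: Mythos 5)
Your proof is correct and follows essentially the same route as the paper, which simply derives the lemma from Lemma~\ref{lemma:finalrestmeas}: restriction of a measurable function to a measurable set (via Lemma~\ref{lemma:decompmeas}) plus narrowing the codomain using Lemma~\ref{lemma:botremove}, since $T \subset \T_{s}$ guarantees the image avoids $\bot$. You merely spell out what the paper leaves implicit, so there is nothing to add.
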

\begin{proof}
  Follows directly from Lemma~\ref{lemma:finalrestmeas}.
\end{proof}

\begin{definition}
  \begin{equation}
    \begin{gathered}
      \Tcal_\textsc{Det} =
      \Tcal|_{\T_{\textsc{Det}}} \quad
      \Tcal_\textsc{Sample} =
      \Tcal|_{\T_\textsc{Sample}} \\
      \Tcal_\textsc{Weight} =
      \Tcal|_{\T_\textsc{Weight}} \quad
      \Tcal_\textsc{Resample} =
      \Tcal|_{\T_\textsc{Resample}} \quad
      \Tcal_s =
      \Tcal|_{\T_{s}}.
    \end{gathered}
  \end{equation}
\end{definition}

\begin{lemma}
  $\Tcal_\textsc{Det}$,
  $\Tcal_\textsc{Sample}$,
  $\Tcal_\textsc{Weight}$,
  $\Tcal_\textsc{Resample}$,
  and $\Tcal_s$
  are $\sigma$-algebras.
\end{lemma}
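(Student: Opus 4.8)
The plan is to apply the restriction lemma, Lemma~\ref{lemma:measrest}, uniformly to each of the five cases, exactly as was done for the deterministic sets $\Tcal_\textsc{App}$, $\Tcal_\textsc{Prim}$, $\Tcal_\textsc{IfTrue}$, $\Tcal_\textsc{IfFalse}$, and $\Tcal_d$ in the parallel lemma above. Lemma~\ref{lemma:measrest} states that whenever $(\mathbb{A},\mathcal{A})$ is a measurable space and $A \in \mathcal{A}$, the restriction $\mathcal{A}|_A$ is again a $\sigma$-algebra on $A$. Hence, to conclude that each of $\Tcal_\textsc{Det}$, $\Tcal_\textsc{Sample}$, $\Tcal_\textsc{Weight}$, $\Tcal_\textsc{Resample}$, and $\Tcal_s$ is a $\sigma$-algebra, it suffices to verify the two hypotheses of Lemma~\ref{lemma:measrest} in each case: that the ambient space $(\T,\Tcal)$ is measurable, and that each of the five subsets $\T_{\textsc{Det}}$, $\T_\textsc{Sample}$, $\T_\textsc{Weight}$, $\T_\textsc{Resample}$, and $\T_s$ belongs to $\Tcal$.

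Both hypotheses have already been established upstream. First, $(\T,\Tcal)$ is a measurable space by Lemma~\ref{lemma:termmeas}. Second, the required $\Tcal$-measurability of the five subsets is precisely the content of Lemma~\ref{lemma:stochsetmeasurable}. Combining these facts with Lemma~\ref{lemma:measrest}, applied once per subset, yields the claim directly, since each $\Tcal_\bullet$ is by definition exactly a restriction $\Tcal|_{\T_\bullet}$ of $\Tcal$ to a $\Tcal$-measurable set.

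Because all the substantive work has been carried out in the supporting lemmas, the proof itself is essentially a one-line invocation and presents no genuine obstacle. If anything merits attention, it is buried in Lemma~\ref{lemma:stochsetmeasurable} rather than here: the case $\T_{\textsc{Det}} \in \Tcal$ is the only nontrivial membership, as it depends on the measurability of $\mathit{final}_{\rightarrow_\textsc{Det}}$ together with the measurability of $\mathbb{V} \cup \T_\textsc{Sample} \cup \T_\textsc{Weight} \cup \T_\textsc{Resample}$. That measurability has already been discharged, so at this point I would simply cite Lemma~\ref{lemma:stochsetmeasurable} and Lemma~\ref{lemma:measrest} and conclude.
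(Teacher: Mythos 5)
Your proof is correct and matches the paper's argument exactly: the paper's proof is a bare citation of Lemma~\ref{lemma:measrest}, with the measurability of the five subsets having been discharged in Lemma~\ref{lemma:stochsetmeasurable} immediately beforehand, just as you note. Your write-up merely makes the implicit hypotheses explicit; there is no substantive difference.
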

\begin{proof}
  By Lemma~\ref{lemma:measrest}.
\end{proof}

\begin{lemma}
  \begin{equation}
    \begin{gathered}
      \Bor_{(\T_{\textsc{Det}},\dt)} = \Tcal_\textsc{Det} \quad
      \Bor_{(\T_\textsc{Sample},\dt)} = \Tcal_{\textsc{Sample}} \quad
      \Bor_{(\T_\textsc{Weight},\dt)} = \Tcal_{\textsc{Weight}} \\
      \Bor_{(\T_\textsc{Resample},\dt)} = \Tcal_{\textsc{Resample}} \quad
      \Bor_{(\T_{s},\dt)} = \Tcal_s.
    \end{gathered}
  \end{equation}
\end{lemma}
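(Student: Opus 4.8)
The plan is to follow exactly the strategy used for Lemma~\ref{lemma:detinduce}, reducing each of the five equalities to a single application of Lemma~\ref{lemma:metricequalterm} together with the restriction result Lemma~\ref{lemma:metricrest}. The argument is uniform, so I would fix an arbitrary one of the subsets, writing $\T_X$ with $X \in \{\textsc{Det},\textsc{Sample},\textsc{Weight},\textsc{Resample},s\}$, and then simply repeat it for each choice of $X$.

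First I would record the two ingredients about the ambient space: $(\T,\dt)$ is a separable metric space by Lemma~\ref{lemma:termsep}, and its induced Borel $\sigma$-algebra satisfies $\Bor_{(\T,\dt)} = \Tcal$ by Lemma~\ref{lemma:metricequalterm}. Next, by Lemma~\ref{lemma:stochsetmeasurable} each $\T_X$ is $\Tcal$-measurable, so $\T_X \in \Tcal = \Bor_{(\T,\dt)}$. Since $\T_X$ is a non-empty member of the Borel $\sigma$-algebra of the separable metric space $(\T,\dt)$, Lemma~\ref{lemma:metricrest} applies with $\mathbb{M}=\T$, $d=\dt$, $M=\T_X$, and yields that $(\T_X,\dt|_{\T_X})$ is itself a separable metric space together with the identity
\[
  \Bor_{(\T_X,\dt|_{\T_X})} = \Bor_{(\T,\dt)}|_{\T_X}.
\]

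It then remains to chain this with the relevant definition. By definition $\Tcal_X = \Tcal|_{\T_X}$, and using $\Tcal = \Bor_{(\T,\dt)}$ once more gives $\Tcal|_{\T_X} = \Bor_{(\T,\dt)}|_{\T_X}$. Combining the two displayed equalities produces $\Bor_{(\T_X,\dt)} = \Tcal_X$, which is precisely the claimed identity (here, as in the earlier lemmas, $\dt$ on the restricted space is understood to mean $\dt|_{\T_X}$). Running this through each of the five values of $X$ completes the proof.

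The only point needing care is the non-emptiness hypothesis of Lemma~\ref{lemma:metricrest}, which I would dispatch by exhibiting a witness in each set (for instance $\ttt{resample} \in \T_\textsc{Resample}$ via $\econt = [\cdot]$, and analogously for the others); in the degenerate empty case both induced $\sigma$-algebras collapse to $\{\emptyset\}$ and the equality is trivial. I do not anticipate any genuine obstacle, as the statement is a direct specialization of the restriction lemma and is structurally identical to the already-established Lemma~\ref{lemma:detinduce}.
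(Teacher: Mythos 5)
Your proposal is correct and takes essentially the same approach as the paper, whose proof of this lemma is precisely ``By Lemma~\ref{lemma:metricequalterm} and Lemma~\ref{lemma:metricrest}.'' Your explicit treatment of the hypotheses of Lemma~\ref{lemma:metricrest}---measurability of each $\T_X$ via Lemma~\ref{lemma:stochsetmeasurable} and non-emptiness via concrete witnesses---merely spells out what the paper leaves implicit.
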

\begin{proof}
  By Lemma~\ref{lemma:metricequalterm} and Lemma~\ref{lemma:metricrest}.
\end{proof}

\begin{definition}
  Let ${\{\mathbb{M}_i\}}_{i=1}^n$ be a finite set of spaces.
  We define the \emph{$j$-th projection} $\pi_j : \bigtimes_{i=1}^n \mathbb{M}_i \to \mathbb{M}_j$ as
  \begin{equation}
    \pi_j(m_1,\ldots,m_j,\ldots,m_n) = m_j
  \end{equation}
\end{definition}

\begin{lemma}
  If ${\{\mathbb{M}_i,d_i\}}_{i=1}^n$ is a set of metric spaces, then $\pi_j$ is continuous as a function between the product metric space $\left(\bigtimes_{i=1}^n \mathbb{M}_i,\sum_{i=1}^n d_i\right)$ and the metric space $(\mathbb{M}_j,d_j)$.
\end{lemma}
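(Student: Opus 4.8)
The plan is to verify continuity directly from the $\varepsilon$--$\delta$ formulation in Definition~\ref{definition:cont}, instantiating it with $\mathbb{M}_1 = \bigtimes_{i=1}^n \mathbb{M}_i$ equipped with the Manhattan metric $d = \sum_{i=1}^n d_i$, and $\mathbb{M}_2 = (\mathbb{M}_j, d_j)$, and with $f = \pi_j$. First I would fix an arbitrary point $m = (m_1,\ldots,m_n)$ of the product space and an arbitrary $\varepsilon > 0$, and set out to exhibit a suitable $\delta > 0$ witnessing the continuity condition at $m$.

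The key step is a single-coordinate domination inequality. For any $m' = (m_1',\ldots,m_n')$ in the product space,
\[
  d_j(\pi_j(m),\pi_j(m')) = d_j(m_j,m_j') \leq \sum_{i=1}^n d_i(m_i,m_i') = d(m,m'),
\]
where the inequality holds because each $d_i$ is non-negative (a metric maps into $\R_+^*$), so the single summand $d_j(m_j,m_j')$ is bounded above by the full sum. Given this, I would simply take $\delta = \varepsilon$: whenever $d(m,m') < \delta$, the displayed chain yields $d_j(\pi_j(m),\pi_j(m')) \leq d(m,m') < \varepsilon$, which is precisely the implication demanded by Definition~\ref{definition:cont}. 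Hence $\pi_j$ is continuous.

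There is no substantive obstacle here; the entire content is the observation that the Manhattan metric dominates each of its summands, which is immediate from non-negativity of the component metrics. The only point worth noting is that the chosen $\delta = \varepsilon$ is independent of the base point $m$, so the argument in fact establishes uniform continuity, although mere (pointwise) continuity is all that is needed for the lemma and for its intended downstream use---passing through Lemma~\ref{lemma:contmeas} to conclude that each projection $\pi_j$ is measurable with respect to the induced Borel $\sigma$-algebras.
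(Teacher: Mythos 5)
Your proof is correct and follows essentially the same route as the paper's: fix a point and $\varepsilon > 0$, observe that $d_j(m_j,m_j') \leq \sum_{i=1}^n d_i(m_i,m_i')$ by non-negativity of the component metrics, and take $\delta = \varepsilon$. The remark that this in fact gives uniform continuity is a harmless extra observation beyond what the paper states.
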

\begin{proof}
  Pick
  \begin{equation}
    (m_1,\ldots,m_j,\ldots,m_n) \in \bigtimes_{i=1}^n \mathbb{M}_i
  \end{equation}
  and $\varepsilon > 0$.
  Now, for all
  \begin{equation}
    (m_1',\ldots,m_j',\ldots,m_n') \in \bigtimes_{i=1}^n \mathbb{M}_i,
  \end{equation}
  we have
  \begin{equation}
    d_j(m_j,m_j')
    \leq
    \sum_{i=1}^n d_i(m_i,m_i')
  \end{equation}
  Hence, by choosing $\delta = \varepsilon$, we see that $\pi_j$ is continuous.
\end{proof}

\begin{definition}
  \begin{equation}
    \begin{aligned}
      \mathit{step}_\textsc{Det}
      &=
      (\mathit{final}_{\rightarrow_\textsc{Det}}|_{\T_{\textsc{Det}}} \circ \pi_1
      ,\pi_2,\pi_3)
      \\
      \mathit{step}_\textsc{Sample}
      (\econt[\ttt{sample}_D(c_1,\ldots,c_{|D|})],
      \quad & \\ w,p::s) &=
      \econt[F^{-1}_{D}(c_1, \ldots, c_{|D|},p)],w,s
      \\
      \mathit{step}_\textsc{Weight}
      (\econt[\ttt{weight}(c)],w,s) &= \econt[()],w\cdot c,s
      \\
      \mathit{step}_\textsc{Resample}
      (\econt[\ttt{resample}],w,s) &= \econt[()],w,s
      \\
    \end{aligned}
  \end{equation}
\end{definition}
\begin{lemma}\label{lemma:stochunion}
  $\rightarrow =
  \mathit{step}_\textsc{Det} \cup
  \mathit{step}_\textsc{Sample} \cup
  \mathit{step}_\textsc{Weight} \cup
  \mathit{step}_\textsc{Resample}$.
\end{lemma}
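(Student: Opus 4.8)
The plan is to prove the claimed set equality between the relation $\rightarrow$ and the union of the four step functions by a direct case analysis, exactly mirroring the strategy used for the deterministic analogue in Lemma~\ref{lemma:detunion}. Both sides are relations on $\T \times \R_+ \times \Tr$, so I would establish equality by showing mutual containment. First I would record that the four domains $\T_\textsc{Det}$, $\T_\textsc{Sample}$, $\T_\textsc{Weight}$, and $\T_\textsc{Resample}$ are pairwise disjoint: the last three are distinguished by the head redex exposed by the (unique) evaluation-context decomposition of a term, and $\T_\textsc{Det}$ is disjoint from all of them because it is defined with $\mathbb{V} \cup \T_\textsc{Sample} \cup \T_\textsc{Weight} \cup \T_\textsc{Resample}$ explicitly removed. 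Hence the right-hand side is an unambiguous union, and it suffices to match each inference rule of $\rightarrow$ against exactly one step function on its domain.

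For the rules (Sample), (Weight), and (Resample) this correspondence is immediate and genuinely ``by inspection''. The domains $\T_\textsc{Sample}$, $\T_\textsc{Weight}$, and $\T_\textsc{Resample}$ coincide precisely with the left-hand sides $\econt[\ttt{sample}_D(c_1,\ldots,c_{|D|})]$, $\econt[\ttt{weight}(c)]$, and $\econt[\ttt{resample}]$ of the three rules, while the output triples produced by $\mathit{step}_\textsc{Sample}$, $\mathit{step}_\textsc{Weight}$, and $\mathit{step}_\textsc{Resample}$ are syntactically identical to the corresponding right-hand sides. In particular, the side condition $c \ge 0$ of (Weight) is already encoded by restricting to $c \in \R_+$ in the definition of $\T_\textsc{Weight}$, so no extra argument is needed there.

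The only nontrivial case, and the one I expect to be the main obstacle, is (Det): its premise involves the transitive closure $\rightarrow_\textsc{Det}^{+}$ and must be reconciled with $\mathit{step}_\textsc{Det} = (\mathit{final}_{\rightarrow_\textsc{Det}}|_{\T_\textsc{Det}} \circ \pi_1, \pi_2, \pi_3)$. The key identity to establish is that for every $\term \in \T_\textsc{Det}$ the value $\mathit{final}_{\rightarrow_\textsc{Det}}(\term)$ equals the unique stopped term $\stopt$ with $\term \rightarrow_\textsc{Det}^{+} \stopt$. Uniqueness of this reduct follows because $\rightarrow_\textsc{Det}$ is a function; and by the definition of $\mathit{final}_{\rightarrow_\textsc{Det}}$ as the supremum of $\mathit{extract}_{\rightarrow_\textsc{Det}} \circ \mathit{step}_{\rightarrow_\textsc{Det}}^{n}$, this value is exactly the first iterate that leaves $\T_d$, i.e.\ the deterministic normal form. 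Since $\T_\textsc{Det}$ is the preimage under $\mathit{final}_{\rightarrow_\textsc{Det}}$ of $\mathbb{V} \cup \T_\textsc{Sample} \cup \T_\textsc{Weight} \cup \T_\textsc{Resample}$ with the already-stopped terms subtracted, membership $\term \in \T_\textsc{Det}$ is precisely the condition that $\term$ takes at least one deterministic step and lands in a stopped form, which is exactly the premise of (Det). Noting finally that $\mathit{step}_\textsc{Det}$ leaves the weight and trace components untouched through $\pi_2$ and $\pi_3$, matching the unchanged $w$ and $s$ in the conclusion of (Det), closes this case and completes the proof. The bookkeeping of the $\mathit{final}/\mathit{extract}$ construction against the compact $\rightarrow_\textsc{Det}^{+}$ notation is where the real care is required; everything else reduces to inspection.
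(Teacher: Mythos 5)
Your proof is correct and takes essentially the same approach as the paper, whose entire proof is just ``By inspection''---that is, exactly the rule-by-rule matching you carry out, including the encoding of the side condition $c \ge 0$ in $\T_\textsc{Weight}$ and of the nonempty trace in $\X_\textsc{Sample}$. Your elaboration of the $(\textsc{Det})$ case---identifying $\mathit{final}_{\rightarrow_\textsc{Det}}$ on $\T_\textsc{Det}$ with the unique $\rightarrow_\textsc{Det}^{+}$-reduct to a stopped term, unique because $\rightarrow_\textsc{Det}$ is a function---is precisely the bookkeeping the paper's one-line proof leaves implicit.
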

\begin{proof}
  By inspection.
\end{proof}

\begin{lemma}
  The relation $\rightarrow$ is a function.
\end{lemma}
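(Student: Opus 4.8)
The plan is to build directly on Lemma~\ref{lemma:stochunion}, which exhibits $\rightarrow$ as the union $\mathit{step}_\textsc{Det} \cup \mathit{step}_\textsc{Sample} \cup \mathit{step}_\textsc{Weight} \cup \mathit{step}_\textsc{Resample}$. Since a union of single-valued relations with pairwise disjoint domains is again single-valued, it suffices to verify these two properties for the four step relations. Each of the four is already presented as a genuine function by its defining equation: $\mathit{step}_\textsc{Det} = (\mathit{final}_{\rightarrow_\textsc{Det}}|_{\T_{\textsc{Det}}} \circ \pi_1, \pi_2, \pi_3)$ is a tupling of compositions of functions (using that $\mathit{final}_{\rightarrow_\textsc{Det}}$ is single-valued, being defined as a pointwise supremum, together with the already-established fact that $\rightarrow_\textsc{Det}$ is a function), while $\mathit{step}_\textsc{Sample}$, $\mathit{step}_\textsc{Weight}$, and $\mathit{step}_\textsc{Resample}$ are each specified by one defining equation on their domain. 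Hence single-valuedness of each piece is immediate, and configurations on which no step relation is defined (for instance a $\ttt{sample}_D$ facing the empty trace) are simply absent from the domain, which does not affect functionality.

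The substance of the argument is the disjointness of the four domains. Because the weight and trace components are otherwise unconstrained (apart from $\mathit{step}_\textsc{Sample}$ additionally demanding a non-empty trace $p :: s$), disjointness of the domains reduces to disjointness of the term-components $\T_\textsc{Det}$, $\T_\textsc{Sample}$, $\T_\textsc{Weight}$, and $\T_\textsc{Resample}$. I would obtain this in two steps. First, $\T_\textsc{Det}$ is disjoint from the other three \emph{by construction}, since its definition subtracts $\mathbb{V} \cup \T_\textsc{Sample} \cup \T_\textsc{Weight} \cup \T_\textsc{Resample}$. Second, the sets $\T_\textsc{Sample}$, $\T_\textsc{Weight}$, $\T_\textsc{Resample}$ are pairwise disjoint because each consists of terms of the shape $\econt[\redex]$ for a redex $\redex$ of one fixed syntactic kind (a $\ttt{sample}_D$, a $\ttt{weight}$, or a $\ttt{resample}$ redex, respectively).

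The key fact I would invoke---and the main obstacle---is the \emph{unique decomposition} property for evaluation contexts: every term is either a value or can be written as $\econt[\redex]$ for a \emph{unique} evaluation context $\econt$ and a \emph{unique} redex $\redex$. Granting this, the syntactic kind of the redex in a non-value term is well-defined, so no term can lie in two of $\T_\textsc{Sample}$, $\T_\textsc{Weight}$, $\T_\textsc{Resample}$ at once, and the required disjointness follows. Unique decomposition is a standard structural-induction fact for call-by-value evaluation contexts (as used in Borgstr\"{o}m et al.~\cite{borgstrom2015lambda}); the only mild care needed is to confirm that the grammar of $\econt$ in this calculus still forces a single leftmost redex in each non-value term. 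With all four pieces single-valued and their domains pairwise disjoint, the union $\rightarrow$ is a function, completing the proof.
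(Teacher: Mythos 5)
Your proposal is correct and takes essentially the same route as the paper: the paper's proof likewise starts from Lemma~\ref{lemma:stochunion} and concludes that $\rightarrow$ is a function because $\mathit{step}_\textsc{Det}$, $\mathit{step}_\textsc{Sample}$, $\mathit{step}_\textsc{Weight}$, and $\mathit{step}_\textsc{Resample}$ have pairwise disjoint domains. You simply make explicit what the paper leaves implicit---that $\T_\textsc{Det}$ is disjoint from the rest by the set subtraction in its definition, and that disjointness of $\T_\textsc{Sample}$, $\T_\textsc{Weight}$, and $\T_\textsc{Resample}$ rests on unique evaluation-context/redex decomposition---both of which are accurate for this calculus.
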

\begin{proof}
  The functions
  $\mathit{step}_\textsc{Det}$,
  $\mathit{step}_\textsc{Sample}$,
  $\mathit{step}_\textsc{Weight}$,
  and $\mathit{step}_\textsc{Resample}$ have disjoint domains.
  It follows that $\rightarrow$ is a function.
\end{proof}

Now, let $\Tr_{1:} = \Tr \setminus \{\trace{}\}$.
We make the following definitions.
\begin{definition}
  $\X = \T \times \R_+ \times \Tr$
  and
  $\Xcal = \Tcal \otimes \Bor_+ \otimes \Trcal$.
\end{definition}
\begin{definition}
  \begin{equation}
    \begin{aligned}
      \X_\textsc{Det} &= \T_\textsc{Det} \times \R_+ \times \Tr \\
      \X_\textsc{Sample} &= \T_\textsc{Sample} \times \R_+ \times
      \Tr_{1:} \\
      \X_\textsc{Weight} &= \T_\textsc{Weight} \times \R_+ \times \Tr
      \\
      \X_\textsc{Resample} &=
      \T_\textsc{Resample} \times \R_+ \times \Tr,
      \\
      \X_s &= \X_\textsc{Det}
      \cup \X_\textsc{Sample}
      \cup \X_\textsc{Weight}
      \cup \X_\textsc{Resample}
    \end{aligned}
  \end{equation}
\end{definition}
\begin{lemma}
  $\X_\textsc{Det},
  \X_\textsc{Sample},
  \X_\textsc{Weight},
  \X_\textsc{Resample},
  $ and $\X_s$ are all $\Xcal$-measurable.
\end{lemma}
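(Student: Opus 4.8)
The plan is to recognize each of the four sets $\X_\textsc{Det}$, $\X_\textsc{Sample}$, $\X_\textsc{Weight}$, and $\X_\textsc{Resample}$ as a \emph{measurable rectangle}, i.e., a Cartesian product of three sets, one drawn from each component $\sigma$-algebra $\Tcal$, $\Bor_+$, and $\Trcal$. By the definition of the product $\sigma$-algebra $\Xcal = \Tcal \otimes \Bor_+ \otimes \Trcal$, every such rectangle lies in the generating family $\{ A_1 \times A_2 \times A_3 \mid A_1 \in \Tcal, A_2 \in \Bor_+, A_3 \in \Trcal \}$ and is therefore $\Xcal$-measurable. So the whole proof reduces to checking, for each of the four cases, that the three factors are measurable in their respective $\sigma$-algebras.

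For each rectangle I would verify the factors separately. The first factor is one of $\T_\textsc{Det}$, $\T_\textsc{Sample}$, $\T_\textsc{Weight}$, or $\T_\textsc{Resample}$, each of which is $\Tcal$-measurable by Lemma~\ref{lemma:stochsetmeasurable}. The second factor is always the full space $\R_+$, which trivially belongs to $\Bor_+$. The third factor is the full space $\Tr \in \Trcal$ in three of the cases, and in the remaining case ($\X_\textsc{Sample}$) it is the set $\Tr_{1:} = \Tr \setminus \{\trace{}\}$.

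The only point requiring a small argument—and hence the closest thing to an obstacle in an otherwise entirely routine lemma—is the measurability of $\Tr_{1:}$. Here I would observe that the singleton $\{\trace{}\}$ is precisely $[0,1]^0$, which is a member of $\Bor_{[0,1]}^0$, so that $\{\trace{}\} \in \Trcal$ directly from the definition of $\Trcal$. Since $\Trcal$ is a $\sigma$-algebra by Lemma~\ref{lemma:tracemeas}, it is closed under complementation, and therefore $\Tr_{1:} = \Tr \setminus \{\trace{}\} \in \Trcal$. With all three factors measurable in each of the four cases, the four rectangles are $\Xcal$-measurable.

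Finally, $\X_s$ is by definition the union of these four sets, and since a $\sigma$-algebra is closed under finite unions, $\X_s \in \Xcal$ as well. I do not expect any genuine difficulty: the argument is a direct application of the definition of the product $\sigma$-algebra together with the already-established term-level measurability from Lemma~\ref{lemma:stochsetmeasurable}, plus the one-line verification that the empty trace, and hence its complement, is $\Trcal$-measurable.
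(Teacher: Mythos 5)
Your proof is correct and follows essentially the same route as the paper, which simply notes that the four sets are Cartesian products of measurable sets (hence in the product $\sigma$-algebra $\Xcal$) and that $\X_s$ is a finite union of measurable sets. The only difference is that you spell out the one detail the paper leaves implicit—that $\Tr_{1:} = \Tr \setminus \{\trace{}\} \in \Trcal$ because $\{\trace{}\} = [0,1]^0 \in \Bor_{[0,1]}^0$ and $\Trcal$ is closed under complements—which is a correct and welcome elaboration, not a different approach.
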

\begin{proof}
  $\X_\textsc{Det},
  \X_\textsc{Sample},
  \X_\textsc{Weight},
  $ and $\X_\textsc{Resample}$
  are the Cartesian products of measurable sets, hence measurable.
  $\X_s$ is a finite union of measurable sets, hence measurable.
\end{proof}

\begin{definition}
  \begin{equation}
    \begin{gathered}
      \Xcal_\textsc{Det} = \Xcal|_{\X_\textsc{Det}} \qquad
      \Xcal_\textsc{Sample} = \Xcal|_{\X_\textsc{Sample}} \\
      \Xcal_\textsc{Weight} = \Xcal|_{\X_\textsc{Weight}} \qquad
      \Xcal_\textsc{Resample} = \Xcal|_{\X_\textsc{Resample}} \qquad
      \Xcal_s = \Xcal|_{\X_s}
    \end{gathered}
  \end{equation}
\end{definition}
\begin{lemma}
  $\Xcal_\textsc{Det},
  \Xcal_\textsc{Sample},
  \Xcal_\textsc{Weight},
  \Xcal_\textsc{Resample},
  $ and $\Xcal_s$ are $\sigma$-algebras.
\end{lemma}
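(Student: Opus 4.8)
The plan is to invoke Lemma~\ref{lemma:measrest} directly, exactly as in the analogous term-level statements for $\Tcal_\textsc{App}$, $\Tcal_\textsc{Prim}$, and the others. First I would recall that $(\X,\Xcal)$ is a measurable space: indeed $\Xcal = \Tcal \otimes \Bor_+ \otimes \Trcal$ is the product $\sigma$-algebra on $\X = \T \times \R_+ \times \Tr$, so it is a $\sigma$-algebra on $\X$ by construction. Then I would point to the immediately preceding lemma, which establishes that each of the five sets $\X_\textsc{Det}$, $\X_\textsc{Sample}$, $\X_\textsc{Weight}$, $\X_\textsc{Resample}$, and $\X_s$ is $\Xcal$-measurable, i.e.\ belongs to $\Xcal$.

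With these two facts in hand, Lemma~\ref{lemma:measrest} applies verbatim to each set: whenever $(\X,\Xcal)$ is a measurable space and $A \in \Xcal$, the restriction $\Xcal|_A$ is a $\sigma$-algebra (and $(A,\Xcal|_A)$ is itself a measurable space). Taking $A$ to be successively $\X_\textsc{Det}$, $\X_\textsc{Sample}$, $\X_\textsc{Weight}$, $\X_\textsc{Resample}$, and $\X_s$ therefore yields precisely that $\Xcal_\textsc{Det}$, $\Xcal_\textsc{Sample}$, $\Xcal_\textsc{Weight}$, $\Xcal_\textsc{Resample}$, and $\Xcal_s$ are $\sigma$-algebras, which is the claim.

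There is essentially no obstacle here: the real content has already been discharged by the preceding measurability lemma and by the general restriction result of Lemma~\ref{lemma:measrest}, so the proof is a one-line citation. The only point requiring a moment's care is that the hypothesis of Lemma~\ref{lemma:measrest} demands genuine membership $A \in \Xcal$ rather than mere inclusion $A \subset \X$; this is exactly the stronger conclusion supplied by the preceding lemma, so the application is legitimate. I expect the final write-up to read simply as ``By Lemma~\ref{lemma:measrest}.''
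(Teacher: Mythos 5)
Your proposal is correct and matches the paper's own argument exactly: the paper's proof is the one-line citation ``Follows from Lemma~\ref{lemma:measrest}.'', relying, just as you do, on the preceding lemma's conclusion that each of the five sets lies in $\Xcal$. Your additional remarks---that $\Xcal = \Tcal \otimes \Bor_+ \otimes \Trcal$ is a $\sigma$-algebra by construction and that Lemma~\ref{lemma:measrest} requires genuine membership $A \in \Xcal$ rather than mere inclusion---are accurate and only make explicit what the paper leaves implicit.
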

\begin{proof}
  Follows from Lemma~\ref{lemma:measrest}.
\end{proof}

\begin{lemma}\label{lemma:xsep}
  Let $d_\X = \dt + \drpos + \ds$.
  Then
  \begin{equation}
    \begin{gathered}
      \Bor_{(\X, d_\X)} = \Xcal \quad
      \Bor_{(\X_\textsc{Det}, d_\X)} = \Xcal_\textsc{Det} \quad
      \Bor_{(\X_\textsc{Sample}, d_\X)} = \Xcal_\textsc{Sample} \\
      \Bor_{(\X_\textsc{Weight}, d_\X)} = \Xcal_\textsc{Weight} \quad
      \Bor_{(\X_\textsc{Resample}, d_\X)} = \Xcal_\textsc{Resample} \quad
      \Bor_{(\X_s, d_\X)} = \Xcal_s.
    \end{gathered}
  \end{equation}
  Furthermore, $(\X, d_\X)$ is a separable metric space.
\end{lemma}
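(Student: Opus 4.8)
The plan is to recognize $(\X, d_\X)$ as the product metric space built from the three factors $(\T, \dt)$, $(\R_+, \drpos)$, and $(\Tr, \ds)$, and then to read off both the Borel-equality and the separability from Lemma~\ref{lemma:boreqmeas}. The restricted statements will then follow uniformly by restriction.

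First I would collect the three ingredient facts about the factors. For the term factor, Lemma~\ref{lemma:termsep} gives that $(\T, \dt)$ is separable and Lemma~\ref{lemma:metricequalterm} gives $\Bor_{(\T, \dt)} = \Tcal$. For the weight factor, $(\R_+, \drpos)$ is separable because it is the restriction of the separable space $(\R, \dr)$ (Lemma~\ref{lemma:rsep}) to the Borel set $\R_+$, so Lemma~\ref{lemma:metricrest} applies; the same reasoning, together with the earlier result that $\Bor_{(\R_+^n, d_{\R_+})} = \Bor_+^n$, yields $\Bor_{(\R_+, \drpos)} = \Bor_+$. For the trace factor, Lemma~\ref{lemma:tracesep} gives separability and Lemma~\ref{lemma:traceinducedmetric} gives $\Bor_{(\Tr, \ds)} = \Trcal$.

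Next I would apply Lemma~\ref{lemma:boreqmeas} to the finite family of separable factors $(\T,\dt)$, $(\R_+,\drpos)$, $(\Tr,\ds)$. Since $d_\X = \dt + \drpos + \ds$ is exactly the Manhattan product metric on $\X = \T \times \R_+ \times \Tr$, the lemma immediately gives that $(\X, d_\X)$ is separable and that
\[
  \Bor_{(\X, d_\X)} = \Bor_{(\T,\dt)} \otimes \Bor_{(\R_+, \drpos)} \otimes \Bor_{(\Tr, \ds)} = \Tcal \otimes \Bor_+ \otimes \Trcal = \Xcal,
\]
which settles the first equality and the separability claim simultaneously. Finally, for the five restricted statements I would invoke Lemma~\ref{lemma:metricrest}: each of $\X_\textsc{Det}, \X_\textsc{Sample}, \X_\textsc{Weight}, \X_\textsc{Resample}, \X_s$ has already been shown to be a nonempty member of $\Xcal = \Bor_{(\X, d_\X)}$, so for each such subset $\X_\bullet$ the lemma gives $\Bor_{(\X_\bullet, d_\X)} = \Bor_{(\X, d_\X)}|_{\X_\bullet} = \Xcal|_{\X_\bullet} = \Xcal_\bullet$, matching the respective definition.

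I do not anticipate a genuine obstacle: the argument is a routine assembly of already-established separability and ``Borel-equals-product'' facts. The only points requiring mild care are confirming that $d_\X$ coincides with the Manhattan metric used in Lemma~\ref{lemma:boreqmeas} (so that the product-metric Borel $\sigma$-algebra is precisely the one induced by $d_\X$), noting that the extended-real-valued nature of $\ds$ is harmless since the paper's metrics already take values in $\R_+^*$, and verifying the hypotheses of Lemma~\ref{lemma:metricrest}, namely that each $\X_\bullet$ is a nonempty element of $\Bor_{(\X, d_\X)}$.
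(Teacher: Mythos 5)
Your proposal is correct and follows essentially the same route as the paper, whose proof is a one-line citation of Lemmas~\ref{lemma:termsep}, \ref{lemma:rsep}, \ref{lemma:metricrest}, \ref{lemma:tracesep}, and \ref{lemma:boreqmeas}: view $(\X,d_\X)$ as the Manhattan product of the three separable factors, apply Lemma~\ref{lemma:boreqmeas} to get separability and $\Bor_{(\X,d_\X)}=\Tcal\otimes\Bor_+\otimes\Trcal=\Xcal$, and handle the restricted statements via Lemma~\ref{lemma:metricrest}. Your write-up merely makes explicit the factor-level identifications (Lemmas~\ref{lemma:metricequalterm} and~\ref{lemma:traceinducedmetric}) that the paper leaves implicit, which is a faithful elaboration rather than a different argument.
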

\begin{proof}
  Follows from Lemmas~\ref{lemma:termsep}, \ref{lemma:rsep}, \ref{lemma:metricrest}, \ref{lemma:tracesep}, and \ref{lemma:boreqmeas}.
\end{proof}

\begin{lemma}
  $\mathit{step}_\textsc{Det} : (\X_\textsc{Det},\Xcal_\textsc{Det}) \to (\X,\Xcal)$ is measurable.
\end{lemma}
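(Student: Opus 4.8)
The plan is to read off the definition $\mathit{step}_\textsc{Det} = (\mathit{final}_{\rightarrow_\textsc{Det}}|_{\T_{\textsc{Det}}} \circ \pi_1, \pi_2, \pi_3)$ as a tuple of three coordinate maps into the factors of $\X = \T \times \R_+ \times \Tr$, and then invoke the tupling result, Lemma~\ref{lemma:compmeasurable}. Since $\Xcal = \Tcal \otimes \Bor_+ \otimes \Trcal$ is exactly the product $\sigma$-algebra, it will suffice to show that each of the three components is measurable as a map from $(\X_\textsc{Det}, \Xcal_\textsc{Det})$ into the respective factor; the tupling lemma then delivers measurability of $\mathit{step}_\textsc{Det}$ into $(\X,\Xcal)$ directly.

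First I would dispatch the projections. Each $\pi_j$ is continuous between the product metric space on $\X$ and its $j$-th factor (by the projection-continuity lemma established immediately above), hence measurable for the induced Borel $\sigma$-algebras by Lemma~\ref{lemma:contmeas}. By Lemma~\ref{lemma:xsep} the Borel $\sigma$-algebra on the product coincides with $\Xcal$, and by Lemma~\ref{lemma:metricequalterm}, Lemma~\ref{lemma:traceinducedmetric}, and the analogous identification for $\R_+$ the factor Borel $\sigma$-algebras coincide with $\Tcal$, $\Bor_+$, and $\Trcal$. Thus $\pi_j : (\X,\Xcal) \to \cdots$ is measurable. Because $\X_\textsc{Det} \in \Xcal$ and $\Xcal_\textsc{Det} = \Xcal|_{\X_\textsc{Det}}$, restricting the domain to $\X_\textsc{Det}$ preserves measurability (restriction of a measurable function to a measurable set, via Lemma~\ref{lemma:decompmeas}). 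This already settles the second and third components $\pi_2$ and $\pi_3$.

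For the first component I would exploit the product form $\X_\textsc{Det} = \T_\textsc{Det} \times \R_+ \times \Tr$: the restricted projection $\pi_1$ maps $\X_\textsc{Det}$ \emph{into} $\T_\textsc{Det}$, and since $\Tcal_\textsc{Det} = \Tcal|_{\T_\textsc{Det}} \subset \Tcal$, the map $\pi_1$ is in fact measurable as a map $(\X_\textsc{Det}, \Xcal_\textsc{Det}) \to (\T_\textsc{Det}, \Tcal_\textsc{Det})$. As $\T_\textsc{Det}$ is a measurable subset of $\T_s$ (Lemma~\ref{lemma:stochsetmeasurable}), Lemma~\ref{lemma:finalrestrictedmeasurable} gives measurability of $\mathit{final}_{\rightarrow_\textsc{Det}}|_{\T_\textsc{Det}} : (\T,\Tcal) \to (\T,\Tcal)$. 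The first component is therefore a composition of two measurable functions and hence measurable. Applying Lemma~\ref{lemma:compmeasurable} to the three measurable components then yields the claim.

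The only point that demands genuine care is this composition for the first component: one must verify that $\pi_1$ really lands in the domain $\T_\textsc{Det}$ of the restricted map $\mathit{final}_{\rightarrow_\textsc{Det}}|_{\T_\textsc{Det}}$ and stays measurable into $(\T_\textsc{Det}, \Tcal_\textsc{Det})$ rather than merely into $(\T,\Tcal)$. This is exactly what the product structure of $\X_\textsc{Det}$ guarantees, so no extra argument is needed. Everything else is routine bookkeeping of restrictions and the identification of the Borel $\sigma$-algebras with the product $\sigma$-algebras.
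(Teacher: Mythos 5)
Your proof is correct and takes essentially the same route as the paper's: both treat $\mathit{step}_\textsc{Det}$ as the tuple $(\mathit{final}_{\rightarrow_\textsc{Det}}|_{\T_{\textsc{Det}}} \circ \pi_1, \pi_2, \pi_3)$, get measurability of the projections from continuity (via Lemma~\ref{lemma:contmeas}), use Lemma~\ref{lemma:finalrestrictedmeasurable} for the composed first component, and conclude with the tupling Lemma~\ref{lemma:compmeasurable}. The extra bookkeeping you supply---that the product form $\X_\textsc{Det} = \T_\textsc{Det} \times \R_+ \times \Tr$ guarantees $\pi_1$ lands in the domain of the restricted map, and the identification of the product Borel $\sigma$-algebras with $\Xcal$---only makes explicit what the paper's terser proof leaves implicit.
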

\begin{proof}
  The projections $\pi_1$,$\pi_2$, and $\pi_3$ are continuous and hence measurable.
  From Lemma~\ref{lemma:finalrestrictedmeasurable}, $\mathit{final}_{\rightarrow_\textsc{Det}}|_{\T_{\textsc{Det}}}$ is measurable, and therefore, so is the composition $\mathit{final}_{\rightarrow_\textsc{Det}}|_{\T_{\textsc{Det}}} \circ \pi_1$.
  By Lemma~\ref{lemma:compmeasurable}, the result now follows.
\end{proof}

\begin{lemma}
  $\mathit{step}_\textsc{Sample} : (\X_\textsc{Sample},\Xcal_\textsc{Sample}) \to (\X,\Xcal)$ is measurable.
\end{lemma}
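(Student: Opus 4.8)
The plan is to follow the proof of Lemma~\ref{lemma:primmeasurable} for $\mathit{step}_\textsc{Prim}$ almost verbatim, decomposing the domain over the two countable index sets $\mathbb{E}$ and $\mathbb{D}$ and exhibiting each piece as a composition of measurable maps, with the one new ingredient being the removal of the head of the trace. For fixed $\econt \in \mathbb{E}$ and $D \in \mathbb{D}$, set
\begin{equation*}
  \X_{\econt,D} = \{\econt[\ttt{sample}_D(c_1,\ldots,c_{|D|})] \mid (c_1,\ldots,c_{|D|}) \in \R^{|D|}\} \times \R_+ \times \Tr_{1:}.
\end{equation*}
The first factor is of the form $\term_p^{|D|}(\R^{|D|})$ and is therefore $\Tcal$-measurable, so each $\X_{\econt,D}$ is a product of measurable sets, hence $\Xcal$-measurable, and these sets cover $\X_\textsc{Sample}$. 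By Lemma~\ref{lemma:decompmeas} it then suffices to show that each restriction $\mathit{step}_\textsc{Sample}|_{\X_{\econt,D}}$ is measurable.

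On $\X_{\econt,D}$ I would set up the relevant component maps. Reusing $\mathit{unbox}(\econt[\ttt{sample}_D(c_1,\ldots,c_{|D|})]) = (c_1,\ldots,c_{|D|})$ and $\mathit{box}_\econt(c) = \econt[c]$ from Lemma~\ref{lemma:primmeasurable} (both continuous, hence measurable), and using that the coordinate projections $\pi_1,\pi_2,\pi_3$ are continuous for the product metric $d_\X = \dt + \drpos + \ds$ and hence measurable, the only genuinely new maps are the head map $\mathit{head}(p::s) = p$ and the tail map $\mathit{tail}(p::s) = s$ on $\Tr_{1:}$. Their measurability is the main obstacle, since they act across traces of varying length. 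I would establish it by applying Lemma~\ref{lemma:decompmeas} to the countable cover $\{[0,1]^n\}_{n \geq 1}$ of $\Tr_{1:}$: on each $[0,1]^n$ the head is the continuous projection onto the first coordinate (valued in $[0,1]$) and the tail is the continuous projection onto the last $n-1$ coordinates (valued in $[0,1]^{n-1} \subset \Tr$, with the tail of a length-one trace being the constant empty trace $\trace{}$). Both restrictions are continuous, hence measurable by Lemma~\ref{lemma:contmeas}, so $\mathit{head}$ and $\mathit{tail}$ are measurable.

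With these pieces in hand, I would assemble the restriction. Writing $U = \mathit{unbox} \circ \pi_1$, $H = \mathit{head} \circ \pi_3$, $W = \pi_2$, and $S = \mathit{tail} \circ \pi_3$ (each a composition of measurable maps), Lemma~\ref{lemma:compmeasurable} gives that $(U,H) : \X_{\econt,D} \to \R^{|D|} \times [0,1]$ is measurable. Composing with $F^{-1}_D$, which is measurable by the standing condition on inverse cumulative distribution functions, yields a measurable map $C$ into $(\R,\Bor)$; then $\mathit{box}_\econt \circ C$ is measurable into $(\T,\Tcal)$. A final application of Lemma~\ref{lemma:compmeasurable} to the triple $(\mathit{box}_\econt \circ C,\, W,\, S)$ shows that $\mathit{step}_\textsc{Sample}|_{\X_{\econt,D}}$ is measurable. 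Since $\mathbb{E}$ and $\mathbb{D}$ are countable, taking the union over all $\econt$ and $D$ and invoking Lemma~\ref{lemma:decompmeas} once more completes the proof. I expect the head/tail argument to be the only non-routine step; everything else is a mechanical reassembly of the $\mathit{step}_\textsc{Prim}$ proof with the extra trace coordinate.
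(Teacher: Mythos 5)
Your proposal is correct and takes essentially the same route as the paper: decompose over the countable index sets $\mathbb{E}$ and $\mathbb{D}$, reuse $\mathit{unbox}$ and $\mathit{box}_\econt$ from the $\mathit{step}_\textsc{Prim}$ proof, establish measurability of $\mathit{head}$ and $\mathit{tail}$, and assemble the pieces with Lemma~\ref{lemma:compmeasurable} (twice) and Lemma~\ref{lemma:decompmeas}. The only (harmless) deviation is the head/tail step, where the paper proves continuity directly on all of $(\Tr_{1:},d_\Tr)$ with $\delta=\varepsilon$, exploiting that traces of different lengths lie at infinite distance, whereas you slice $\Tr_{1:}$ by trace length and apply Lemma~\ref{lemma:decompmeas} per slice---both arguments are valid.
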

\begin{proof}
  Pick arbitrary $\econt \in \mathbb{E}$ and $D \in \mathbb{D}$, and define
  \begin{equation}
    \begin{aligned}
      \mathit{unbox}(\econt[\ttt{sample}_D(c_1, \ldots, c_{|D|})])
      &= (c_1,\ldots,c_{|D|}) \\
      \mathit{box}_{\econt}(c)
      &= (\econt[c]).
    \end{aligned}
  \end{equation}
  By copying the arguments from the proof of Lemma~\ref{lemma:primmeasurable}, $\mathit{unbox}$ and $\mathit{box}_{\econt}$ are measurable.
  Next, define
  \begin{equation}
    \begin{aligned}
      \mathit{head}(p :: s) &= p \\
      \mathit{tail}(p :: s) &= s.
    \end{aligned}
  \end{equation}
  Pick an arbitrary $p :: s \in \Tr_{1:}$.
  Clearly, by Lemma~\ref{lemma:metricrest}, $(\Tr_{1:},d_\Tr)$ is a separable metric space.
  For any $p' :: s' \in \Tr_{1:}$, we have
  \begin{equation}
    \dr(p, p') \leq \dr(p, p') + d_\Tr(s, s') = d_\Tr(p :: s, p' :: s').
  \end{equation}
  By letting $\delta = \varepsilon$, we see that $\mathit{head}$ is continuous and hence measurable.
  Furthermore, by a similar argument, $\mathit{tail}$ is continuous and measurable.

  Now, we note that
  \begin{equation}
    \mathit{step}_\textsc{Sample} =
    \bigcup_{\econt \in \mathbb{E}}
    \bigcup_{D \in \mathbb{D}}
    (\mathit{box}_\econt \circ F^{-1}_D
    \circ (\mathit{unbox} \circ \pi_1, \mathit{head} \circ \pi_2),
    \pi_2, \mathit{tail} \circ \pi_3).
  \end{equation}
  By the measurability of the component functions, Lemma~\ref{lemma:compmeasurable} (applied twice), and Lemma~\ref{lemma:decompmeas}, the result follows.
\end{proof}

\begin{lemma}
  $\mathit{step}_\textsc{Weight} : (\X_\textsc{Weight},\Xcal_\textsc{Weight}) \to (\X,\Xcal)$ is measurable.
\end{lemma}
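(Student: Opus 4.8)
The plan is to follow verbatim the strategy used for $\mathit{step}_\textsc{Prim}$ (Lemma~\ref{lemma:primmeasurable}) and $\mathit{step}_\textsc{Sample}$: I would express $\mathit{step}_\textsc{Weight}$ as a countable union, indexed by the evaluation contexts $\econt \in \mathbb{E}$, of triples of measurable component maps---one map for each factor of $\X = \T \times \R_+ \times \Tr$---and then conclude measurability via Lemma~\ref{lemma:compmeasurable} (to assemble each triple) and Lemma~\ref{lemma:decompmeas} (to take the countable union over the $\econt$-slices, which cover $\X_\textsc{Weight}$ and are $\Xcal$-measurable).

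First I would introduce the auxiliary map $\mathit{unbox}(\econt[\ttt{weight}(c)]) = c$ extracting the real argument of \ttt{weight}, and copy the continuity argument from Lemma~\ref{lemma:primmeasurable}. Since $\ttt{weight}(c)$ is a reducible expression, Lemma~\ref{lemma:econtredex} gives $\dt(\econt[\ttt{weight}(c)],\econt'[\ttt{weight}(c')]) = \de(\econt,\econt') + \dr(c,c')$, whence $\dr(c,c') \leq \dt(\econt[\ttt{weight}(c)],\econt'[\ttt{weight}(c')])$, so $\mathit{unbox}$ is $1$-Lipschitz and therefore continuous and measurable by Lemma~\ref{lemma:contmeas}. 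I would then treat the three output components in turn. For the term component, for a fixed $\econt$ the output $\econt[()]$ does not depend on the input at all, so it is simply the constant function $\mathit{const}_{\econt[()]}$, which is trivially measurable. For the weight component, the output is $w \cdot c$; extracting $c$ via $\mathit{unbox} \circ \pi_1$ and $w$ via $\pi_2$, and noting that ordinary multiplication $m : \R_+ \times \R_+ \to \R_+$, $m(c,w) = w \cdot c$, is continuous and hence measurable, the composition $m \circ (\mathit{unbox}\circ\pi_1, \pi_2)$ is measurable (the projections $\pi_1,\pi_2$ being continuous with respect to the product metric $d_\X$ of Lemma~\ref{lemma:xsep}). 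For the trace component, the output is just $\pi_3$, a continuous projection, hence measurable.

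Assembling these pieces, I would write
\[
  \mathit{step}_\textsc{Weight} = \bigcup_{\econt \in \mathbb{E}} \left(\mathit{const}_{\econt[()]},\; m \circ (\mathit{unbox}\circ\pi_1, \pi_2),\; \pi_3\right).
\]
For each fixed $\econt$ the triple is measurable by Lemma~\ref{lemma:compmeasurable}, and since $\mathbb{E}$ is countable, Lemma~\ref{lemma:decompmeas} yields measurability of the whole function, as required.

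The main obstacle, such as it is, is that the weight-update step is the only one of the stochastic rules whose action on the weight coordinate is nontrivial: the earlier lemmas merely applied a given-measurable map ($\sigma_g$ or $F^{-1}_D$) to the term coordinate, whereas here correctness hinges on the multiplication $w \mapsto w\cdot c$ of the accumulated weight by the new factor. The crux is therefore to justify measurability of multiplication on $\R_+\times\R_+$, which is routine via continuity. A minor point to keep straight, distinguishing this proof from the Prim and Sample cases, is that the extracted value $c$ feeds \emph{only} into the weight component: the term component is constant in the input, so no $\mathit{box}_\econt$ map is needed.
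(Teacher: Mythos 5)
Your proof is correct and follows essentially the same route as the paper's: express $\mathit{step}_\textsc{Weight}$ as a countable union over contexts $\econt \in \mathbb{E}$ of measurable component triples, with $\mathit{unbox}$ measurable by the Lipschitz argument copied from Lemma~\ref{lemma:primmeasurable}, the weight component the measurable product of $\mathit{unbox}\circ\pi_1$ and $\pi_2$, the trace component $\pi_3$, and the assembly handled by Lemma~\ref{lemma:compmeasurable} and Lemma~\ref{lemma:decompmeas}. The one deviation is in your favor: the paper's displayed decomposition writes the term component as $\mathit{box}_\econt \circ \mathit{unbox} \circ \pi_1$, which would produce $\econt[c]$ rather than the $\econt[()]$ required by the $(\textsc{Weight})$ rule---evidently a slip carried over from the $\mathit{step}_\textsc{Prim}$ case---whereas your constant map $\mathit{const}_{\econt[()]}$ matches the semantics exactly (and is trivially measurable, making the $\mathit{box}_\econt$ machinery unnecessary here, just as you observe).
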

\begin{proof}
  Pick arbitrary $\econt \in \mathbb{E}$ and define
  \begin{equation}
    \begin{aligned}
      \mathit{unbox}(\econt[\ttt{weight}(c)]) &= c \\
      \mathit{box}_\econt(c)(c) &= \econt[c]
    \end{aligned}
  \end{equation}
\end{proof}
By using similar arguments as in the proof of Lemma~\ref{lemma:primmeasurable}, it holds that $\mathit{unbox}$ and $\mathit{box}$ are measurable.

Now, we note that
\begin{equation}
  \mathit{step}_\textsc{Weight} =
  \bigcup_{\econt \in \mathbb{E}}
  (\mathit{box}_\econt \circ \mathit{unbox} \circ \pi_1,
  (\mathit{unbox} \circ \pi_1) \cdot \pi_2,
  \pi_3)
\end{equation}
Here, $\cdot$ denotes the pointwise function product.
That is, for two functions $f$ and $g$, $(f \cdot g)(x) = f(x)\cdot f(g)$.
It is a standard result in measure theory that the function product of two measurable functions is measurable.
By the measurability of the component functions, Lemma~\ref{lemma:compmeasurable}, and Lemma~\ref{lemma:decompmeas}, the result now follows.

\begin{lemma}\label{lemma:stepresampmeas}
  $\mathit{step}_\textsc{Resample} : (\X_\textsc{Resample},\Xcal_\textsc{Resample}) \to (\X,\Xcal)$ is measurable.
\end{lemma}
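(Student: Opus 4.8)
The plan is to follow exactly the route used for $\mathit{step}_\textsc{Det}$: exhibit $\mathit{step}_\textsc{Resample}$ as a tuple of measurable coordinate functions and appeal to Lemma~\ref{lemma:compmeasurable}. Since $\mathit{step}_\textsc{Resample}(\econt[\ttt{resample}],w,s) = (\econt[()],w,s)$ leaves the weight and trace components untouched, I would write
\[
  \mathit{step}_\textsc{Resample} = (\phi \circ \pi_1,\ \pi_2,\ \pi_3),
\]
where $\phi : \T_\textsc{Resample} \to \T$ is the term-level map $\phi(\econt[\ttt{resample}]) = \econt[()]$. The projections $\pi_1,\pi_2,\pi_3$ are continuous and hence measurable (as already used in the $\mathit{step}_\textsc{Det}$ case), so the whole problem reduces to showing that $\phi$ is measurable.

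For $\phi$, rather than introducing the $\mathit{unbox}/\mathit{box}_\econt$ decomposition needed in the \textsc{Weight} and \textsc{Sample} cases, I would observe that $\phi$ is in fact an isometry, and therefore continuous. The key point is that $\ttt{resample}$ is a reducible expression $\redex$ and that $() = 0$ is a value $\val$, with $\dt(\ttt{resample},\ttt{resample}) = 0 = \dt((),())$. Hence, for any two elements $\econt_1[\ttt{resample}], \econt_2[\ttt{resample}] \in \T_\textsc{Resample}$, Lemma~\ref{lemma:econtredex} gives $\dt(\econt_1[\ttt{resample}], \econt_2[\ttt{resample}]) = \de(\econt_1,\econt_2)$, while Lemma~\ref{lemma:econtval} gives $\dt(\econt_1[()], \econt_2[()]) = \de(\econt_1,\econt_2)$. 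Thus $\dt(\phi(\term_1),\phi(\term_2)) = \dt(\term_1,\term_2)$ for all $\term_1,\term_2 \in \T_\textsc{Resample}$, so $\phi$ is trivially continuous as a map $(\T_\textsc{Resample},\dt) \to (\T,\dt)$. By Lemma~\ref{lemma:metricequalterm} together with Lemma~\ref{lemma:metricrest} we have $\Bor_{(\T_\textsc{Resample},\dt)} = \Tcal_\textsc{Resample}$, so Lemma~\ref{lemma:contmeas} yields that $\phi : (\T_\textsc{Resample},\Tcal_\textsc{Resample}) \to (\T,\Tcal)$ is measurable.

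With $\phi$ measurable, $\phi \circ \pi_1$ is measurable as a composition, and Lemma~\ref{lemma:compmeasurable} then shows that the tuple $(\phi \circ \pi_1, \pi_2, \pi_3) = \mathit{step}_\textsc{Resample}$ is measurable from $(\X_\textsc{Resample},\Xcal_\textsc{Resample})$ to $(\X,\Xcal)$, as required. The main obstacle is purely a bookkeeping one: confirming that $\ttt{resample}$ legitimately plays the role of a redex $\redex$ in Lemma~\ref{lemma:econtredex} (it does, by the definition of the reducible expressions $\redex$) and that the metric $\dt$ assigns self-distance $0$ to the atoms $\ttt{resample}$ and $()$, a fact forced by the metric axioms even though the displayed clauses of $\dt$ do not list $\ttt{resample}$ explicitly. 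Should one prefer to avoid relying on the isometry, the fallback is the countable-union argument used for $\mathit{step}_\textsc{Prim}$: decompose $\T_\textsc{Resample}$ into the countably many evaluation-context shapes and, on each, note that $\phi$ acts continuously on the context's real constants, so that Lemma~\ref{lemma:decompmeas} assembles the measurability. The isometry route is shorter and sidesteps the countability discussion entirely.
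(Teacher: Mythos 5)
Your proposal is correct and takes essentially the same route as the paper: the paper likewise writes $\mathit{step}_\textsc{Resample}$ as the tuple $(\mathit{eval} \circ \pi_1, \pi_2, \pi_3)$ with $\mathit{eval}(\econt[\ttt{resample}]) = \econt[()]$, asserts that $\mathit{eval}$ is continuous (hence measurable), and concludes via Lemma~\ref{lemma:compmeasurable} and Lemma~\ref{lemma:decompmeas} (the paper's union over contexts is redundant here, exactly as your global treatment of $\phi$ suggests). Your isometry argument via Lemmas~\ref{lemma:econtredex} and~\ref{lemma:econtval} merely fills in the continuity claim the paper labels ``clear,'' and your remark that $\dt(\ttt{resample},\ttt{resample})$ must be read as $0$---despite the displayed clauses of $\dt$ omitting a \ttt{resample} case---is the correct, intended reading, since the catch-all clause would otherwise violate the metric axioms.
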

\begin{proof}
  Let $\mathit{eval}(\econt[\ttt{resample}]) = \econt[()]$.
  Clearly, $\mathit{eval}$ is continuous and hence measurable.
  Now,
  \begin{equation}
    \mathit{step}_\textsc{Resample} =
    \bigcup_{\econt \in \mathbb{E}}
    (\mathit{eval} \circ \pi_1, \pi_2, \pi_3).
  \end{equation}
  By the measurability of the component functions, Lemma~\ref{lemma:compmeasurable}, and Lemma~\ref{lemma:decompmeas}, the result now follows.
\end{proof}

\begin{lemma}\label{lemma:arrowsmeas}
  $\rightarrow : (\X_\textsc{Det},\Xcal_\textsc{Det}) \to (\X,\Xcal)$ is measurable.
\end{lemma}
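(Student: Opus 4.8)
The plan is to reprise, essentially verbatim, the assembly argument already used to establish that $\rightarrow_\textsc{Det}$ is measurable, but now for the stochastic relation. First I would invoke Lemma~\ref{lemma:stochunion}, which expresses
\[
  \rightarrow \ = \ \mathit{step}_\textsc{Det} \cup \mathit{step}_\textsc{Sample} \cup \mathit{step}_\textsc{Weight} \cup \mathit{step}_\textsc{Resample},
\]
and recall that the four component functions have pairwise disjoint domains $\X_\textsc{Det}, \X_\textsc{Sample}, \X_\textsc{Weight}, \X_\textsc{Resample}$ whose union is exactly $\X_s$. Disjointness is what guarantees that $\rightarrow$ is a genuine (single-valued) function on $\X_s$ and, more to the point here, that its restriction to each piece coincides with the corresponding $\mathit{step}$ map, so there is no clash in the union.

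The core of the argument is then a single application of the gluing lemma (Lemma~\ref{lemma:decompmeas}) with the finite---hence countable---cover $\{\X_\textsc{Det}, \X_\textsc{Sample}, \X_\textsc{Weight}, \X_\textsc{Resample}\}$ of $\X_s$. The four pieces are $\Xcal_s$-measurable, and each restriction $\rightarrow|_{\X_\bullet} = \mathit{step}_\bullet$ was shown measurable as a map $(\X_\bullet, \Xcal_\bullet) \to (\X, \Xcal)$ in the four immediately preceding lemmas. Feeding these into Lemma~\ref{lemma:decompmeas} directly yields measurability of $\rightarrow$ on all of $\X_s$ (and, in particular, on $\X_\textsc{Det}$, which is the piece named in the statement).

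Because all the nontrivial work has already been carried out in the per-case step lemmas, I do not expect a genuine obstacle; the residual difficulty is purely bookkeeping. The two points I would verify explicitly are: (i) that the sub-$\sigma$-algebra $\{A \in \Xcal_s \mid A \subset \X_\bullet\}$ demanded by Lemma~\ref{lemma:decompmeas} agrees with the $\Xcal_\bullet = \Xcal|_{\X_\bullet}$ used in the step lemmas---this holds by Lemma~\ref{lemma:measrest} together with transitivity of $\sigma$-algebra restriction along $\X_\bullet \subset \X_s \subset \X$; and (ii) that restricting the codomain of $\mathit{step}_\textsc{Det}$ from $\T_\bot$ back into $\X$ is legitimate, which is already secured by Lemmas~\ref{lemma:botremove} and~\ref{lemma:finalrestmeas}. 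With these checks in place the conclusion is immediate.
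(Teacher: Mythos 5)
Your proof is correct and takes essentially the same approach as the paper: the paper's own proof is the one-liner ``Follows from Lemma~\ref{lemma:detunion} and Lemma~\ref{lemma:decompmeas}'', where the citation of Lemma~\ref{lemma:detunion} is evidently a typo for Lemma~\ref{lemma:stochunion}, i.e., exactly the union-decomposition-plus-gluing argument you spell out. Your extra bookkeeping checks (agreement of the restricted $\sigma$-algebras via Lemma~\ref{lemma:measrest}, and the codomain restriction of $\mathit{step}_\textsc{Det}$ via Lemmas~\ref{lemma:botremove} and~\ref{lemma:finalrestmeas}) are sound but go beyond what the paper records.
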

\begin{proof}
  Follows from Lemma~\ref{lemma:detunion} and Lemma~\ref{lemma:decompmeas}.
\end{proof}

We make the following definitions.
\begin{definition}
  $
    \mathit{extract}_{\rightarrow,\text{term}}(\term,w,s) =
    \begin{cases}
      \term & \text{if } \term \in \mathbb{V}, s = \trace{} \\
      () & \text{otherwise}
    \end{cases}
  $
\end{definition}

\begin{definition}
  $
    \mathit{extract}_{\rightarrow,\text{weight}}(\term,w,s) =
    \begin{cases}
      w & \text{if } \term \in \mathbb{V}, s = \trace{} \\
      0 & \text{otherwise }
    \end{cases}
  $
\end{definition}

\begin{lemma}\label{lemma:extractstermmeas}
  $\mathit{extract}_{\rightarrow,\text{term}} :
  (\X,\Xcal)
  \to
  (\T, \Tcal)$
  is measurable.
\end{lemma}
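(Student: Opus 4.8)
The plan is to decompose the domain $\X$ into two measurable pieces on which $\mathit{extract}_{\rightarrow,\text{term}}$ acts in a trivial way, and then invoke Lemma~\ref{lemma:decompmeas}. Concretely, I would set $A_1 = \mathbb{V} \times \R_+ \times \{\trace{}\}$ and $A_2 = \X \setminus A_1$. By definition, on $A_1$ the function is the projection onto the term component, $(\term,w,s) \mapsto \term$, while on $A_2$ it is the constant function returning $()$. Since $A_1 \cup A_2 = \X$, once I check that both sets lie in $\Xcal$, Lemma~\ref{lemma:decompmeas} reduces the problem to verifying that each of the two restrictions is measurable as a map into $(\T,\Tcal)$.

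First I would confirm that $A_1 \in \Xcal$. The factors $\R_+ \in \Bor_+$ and $\{\trace{}\} = [0,1]^0 \in \Trcal$ are measurable, so it suffices to show $\mathbb{V} \in \Tcal$. This follows by the same argument used in Lemma~\ref{lemma:detsetmeasurable} and Lemma~\ref{lemma:stochsetmeasurable}: the set of values $\mathbb{V}$ can be written as a countable union of sets of the form $\term_p^n(\R^n)$, one for each value-shaped placeholder term, and is therefore $\Tcal$-measurable. Consequently $A_1$ is a product of measurable sets and $A_2$ is its complement, so both belong to $\Xcal$.

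Next I would treat the two restrictions. On $A_2$ the restriction is a constant function, which is measurable since every preimage is either $\emptyset$ or the whole space. On $A_1$ the restriction equals $\pi_1|_{A_1}$, where $\pi_1 : \X \to \T$ is the projection onto the first coordinate. The projection $\pi_1$ is continuous as a map between the product metric space $(\X,d_\X)$ and $(\T,\dt)$, hence measurable as a map $(\X,\Xcal) \to (\T,\Tcal)$ by Lemma~\ref{lemma:xsep} together with Lemma~\ref{lemma:contmeas}; its restriction to the measurable set $A_1$ is again measurable, exactly as in Lemma~\ref{lemma:finalrestmeas}. Applying Lemma~\ref{lemma:decompmeas} to the cover $\{A_1,A_2\}$ then gives the measurability of $\mathit{extract}_{\rightarrow,\text{term}}$.

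The argument is almost entirely bookkeeping; the only point requiring a moment's care is the measurability of $A_1$, which hinges on recognizing $\mathbb{V}$ as a countable union of placeholder-term images so that $\mathbb{V}\in\Tcal$. Everything else reduces to the projection being continuous and to constant functions being measurable, so I do not anticipate any genuine obstacle beyond this observation.
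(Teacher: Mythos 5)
Your proof is correct and follows essentially the same route as the paper: the paper's proof also writes $\mathit{extract}_{\rightarrow,\text{term}} = \pi_1|_X \cup ()|_{X^c}$ with $X = \mathbb{V} \times \R_+ \times \{\trace{}\}$ and invokes Lemma~\ref{lemma:decompmeas}. You merely spell out the details the paper leaves implicit (the measurability of $\mathbb{V}$ as a countable union of images $\term_p^n(\R^n)$, and the continuity of $\pi_1$), all of which are sound.
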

\begin{proof}
  Let $\Tr_0 = \{\trace{}\}$ and $X = \mathbb{V} \times \R_+ \times \Tr_0$.
  We have $\mathit{extract}_{\rightarrow,\text{term}} = \pi_1|_X \cup ()|_{X^c}$, where $()$ here denotes the constant function producing $()$.
  Because $\id$, $()$, and $X$ are measurable, the result follows by Lemma~\ref{lemma:decompmeas}.
\end{proof}

\begin{lemma}
  $\mathit{extract}_{\rightarrow,\text{weight}} :
  (\X,\Xcal)
  \to
  (\R_+, \Bor_+)$
  is measurable.
\end{lemma}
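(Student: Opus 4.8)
The plan is to mirror exactly the proof just given for $\mathit{extract}_{\rightarrow,\text{term}}$, since the two functions differ only in which coordinate they read off and in their default value. First I would isolate the set on which the non-trivial case of the definition applies. Letting $\Tr_0 = \{\trace{}\}$, this set is $X = \mathbb{V} \times \R_+ \times \Tr_0$. I would note that $X \in \Xcal$: it is a Cartesian product of the $\Tcal$-measurable set $\mathbb{V}$, the full space $\R_+$, and the $\Trcal$-measurable singleton $\Tr_0$ (the measurability of $\mathbb{V}$ and of $\{\trace{}\}$ being exactly the facts already invoked in the term case).

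Next I would observe that, restricted to $X$, the function coincides with the projection $\pi_2$ onto the weight coordinate, which is continuous and hence measurable; and that, restricted to the complement $X^c$, it is the constant function $0$, which is trivially measurable. Writing $\mathit{extract}_{\rightarrow,\text{weight}} = \pi_2|_X \cup 0|_{X^c}$, where $0$ denotes the constant function producing $0$, I would then apply Lemma~\ref{lemma:decompmeas} to the two-element measurable partition $\{X, X^c\}$ of $\X$ to conclude measurability.

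There is essentially no obstacle here; the argument is a verbatim transcription of the preceding lemma with $\pi_2$ in place of $\pi_1$ and the constant $0$ in place of the constant $()$. The only point worth a second's thought is confirming that $X$ lies in $\Xcal$, but this reduces to the measurability of $\mathbb{V}$ and the singleton $\{\trace{}\}$, both of which are established by the same reasoning already used for $\mathit{extract}_{\rightarrow,\text{term}}$.
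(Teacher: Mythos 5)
Your proposal is correct and matches the paper exactly: the paper's proof of this lemma is literally ``Analogous to the proof of Lemma~\ref{lemma:extractstermmeas}'', and your argument---the decomposition $\mathit{extract}_{\rightarrow,\text{weight}} = \pi_2|_X \cup 0|_{X^c}$ with $X = \mathbb{V} \times \R_+ \times \Tr_0$, followed by Lemma~\ref{lemma:decompmeas}---is precisely that analogous argument, with $\pi_2$ replacing $\pi_1$ and the constant $0$ replacing the constant $()$.
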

\begin{proof}
  Analogous to the proof of Lemma~\ref{lemma:extractstermmeas}.
\end{proof}

\begin{definition}
  $\leq_{s,\text{term}}$ is the least partial order on $\T$ such that $() \leq_{s,\text{term}} \term$.
\end{definition}

\begin{definition}
  $\leq_{s,\text{weight}}$ is the least partial order on $\R_+$ such that $0 \leq_{s,\text{weight}} w$.
\end{definition}

With $\mathit{step}_{\rightarrow} = \ \rightarrow \cup \ \id|_{\X \setminus \X_s}$, we have the below lemmas.

\begin{lemma}\label{lemma:extractstochterm}
  $
  \mathit{extract}_{\rightarrow,\text{term}}(x)
  \leq_{s,\text{term}}
  \mathit{extract}_{\rightarrow,\text{term}}(\mathit{step}_{\rightarrow}(x))$
\end{lemma}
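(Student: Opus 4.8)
The plan is to exploit that $\leq_{s,\text{term}}$ is a \emph{flat} partial order with bottom element $()$, so that establishing $a \leq_{s,\text{term}} b$ reduces to checking either $a = ()$ (in which case the relation is immediate) or $a = b$. Accordingly, I would fix an arbitrary $x = (\term,w,s) \in \X$ and split on the two cases appearing in the definition of $\mathit{extract}_{\rightarrow,\text{term}}$.

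In the first case, $\mathit{extract}_{\rightarrow,\text{term}}(x) = ()$. Since $()$ is by definition the least element of $\leq_{s,\text{term}}$, we have $() \leq_{s,\text{term}} \mathit{extract}_{\rightarrow,\text{term}}(\mathit{step}_{\rightarrow}(x))$ with no further work, so the lemma holds outright.

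In the second case, $\term \in \mathbb{V}$ and $s = \trace{}$, whence $\mathit{extract}_{\rightarrow,\text{term}}(x) = \term$. The key step is to show that $\mathit{step}_{\rightarrow}$ acts as the identity on such an $x$, i.e.\ that $x \notin \X_s$. Recall $\X_s = \X_\textsc{Det} \cup \X_\textsc{Sample} \cup \X_\textsc{Weight} \cup \X_\textsc{Resample}$. Because $\term$ is a value, it has none of the forms $\econt[\ttt{sample}_D(\ldots)]$, $\econt[\ttt{weight}(c)]$, or $\econt[\ttt{resample}]$ — evaluation contexts never descend under a $\lambda$, and a bare constant is not one of these redexes in context — so $\term \notin \T_\textsc{Sample} \cup \T_\textsc{Weight} \cup \T_\textsc{Resample}$; and $\term \notin \T_\textsc{Det}$ because the definition of $\T_\textsc{Det}$ explicitly subtracts $\mathbb{V}$. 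Hence $x \in \X \setminus \X_s$, and therefore $\mathit{step}_{\rightarrow}(x) = \id(x) = x$.

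It then follows that $\mathit{extract}_{\rightarrow,\text{term}}(\mathit{step}_{\rightarrow}(x)) = \mathit{extract}_{\rightarrow,\text{term}}(x) = \term$, and the desired inequality holds by reflexivity of $\leq_{s,\text{term}}$. The only mildly delicate point, and the one I would state carefully, is the membership claim $x \notin \X_s$, which rests on the set-subtraction in the definition of $\T_\textsc{Det}$ ruling out values; everything else is bookkeeping on the flat order. This argument mirrors exactly the one already given for $\mathit{extract}_{\rightarrow_\textsc{Det}}$ verifying Condition~\ref{cond:extractprop}, so it slots directly into the hypotheses of Lemma~\ref{lemma:final}.
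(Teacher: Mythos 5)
Your proof is correct and takes essentially the same route as the paper's: both arguments reduce the claim to flatness of $\leq_{s,\text{term}}$ together with the fact that $\mathit{step}_{\rightarrow}$ acts as the identity on $\X \setminus \X_s$, and your case split on the definition of $\mathit{extract}_{\rightarrow,\text{term}}$ is just the mirror image of the paper's split on $x \in \X_s$ versus $x \notin \X_s$. The only difference is that you make explicit the disjointness of values from the terms occurring in $\X_s$ (via the subtraction of $\mathbb{V}$ in the definition of $\T_{\textsc{Det}}$ and the observation that a redex in an evaluation context is never a value), a point the paper's proof passes over with ``by definition.''
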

\begin{proof}
  Consider first $x \in \X_s$.
  Clearly, $\mathit{extract}_{\rightarrow,\text{term}}(x) = ()$ by definition and the result holds immediately.
  Now consider $x \not\in \X_s$.
  By definition, $\mathit{step}_{\rightarrow}(x) = x$ and the result holds.
\end{proof}

\begin{lemma}
  $
  \mathit{extract}_{\rightarrow,\text{weight}}(x)
  \leq_{s,\text{weight}}
  \mathit{extract}_{\rightarrow,\text{weight}}(\mathit{step}_{\rightarrow}(x))$
\end{lemma}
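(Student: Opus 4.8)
The plan is to reproduce the proof of Lemma~\ref{lemma:extractstochterm} essentially verbatim, since the present statement differs only in that the flat order $\leq_{s,\text{term}}$ with bottom $()$ is replaced by the flat order $\leq_{s,\text{weight}}$ with bottom $0$, and $\mathit{extract}_{\rightarrow,\text{term}}$ is replaced by $\mathit{extract}_{\rightarrow,\text{weight}}$. Concretely, I would case-split on whether $x \in \X_s$, exactly as in the term case, and in each branch show that either the left-hand side equals the bottom element $0$ or the two sides coincide, so that the flat-order inequality holds.

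First I would treat $x \in \X_s$. The key observation is that no element of $\X_s$ has the form $(\val,w,\trace{})$: by definition $\X_s = \X_\textsc{Det} \cup \X_\textsc{Sample} \cup \X_\textsc{Weight} \cup \X_\textsc{Resample}$, where $\T_\textsc{Sample}$, $\T_\textsc{Weight}$, and $\T_\textsc{Resample}$ consist of non-value redex forms, and $\T_\textsc{Det}$ is defined with $\mathbb{V}$ (together with those three forms) explicitly subtracted off. Hence for $x = (\term,w,s) \in \X_s$ we have $\term \notin \mathbb{V}$, so the second branch of the defining case split of $\mathit{extract}_{\rightarrow,\text{weight}}$ applies and $\mathit{extract}_{\rightarrow,\text{weight}}(x) = 0$. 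Since $0$ is the bottom of $\leq_{s,\text{weight}}$, the required inequality holds regardless of the value on the right-hand side.

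Next I would treat $x \notin \X_s$. Then $x \in \X \setminus \X_s$, so from $\mathit{step}_\rightarrow = \ \rightarrow \cup \ \id|_{\X \setminus \X_s}$ we obtain $\mathit{step}_\rightarrow(x) = x$, whence $\mathit{extract}_{\rightarrow,\text{weight}}(\mathit{step}_\rightarrow(x)) = \mathit{extract}_{\rightarrow,\text{weight}}(x)$ and the inequality holds by reflexivity. I anticipate no genuine obstacle: the statement is the weight-analogue of Lemma~\ref{lemma:extractstochterm}, and establishing it is precisely the instance of Condition~\ref{cond:extractprop} needed so that Lemma~\ref{lemma:final} can later be applied to the weight extractor. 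The only point deserving slight care is the claim that $\X_s$ contains no value-with-empty-trace triple, which is what forces the left-hand side to collapse to $0$ exactly at step-points; this is immediate from the definitions of $\T_\textsc{Det}$ and $\T_s$.
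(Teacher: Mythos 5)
Your proposal is correct and takes essentially the same route as the paper: the paper's proof of this lemma is simply ``analogous to Lemma~\ref{lemma:extractstochterm}'', and that proof is exactly your two-case split on $x \in \X_s$ (where the extractor returns the bottom element, here $0$) versus $x \notin \X_s$ (where $\mathit{step}_{\rightarrow}(x) = x$ by definition). Your explicit check that no element of $\X_s$ has a value in its first component---using that $\T_\textsc{Det}$ subtracts off $\mathbb{V} \cup \T_\textsc{Sample} \cup \T_\textsc{Weight} \cup \T_\textsc{Resample}$---merely spells out what the paper leaves as ``clearly, by definition'', so no gap remains.
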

\begin{proof}
  Analogous to Lemma~\ref{lemma:extractstochterm}.
\end{proof}

Applying Lemma~\ref{lemma:final} twice, we get the measurable functions $\mathit{final}_{\rightarrow,\text{term}}$ and $\mathit{final}_{\rightarrow,\text{weight}}$.
\begin{definition}
  $\mathit{final}_{\rightarrow,\text{term}} =
    \mathit{final}_{\rightarrow,\mathit{extract}_{\rightarrow,\text{term}}} :
    (\X, \Xcal)
    \to (\T,\Tcal)$
\end{definition}
\begin{definition}
  $\mathit{final}_{\rightarrow,\text{weight}} =
    \mathit{final}_{\rightarrow,\mathit{extract}_{\rightarrow,\text{weight}}} :
    (\X, \Xcal)
    \to (\R_+,\Bor_+)$
\end{definition}

\begin{lemma}
  $
    r_\term(s) = \mathit{final}_{\rightarrow,\text{term}}(\term,1,s)
  $
\end{lemma}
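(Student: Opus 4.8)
The plan is to unfold the definition $\mathit{final}_{\rightarrow,\text{term}} = \sup\{\mathit{extract}_{\rightarrow,\text{term}} \circ \mathit{step}_\rightarrow^n \mid n \in \mathbb{N}_0\}$ (pointwise, with respect to $\leq_{s,\text{term}}$, where $()$ is the bottom element) and show it computes $r_\term$. First I would establish that iterating $\mathit{step}_\rightarrow$ faithfully simulates the stochastic relation $\rightarrow$. By Lemma~\ref{lemma:stochunion}, $\rightarrow$ is the union of the four step functions whose domains are exactly $\X_\textsc{Det}, \X_\textsc{Sample}, \X_\textsc{Weight}, \X_\textsc{Resample}$, so $\X_s$ is precisely the set of reducible configurations, and $\rightarrow$ is single-valued on it. Since $\mathit{step}_\rightarrow = \ \rightarrow \cup \ \id|_{\X \setminus \X_s}$ agrees with $\rightarrow$ on $\X_s$ and is the identity elsewhere, for any configuration $x \in \X \setminus \X_s$ there is no $\rightarrow$-successor and $x$ is a fixed point of $\mathit{step}_\rightarrow$. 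Consequently, a finite reduction $\term,1,s \rightarrow^* \val,w,\trace{}$ corresponds exactly to applying $\mathit{step}_\rightarrow$ until a normal form is reached.

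\textbf{Value configurations are terminal.} The crucial observation is that any value configuration is a fixed point and is the only kind of configuration with non-bottom extract. I would check that $\val \notin \T_s$: indeed $\val \in \mathbb{V}$ is subtracted off in the definition of $\T_\textsc{Det}$, and a bare value is not of the form $\econt[\ttt{sample}_D(\dots)]$, $\econt[\ttt{weight}(c)]$, or $\econt[\ttt{resample}]$. Hence $(\val,w,s) \notin \X_s$ for every $w$ and $s$, so it is fixed under $\mathit{step}_\rightarrow$. Moreover, by definition $\mathit{extract}_{\rightarrow,\text{term}}(\term,w,s) \neq ()$ holds if and only if $\term \in \mathbb{V}$ and $s = \trace{}$, i.e.\ exactly at value configurations with empty trace.

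\textbf{Case analysis.} I would then split on whether $\term,1,s \rightarrow^* \val,w,\trace{}$. If it does, there is some $m$ with $\mathit{step}_\rightarrow^m(\term,1,s) = (\val,w,\trace{})$, which is a fixed point by the previous paragraph, so $\mathit{extract}_{\rightarrow,\text{term}}$ evaluates to $\val \neq ()$ there; by Lemma~\ref{lemma:extractstochterm} (Condition~\ref{cond:extractprop}) and Lemma~\ref{lemma:extractlemma}, once the extract is non-bottom it is locked, so the supremum is $\val = r_\term(s)$. If no such reduction exists, then the computation either diverges or gets stuck at a configuration that is not a value-with-empty-trace; in either case every $\mathit{step}_\rightarrow^n(\term,1,s)$ has extract $()$, because reaching a value-with-empty-trace configuration at any finite stage would (by its being a fixed point) yield $\term,1,s \rightarrow^* \val,w,\trace{}$, contradicting the assumption. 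Thus the supremum is $() = r_\term(s)$. The main obstacle is the bookkeeping in this last paragraph, namely arguing rigorously that an intermediate value-with-empty-trace configuration cannot occur along a non-terminating or stuck run; this is handled cleanly by combining the fixed-point property of value configurations with the monotonicity of $\mathit{extract}_{\rightarrow,\text{term}}$ from Lemma~\ref{lemma:extractlemma}, rather than by direct induction on reduction length.
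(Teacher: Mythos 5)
Your proof is correct and takes essentially the same route as the paper, whose entire proof of this lemma is simply ``by construction'': you unfold the definitions of $\mathit{step}_\rightarrow$, $\mathit{extract}_{\rightarrow,\text{term}}$, and the supremum, and verify the terminating and non-terminating cases exactly as the construction intends. The only cosmetic slip is your claim that the extract is non-bottom \emph{if and only if} the configuration is a value with empty trace---this fails when the value is $()$ itself, since $()$ doubles as the bottom of $\leq_{s,\text{term}}$---but the stated equality is unaffected, because in that case both $r_\term(s)$ and the supremum equal $()$.
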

\begin{proof}
  By construction.
\end{proof}

\begin{lemma}
  $
    f_\term(s) = \mathit{final}_{\rightarrow,\text{weight}}(\term,1,s)
  $
\end{lemma}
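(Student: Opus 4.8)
The plan is to proceed exactly as for the immediately preceding lemma on $r_\term$, namely by unfolding the definitions of both sides and matching them along the (deterministic) reduction sequence starting from $(\term,1,s)$. The essential observation is that, because $\rightarrow$ is a function and $\mathit{step}_\rightarrow$ extends it by the identity on $\X\setminus\X_s$, the iterate $\mathit{step}_\rightarrow^n(\term,1,s)$ is precisely the configuration reached after $n$ small steps, and it remains fixed from the first moment a non-reducible configuration is reached — a value (with any trace), or a configuration on which no rule of $\rightarrow$ applies, such as a \ttt{sample} redex facing the empty trace, or a \ttt{weight} redex with a negative argument.

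First I would split into two cases according to whether $\term,1,s \rightarrow^* \val,w,\trace{}$ for some $\val,w$. In the accepting case, determinism gives a unique such run, so there is an $N$ with $\mathit{step}_\rightarrow^N(\term,1,s) = (\val,w,\trace{})$; since a value with empty trace lies outside $\X_s$, $\mathit{step}_\rightarrow$ freezes it, whence $\mathit{step}_\rightarrow^n(\term,1,s)=(\val,w,\trace{})$ for all $n\ge N$. Applying $\mathit{extract}_{\rightarrow,\text{weight}}$ then gives $w$ for all $n\ge N$ and the bottom value $0$ for $n<N$, since no earlier configuration can be a value with empty trace (such a configuration is a normal form and, by determinism, can only occur at the end of the run). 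Hence the supremum with respect to the flat order $\leq_{s,\text{weight}}$ equals $w = f_\term(s)$.

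In the non-accepting case — the run gets stuck, diverges, or terminates at a value with a non-empty residual trace — I would argue that no iterate $\mathit{step}_\rightarrow^n(\term,1,s)$ is ever of the form $(\val,\cdot,\trace{})$, so $\mathit{extract}_{\rightarrow,\text{weight}}$ returns its bottom value $0$ at every $n \in \mathbb{N}_0$; the supremum is therefore $0 = f_\term(s)$. Combining the two cases yields the claimed equality. Monotonicity of $\mathit{extract}_{\rightarrow,\text{weight}}$ along $\mathit{step}_\rightarrow$ (the weight analogue of Lemma~\ref{lemma:extractstochterm}), together with the stabilization Lemma~\ref{lemma:extractlemma}, guarantees that the set over which the supremum is taken eventually stabilizes, so the supremum is genuinely attained and can be read off from the eventual configuration.

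The main subtlety, rather than a genuine obstacle, is the collision between a legitimately computed final weight of $0$ and the bottom element $0$ of the flat order $\leq_{s,\text{weight}}$: Lemma~\ref{lemma:extractlemma} only guarantees persistence of a \emph{non-bottom} extract value. This is harmless here, because when the accepting run produces weight $w=0$ the supremum is $0$ regardless, which still agrees with $f_\term(s)=0$; no separate treatment is needed. The only care required is to confirm that values and stuck redexes are indeed excluded from $\X_s$, so that $\mathit{step}_\rightarrow$ truly freezes the configuration and the final weight is extracted correctly.
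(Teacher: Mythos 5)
Your proof is correct and takes the same route as the paper, whose entire proof of this lemma is the single phrase ``by construction'': your case analysis (unique run by functionality of $\rightarrow$, freezing of configurations outside $\X_s$ under $\mathit{step}_\rightarrow$, reading the supremum off the eventual configuration in the flat order, and the harmless collision between an accepted weight $0$ and the bottom element) is precisely the detailed unfolding of that construction. No gap; nothing further is needed.
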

\begin{proof}
  By construction.
\end{proof}

\repeatlemma{lemma:rfmeas}{\lemmarfmeas}
\begin{proof}
  Because $\mathit{final}_{\rightarrow,\text{term}}$ is measurable.
\end{proof}

\subsection{The Measurable Functions $r_{\term,n}$ and $f_{\term,n}$}%
\label{sec:rfnmeasurable}
In this section, we prove that $r_{\term,n}$ and $f_{\term,n}$ are measurable.
We follow the proof strategy from Borgström et al.~\cite{borgstrom2015lambda}.

We start with the following definition.
\newcommand{\Ncal}{\mathcal{P} (\mathbb{N}_0)}
\begin{definition}
  $\mathbb{Y} = \X \times \mathbb{N}_0$ and
  $\mathcal{Y} = \mathbb{Y} \otimes \Ncal$.
\end{definition}
Also, we require the following simple lemma.
\begin{lemma}\label{lemma:n0separable}
  $(\mathbb{N}_0,\dr|_{\mathbb{N}_0})$ is a separable metric space
\end{lemma}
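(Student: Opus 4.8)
The plan is to observe that this is simply an instance of the general fact that every countable metric space is separable. First I would recall that, by definition, a metric space is separable if it admits a countable dense subset, and that a subset $M$ is dense precisely when every point of the ambient space can be approximated arbitrarily well by points of $M$. Since $\mathbb{N}_0$ is countable, I would take the dense subset to be $\mathbb{N}_0$ itself: for any $m \in \mathbb{N}_0$ and any $\varepsilon > 0$, the point $m$ already lies in $\mathbb{N}_0$ and satisfies $\dr(m,m) = 0 < \varepsilon$. Hence $\mathbb{N}_0$ is a countable dense subset of itself, and $(\mathbb{N}_0, \dr|_{\mathbb{N}_0})$ is separable.

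Alternatively, I could derive the result from the machinery already developed: by Lemma~\ref{lemma:rsep} with $n = 1$, $(\R, \dr)$ is a separable metric space, and since $\mathbb{N}_0$ is a countable union of singletons it is a Borel subset of $\R$, so that Lemma~\ref{lemma:metricrest} applies and yields separability of $(\mathbb{N}_0, \dr|_{\mathbb{N}_0})$. I would favour the direct argument, since it avoids invoking the restriction lemma and makes the triviality of the claim explicit.

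There is no genuine obstacle here; the only point worth confirming is that $\dr|_{\mathbb{N}_0}$ really is a metric on $\mathbb{N}_0$, which is immediate because the restriction of a metric to any subset is again a metric, as already noted in the excerpt where $d|_M$ is introduced. The statement is included only as an auxiliary fact, needed to equip $\mathbb{Y} = \X \times \mathbb{N}_0$ with a separable metric and hence a well-behaved Borel structure in the subsequent measurability arguments for $r_{\term,n}$ and $f_{\term,n}$.
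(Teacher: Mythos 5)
Your proof is correct and matches the paper's, which simply notes that $\mathbb{N}_0$ is a countable dense subset of itself. Your additional remarks (that restrictions of metrics are metrics, and the alternative route via Lemma~\ref{lemma:rsep} and Lemma~\ref{lemma:metricrest}) are sound but unnecessary elaborations of the same one-line argument.
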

\begin{proof}
  $\mathbb{N}_0$ is a countable dense subset of itself.
\end{proof}
We can now define sets corresponding to the domains for the rules $\textsc{Stoch-Fin}$ and $\textsc{Resample-Fin}$, and a domain for the relation $\hookrightarrow$ as a whole.
\begin{definition}
  \begin{equation}
    \begin{aligned}
      \mathbb{Y}_\textsc{Stoch-Fin} &=
      (\X_\textsc{Det}
      \cup \X_\textsc{Sample}
      \cup \X_\textsc{Weight}) \times \mathbb{N}_0 \\
      \mathbb{Y}_\textsc{Resample-Fin} &=
      \X_\textsc{Resample} \times \mathbb{N} \\
      \mathbb{Y}_s &=
      \mathbb{Y}_\textsc{Stoch-Fin}
      \cup \mathbb{Y}_\textsc{Resample-Fin}
    \end{aligned}
  \end{equation}
\end{definition}
\begin{lemma}
  $\mathbb{Y}_\textsc{Stoch-Fin},
  \mathbb{Y}_\textsc{Resample-Fin},
  $ and $\mathbb{Y}_s$ are all $\mathcal{Y}$-measurable.
\end{lemma}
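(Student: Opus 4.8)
The plan is to exhibit each of the three sets as a member of the product $\sigma$-algebra $\mathcal{Y}$ on $\mathbb{Y} = \X \times \mathbb{N}_0$, by writing it as a measurable rectangle or a finite union of such. Recall that $\mathcal{Y}$ is generated by rectangles $A \times B$ with $A \in \Xcal$ and $B \in \Ncal$, so every such rectangle is itself $\mathcal{Y}$-measurable. The two ingredients I would invoke are: (i) the already-established measurability of $\X_\textsc{Det}$, $\X_\textsc{Sample}$, $\X_\textsc{Weight}$, and $\X_\textsc{Resample}$ with respect to $\Xcal$; and (ii) the observation that, since $\Ncal$ is the full power set of $\mathbb{N}_0$, every subset of $\mathbb{N}_0$---in particular $\mathbb{N}_0$ itself and $\mathbb{N}$---belongs to $\Ncal$.

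First I would handle $\mathbb{Y}_\textsc{Stoch-Fin}$. Its first factor $\X_\textsc{Det} \cup \X_\textsc{Sample} \cup \X_\textsc{Weight}$ is a finite union of $\Xcal$-measurable sets, hence $\Xcal$-measurable, while its second factor $\mathbb{N}_0$ lies in $\Ncal$ by (ii); thus $\mathbb{Y}_\textsc{Stoch-Fin}$ is a measurable rectangle and belongs to $\mathcal{Y}$. Next, for $\mathbb{Y}_\textsc{Resample-Fin} = \X_\textsc{Resample} \times \mathbb{N}$, the first factor is $\Xcal$-measurable by (i) and the second factor $\mathbb{N}$ lies in $\Ncal$ by (ii), so it too is a measurable rectangle. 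Finally, $\mathbb{Y}_s = \mathbb{Y}_\textsc{Stoch-Fin} \cup \mathbb{Y}_\textsc{Resample-Fin}$ is a finite union of the two measurable sets just established, and is therefore $\mathcal{Y}$-measurable by closure of $\sigma$-algebras under finite unions.

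The argument is entirely routine, and I do not expect any substantial obstacle. The only point worth flagging is that the measurability of the natural-number factors is immediate precisely because $\Ncal$ is the discrete $\sigma$-algebra $\mathcal{P}(\mathbb{N}_0)$; had a coarser $\sigma$-algebra been chosen, one would have to verify separately that $\mathbb{N}_0$ and $\mathbb{N}$ are members. With the discrete choice, the proof collapses to recognizing each set as a rectangle (or finite union of rectangles) over measurable factors, exactly paralleling the earlier measurability arguments for $\X_s$ and its constituents.
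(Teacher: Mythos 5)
Your proof is correct and takes essentially the same route as the paper, which simply observes that $\mathbb{Y}_\textsc{Stoch-Fin}$ and $\mathbb{Y}_\textsc{Resample-Fin}$ are Cartesian products of measurable sets (hence measurable rectangles in the product $\sigma$-algebra) and that $\mathbb{Y}_s$ is a finite union of measurable sets. Your extra remark that the $\mathbb{N}_0$- and $\mathbb{N}$-factors are automatically measurable because $\mathcal{P}(\mathbb{N}_0)$ is the discrete $\sigma$-algebra is just a spelled-out detail of the same argument.
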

\begin{proof}
  $\mathbb{Y}_\textsc{Stoch-Fin}$ and $\mathbb{Y}_\textsc{Resample-Fin}$ are Cartesian products of measurable sets, hence measurable.
  $\mathbb{Y}_s$ is a finite union of measurable sets, hence measurable.
\end{proof}
The following functions correspond to the rules $\textsc{Stoch-Fin}$ and $\textsc{Resample-Fin}$.
\begin{definition}
  \begin{equation}
    \begin{aligned}
      \mathit{step}_\textsc{Stoch-Fin} &=
      (\rightarrow \circ \ \pi_1, \pi_2)|_{\mathbb{Y}_\textsc{Stoch-Fin}} \\
      \mathit{step}_\textsc{Resample-Fin} &=
      (\mathit{step}_{\textsc{Resample}} \circ \pi_1,
      (n \mapsto n-1) \circ \pi_2)|_{\mathbb{Y}_\textsc{Resample-Fin}}
    \end{aligned}
  \end{equation}
\end{definition}

\begin{lemma}\label{lemma:yunion}
  $\hookrightarrow = \mathit{step}_\textsc{Stoch-Fin} \cup \mathit{step}_\textsc{Resample-Fin}$.
\end{lemma}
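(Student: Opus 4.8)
The plan is to prove the set equality by matching the two inference rules defining $\hookrightarrow$ against the two restricted step functions, exactly as in the proofs of Lemma~\ref{lemma:detunion} and Lemma~\ref{lemma:stochunion}. Since $\hookrightarrow$ is defined by \textsc{(Stoch-Fin)} and \textsc{(Resample-Fin)}, its graph is by construction the union of the input--output tuples licensed by each rule; it therefore suffices to show that each rule coincides with the corresponding $\mathit{step}$ function, both in its domain and in its action.

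First I would treat \textsc{(Stoch-Fin)}. Its premises are $\term \neq \econt[\ttt{resample}]$ together with $\term,w,s \rightarrow \term',w',s'$, and the conclusion leaves $n$ unchanged. The second premise places $(\term,w,s)$ in the domain $\X_s$ of $\rightarrow$, and by the syntactic classification of stopped terms $\stopt$ underlying Lemma~\ref{lemma:stochunion}, the forms $\val$, $\econt[\ttt{sample}_D(\ldots)]$, $\econt[\ttt{weight}(c)]$, and $\econt[\ttt{resample}]$ are pairwise disjoint. Hence the side condition $\term \neq \econt[\ttt{resample}]$ forces $(\term,w,s) \in \X_\textsc{Det} \cup \X_\textsc{Sample} \cup \X_\textsc{Weight}$, i.e.\ $(\term,w,s,n) \in \mathbb{Y}_\textsc{Stoch-Fin}$. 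On this domain the output $(\term',w',s',n)$ is precisely $(\rightarrow \circ\, \pi_1, \pi_2)(\term,w,s,n)$, which is the definition of $\mathit{step}_\textsc{Stoch-Fin}$. Conversely, any tuple in $\mathit{step}_\textsc{Stoch-Fin}$ has its input in $\mathbb{Y}_\textsc{Stoch-Fin}$, so $\term$ is not of the form $\econt[\ttt{resample}]$ and $\rightarrow$ is defined there, whence \textsc{(Stoch-Fin)} applies and produces the same output.

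Next I would treat \textsc{(Resample-Fin)}. Its premise $n > 0$ is equivalent to $n \in \mathbb{N}$, and its shape $\term = \econt[\ttt{resample}]$ places $(\term,w,s) \in \X_\textsc{Resample}$, so the input lies in $\mathbb{Y}_\textsc{Resample-Fin} = \X_\textsc{Resample} \times \mathbb{N}$. The conclusion rewrites $\econt[\ttt{resample}],w,s$ to $\econt[()],w,s$ and decrements $n$, which is exactly $\mathit{step}_\textsc{Resample}$ acting on the first three coordinates together with $n \mapsto n-1$; this is the definition of $\mathit{step}_\textsc{Resample-Fin}$. The reverse inclusion is immediate, since the restriction to $\mathbb{Y}_\textsc{Resample-Fin}$ guarantees both premises of the rule. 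Combining the two cases yields the claimed equality.

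The argument is essentially a matter of inspection, and I do not anticipate a genuine obstacle. The only point requiring care is in the first case: verifying that the domain $\mathbb{Y}_\textsc{Stoch-Fin}$ captures precisely the side condition $\term \neq \econt[\ttt{resample}]$ together with reducibility under $\rightarrow$. This rests on the disjointness of the four stopped-term forms $\stopt$ and on the fact that $\X_s = \X_\textsc{Det} \cup \X_\textsc{Sample} \cup \X_\textsc{Weight} \cup \X_\textsc{Resample}$ is exactly the domain of $\rightarrow$, both of which are already available from the development leading to Lemma~\ref{lemma:stochunion}.
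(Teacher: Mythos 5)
Your proposal is correct and takes essentially the same route as the paper, whose entire proof of this lemma is ``By inspection'': you have simply carried out that inspection explicitly, matching \textsc{(Stoch-Fin)} and \textsc{(Resample-Fin)} against the domains and actions of $\mathit{step}_\textsc{Stoch-Fin}$ and $\mathit{step}_\textsc{Resample-Fin}$ in both directions. The points you flag as needing care (disjointness of the stopped forms and $\X_s$ being the domain of $\rightarrow$, including the non-empty-trace condition in $\X_\textsc{Sample}$) are exactly the facts the paper's inspection implicitly relies on, so nothing is missing.
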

\begin{proof}
  By inspection.
\end{proof}

\begin{lemma}
  $\hookrightarrow$ is a function.
\end{lemma}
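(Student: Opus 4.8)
The plan is to mirror the argument already used to show that $\rightarrow_\textsc{Det}$ and $\rightarrow$ are functions: exhibit $\hookrightarrow$ as a union of two functions with disjoint domains, and invoke the elementary fact that such a union is again single-valued. By Lemma~\ref{lemma:yunion} we already have $\hookrightarrow = \mathit{step}_\textsc{Stoch-Fin} \cup \mathit{step}_\textsc{Resample-Fin}$, so the two things left to check are that each summand is itself a function, and that their domains do not overlap.

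First I would observe that each of the two maps is a function. The map $\mathit{step}_\textsc{Stoch-Fin} = (\rightarrow \circ\, \pi_1, \pi_2)|_{\mathbb{Y}_\textsc{Stoch-Fin}}$ is obtained by pairing the function $\rightarrow$ (already known to be a function) composed with the projection $\pi_1$, together with the projection $\pi_2$, and then restricting to $\mathbb{Y}_\textsc{Stoch-Fin}$; restrictions and finite tuples of functions are again functions. Likewise $\mathit{step}_\textsc{Resample-Fin} = (\mathit{step}_\textsc{Resample} \circ \pi_1, (n \mapsto n-1)\circ \pi_2)|_{\mathbb{Y}_\textsc{Resample-Fin}}$ is a function, since $\mathit{step}_\textsc{Resample}$ is a function and $n \mapsto n-1$ is well defined (and non-negative) on $\mathbb{N}$, which is exactly the second-component range of $\mathbb{Y}_\textsc{Resample-Fin}$.

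The key step is the disjointness of the two domains, $\mathbb{Y}_\textsc{Stoch-Fin} = (\X_\textsc{Det} \cup \X_\textsc{Sample} \cup \X_\textsc{Weight}) \times \mathbb{N}_0$ and $\mathbb{Y}_\textsc{Resample-Fin} = \X_\textsc{Resample} \times \mathbb{N}$. It suffices to show that $\X_\textsc{Resample}$ is disjoint from each of $\X_\textsc{Det}$, $\X_\textsc{Sample}$, and $\X_\textsc{Weight}$, because then the two $\mathbb{Y}$-sets are disjoint regardless of the $\mathbb{N}_0$ versus $\mathbb{N}$ factor. Since these $\X$-sets are Cartesian products sharing the factors $\R_+$ and $\Tr$ (or $\Tr_{1:}$), this reduces to $\T_\textsc{Resample} \cap \T_\textsc{Det} = \T_\textsc{Resample} \cap \T_\textsc{Sample} = \T_\textsc{Resample} \cap \T_\textsc{Weight} = \emptyset$. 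The disjointness of $\T_\textsc{Det}$ from $\T_\textsc{Resample}$ is immediate from the explicit set subtraction in the definition of $\T_\textsc{Det}$, while the disjointness of $\T_\textsc{Sample}$ and $\T_\textsc{Weight}$ from $\T_\textsc{Resample}$ is the same pairwise disjointness of the redex-term sets already relied upon when showing that $\rightarrow$ is a function.

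Having established that both summands are functions on disjoint domains, their union $\hookrightarrow$ is single-valued, hence a function, completing the argument. I expect the only point requiring care to be spelling out the disjointness of the underlying term sets $\T_\textsc{Det}, \T_\textsc{Sample}, \T_\textsc{Weight}$ from $\T_\textsc{Resample}$; everything else is bookkeeping entirely analogous to the earlier $\rightarrow_\textsc{Det}$ and $\rightarrow$ cases, and indeed the present statement is proved \emph{by inspection} in exactly that spirit.
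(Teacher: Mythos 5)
Your proposal is correct and takes essentially the same route as the paper: the paper's proof likewise rests on the decomposition $\hookrightarrow\, = \mathit{step}_\textsc{Stoch-Fin} \cup \mathit{step}_\textsc{Resample-Fin}$ of Lemma~\ref{lemma:yunion} and the observation that these two functions have disjoint domains, concluding immediately that the union is single-valued. You merely spell out what the paper leaves to inspection, namely that each summand is itself a function and that the domain disjointness reduces to the pairwise disjointness of $\T_\textsc{Resample}$ from $\T_\textsc{Det}$, $\T_\textsc{Sample}$, and $\T_\textsc{Weight}$.
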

\begin{proof}
The domains of the functions $\mathit{step}_\textsc{Stoch-Fin}$ and $\mathit{step}_\textsc{Resample-Fin}$ are disjoint.
It follows that $\hookrightarrow$ is a function.
\end{proof}

\begin{definition}
  \begin{equation}
    \mathcal{Y}_\textsc{Stoch-Fin} =
    \mathcal{Y}|_{\mathbb{Y}_\textsc{Stoch-Fin}},
    \quad
    \mathcal{Y}_\textsc{Resample-Fin} =
    \mathcal{Y}|_{\mathbb{Y}_\textsc{Resample-Fin}},
    \quad
    \mathcal{Y}_s =
    \mathcal{Y}|_{\mathbb{Y}_s}
  \end{equation}
\end{definition}

\begin{lemma}
    $\mathcal{Y}_\textsc{Stoch-Fin}, \mathcal{Y}_\textsc{Resample-Fin}, $ and $\mathcal{Y}_s$ are $\sigma$-algebras.
\end{lemma}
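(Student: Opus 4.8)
The plan is to reuse verbatim the strategy that already dispatched the earlier families of restricted $\sigma$-algebras, namely $\Tcal_\textsc{App},\dots,\Tcal_d$ and $\Xcal_\textsc{Det},\dots,\Xcal_s$, each of which was shown to be a $\sigma$-algebra by a one-line appeal to Lemma~\ref{lemma:measrest}. Recall that Lemma~\ref{lemma:measrest} guarantees that whenever $(\mathbb{A},\mathcal{A})$ is a measurable space and $A \in \mathcal{A}$, the restriction $\mathcal{A}|_A$ equals $\{A' \subset A \mid A' \in \mathcal{A}\}$ and $(A,\mathcal{A}|_A)$ is again a measurable space; in particular $\mathcal{A}|_A$ is a $\sigma$-algebra on $A$. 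The entire content of the present lemma is an instantiation of this fact with $(\mathbb{A},\mathcal{A}) = (\mathbb{Y},\mathcal{Y})$.

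First I would note that, by the defining equations, each of $\mathcal{Y}_\textsc{Stoch-Fin}$, $\mathcal{Y}_\textsc{Resample-Fin}$, and $\mathcal{Y}_s$ is precisely the restriction of $\mathcal{Y}$ to the corresponding subset of $\mathbb{Y}$, that is $\mathcal{Y}|_{\mathbb{Y}_\textsc{Stoch-Fin}}$, $\mathcal{Y}|_{\mathbb{Y}_\textsc{Resample-Fin}}$, and $\mathcal{Y}|_{\mathbb{Y}_s}$. Since $(\mathbb{Y},\mathcal{Y})$ is a measurable space (being the product of the measurable spaces $(\X,\Xcal)$ and $(\mathbb{N}_0,\mathcal{P}(\mathbb{N}_0))$), Lemma~\ref{lemma:measrest} applies to it, provided only that the restricting sets are themselves members of $\mathcal{Y}$.

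The sole hypothesis that needs checking is therefore the $\mathcal{Y}$-measurability of each restricting set, and this is exactly what the immediately preceding lemma establishes: $\mathbb{Y}_\textsc{Stoch-Fin}$, $\mathbb{Y}_\textsc{Resample-Fin}$, and $\mathbb{Y}_s$ are all $\mathcal{Y}$-measurable. Applying Lemma~\ref{lemma:measrest} once to each of these three sets then yields that $(\mathbb{Y}_\textsc{Stoch-Fin},\mathcal{Y}_\textsc{Stoch-Fin})$, $(\mathbb{Y}_\textsc{Resample-Fin},\mathcal{Y}_\textsc{Resample-Fin})$, and $(\mathbb{Y}_s,\mathcal{Y}_s)$ are measurable spaces, so the three restricted families are $\sigma$-algebras as claimed. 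I expect no genuine obstacle here: the argument is a pure citation of Lemma~\ref{lemma:measrest}, and the substantive measurability work has already been discharged by the preceding measurability lemma, exactly as in the analogous proofs for the term- and $\X$-level restrictions.
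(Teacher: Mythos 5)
Your proof is correct and matches the paper exactly: the paper's own proof is the one-line citation ``By Lemma~\ref{lemma:measrest}'', and your elaboration---observing that each family is by definition a restriction of $\mathcal{Y}$ to a set whose $\mathcal{Y}$-measurability was established in the immediately preceding lemma---is precisely the implicit content of that citation.
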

\begin{proof}
  By Lemma~\ref{lemma:measrest}.
\end{proof}

\begin{lemma}
  Let $d_\mathbb{Y} = d_\X + d_\R|_{\mathbb{N}_0}$.
  Then
  \begin{equation}
    \begin{gathered}
      \Bor_{(\mathbb{Y},d_\mathbb{Y})}
      = \mathcal{Y},
      \quad
      \Bor_{(\mathbb{Y}_\textsc{Stoch-Fin},d_\mathbb{Y})}
      = \mathcal{Y}_\textsc{Stoch-Fin},
      \\
      \Bor_{(\mathbb{Y}_\textsc{Resample-Fin},d_\mathbb{Y})}
      = \mathcal{Y}_\textsc{Resample-Fin},
      \quad
      \Bor_{(\mathbb{Y}_s,d_\mathbb{Y})}
      = \mathcal{Y}_s
    \end{gathered}
  \end{equation}
  Furthermore, $(\mathbb{Y},d_\mathbb{Y})$ is a separable metric space.
\end{lemma}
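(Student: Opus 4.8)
The plan is to mirror exactly the proof of Lemma~\ref{lemma:xsep}, reducing the whole statement to an application of the product-space result Lemma~\ref{lemma:boreqmeas}. The space $\mathbb{Y} = \X \times \mathbb{N}_0$ is a Cartesian product of two factors, and the metric $d_\mathbb{Y} = d_\X + \dr|_{\mathbb{N}_0}$ is precisely the Manhattan combination of the two component metrics. I would therefore first check that both factors are separable metric spaces: $(\X,d_\X)$ is separable by Lemma~\ref{lemma:xsep}, and $(\mathbb{N}_0,\dr|_{\mathbb{N}_0})$ is separable by Lemma~\ref{lemma:n0separable}. Applying Lemma~\ref{lemma:boreqmeas} to these two factors immediately yields both that $(\mathbb{Y},d_\mathbb{Y})$ is a separable metric space and that $\Bor_{(\mathbb{Y},d_\mathbb{Y})} = \Bor_{(\X,d_\X)} \otimes \Bor_{(\mathbb{N}_0,\dr|_{\mathbb{N}_0})}$.

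It then remains to identify the two component Borel $\sigma$-algebras with the factors used to build $\mathcal{Y}$. The first is handled directly by Lemma~\ref{lemma:xsep}, which gives $\Bor_{(\X,d_\X)} = \Xcal$. The only genuinely new point---and the one I expect to require the most care, though it is still elementary---is establishing $\Bor_{(\mathbb{N}_0,\dr|_{\mathbb{N}_0})} = \mathcal{P}(\mathbb{N}_0)$. Here I would observe that under the restricted Euclidean metric any two distinct points of $\mathbb{N}_0$ are at distance at least $1$, so every singleton $\{n\}$ is the open ball of radius $\tfrac12$ about $n$ and is thus open; consequently every subset of $\mathbb{N}_0$ is open, hence Borel, so the induced Borel $\sigma$-algebra is the full power set. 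Combining these two identifications gives $\Bor_{(\mathbb{Y},d_\mathbb{Y})} = \Xcal \otimes \mathcal{P}(\mathbb{N}_0) = \mathcal{Y}$, which is the first equality.

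For the three subspace equalities I would argue uniformly via restriction. The preceding lemma already establishes that $\mathbb{Y}_\textsc{Stoch-Fin}$, $\mathbb{Y}_\textsc{Resample-Fin}$, and $\mathbb{Y}_s$ are all $\mathcal{Y}$-measurable, and since we have just shown $\mathcal{Y} = \Bor_{(\mathbb{Y},d_\mathbb{Y})}$, each of these subsets is a nonempty Borel set. Hence for each such subspace $\mathbb{Y}_\bullet$, Lemma~\ref{lemma:metricrest} applies and gives $\Bor_{(\mathbb{Y}_\bullet,\,d_\mathbb{Y}|_{\mathbb{Y}_\bullet})} = \Bor_{(\mathbb{Y},d_\mathbb{Y})}|_{\mathbb{Y}_\bullet} = \mathcal{Y}|_{\mathbb{Y}_\bullet} = \mathcal{Y}_\bullet$, where the last equality is the definition of $\mathcal{Y}_\bullet$. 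This disposes of the remaining three equalities and, together with the separability already obtained from Lemma~\ref{lemma:boreqmeas}, completes the proof. In short, the proof is routine bookkeeping over the two lemmas Lemma~\ref{lemma:boreqmeas} and Lemma~\ref{lemma:metricrest}, with the discreteness of $\mathbb{N}_0$ as the only substantive observation.
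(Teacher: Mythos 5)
Your proof is correct and takes essentially the same route as the paper, whose proof is the one-line citation ``Follows from Lemmas~\ref{lemma:xsep}, \ref{lemma:n0separable}, and \ref{lemma:boreqmeas}''; you have simply spelled out the details: the product step via Lemma~\ref{lemma:boreqmeas}, the identification $\Bor_{(\mathbb{N}_0,\dr|_{\mathbb{N}_0})}=\mathcal{P}(\mathbb{N}_0)$ by discreteness, and the subspace equalities via Lemma~\ref{lemma:metricrest}. If anything your write-up is slightly more complete than the paper's, since the explicit appeal to Lemma~\ref{lemma:metricrest} needed for the three restricted spaces is left implicit there (though it is cited in the analogous proof of Lemma~\ref{lemma:xsep}).
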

\begin{proof}
  Follows from Lemmas~\ref{lemma:xsep}, \ref{lemma:n0separable}, and \ref{lemma:boreqmeas}.
\end{proof}

\begin{lemma}\label{lemma:stochfinmeas}
  $\mathit{step}_\textsc{Stoch-Fin} : (\mathbb{Y}_\textsc{Stoch-Fin}, \mathcal{Y}_\textsc{Stoch-Fin}) \to (\mathbb{Y},\mathcal{Y})$ is measurable.
\end{lemma}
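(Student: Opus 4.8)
The plan is to reduce the claim to the three step-function measurability results already established above for $\mathit{step}_\textsc{Det}$, $\mathit{step}_\textsc{Sample}$, and $\mathit{step}_\textsc{Weight}$, by splitting the domain $\mathbb{Y}_\textsc{Stoch-Fin}$ along its three constituent pieces and gluing them back together with Lemma~\ref{lemma:decompmeas}. First I would observe that
\begin{equation}
  \mathbb{Y}_\textsc{Stoch-Fin}
  = (\X_\textsc{Det} \times \mathbb{N}_0)
  \cup (\X_\textsc{Sample} \times \mathbb{N}_0)
  \cup (\X_\textsc{Weight} \times \mathbb{N}_0),
\end{equation}
and that each of these three sets is $\mathcal{Y}$-measurable, being a Cartesian product of measurable sets. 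This is a finite, hence countable, measurable cover of $\mathbb{Y}_\textsc{Stoch-Fin}$, so by Lemma~\ref{lemma:decompmeas} it suffices to show that the restriction of $\mathit{step}_\textsc{Stoch-Fin}$ to each of the three pieces is measurable into $(\mathbb{Y},\mathcal{Y})$.

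Next I would treat one piece in detail, say $\X_\textsc{Det} \times \mathbb{N}_0$, the other two being identical. The projections $\pi_1$ and $\pi_2$ are continuous on the product metric space and therefore measurable by Lemma~\ref{lemma:contmeas}, so $\pi_1 : (\mathbb{Y},\mathcal{Y}) \to (\X,\Xcal)$ and $\pi_2 : (\mathbb{Y},\mathcal{Y}) \to (\mathbb{N}_0,\Ncal)$ are measurable. On this piece the image of $\pi_1$ lies inside the measurable set $\X_\textsc{Det}$, so $\pi_1$ restricted here may be regarded as a measurable map into the restricted space $(\X_\textsc{Det},\Xcal_\textsc{Det})$. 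Composing with the established measurability of $\mathit{step}_\textsc{Det} : (\X_\textsc{Det},\Xcal_\textsc{Det}) \to (\X,\Xcal)$ shows that $\mathit{step}_\textsc{Det} \circ \pi_1$ is measurable into $(\X,\Xcal)$. Together with the measurability of $\pi_2$, Lemma~\ref{lemma:compmeasurable} then yields that the pairing $(\mathit{step}_\textsc{Det} \circ \pi_1,\, \pi_2)$ is measurable into $(\X \times \mathbb{N}_0, \Xcal \otimes \Ncal) = (\mathbb{Y},\mathcal{Y})$. Finally, by Lemma~\ref{lemma:stochunion} the four step functions have disjoint domains, so $\rightarrow$ agrees with $\mathit{step}_\textsc{Det}$ on $\X_\textsc{Det}$; hence this pairing is exactly $\mathit{step}_\textsc{Stoch-Fin}$ restricted to $\X_\textsc{Det} \times \mathbb{N}_0$. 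Repeating the argument verbatim with $\mathit{step}_\textsc{Sample}$ on $\X_\textsc{Sample} \times \mathbb{N}_0$ and $\mathit{step}_\textsc{Weight}$ on $\X_\textsc{Weight} \times \mathbb{N}_0$ disposes of the remaining two pieces.

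Gluing the three pieces with Lemma~\ref{lemma:decompmeas} then delivers measurability of $\mathit{step}_\textsc{Stoch-Fin}$ on all of $\mathbb{Y}_\textsc{Stoch-Fin}$, completing the proof. The only genuine subtlety, rather than a real difficulty, is the codomain bookkeeping: one must justify that on each piece $\pi_1$ legitimately counts as a measurable map into the cut-down space $(\X_\bullet,\Xcal_\bullet)$ so that it can be precomposed with the corresponding step function. This is precisely the standard fact that a measurable map whose image is contained in a measurable subset stays measurable when its codomain is restricted to that subset, which itself follows from Lemma~\ref{lemma:decompmeas}. Everything else is a mechanical reuse of the composition and pairing lemmas together with the per-rule step lemmas already proved.
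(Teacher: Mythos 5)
Your proof is correct, but it is organized differently from the paper's. The paper does not decompose $\mathbb{Y}_\textsc{Stoch-Fin}$ at all at this point: it writes $\mathit{step}_\textsc{Stoch-Fin} = (\rightarrow \circ\, \pi_1, \pi_2)|_{\mathbb{Y}_\textsc{Stoch-Fin}}$, invokes the measurability of the whole small-step function $\rightarrow$ (Lemma~\ref{lemma:arrowsmeas}, where the gluing of the per-rule step functions via Lemma~\ref{lemma:decompmeas} was already performed once and for all), and concludes in three lines from continuity of the projections plus the facts that restrictions and compositions of measurable maps are measurable. You instead bypass Lemma~\ref{lemma:arrowsmeas} and redo the gluing at the level of $\mathbb{Y}$, splitting $\mathbb{Y}_\textsc{Stoch-Fin}$ into the three product pieces and invoking the per-rule lemmas for $\mathit{step}_\textsc{Det}$, $\mathit{step}_\textsc{Sample}$, and $\mathit{step}_\textsc{Weight}$ directly. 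This costs a little repetition (you effectively re-prove a three-rule version of the $\X_s$-level gluing), but it buys two things: you never need the \textsc{Resample} case, and you make explicit the codomain bookkeeping---that $\pi_1$ restricted to a piece is a measurable map into the cut-down space $(\X_\textsc{Det},\Xcal_\textsc{Det})$---which the paper compresses into the phrase ``restrictions of measurable functions to measurable sets are measurable.'' Incidentally, your route also sidesteps the fact that Lemma~\ref{lemma:arrowsmeas} as printed states the domain of $\rightarrow$ as $(\X_\textsc{Det},\Xcal_\textsc{Det})$, apparently a typo for $(\X_s,\Xcal_s)$; your argument does not depend on that lemma at all. One small attribution slip: the codomain-restriction fact does not follow from Lemma~\ref{lemma:decompmeas}, which decomposes the \emph{domain}; it follows directly from Lemma~\ref{lemma:measrest} together with the preimage computation $\pi_1^{-1}(B \cap \X_\textsc{Det}) = \pi_1^{-1}(B) \cap (\X_\textsc{Det} \times \mathbb{N}_0)$ for $B \in \Xcal$. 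The fact itself is true and elementary, so this does not affect the correctness of your proof.
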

\begin{proof}
  The projections $\pi_1$ and $\pi_2$ are clearly continuous and hence measurable.
  Furthermore, from Lemma~\ref{lemma:arrowsmeas}, $\rightarrow$ is measurable.
  Because restrictions of measurable functions to measurable sets are measurable, and because compositions of measurable functions are measurable, the result follows.
\end{proof}

\begin{lemma}
  $\mathit{step}_\textsc{Resample-Fin} : (\mathbb{Y}_\textsc{Resample-Fin}, \mathcal{Y}_\textsc{Resample-Fin}) \to (\mathbb{Y},\mathcal{Y})$ is measurable.
\end{lemma}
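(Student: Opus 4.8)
The plan is to exploit the fact that $\mathit{step}_\textsc{Resample-Fin}$ is defined componentwise on the product domain $\mathbb{Y}_\textsc{Resample-Fin} = \X_\textsc{Resample} \times \mathbb{N}$ as the map $(x,n) \mapsto (\mathit{step}_\textsc{Resample}(x),\, n-1)$, and to mirror exactly the argument used for $\mathit{step}_\textsc{Stoch-Fin}$ in the preceding lemma. First I would establish the measurability of each of the two coordinate functions separately, and then assemble them into a single map into the product space $(\mathbb{Y},\mathcal{Y})$ via Lemma~\ref{lemma:compmeasurable}.

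For the first coordinate $\mathit{step}_\textsc{Resample} \circ \pi_1$, the projection $\pi_1$ is continuous, hence measurable by Lemma~\ref{lemma:contmeas}, and $\mathit{step}_\textsc{Resample}$ is measurable by Lemma~\ref{lemma:stepresampmeas}; a composition of measurable functions is measurable. For the second coordinate $(n \mapsto n-1) \circ \pi_2$, the projection $\pi_2$ is likewise continuous and hence measurable, while the decrement map $n \mapsto n-1$ is an isometric translation on $(\mathbb{N}_0, d_\R|_{\mathbb{N}_0})$ — which is a separable metric space by Lemma~\ref{lemma:n0separable} — so it is continuous and therefore measurable by Lemma~\ref{lemma:contmeas}. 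Note that since the domain restricts $n$ to $\mathbb{N}$, the decremented value $n-1$ lands in $\mathbb{N}_0$, which is consistent with the last coordinate of the codomain $(\mathbb{Y},\mathcal{Y})$.

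With both coordinate functions measurable, Lemma~\ref{lemma:compmeasurable} yields that the combined function into $(\X,\Xcal) \otimes (\mathbb{N}_0,\Ncal) = (\mathbb{Y},\mathcal{Y})$ is measurable on the full product. Finally, restricting this map to the measurable set $\mathbb{Y}_\textsc{Resample-Fin}$ preserves measurability, since the restriction of a measurable function to a measurable set is measurable (Lemma~\ref{lemma:decompmeas}). This completes the argument.

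I do not expect any genuine obstacle here; the proof is entirely parallel to that of $\mathit{step}_\textsc{Stoch-Fin}$, and indeed could be dispatched with ``analogous to the preceding lemma.'' The only point requiring a moment's care is verifying that the decrement function is measurable and that its output stays within $\mathbb{N}_0$ under the domain restriction to $n \in \mathbb{N}$, but this is immediate once one observes that $n \mapsto n-1$ is a continuous translation on the discrete-like space $\mathbb{N}_0$.
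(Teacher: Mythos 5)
Your proof is correct and follows essentially the same route as the paper's: decompose the map componentwise, use Lemma~\ref{lemma:stepresampmeas} for the first coordinate, handle the decrement on the second, and assemble and restrict via Lemmas~\ref{lemma:compmeasurable} and~\ref{lemma:decompmeas} exactly as in the $\mathit{step}_\textsc{Stoch-Fin}$ case. The only cosmetic difference is the decrement map: you argue it is an isometry on $(\mathbb{N}_0, \dr|_{\mathbb{N}_0})$ and hence continuous and measurable, while the paper observes more bluntly that \emph{every} function $(\mathbb{N},\mathcal{P}(\mathbb{N})) \to (\mathbb{N}_0,\mathcal{P}(\mathbb{N}_0))$ is measurable because the relevant $\sigma$-algebra is the full power set---both justifications are valid and coincide here, since the Borel $\sigma$-algebra induced by $\dr|_{\mathbb{N}_0}$ is $\mathcal{P}(\mathbb{N}_0)$.
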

\begin{proof}
  From Lemma~\ref{lemma:stepresampmeas}, it holds that $\mathit{step}_\textsc{Resample}$ is measurable.
  Clearly, $(n \mapsto n - 1) : \mathcal{P}(\mathbb{N}) \to \mathcal{P}(\mathbb{N}_0)$ is measurable (in fact, \emph{every} function $f : (X,\mathcal{P}(X)) \to (Y,\mathcal{P}(Y))$ is measurable by the definition of $\mathcal{P}$).
  Now, by using the same argument as in the proof of Lemma~\ref{lemma:stochfinmeas}, the result follows.
\end{proof}

\begin{lemma}
  $\hookrightarrow : (\mathbb{Y}_s, \mathcal{Y}_s) \to (\mathbb{Y},\mathcal{Y})$ is measurable.
\end{lemma}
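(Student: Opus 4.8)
The plan is to follow exactly the template used for the earlier transition relations $\rightarrow_\textsc{Det}$ and $\rightarrow$: decompose $\hookrightarrow$ into pieces with disjoint measurable domains and invoke the gluing lemma (Lemma~\ref{lemma:decompmeas}). By Lemma~\ref{lemma:yunion} we already have
\[
  \hookrightarrow = \mathit{step}_\textsc{Stoch-Fin} \cup \mathit{step}_\textsc{Resample-Fin},
\]
and the two preceding lemmas establish that each of $\mathit{step}_\textsc{Stoch-Fin}$ and $\mathit{step}_\textsc{Resample-Fin}$ is measurable on its domain.

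First I would record that $\{\mathbb{Y}_\textsc{Stoch-Fin}, \mathbb{Y}_\textsc{Resample-Fin}\}$ is a finite---hence countable---family of $\mathcal{Y}$-measurable sets whose union is $\mathbb{Y}_s$, by the definition of $\mathbb{Y}_s$. Since both sets lie inside $\mathbb{Y}_s$ and are $\mathcal{Y}$-measurable, Lemma~\ref{lemma:measrest} shows that they belong to $\mathcal{Y}_s$ and that the induced restriction $\sigma$-algebras coincide with $\mathcal{Y}_\textsc{Stoch-Fin}$ and $\mathcal{Y}_\textsc{Resample-Fin}$; these are precisely the sub-$\sigma$-algebras required by Lemma~\ref{lemma:decompmeas}. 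Because the two domains are disjoint (the fact already used to argue that $\hookrightarrow$ is a function), the restriction of $\hookrightarrow$ to $\mathbb{Y}_\textsc{Stoch-Fin}$ equals $\mathit{step}_\textsc{Stoch-Fin}$ and its restriction to $\mathbb{Y}_\textsc{Resample-Fin}$ equals $\mathit{step}_\textsc{Resample-Fin}$.

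Finally I would apply Lemma~\ref{lemma:decompmeas} with $\mathbb{A} = \mathbb{Y}_s$, $\mathcal{A} = \mathcal{Y}_s$, codomain $(\mathbb{Y},\mathcal{Y})$, and the cover above; since both restrictions are measurable by the two lemmas just proved, the lemma yields the measurability of $\hookrightarrow$. There is no real obstacle here---the argument is the same routine gluing step carried out for $\rightarrow_\textsc{Det}$ and $\rightarrow$, and the only point needing attention is the bookkeeping of which $\sigma$-algebra each restriction carries, a matching that is immediate from Lemma~\ref{lemma:measrest}.
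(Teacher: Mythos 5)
Your proof is correct and follows exactly the paper's argument: the paper's proof is the one-line ``Follows from Lemma~\ref{lemma:yunion} and Lemma~\ref{lemma:decompmeas}'', i.e., the same decomposition of $\hookrightarrow$ into $\mathit{step}_\textsc{Stoch-Fin} \cup \mathit{step}_\textsc{Resample-Fin}$ glued via Lemma~\ref{lemma:decompmeas}. Your version merely makes explicit the $\sigma$-algebra bookkeeping via Lemma~\ref{lemma:measrest} that the paper leaves implicit.
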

\begin{proof}
  Follows from Lemma~\ref{lemma:yunion} and Lemma~\ref{lemma:decompmeas}.
\end{proof}
With the measurability of $\hookrightarrow$ in place, we can define extract functions analogously to Appendix~\ref{sec:rfmeasurable}.
\begin{definition}
  \begin{equation}
    \mathit{extract}_{\hookrightarrow,\text{term}}(\term,w,s,n) =
    \begin{cases}
      \term & \text{if } \term \in \mathbb{V}, s = \trace{} \\
      \term & \text{if } \term = \econt[\ttt{resample}], s = \trace{}, n = 0 \\
      () & \text{otherwise}
    \end{cases}
  \end{equation}
\end{definition}
\begin{definition}
  \begin{equation}
    \mathit{extract}_{\hookrightarrow,\text{weight}}(\term,w,s,n) =
    \begin{cases}
      w & \text{if } \term \in \mathbb{V}, s = \trace{} \\
      w & \text{if } \term = \econt[\ttt{resample}], s = \trace{}, n = 0 \\
      0 & \text{otherwise}
    \end{cases}
  \end{equation}
\end{definition}

\begin{lemma}\label{lemma:extracthookstermmeas}
  $\mathit{extract}_{\hookrightarrow,\text{term}} :
  (\mathbb{Y},\mathcal{Y})
  \to
  (\T, \Tcal)$
  is measurable.
\end{lemma}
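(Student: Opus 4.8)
The plan is to mirror the proof of Lemma~\ref{lemma:extractstermmeas} for $\mathit{extract}_{\rightarrow,\text{term}}$, adapting it to the one extra case introduced by the resample-stopping condition. The key observation is that $\mathit{extract}_{\hookrightarrow,\text{term}}$ is piecewise defined on a measurable partition of $\mathbb{Y}$: on one block it coincides with the first projection $\pi_1$, and on the complementary block it is the constant function producing $()$. I would therefore reduce the whole claim to Lemma~\ref{lemma:decompmeas}, which glues together measurable functions defined on the pieces of a countable measurable cover.

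First I would isolate the block $Y$ on which the function returns its term argument. Reading off the two non-default cases, this is
\[
Y = \bigl(\mathbb{V} \times \R_+ \times \{\trace{}\} \times \mathbb{N}_0\bigr)
\cup \bigl(\T_\textsc{Resample} \times \R_+ \times \{\trace{}\} \times \{0\}\bigr).
\]
Each factor is measurable: $\mathbb{V}$ and $\T_\textsc{Resample}$ are $\Tcal$-measurable (the latter by Lemma~\ref{lemma:stochsetmeasurable}), $\{\trace{}\}$ is measurable in $\Trcal$ as the component $[0,1]^0$, and both $\{0\}$ and $\mathbb{N}_0$ are measurable since $\mathbb{N}_0$ carries the power-set $\sigma$-algebra. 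Hence $Y$ is a finite union of Cartesian products of measurable sets, and is itself $\mathcal{Y}$-measurable.

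Second, on $Y$ the function equals the restriction $\pi_1|_Y$, which is measurable because $\pi_1$ is a continuous (hence measurable) projection and restrictions of measurable functions to measurable sets are measurable; on $Y^c$ it equals the constant function $()$, which is trivially measurable. Writing
\[
\mathit{extract}_{\hookrightarrow,\text{term}} = \pi_1|_Y \cup ()|_{Y^c},
\]
Lemma~\ref{lemma:decompmeas} applied to the two-element measurable cover $\{Y, Y^c\}$ then yields measurability of the whole function.

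I expect the only point requiring care to be the measurability of the second summand of $Y$, since it pins the syntactic constraint $\term = \econt[\ttt{resample}]$ (available through Lemma~\ref{lemma:stochsetmeasurable}) together with the two pointwise constraints $s = \trace{}$ and $n = 0$. Once these three conditions are recognized as measurable factors of a product set, the remainder of the argument is a direct transcription of the proof of Lemma~\ref{lemma:extractstermmeas}.
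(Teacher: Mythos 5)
Your proposal is correct and takes essentially the same approach as the paper: the paper's proof defines the same block $Y = ((\mathbb{V} \times \R_+ \times \{\trace{}\}) \times \mathbb{N}_0) \cup ((\T_\textsc{Resample} \times \R_+ \times \{\trace{}\}) \times \{0\})$, writes the function as the term projection restricted to $Y$ glued with the constant $()$ on $Y^c$, and concludes by Lemma~\ref{lemma:decompmeas}. The only differences are cosmetic---you flatten $\mathbb{Y} = \X \times \mathbb{N}_0$ into a four-fold product and write $\pi_1$ where the paper writes $\pi_1 \circ \pi_1$, and you spell out the factor-wise measurability that the paper leaves implicit.
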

\begin{proof}
  Let $Y =
  ((\mathbb{V} \times \R_+ \times \Tr_0) \times \mathcal{P}(\mathbb{N}_0))
  \cup
  ((\T_\textsc{Resample} \times \R_+ \times \Tr_0) \times \{0\})
  $.
  Clearly, $Y \in \mathcal{Y}$.
  We have $\mathit{extract}_{\hookrightarrow,\text{term}} = (\pi_1 \circ \pi_1)|_Y \cup ()|_{X^c}$, where $()$ here denotes the constant function producing $()$.
  Because $\id$, $()$, and $Y$ are measurable, the result follows by Lemma~\ref{lemma:decompmeas}.
\end{proof}

\begin{lemma}
  $\mathit{extract}_{\hookrightarrow,\text{weight}} : (\mathbb{Y},\mathcal{Y}) \to (\R_+, \Bor_+)$ is measurable.
\end{lemma}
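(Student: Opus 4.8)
The plan is to mirror exactly the argument for the term-valued extract function in the preceding Lemma~\ref{lemma:extracthookstermmeas}, replacing the term projection by the weight projection and the constant $()$ by the constant $0$. The function $\mathit{extract}_{\hookrightarrow,\text{weight}}$ is defined piecewise on $\mathbb{Y} = \X \times \mathbb{N}_0$: it returns the accumulated weight $w$ on the ``stopping'' configurations (a value with empty remaining trace, or a $\ttt{resample}$ redex with empty trace and no resamples left) and $0$ elsewhere. So the first step is to isolate that stopping set. Following the term case, I would set
\begin{equation}
  Y =
  ((\mathbb{V} \times \R_+ \times \Tr_0) \times \mathcal{P}(\mathbb{N}_0))
  \cup
  ((\T_\textsc{Resample} \times \R_+ \times \Tr_0) \times \{0\}),
\end{equation}
where $\Tr_0 = \{\trace{}\}$, which captures precisely the two cases on which the function returns $w$ rather than $0$.

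The second step is to check $Y \in \mathcal{Y}$. This is where the bookkeeping sits, though it is routine: $\mathbb{V}$ is measurable, $\T_\textsc{Resample}$ is measurable by Lemma~\ref{lemma:stochsetmeasurable}, and $\R_+$ and $\Tr_0$ are measurable, while both $\mathcal{P}(\mathbb{N}_0)$ and the singleton $\{0\}$ lie in the $\mathbb{N}_0$-component, which carries the full power-set $\sigma$-algebra. Hence each Cartesian factor is measurable, each product is measurable in the product $\sigma$-algebra $\mathcal{Y}$, and the finite union $Y$ is measurable.

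The third step is to write $\mathit{extract}_{\hookrightarrow,\text{weight}}$ as the union of two measurable restrictions and invoke Lemma~\ref{lemma:decompmeas}. On $Y$ the function agrees with the weight projection $\pi_2 \circ \pi_1$ (first project $\mathbb{Y} \to \X$, then $\X \to \R_+$), a composition of continuous projections and hence measurable; on $Y^c$ it agrees with the constant function $0$, which is trivially measurable. Thus
\begin{equation}
  \mathit{extract}_{\hookrightarrow,\text{weight}}
  = (\pi_2 \circ \pi_1)|_Y \cup 0|_{Y^c},
\end{equation}
and Lemma~\ref{lemma:decompmeas} yields measurability of the whole function into $(\R_+,\Bor_+)$.

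I do not anticipate a genuine obstacle: the only content is verifying $Y \in \mathcal{Y}$ and recognizing the two measurable pieces, both following the established template. The one point to be careful about is getting the projection right---it is the weight coordinate $\pi_2 \circ \pi_1$, not the term coordinate $\pi_1 \circ \pi_1$ used in the term case---and ensuring the codomain is $(\R_+,\Bor_+)$, into which both the constant $0$ and the extracted weight land.
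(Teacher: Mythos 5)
Your proof is correct and follows exactly the route the paper takes: the paper's own proof is just ``analogous to the proof of Lemma~\ref{lemma:extracthookstermmeas}'', and your argument is precisely that analogy spelled out---the same stopping set $Y$, the same appeal to Lemma~\ref{lemma:decompmeas}, with the weight projection $\pi_2 \circ \pi_1$ and constant $0$ replacing the term projection and constant $()$. No gaps; you even flag the one substitution (which projection to use) that the paper leaves implicit.
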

\begin{proof}
  Analogous to the proof of Lemma~\ref{lemma:extracthookstermmeas}.
\end{proof}

\begin{definition}
  $\leq_{\hookrightarrow,\text{term}}$ is the least partial order on $\T$ such that $() \leq_{\hookrightarrow,\text{term}} \term$.
\end{definition}
\begin{definition}
  $\leq_{\hookrightarrow,\text{weight}}$ is the least partial order on $\R_+$ such that $0 \leq_{\hookrightarrow,\text{term}} w$.
\end{definition}

With $\mathit{step}_{\hookrightarrow} = \  \hookrightarrow \cup \ \id|_{\mathbb{Y} \setminus \mathbb{Y}_s}$, we have the below lemmas.
\begin{lemma}\label{lemma:extracthookstochterm}
  $\mathit{extract}_{\hookrightarrow,\text{term}}(a) \leq_{\hookrightarrow,\text{term}} \mathit{extract}_{\hookrightarrow,\text{term}} (\mathit{step}_{\hookrightarrow}(a))$
\end{lemma}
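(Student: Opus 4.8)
The plan is to mirror the proof of Lemma~\ref{lemma:extractstochterm}, exploiting the fact that $\leq_{\hookrightarrow,\text{term}}$ is a flat partial order whose bottom element is $()$. Under such an order, the inequality $\mathit{extract}_{\hookrightarrow,\text{term}}(a) \leq_{\hookrightarrow,\text{term}} \mathit{extract}_{\hookrightarrow,\text{term}}(\mathit{step}_{\hookrightarrow}(a))$ holds as soon as either the left-hand side equals $()$, or the two sides coincide. I would therefore case-split on whether $a \in \mathbb{Y}_s$.

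First I would dispatch the easy case $a \notin \mathbb{Y}_s$. By the definition $\mathit{step}_{\hookrightarrow} = \ \hookrightarrow \cup \ \id|_{\mathbb{Y} \setminus \mathbb{Y}_s}$, we have $\mathit{step}_{\hookrightarrow}(a) = a$, so both sides of the inequality are literally the same element and reflexivity of $\leq_{\hookrightarrow,\text{term}}$ closes the case.

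The substantive case is $a \in \mathbb{Y}_s = \mathbb{Y}_\textsc{Stoch-Fin} \cup \mathbb{Y}_\textsc{Resample-Fin}$, where I must show $\mathit{extract}_{\hookrightarrow,\text{term}}(a) = ()$. Writing $a = (\term,w,s,n)$, I would examine the two components of $\mathbb{Y}_s$ separately. If $a \in \mathbb{Y}_\textsc{Stoch-Fin}$, then $\term \in \T_\textsc{Det} \cup \T_\textsc{Sample} \cup \T_\textsc{Weight}$; since $\T_\textsc{Det}$ explicitly excludes $\mathbb{V}$ by its definition and neither $\T_\textsc{Sample}$ nor $\T_\textsc{Weight}$ contains a value, the first branch of $\mathit{extract}_{\hookrightarrow,\text{term}}$ (which requires $\term \in \mathbb{V}$) does not fire; and because the evaluation-context/redex decomposition of a term is unique, no such $\term$ can have the form $\econt[\ttt{resample}]$, so the second branch fails as well. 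If instead $a \in \mathbb{Y}_\textsc{Resample-Fin} = \X_\textsc{Resample} \times \mathbb{N}$, then $\term = \econt[\ttt{resample}]$, which is not a value (first branch fails), while $n \in \mathbb{N}$ forces $n \geq 1 \neq 0$ (second branch fails). In both subcases $\mathit{extract}_{\hookrightarrow,\text{term}}(a) = ()$, and since $()$ is the bottom element of $\leq_{\hookrightarrow,\text{term}}$ the inequality holds.

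The only mild obstacle is the bookkeeping in the $\mathbb{Y}_\textsc{Stoch-Fin}$ subcase, where one must rule out both that the term component is a value and that it is a \ttt{resample} redex; this is precisely what distinguishes the argument from its counterpart for $\rightarrow$, whose extract function has only a single nontrivial branch. Both exclusions, however, follow immediately from the removal of $\mathbb{V}$ in the definition of $\T_\textsc{Det}$ and from the uniqueness of the context/redex decomposition, both already available, so no genuine difficulty remains.
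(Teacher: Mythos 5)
Your proof is correct and takes essentially the same approach as the paper's: a case split on whether $a \in \mathbb{Y}_s$, with $\mathit{step}_{\hookrightarrow}(a) = a$ and reflexivity closing the case $a \notin \mathbb{Y}_s$, and $\mathit{extract}_{\hookrightarrow,\text{term}}(a) = ()$ being the bottom element closing the case $a \in \mathbb{Y}_s$. The only difference is that you spell out in detail why both branches of $\mathit{extract}_{\hookrightarrow,\text{term}}$ fail on $\mathbb{Y}_s$ (the removal of $\mathbb{V}$ and $\T_\textsc{Resample}$ in the definition of $\T_{\textsc{Det}}$, uniqueness of the context/redex decomposition for $\T_\textsc{Sample}$ and $\T_\textsc{Weight}$, and $n \geq 1$ on $\mathbb{Y}_\textsc{Resample-Fin}$), where the paper compresses this to ``by definition.''
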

\begin{proof}
  Consider first $y \in \mathbb{Y}_s$.
  Clearly, $\mathit{extract}_{\rightarrow,\text{term}}(y) = ()$ by definition and the result holds immediately.
  Now consider $y \not\in \mathbb{Y}_s$.
  By definition, $\mathit{step}_{\rightarrow}(x) = x$ and the result holds.
\end{proof}

\begin{lemma}
  $\mathit{extract}_{\hookrightarrow,\text{weight}}(a)
  \leq_{\hookrightarrow,\text{weight}}
  \mathit{extract}_{\hookrightarrow,\text{weight}}
  (\mathit{step}_{\hookrightarrow}(a))$
\end{lemma}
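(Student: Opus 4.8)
The plan is to verify Condition~\ref{cond:extractprop} for the weight extraction, mirroring the argument for $\mathit{extract}_{\hookrightarrow,\text{term}}$ in Lemma~\ref{lemma:extracthookstochterm}. The structural fact that drives everything is that $\leq_{\hookrightarrow,\text{weight}}$ is the flat order on $\R_+$ with bottom element $0$; hence it suffices to show that whenever $\mathit{step}_{\hookrightarrow}$ actually changes its argument, the original configuration already extracts to $0$.

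First I would split on whether $a \in \mathbb{Y}_s$. Suppose $a = (\term,w,s,n) \in \mathbb{Y}_s = \mathbb{Y}_\textsc{Stoch-Fin} \cup \mathbb{Y}_\textsc{Resample-Fin}$. Then either its $\X$-component lies in $\X_\textsc{Det} \cup \X_\textsc{Sample} \cup \X_\textsc{Weight}$ (with $n$ arbitrary), or it lies in $\X_\textsc{Resample}$ with $n \in \mathbb{N}$, i.e.\ $n \geq 1$. In both situations neither of the two branches of $\mathit{extract}_{\hookrightarrow,\text{weight}}$ that return $w$ can apply: a value carries no redex and so does not occur in any of the stepping domains, while a term of the form $\econt[\ttt{resample}]$ with empty trace $\trace{}$ would trigger the second branch only when $n = 0$, which is exactly what $\mathbb{Y}_\textsc{Resample-Fin}$ (requiring $n \in \mathbb{N}$) excludes. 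Therefore $\mathit{extract}_{\hookrightarrow,\text{weight}}(a) = 0$, which is $\leq_{\hookrightarrow,\text{weight}}$ every element, so the inequality holds no matter the value of $\mathit{step}_{\hookrightarrow}(a)$.

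For the complementary case $a \notin \mathbb{Y}_s$, the definition $\mathit{step}_{\hookrightarrow} = \ \hookrightarrow \cup\ \id|_{\mathbb{Y} \setminus \mathbb{Y}_s}$ yields $\mathit{step}_{\hookrightarrow}(a) = a$, so both sides of the claimed inequality coincide and reflexivity of $\leq_{\hookrightarrow,\text{weight}}$ finishes the proof. I expect no real obstacle; the only point deserving care is confirming that membership in $\mathbb{Y}_s$ genuinely rules out both ``final'' branches of the extract function, and in particular that the restriction $n \in \mathbb{N}$ in $\mathbb{Y}_\textsc{Resample-Fin}$ (rather than $n \in \mathbb{N}_0$) is precisely what distinguishes a $\ttt{resample}$ that may still be fired from one at which evaluation must halt.
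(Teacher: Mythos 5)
Your proof is correct and follows essentially the same route as the paper, which simply notes the claim is analogous to the term-extraction lemma and there splits on $a \in \mathbb{Y}_s$ (where the extract yields the bottom element $0$) versus $a \notin \mathbb{Y}_s$ (where $\mathit{step}_{\hookrightarrow}$ is the identity). Your version merely spells out in more detail why both $w$-returning branches of $\mathit{extract}_{\hookrightarrow,\text{weight}}$ are ruled out on $\mathbb{Y}_\textsc{Stoch-Fin}$ and $\mathbb{Y}_\textsc{Resample-Fin}$ (in particular the role of $n \in \mathbb{N}$ in the latter), which the paper leaves as ``clearly, by definition.''
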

\begin{proof}
  Analogous to Lemma~\ref{lemma:extracthookstochterm}.
\end{proof}

Applying Lemma~\ref{lemma:final} twice, we get the measurable functions $\mathit{final}_{\hookrightarrow,\text{term}}$ and $\mathit{final}_{\hookrightarrow,\text{weight}}$.
\begin{definition}
  $
  \mathit{final}_{\hookrightarrow,\text{term}} =
  \mathit{final}_{\hookrightarrow,\mathit{extract}_{\hookrightarrow,\text{term}}} :
  (\mathbb{Y},\mathcal{Y})
  \to (\T,\Tcal)
  $
\end{definition}
\begin{definition}
  $
  \mathit{final}_{\hookrightarrow,\text{weight}} =
  \mathit{final}_{\hookrightarrow,\mathit{extract}_{\hookrightarrow,\text{weight}}} :
  (\mathbb{Y},\mathcal{Y})
  \to (\R_+,\Bor_+)
  $
\end{definition}

\begin{lemma}
  $
  r_{\term,n}(s) =
  \mathit{final}_{\hookrightarrow,\text{term}}((\term,1,s),n)
  $
\end{lemma}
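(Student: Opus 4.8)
The plan is to unfold the definition of $\mathit{final}_{\hookrightarrow,\text{term}}$ supplied by Lemma~\ref{lemma:final}, namely $\mathit{final}_{\hookrightarrow,\text{term}} = \sup\{\mathit{extract}_{\hookrightarrow,\text{term}} \circ \mathit{step}_{\hookrightarrow}^m \mid m \in \mathbb{N}_0\}$ with the supremum taken in the flat order $\leq_{\hookrightarrow,\text{term}}$, and to match its value at $((\term,1,s),n)$ against the three cases in Definition~\ref{lemma:resulttermn}. The guiding observation is that, because $\hookrightarrow$ is a partial function defined exactly on $\mathbb{Y}_s$ and $\mathit{step}_{\hookrightarrow} = \hookrightarrow \cup \id|_{\mathbb{Y}\setminus\mathbb{Y}_s}$, the iterates $\mathit{step}_{\hookrightarrow}^m((\term,1,s),n)$ follow the $\hookrightarrow$-evaluation of $(\term,1,s,n)$ step by step, and then remain fixed once a non-reducible configuration (one outside $\mathbb{Y}_s$) is reached.

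First I would characterize the configurations on which $\mathit{extract}_{\hookrightarrow,\text{term}}$ is non-$()$, namely $(\val,w,\trace{},\cdot)$ with $\val \in \mathbb{V}$ and $(\econt[\ttt{resample}],w,\trace{},0)$, and show that each such configuration lies \emph{outside} $\mathbb{Y}_s$, i.e.\ is stuck under $\hookrightarrow$. For values this holds because $\mathbb{V}$ is disjoint from $\T_\textsc{Det}$, $\T_\textsc{Sample}$, $\T_\textsc{Weight}$, and $\T_\textsc{Resample}$; for a resample with $n = 0$ it holds because $\mathbb{Y}_\textsc{Resample-Fin} = \X_\textsc{Resample} \times \mathbb{N}$ excludes $n = 0$, and $\T_\textsc{Resample}$ is disjoint from the term sets appearing in $\mathbb{Y}_\textsc{Stoch-Fin}$. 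Hence, once the orbit reaches such a configuration, $\mathit{step}_{\hookrightarrow}$ acts as the identity and the extracted term is frozen at the indicated value.

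Next I would invoke the monotonicity established in Lemma~\ref{lemma:extracthookstochterm} together with Lemma~\ref{lemma:extractlemma}: since the order is flat, as soon as $\mathit{extract}_{\hookrightarrow,\text{term}}$ becomes non-$()$ along the orbit it stays constant, so the supremum equals the first non-$()$ extracted term, or $()$ if none ever appears. This yields the case analysis directly. If $\term,1,s,n \hookrightarrow^* \val,w,\trace{},n'$, the orbit reaches $(\val,w,\trace{},n')$ and the supremum is $\val$ (first case). If it reaches $(\econt[\ttt{resample}],w,\trace{},0)$, the supremum is $\econt[\ttt{resample}]$ (second case). In every remaining situation---the evaluation getting stuck at a configuration that is not one of the two successful ones (e.g.\ a value with a leftover nonempty trace, or a $\ttt{sample}_D$ facing the empty trace, which is barred from $\X_\textsc{Sample}$ since the latter uses $\Tr_{1:}$), or the evaluation diverging---$\mathit{extract}_{\hookrightarrow,\text{term}}$ is identically $()$ along the orbit, so the supremum is $()$ (the otherwise case).

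The main obstacle is the reducibility bookkeeping in the second step: one must verify, against the precise definitions of $\X_\textsc{Sample}$, $\X_\textsc{Resample}$, $\mathbb{Y}_\textsc{Resample-Fin}$, and $\T_\textsc{Det}$, that the configurations detected by $\mathit{extract}_{\hookrightarrow,\text{term}}$ coincide exactly with the intended successful stopping configurations, and that no \emph{premature} non-$()$ extraction can occur along an evaluation that is meant to continue. Everything else is a routine unfolding, which is presumably why the analogous statement for $r_\term$ is dispatched with ``by construction.''
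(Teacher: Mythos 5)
Your proposal is correct and is precisely the argument the paper compresses into ``By construction'': since $\mathit{final}_{\hookrightarrow,\text{term}}$ is built as the flat-order supremum of $\mathit{extract}_{\hookrightarrow,\text{term}} \circ \mathit{step}_{\hookrightarrow}^m$, and the configurations where the extract is non-$()$ (values and $\econt[\ttt{resample}]$ with exhausted trace and $n=0$) lie outside $\mathbb{Y}_s$ and are hence fixed points of $\mathit{step}_{\hookrightarrow}$, the supremum coincides case-by-case with Definition~\ref{lemma:resulttermn}. You have simply spelled out the stuck-configuration bookkeeping (disjointness of $\T_\textsc{Resample}$ from the $\textsc{Stoch-Fin}$ domain, $\mathbb{N}$ versus $\mathbb{N}_0$ in $\mathbb{Y}_\textsc{Resample-Fin}$, $\Tr_{1:}$ in $\X_\textsc{Sample}$) that the paper's construction is designed to make immediate, so there is no gap.
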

\begin{proof}
  By construction.
\end{proof}

\begin{lemma}
  $
  f_{\term,n}(s) =
  \mathit{final}_{\hookrightarrow,\text{weight}}((\term,1,s),n)
  $
\end{lemma}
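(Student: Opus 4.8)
The plan is to prove this by unfolding the definition of $\mathit{final}_{\hookrightarrow,\text{weight}}$ provided by Lemma~\ref{lemma:final} and matching it, case by case, against Definition~\ref{def:termdensityn}. Recall that $\mathit{final}_{\hookrightarrow,\text{weight}} = \sup\{\mathit{extract}_{\hookrightarrow,\text{weight}} \circ \mathit{step}_{\hookrightarrow}^m \mid m \in \mathbb{N}_0\}$, so that $\mathit{final}_{\hookrightarrow,\text{weight}}((\term,1,s),n) = \sup_m \mathit{extract}_{\hookrightarrow,\text{weight}}(\mathit{step}_{\hookrightarrow}^m((\term,1,s),n))$, with the supremum taken in the flat order $\leq_{\hookrightarrow,\text{weight}}$ on $\R_+$ whose bottom element is $0$. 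First I would observe that, since $\hookrightarrow$ is a deterministic function and $\mathit{step}_{\hookrightarrow}$ extends it by the identity on $\mathbb{Y} \setminus \mathbb{Y}_s$, the orbit $\{\mathit{step}_{\hookrightarrow}^m((\term,1,s),n)\}_m$ is exactly the unique $\hookrightarrow$-reduction sequence from $(\term,1,s,n)$, continued by stuttering once no further $\hookrightarrow$-step applies. Hence the iterates $\mathit{step}_{\hookrightarrow}^m$ visit precisely the configurations reachable under $\hookrightarrow^*$.

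The core observation is that $\mathit{extract}_{\hookrightarrow,\text{weight}}$ returns a value other than the bottom $0$ only at configurations that are already stuck under $\mathit{step}_{\hookrightarrow}$. Concretely, a configuration $(\val,w,\trace{},n')$ lies outside every step domain, since $\val \notin \T_\textsc{Det} \cup \T_\textsc{Sample} \cup \T_\textsc{Weight} \cup \T_\textsc{Resample}$; and a configuration $(\econt[\ttt{resample}],w,\trace{},0)$ lies outside $\mathbb{Y}_\textsc{Resample-Fin}$ (whose definition requires $n \geq 1$) and outside $\mathbb{Y}_\textsc{Stoch-Fin}$, so it too is stuck. In both cases $\mathit{step}_{\hookrightarrow}$ acts as the identity, the configuration is a fixed point, and the extracted weight $w$ persists for all larger $m$; at every other configuration the extract is $0$ by definition.

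With these facts the case analysis of Definition~\ref{def:termdensityn} goes through directly. If $\term,1,s,n \hookrightarrow^* \val,w,\trace{},n'$, the reduction reaches this terminal state after some number $k$ of steps; earlier configurations are non-terminal and contribute $0$, the $k$-th contributes $w$, and by Condition~\ref{cond:extractprop} together with Lemma~\ref{lemma:extractlemma} this $w$ is retained thereafter, so the supremum equals $w$, matching $f_{\term,n}(s) = w$. The resample-stopping case $\term,1,s,n \hookrightarrow^* \econt[\ttt{resample}],w,\trace{},0$ is identical, and in the remaining situation (the reduction diverges, gets stuck at a non-terminal configuration, or ends with a non-empty trace) the extract is $0$ at every step, so the supremum is $0$, again matching $f_{\term,n}(s) = 0$. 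I do not expect a substantive obstacle: the statement is essentially a bookkeeping identity, and the one point requiring genuine care is the claim of the second paragraph that the two non-zero cases of $\mathit{extract}_{\hookrightarrow,\text{weight}}$ coincide exactly with stuck configurations carrying that weight—in particular that a $\ttt{resample}$ redex with $n=0$ escapes $\mathbb{Y}_\textsc{Resample-Fin}$, which is what forces the supremum to collapse to the value prescribed by Definition~\ref{def:termdensityn}. Measurability of $f_{\term,n}$ then comes for free from Lemma~\ref{lemma:final}, exactly as for the companion lemma on $r_{\term,n}$.
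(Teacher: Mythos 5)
Your proposal is correct and is essentially the paper's own argument spelled out in full: the paper disposes of this lemma with ``By construction,'' meaning precisely the unfolding you perform---that $\mathit{final}_{\hookrightarrow,\text{weight}}$ is the supremum of $\mathit{extract}_{\hookrightarrow,\text{weight}} \circ \mathit{step}_{\hookrightarrow}^m$ along the unique $\hookrightarrow$-reduction sequence, that the extract is non-bottom exactly at the stuck configurations $(\val,w,\trace{},n')$ and $(\econt[\ttt{resample}],w,\trace{},0)$, and that the three cases then match Definition~\ref{def:termdensityn}. Your careful check that a \ttt{resample} redex with $n=0$ falls outside $\mathbb{Y}_\textsc{Resample-Fin}$ is exactly the point the paper's terse proof leaves implicit, so no gap remains.
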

\begin{proof}
  By construction.
\end{proof}

\repeatlemma{lemma:rfnmeas}{\lemmarfnmeas}
\begin{proof}
  Because $\mathit{final}_{\hookrightarrow,\text{term}}$ is measurable.
\end{proof}

\subsection{The Kernels $k_{\term,n}$ and the Measures $\langle \term \rangle_n$}%
\label{sec:kernelproof}
In the below, assume a fixed $\term$ and $n \in \mathbb{N}$.

\repeatlemma{lemma:prependmeas}{\lemmaprependmeas}
\begin{proof}
  We prove this by showing that $\x{prepend}_s$ is continuous.
  Pick an arbitrary $s' \in \Tr$ and $\varepsilon > 0$.
  For any $s'' \in \Tr$, it holds that
  \begin{equation}
    d_\Tr(s \ast s', s \ast s'')
    = d_\Tr(s', s'')
  \end{equation}
  Hence, we can select $\delta = \varepsilon$, and the function is continuous.
\end{proof}

\begin{lemma}\label{lemma:finiteaux}
  If $p: \Tr \to \{0,1\}$ is such that $p(s)=1$ implies that $p(s')=0$ for all proper prefixes and extensions $s'$ of $s$, then
  \begin{equation}
    \int_{{[0,1]}^k} p \ d\mu_\Tr \le 1-\sum_{i<k}
    \int_{{[0,1]}^i} p \ d\mu_\Tr.
  \end{equation}
  It follows that $\int_\Tr p \ d\mu_\Tr \le 1$.
\end{lemma}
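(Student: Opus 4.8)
The plan is to reduce the displayed inequality to the single statement
\[
  \sum_{i=0}^{k} \int_{[0,1]^i} p \, d\mu_\Tr \le 1
  \qquad\text{for every } k \in \mathbb{N}_0,
\]
which is merely a rearrangement of the claim (move the sum to the left-hand side). Since $\mu_\Tr$ restricts to $\lambda_i$ on $[0,1]^i$, each summand equals $\lambda_i(A_i)$, where $A_i = \{ s \in [0,1]^i : p(s) = 1 \}$ collects the ``accepting'' traces of length $i$. The essential reading of the hypothesis is that $\{A_i\}_i$ forms an antichain for the prefix order: no member of some $A_i$ is a prefix of a member of some $A_j$.

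First I would fix $k$ and lift each $A_i$ (for $0 \le i \le k$) to a cylinder set inside $[0,1]^k$ by taking all length-$k$ extensions, $E_i = A_i \ast [0,1]^{k-i} = \{ s \ast s' : s \in A_i,\ s' \in [0,1]^{k-i} \}$. By the product structure of Lebesgue measure (Fubini), $\lambda_k(E_i) = \lambda_i(A_i)\cdot\lambda_{k-i}([0,1]^{k-i}) = \lambda_i(A_i)$, so that the summand $\int_{[0,1]^i} p \, d\mu_\Tr$ equals $\lambda_k(E_i)$, now all measured inside the single space $[0,1]^k$ (this step implicitly uses measurability of $p$, which holds in all intended applications such as $p_{\term,n}$).

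The key step is to show the $E_i$ are pairwise disjoint. Suppose $t \in E_i \cap E_j$ with $i < j$. Then the length-$i$ prefix of $t$ lies in $A_i$ and its length-$j$ prefix lies in $A_j$; calling these $u \in A_i$ and $v \in A_j$, we have that $u$ is a proper prefix of $v$, equivalently $v$ is a proper extension of $u$. But $u \in A_i$ means $p(u) = 1$, so the hypothesis forces $p(v) = 0$, contradicting $v \in A_j$. Hence the $E_i$ are disjoint, and
\[
  \sum_{i=0}^{k} \int_{[0,1]^i} p \, d\mu_\Tr
  = \sum_{i=0}^{k} \lambda_k(E_i)
  = \lambda_k\!\left(\bigcup_{i=0}^{k} E_i\right)
  \le \lambda_k([0,1]^k) = 1,
\]
which is exactly the desired inequality.

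Finally, the closing statement follows by letting $k \to \infty$: since $\int_\Tr p \, d\mu_\Tr = \sum_{i=0}^{\infty} \int_{[0,1]^i} p \, d\mu_\Tr$ by the definition of $\mu_\Tr$, and every partial sum is bounded by $1$, the full integral is at most $1$. I expect the disjointness argument to be the only real obstacle; the remaining measure bookkeeping (restriction of $\mu_\Tr$ to $\lambda_i$, the product-measure computation, and countable additivity) is routine.
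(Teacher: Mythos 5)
Your proof is correct and is essentially the paper's own argument in rearranged form: both reduce to the fact that the cylinder sets $E_i = A_i \ast [0,1]^{k-i}$ (with $A_i = p^{-1}(\{1\}) \cap [0,1]^i$) are pairwise disjoint subsets of $[0,1]^k$, each of measure $\lambda_i(A_i)$, so their measures sum to at most $1$ — the paper merely phrases this through the complement, lower-bounding $\mu_\Tr(p^{-1}(\{0\}) \cap [0,1]^k)$ by the cylinders over $i < k$, which after rearrangement is your inequality $\sum_{i \le k} \int_{[0,1]^i} p \, d\mu_\Tr \le 1$. If anything, your write-up is tighter than the paper's: the explicit antichain/disjointness argument you give is tacitly assumed in the paper's inequality step (summing measures of subsets of a set only bounds its measure when they are disjoint), and the limit argument for $\int_\Tr p \, d\mu_\Tr \le 1$ coincides with the paper's.
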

\begin{proof}
  First, note that
  \begin{equation}
    \int_{{[0,1]}^k} p \ d\mu_\Tr \leq
    \int_{{[0,1]}^k} 1 \ d\mu_\Tr = 1.
  \end{equation}
  Second, using that proper extensions $s'$ of $s$ have $p(s') = 0$ if $p(s) = 1$, we have
  \begin{equation}
    \begin{aligned}
      \int_{{{[0,1]}^k}} p \ d\mu_\Tr
      &=
      1 - \int_{p^{-1}(\{0\}) \cap {{[0,1]}^k}} 1 \ d\mu_\Tr
      \\ &\leq
      1 - \sum_{i<k}\int_{(p^{-1}(\{1\}) \cap {{[0,1]}^i}) \times {[0,1]}^{k-i}}
      1 \ d\mu_\Tr
      \\ &=
      1 - \sum_{i<k}\int_{p^{-1}(\{1\}) \cap {{[0,1]}^i}} 1 \ d\mu_\Tr
      \\ &=
      1 - \sum_{i<k}\int_{{[0,1]}^i} p \ d\mu_\Tr.
    \end{aligned}
  \end{equation}
  Thus, we have the first part of the lemma.
  For the second part,
  \begin{equation}
    \begin{aligned}
      \int_\Tr p \ d\mu_\Tr &=
      \lim_{k\to\infty}
      \left(
      \sum_{i<k+1} \int_{{[0,1]}^i} p \ d\mu_\Tr
      \right)
      \\ &\leq
      \lim_{k\to\infty}
      \left(
      \sum_{i<k} \int_{{[0,1]}^i} p +
      1 - \sum_{i<k} \int_{{[0,1]}^i} p \ d\mu_\Tr\right)
      = 1.
    \end{aligned}
  \end{equation}
\end{proof}

\repeatlemma{lemma:decompose}{\lemmadecompose}
\begin{proof}
Assume that $f_{\term,n}(s) > 0$.  Then, we have evaluated exactly $|s|$ calls to $\ttt{sample}_D$ during evaluation before reaching the $n+1$th \ttt{resample}.
Consider now the density $f_{\term,n-1}$.
Clearly, by the definition of $\hookrightarrow$ and $\rightarrow$, there is exactly one $\underline{s} \prec s$ such that $f_{\term,n-1} > 0$.
For any $s'$ such that $\underline{s} \neq s' \prec s$, the trace either depletes and gets stuck on a $\ttt{sample}_D$ before reaching the $n$th $\ttt{resample}$, or the trace will not be empty when reaching the $n$th resample.
The result follows.

Now, assume that $f_{\term,n}(s) = 0$, and consider again the density $f_{\term,n-1}$ and an arbitrary $\underline{s} \ast \overline{s} = s$.
If $f_{\term,n-1}(\underline{s}) = 0$, the result follows immediately.
Therefore, assume that $f_{\term,n-1}(\underline{s}) > 0$.
Because $f_{\term,n}(s) = 0$, the cause for getting the weight 0 must then have occurred in between the $n$th and $n+1$th resample.
In other words, it must hold that $f_{r_{\term,n-1}(\underline{s}),1}(\overline{s}) = 0$, and the result follows.

For the last part, if $f_{\term,n}(s) > 0$, it clearly holds that $f_{r_{\term,n-1}(\underline{s}),1}(\overline{s}) > 0$, and $p_{r_{\term,n-1}(\underline{s}),1}(\overline{s}) = 1$ by definition.
\end{proof}

\repeatlemma{lemma:finkern}{\lemmafinitekernel}
\begin{proof}[Partial]
  We need to show that $k_{\term,n}(s,\cdot)$ is a measure, and that $k_{\term,n}(\cdot,S)$ is a measurable function.
  We show only the former here.
  To show that the $k_{\term,n}$ are sub-probability kernels, we also need to prove that $\sup_{s \in \Tr} k_{\term,n}(s,\Tr) \le 1$.
  First, we check the measure properties:
  \begin{enumerate}
    \item Clearly, $k_{\term,n}(s,S) \geq 0$ for all $S \in \Tr$.
    \item Also, $k_{\term,n}(s,\emptyset) = 0$.
    \item Assume ${\{S_n\}}_n$ is a pairwise disjoint sequence of sets in $\Trcal$.
      Then
      \begin{equation}
        \begin{aligned}
          k_{\term,n}(s,\bigcup_n S_n)
          &=
          \int_{prepend_s^{-1}(\bigcup_n S_n)}
          p_{r_{\term,n-1}(s),1}(s')
          \ d\mu_\Tr(s')
          \\ &=
          \sum_n
          \int_{prepend_s^{-1}(S_n)}
          p_{r_{\term,n-1}(s),1}(s')
          \ d\mu_\Tr(s')
          \\ &= \sum_n k_{\term,n}(s,S_n).
        \end{aligned}
      \end{equation}
      by properties of the inverse image.
  \end{enumerate}
  It follows that $k_{\term,n}(s,\cdot)$ is a measure.
  Second, note that if $p_{r_{\term,n-1}(s),1}(s') = 1$, then $p_{r_{\term,n-1}(s),1}(s'') = 0$ for all proper prefixes and extensions $s''$ of $s'$ (consequence of Lemma~\ref{lemma:decompose}).
  From Lemma~\ref{lemma:finiteaux}, we then have, for any $s \in \Tr$,
  \begin{equation}
    \begin{aligned}
      k_{\term,n}(s,\Tr) &=
      \int_{prepend_s^{-1}(\Tr)}
      p_{r_{\term,n-1}(s),1}(s')
      \ d\mu_\Tr(s')
      \\ &=
      \int_{\Tr}
      p_{r_{\term,n-1}(s),1}(s')
      \ d\mu_\Tr(s')
      \leq
      1.
    \end{aligned}
  \end{equation}
  Then $\sup_{s \in \Tr} k_{\term,n}(s,\Tr) \leq 1$, so the kernel is a sub-probability kernel.
\end{proof}

\repeatlemma{lemma:propdensity}{\lemmapropdensity}
\begin{proof}[Sketch]
  \begin{equation}
    \begin{aligned}
      \langle \term \rangle_n(S) &=
      \int_\Tr
      k_{\term,n}(s, S)
      f_{\term,n-1}(s)d\mu_\Tr(s)
      \\ &=
      \int_\Tr
      \left(
      \int_{\x{prepend}_s^{-1}(S)}
      p_{r_{\term,n-1}(s),1}(s')
      \ d\mu_\Tr(s')
      \right)
      f_{\term,n-1}(s)d\mu_\Tr(s)
      \\ &=
      \int_\Tr
      \left(\int_{\x{prepend}_s^{-1}(S)}
      f_{\term,n-1}(s)
      p_{r_{\term,n-1}(s),1}(s')
      \ d\mu_\Tr(s') \right)
      d\mu_\Tr(s)
      \\ &=
      \int_{\{\underline{s} \mid \underline{s} \ast \overline{s} \in S \}}
      \left(\int_{\x{prepend}_s^{-1}(S)}
        f_{\term,n-1}(\underline{s})
        p_{r_{\term,n-1}(\underline{s}),1}
        (\overline{s})
      \ d\mu_\Tr(\overline{s}) \right)
      d\mu_\Tr(\underline{s})
      \\ &=
      \int_S
      f_{\term,n-1}(\underline{s})
      p_{r_{\term,n-1}(\underline{s}),1}(\overline{s})
      d\mu_\Tr(s)
    \end{aligned}
  \end{equation}
  In the last step, $\underline{s} \ast \overline{s} = s$ is the unique decomposition given by Lemma~\ref{lemma:decompose}.
  Because there is only one such decomposition for which the integrand is non-zero, we can replace the double integral with a single integral over $\Tr$ (this needs to be made more precise).
  If there is no such unique decomposition for a certain $s$, then, also by Lemma~\ref{lemma:decompose}, the integrand for this $s$ is 0 in any case, and can be ignored.
\end{proof}

\begin{lemma}\label{lemma:kernelmeas}
  Let $(\mathbb{A},\mathcal{A},\mu)$ be a measure space, $\mu$ finite, $(\mathbb{A}',\mathcal{A}')$ a measurable space, and $k : \mathbb{A} \times \mathcal{A}' \to \R_+$ a finite kernel.
  Then
  $
    \mu'(A') = \int_\mathbb{A} k(a,A')d\mu(a)
  $
  is a finite measure on $\mathcal{A}'$.
\end{lemma}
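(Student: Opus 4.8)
The plan is to verify the three defining properties of a measure for $\mu'$ and then bound $\mu'(\mathbb{A}')$ to establish finiteness. First I would observe that the integral defining $\mu'(A')$ is well-posed for every $A' \in \mathcal{A}'$: the integrand $a \mapsto k(a,A')$ is measurable by condition (2) of the kernel definition, and non-negative since $k$ takes values in $\R_+$. Thus the Lebesgue integral of Definition~\ref{def:lebesgue} applies (with the convention that it may a priori equal $\infty$, a possibility ruled out at the finiteness step).

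Next I would check the measure axioms. Non-negativity, $\mu'(A') \geq 0$, is immediate because the integrand is non-negative. For $\mu'(\emptyset) = 0$, since $k(a,\cdot)$ is a measure for each fixed $a$ we have $k(a,\emptyset) = 0$, so $\mu'(\emptyset) = \int_\mathbb{A} 0 \, d\mu(a) = 0$. The substantive axiom is countable additivity. Given a pairwise disjoint sequence ${\{A'_i\}}_i \subset \mathcal{A}'$, the fact that each $k(a,\cdot)$ is a measure gives the pointwise (in $a$) identity $k(a, \bigcup_i A'_i) = \sum_i k(a, A'_i)$. Substituting and interchanging sum and integral yields
\begin{equation}
  \mu'\Bigl(\bigcup_i A'_i\Bigr)
  = \int_\mathbb{A} \sum_i k(a, A'_i)\, d\mu(a)
  = \sum_i \int_\mathbb{A} k(a, A'_i)\, d\mu(a)
  = \sum_i \mu'(A'_i).
\end{equation}

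Finally, for finiteness I would use that $k$ is a \emph{finite} kernel, so $M = \sup_{a \in \mathbb{A}} k(a, \mathbb{A}') < \infty$. Since $\mu$ is finite, $\mu'(\mathbb{A}') = \int_\mathbb{A} k(a,\mathbb{A}')\, d\mu(a) \leq M\,\mu(\mathbb{A}) < \infty$, which both confirms that all the integrals above are genuinely finite (not merely $\R_+^*$-valued) and establishes the claim. The main obstacle is justifying the middle equality in the displayed computation, i.e.\ the interchange of the integral with the infinite sum; this is an application of the monotone convergence theorem to the increasing sequence of partial sums $a \mapsto \sum_{i \leq N} k(a, A'_i)$, each of which is measurable and non-negative (again by condition (2) of the kernel definition), so their pointwise supremum is measurable and its integral is the limit of the integrals.
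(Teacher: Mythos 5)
Your proof is correct and follows essentially the same route as the paper's: verify the measure axioms for $\mu'$ directly, then bound $\mu'(\mathbb{A}') \leq \bigl(\sup_{a} k(a,\mathbb{A}')\bigr)\cdot\mu(\mathbb{A})$ using finiteness of the kernel and of $\mu$. If anything, your version is more careful than the paper's, which attributes countable additivity to ``linearity of the Lebesgue integral''---your explicit appeal to the monotone convergence theorem for the interchange of sum and integral is the justification that claim actually requires.
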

\begin{proof}
  From linearity of the Lebesgue integral, it follows that $\mu'$ is a measure.
  Also, let
  \begin{equation}
    \sup_{a \in \mathbb{A}} k(a,\mathbb{A}') = c < \infty.
  \end{equation}
  Clearly, since $\mu$ is finite,
  \begin{equation}
    \mu'(\mathbb{A}') = \int_\mathbb{A}
    k(a,\mathbb{A}')d\mu(a) < c \cdot \mu(\mathbb{A}).
  \end{equation}
  It follows that $\mu'$ is finite.
\end{proof}
A result analogous to Lemma~\ref{lemma:kernelmeas} holds for probability kernels and measures.
That is, integrating a probability kernel over a probability measure results in a probability measure.

\repeatlemma{lemma:propfin}{\lemmapropfin}
\begin{proof}
  The proof that $\langle \term \rangle_0$ is a sub-probability measure follows from Lemma~\ref{lemma:finiteaux}, similarly to the second part of Lemma~\ref{lemma:finkern}.
  That $\langle \term \rangle_n$ is finite given finite $\llangle \term \rrangle_{n-1}$ is a direct consequence of Lemma~\ref{lemma:finkern} and Lemma~\ref{lemma:kernelmeas}.
\end{proof}

\repeatlemma{lemma:weights}{\lemmaweights}
\begin{proof}
  Consider first the case $n > 0$, and note that $f_{\langle \term \rangle_n}(s) > 0$ and Lemma~\ref{lemma:decompose} implies ${p_{r_{\term,n-1}(\underline{s}),1}(\overline{s})} = 1$.
  Now,
  \begin{equation}
    \begin{aligned}
      w_n(s)
      &= \frac{f_{\llangle \term \rrangle_n}(s)}{f_{\langle \term \rangle_n}(s)}
      =
      \frac
      {f_{\term,n}(s)}
      {f_{\term,n-1}
      (\underline{s})p_{r_{\term,n-1}(\underline{s}),1}(\overline{s})}
      \\&
      =
      \frac{f_{\term,n-1}(\underline{s})f_{r_{\term,n-1}(\underline{s}),1}(\overline{s})}
      {f_{\term,n-1}(\underline{s})p_{r_{\term,n-1}(\underline{s}),1}(\overline{s})}
      =
      f_{r_{\term,n-1}(\underline{s}),1}(\overline{s}).
    \end{aligned}
  \end{equation}

  Next, $f_{\langle \term \rangle_0}(s) > 0$ directly implies $p_{\term,0}(s) = 1$, and
  \begin{equation}
    \begin{aligned}
      w_n(s)
      &= \frac{f_{\llangle \term \rrangle_0}(s)}{f_{\langle \term \rangle_0}(s)}
      =
      \frac
      {f_{\term,0}(s)}
      {p_{\term,0}(s)}
      =
      f_{\term,0}(s)
    \end{aligned}
  \end{equation}
\end{proof}

\fi

\ifccblock

\vfill

{\small\medskip\noindent{\bf Open Access} This chapter is licensed under the terms of the Creative Commons\break Attribution 4.0 International License (\url{http://creativecommons.org/licenses/by/4.0/}), which permits use, sharing, adaptation, distribution and reproduction in any medium or format, as long as you give appropriate credit to the original author(s) and the source, provide a link to the Creative Commons license and indicate if changes were made.}

{\small \spaceskip .28em plus .1em minus .1em The images or other third party material in this chapter are included in the chapter's Creative Commons license, unless indicated otherwise in a credit line to the material.~If material is not included in the chapter's Creative Commons license and your intended\break use is not permitted by statutory regulation or exceeds the permitted use, you will need to obtain permission directly from the copyright holder.}

\medskip\noindent\includegraphics{cc_by_4-0.eps}

\fi

\end{document}